\tikzset{main node/.style={circle,fill=blue!20,draw,minimum size=0.5cm,inner sep=0pt},}
\newtheorem{theorem}{Theorem}
\newtheorem*{theorem*}{Theorem}
\newtheorem*{lemma*}{Lemma}
\newcommand{\faggr}[0]{f_{\text{aggr}}} % Aggregation
\newcommand{\fupdate}[0]{f_{\text{update}}} % Updating
\newcommand{\fread}[0]{f_{\text{read}}} % Readout
\newcommand{\changemarker}[1]{#1} % accepted
\definecolor{bluecite}{HTML}{0875b7}
\title{Universally Expressive Communication in Multi-Agent Reinforcement Learning}
\author{%
  Matthew Morris\\
  \changemarker{InstaDeep Ltd. \& University of Oxford}\\
  \texttt{\changemarker{matthew.morris@cs.ox.ac.uk}} \\
   \And
   Thomas D. Barrett\\
  InstaDeep Ltd.\\
  \texttt{t.barrett@instadeep.com} \\
  \And
   Arnu Pretorius \\
   InstaDeep Ltd. \\
   \texttt{a.pretorius@instadeep.com} \\
}
\begin{document}

\maketitle

\begin{abstract}
% Submitted version
Allowing agents to share information through communication is crucial for solving complex tasks in multi-agent reinforcement learning.
In this work, we consider the question of whether a given communication protocol can express an arbitrary policy.
By observing that many existing protocols can be viewed as instances of graph neural networks (GNNs), we demonstrate the equivalence of joint action selection to node labelling.
With standard GNN approaches provably limited in their expressive capacity, we draw from existing GNN literature and consider augmenting agent observations with: (1) unique agent IDs and (2) random noise.
We provide a theoretical analysis as to how these approaches yield universally expressive communication, and also prove them capable of targeting arbitrary sets of actions for identical agents.
Empirically, these augmentations are found to improve performance on tasks where expressive communication is required, whilst, in general, the optimal communication protocol is found to be task-dependent.
\end{abstract}
\section{Introduction}

Communication lies at the heart of many multi-agent reinforcement learning (MARL) systems.  In MARL, multiple agents must account for each other's actions during both training and execution and, indeed, solving complex tasks in high-dimensional spaces often requires a cooperative joint policy that is difficult, or even impossible, to learn independently.  Therefore, allowing agents to share information is crucial and how best to achieve this has remained a keen area of research since the seminal proposals of learned communication by \citet{foerster2016learning} and \citet{sukhbaatar2016learning}. Whilst no single universally-adopted approach has emerged, considerations for MARL communication include inductive biases that aid learning. For example, an agent's policy should often not depend on the order in which messages are received at a given time step. i.e.\ be \emph{permutation invariant}.

In this context, graph neural networks (GNNs) provide a rich framework for MARL communication. It is natural to consider agents as nodes in a graph, with communication channels corresponding to edges between them. GNNs are specifically designed to respect this (typically non-Euclidian) structure \citep{bronstein2021geometric} and, indeed, many of the most successful MARL communication models fall within this paradigm, including CommNet \citep{sukhbaatar2016learning}, IC3Net \citep{singh2018learning}, GA-Comm \citep{liu2020multi}, MAGIC \citep{niu2021multi}, Agent-Entity Graph \citep{agarwal2019learning}, IP \citep{qu2020intention}, TARMAC \citep{das2019tarmac}, IMMAC \citep{sun2021intrinsic}, DGN \citep{jiang2018graph}, VBC \citep{zhang2019efficient}, MAGNet \citep{malysheva2018deep}, and TMC \citep{zhang2020succinct}. \changemarker{Other models such as ATOC \citep{jiang2018learning} and BiCNet \citep{peng2017multiagent} do not fall within the paradigm since they use LSTMs for combining messages, which are not permutation invariant, and models such as RIAL, DIAL \citep{foerster2016learning}, ETCNet \citep{hu2020event}, and SchedNet \citep{kim2019learning} do not since they used a fixed message-passing structure.} However, although traditional GNNs -- such as those used in MARL to date -- can readily provide permutation invariant communication, they are not universally expressive.

The expressivity of GNNs is often considered in the context of the 1-WL graph coloring algorithm~\citep{weisfeiler1968reduction}. In brief, 1-WL tests if two graphs are non-isomorphic by iteratively re-coloring the nodes and has been proven to \textit{not} be universally expressive (i.e.\ there exist non-isomorphic graphs that 1-WL can't distinguish). Moreover, \citet{morris2019weisfeiler} and \citet{xu2018powerful} proved that for any two non-isomorphic graphs indistinguishable by 1-WL, there is no GNN that can produce different outputs for those two graphs. An example of such graphs is given in Figure \ref{fig:1wl_graphs}. This direct correspondence between GNNs and 1-WL equivalently limits the expressivity of any MARL communication built on top of GNNs.  Whist higher-order GNN architectures which go beyond 1-WL expressivity have been proposed (see \citep{morris2021weisfeiler} for an overview), many of these models do not scale well and are computationally infeasible in practice.  However, recent works have shown that augmenting the node features can provide an alternative path to increased expressivity without computationally expensive architectural changes \citep{abboud2020surprising, dasoulas2019coloring}.  It is then natural to ask if, and how, these advancements can be brought into the MARL setting.

\begin{figure}
    \centering
    \caption{A pair of graphs indistinguishable by 1-WL}
    \label{fig:1wl_graphs}

    \begin{tikzpicture}
        \node[main node] (1) {};
        \node[main node] (2) [right = of 1]  {};
        \node[main node] (3) [right = of 2] {};
        \node[main node] (4) [right = of 3] {};
        \node[main node] (5) [below = of 4] {};
        \node[main node] (6) [left = of 5] {};
        \node[main node] (7) [left = of 6] {};
        \node[main node] (8) [left = of 7] {};
    
        \path[draw,thick]
        (1) edge node {} (2)
        (2) edge node {} (3)
        (3) edge node {} (4)
        (4) edge node {} (5)
        (5) edge node {} (6)
        (6) edge node {} (7)
        (7) edge node {} (8)
        (8) edge node {} (1)
        ;
        \begin{scope}[xshift=7cm]
        \node[main node, fill=orange] (1) {};
        \node[main node, fill=orange] (2) [right = of 1]  {};
        \node[main node, fill=orange] (3) [below = of 2] {};
        \node[main node, fill=orange] (4) [below = of 1] {};
    
        \path[draw,thick]
        (1) edge node {} (2)
        (2) edge node {} (3)
        (3) edge node {} (4)
        (4) edge node {} (1)
        ;
        \end{scope}
        
        \begin{scope}[xshift=10cm]
        \node[main node, fill=orange] (1) {};
        \node[main node, fill=orange] (2) [right = of 1]  {};
        \node[main node, fill=orange] (3) [below = of 2] {};
        \node[main node, fill=orange] (4) [below = of 1] {};
    
        \path[draw,thick]
        (1) edge node {} (2)
        (2) edge node {} (3)
        (3) edge node {} (4)
        (4) edge node {} (1)
        ;
        \end{scope}
    \end{tikzpicture}
\end{figure}
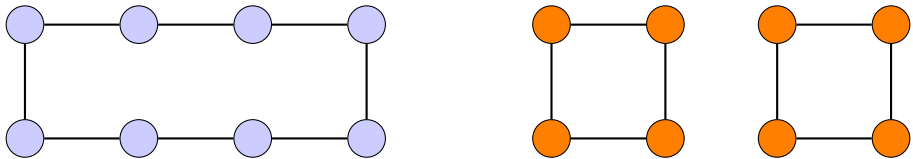

In this paper, we investigate the effectiveness of GNNs for universally expressive communication in MARL. We define \emph{Graph Decision Networks} (GDNs), a framework for MARL communication which captures many of the most successful methods. We highlight the correspondence between GDNs and the node labeling problem in GNNs, thus making concrete the limits of GDN expressivity. For moving beyond these limits, we consider two augmentations from the GNN literature -- random node initialization (RNI)~\citep{abboud2020surprising} and colored local iterative procedure (CLIP)~\citep{dasoulas2019coloring}, where random noise and unique labels are added to graph nodes, respectively. We provide a theoretical analysis as to how these algorithms yield universally expressive communication in MARL, and also prove their ability to solve coordination problems where the optimal policy requires arbitrary sets of actions from identical agents.
We then perform an empirical study where we augment several state-of-the-art MARL communication algorithms with RNI and unique labels.
By evaluating performance across both standard benchmarks and specifically designed tasks, we show that when complex non-local coordination or symmetry breaking is required, universally expressive communication can provide significant performance improvements.  However, in more moderate cases, augmented communication can reduce convergence speeds and result in suboptimal policies.
Therefore, whilst more expressive GNN architectures are required to improve performance on certain problems, a more complete picture relating expressivity to downstream performance remains an open question for future work.
\section{Background}
\paragraph{Multi-Agent Reinforcement Learning}
We consider the setting of Decentralized Partially Observable Markov Decision Processes \citep{oliehoek2012decentralized} augmented with communication between agents. At each timestep $t$ every agent $i \in \{ 1, ..., N \}$ gets a local observation $o^t_i$, takes an action $a^t_i$, and receives a reward $r^t_i$. We consider two agent paradigms: value-based \citep{tampuu2017multiagent} and actor-critic \citep{foerster2018counterfactual, lowe2017multi}. For brevity, we collectively refer to the policy network in actor-critic methods and the Q-network in value-based methods as the \emph{actor network}. In this paper, we consider by default parameter sharing between agent's networks, which is often used to yield faster and more stable training in MARL \citep{foerster2016learning, gupta2017cooperative, rashid2018qmix, yang2018mean}.

\paragraph{Graph Neural Networks}
GNNs can refer to a large variety of models; in this paper, we define the term to correspond to the definition of Message Passing Neural Networks (MPNNs) by \citet{gilmer2017neural}, which is the most common GNN architecture. Notable instances of this architecture include Graph Convolutional Networks (GCNs) \citep{duvenaud2015convolutional}, GraphSAGE \citep{hamilton2017inductive}, and Graph Attention Networks (GATs) \citep{velivckovic2017graph}. A GNN consists of multiple message-passing layers, each of which updates the node attributes / labels (terms used interchangeably). For layer $m$ and node $i$ with current attribute $v_i^m$, the new attribute $v_i^{m+1}$ is computed as

$$ v_i^{m+1} := \fupdate^{\theta_m}(v_i^{m},~ \faggr^{\theta_m'}(\{ v_j^m ~|~ j \in N(i) \}),~ \fread^{\theta_m''}(\{ v_j^m ~|~ j \in V(G) \}) ) $$

where $N(i)$ is all nodes with edges connecting to $i$ and $\theta_m, \theta_m', \theta_m''$ are the (possibly trainable) parameters of the update, aggregation, and readout functions for layer $m$. Parameters may be shared between layers, e.g. $\theta_0 = \theta_1$. The functions $\faggr^{\theta_m'}, \fread^{\theta_m''}$ are permutation invariant. Importantly, GNNs are invariant / equivariant graph functions.
\section{Expressivity of Multi-Agent Communication} \label{sec:theory}

\subsection{Graph Decision Networks} \label{sec:gdn}
% Define GDNs
% Equivalent to node labelling
Many of the most successful MARL communication methods can be captured within the following framework. At each time step, define an attributed graph $G = \changemarker{(V, E)}$ with nodes $V(G) := \{ \text{all agents} \}$, edges $E(G) := \{ (i, j) ~|~ \text{agent $i$ is communicating with $j$} \}$, and for all agents $i$, the node $i$ is labeled with the observation of $i$. This graph is passed through a GNN $f$ which outputs values for each node and passes each resulting node value through the actor network of the corresponding agent. Assuming that the actor networks use shared weights \changemarker{(i.e.\ the same neural network is used for each actor)}, we can substitute them for a final GNN layer $M$, where $\fupdate^{\theta_{M}}(v, \sim) := P(v)$ and $P$ represents the shared actor network. We refer to communication methods that fall within this paradigm as \emph{graph decision networks} (GDNs). The framework is illustrated visually in Figure \ref{fig:theory:gdns}.

\begin{figure}
    \centering
    \caption{The Graph Decision Network (GDN) framework}
    \label{fig:theory:gdns}
    
    \includegraphics[width=0.9\linewidth]{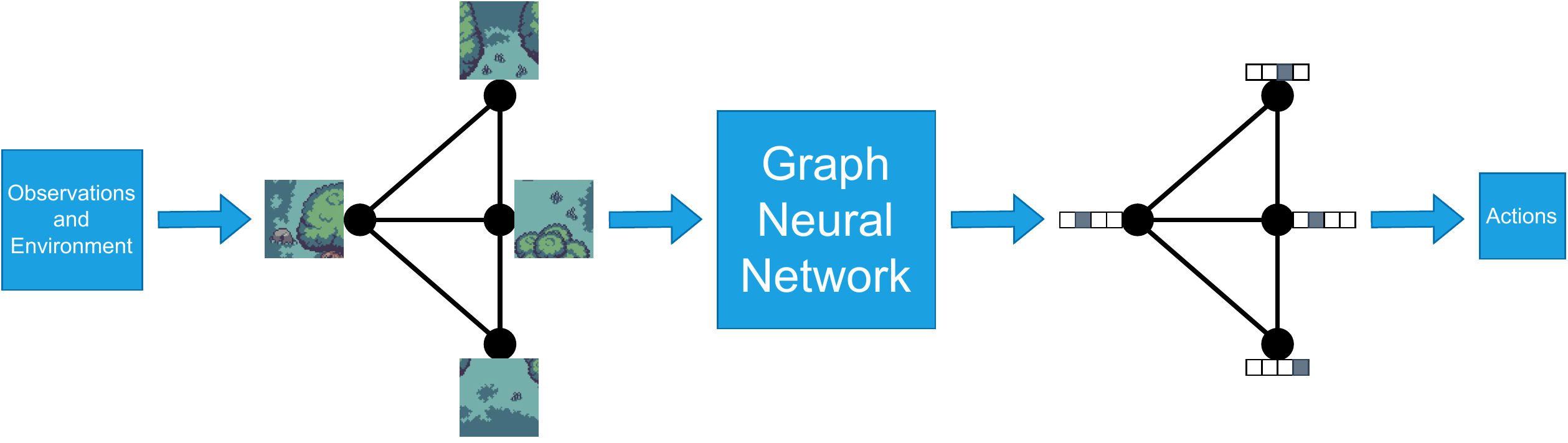}
\end{figure}

Given the above assumption of shared weights, any GDN simply reduces to a GNN node labelling problem, where the correct label for a given node is the corresponding actor network output that collectively maximizes the joint reward (or the individual reward, depending on how the agent is trained). Going forward, we only deal with such GDNs (ones with a shared actor network). In the case of stochastic policies \citep{foerster2018counterfactual, yu2021surprising}, the target labels are parameters of output distributions, instead of atomic actions. This applies to both discrete and continuous distributions. Note that this is not how RL agents are actually trained (i.e. they use reward signals, not supervised learning).

However, given that this paper aims to analyze expressivity, we argue that it does not matter how the GDN is trained. All that expressivity is concerned with is the \emph{ability} of a model to produce a certain output, not how the training paradigm causes the model to converge to the solution. All we have to know is that there are ``optimal'' actor network outputs for each agent, under some metric of optimality, and then we can reason about the ability of the model to provide these outputs. Scenarios with heterogeneous agents can still be considered within this paradigm, by allocating a portion of the observations to indicate the agent type (e.g. through a one-hot encoding) \citep{terry2020revisiting}. Models with recurrent networks also fall within the paradigm, where the hidden or cell states for the networks can be considered as part of the agent observations.

Due to the reduction of GDNs to a GNN node labelling problem, GDNs suffer from the same expressivity limits as GNNs, about which there is a plethora of work \citep{barcelo2020logical, chen2020can, garg2020generalization, loukas2019graph, loukas2020hard, morris2019weisfeiler, nt2019revisiting, oono2019graph, xu2018powerful}. These are expanded upon in Appendix \ref{sec:related_work}. For our analysis, we focus particularly on ways to achieve universal Weisfeiler-Lehman expressivity, but note that the above reduction unlocks many tools for reasoning about the expressivity of GDNs.

\subsection{Desired Properties of MARL Communication}
% 1. Universal equivariant expressivity
% 2. Symmetry breaking
% 3. Efficient
Whilst conventional GDNs cannot capture functions with expressive power beyond 1-WL \citep{morris2019weisfeiler, xu2018powerful}, recent GNN architectures have been proposed to achieve expressivity beyond 1-WL, even ones which are able to express any equivariant graph function. We can use these insights to construct more expressive GDNs. However, we note that classes of models which always yield equivariant functions are not necessarily desirable, since they cannot break symmetries between agents when required. Many MARL environments require agents to coordinate, needing some joint action to solve the task. However, if agents have identical observations and communication graph structure in a pure GDN framework, there is no way for them to disambiguate between each other and distribute the required actions amongst themselves. For a simple example, consider a setting where two agents have identical observations but must take opposite actions -- then the only way for them to solve the environment is to communicate in such a way that they can break this symmetry and take different actions from one another. This example is illustrated in Figure \ref{fig:theory:symmetry_breaking}.

\begin{figure}
    \centering
    \caption{A simple example of symmetry breaking}
    \label{fig:theory:symmetry_breaking}
    
    \includegraphics[width=0.9\linewidth]{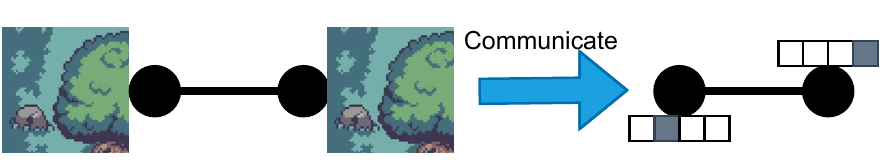}
\end{figure}

More formally, since GNNs are equivariant graph functions, GDNs are equivariant functions on the agent observations and communication graph structure. This means that agents within the same \emph{graph orbit} will always produce the same output. For a graph $G$, consider two nodes $u, v \in G$. If there exists an automorphism $\alpha$ of $G$ such that $\alpha(u) = v$, then $u$ and $v$ are said to be \emph{similar nodes}. The relation \emph{is similar to} forms an equivalence relation on the nodes of G. Each equivalence class is called an \emph{orbit}. Intuitively, every node in an orbit ``has the same structure''. We denote the set of all orbits of $G$ by $R(G)$: this forms a partition of $V(G)$.

\begin{theorem} \label{thm:equi_same}
Given a GDN $f$, observations $O = \{ o_1, ..., o_n \}$, and communication graph $G$ such that nodes $i$ and $j$ are similar in $G$ and $o_i = o_j$, then it holds that $f(O)_i = f(O)_j$.
\end{theorem}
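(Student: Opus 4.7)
The plan is to reduce the theorem to the standard fact that GDNs, as compositions of layers with permutation-invariant $\faggr$ and $\fread$, are equivariant under the natural action of graph automorphisms on (graph, observation) pairs. I would first establish this equivariance as a lemma: for any automorphism $\pi$ of $G$ and any observation vector $O$, writing $\pi \cdot O$ for the vector with $o_{\pi^{-1}(m)}$ at position $m$, one has $f(\pi \cdot O)_{\pi(k)} = f(O)_k$ for every node $k$. The verification is an induction over message-passing layers: each layer consumes only the multiset of neighbour features and the multiset of all node features, both of which are preserved under relabelling by a graph automorphism, so $\faggr$ and $\fread$ return identical values before and after applying $\pi$, and the update step is applied componentwise. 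The shared actor layer, which by the reduction in Section~\ref{sec:gdn} is treated as a final layer of the same form, inherits the same equivariance.

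Given the lemma, I would instantiate it at an automorphism $\alpha$ of $G$ with $\alpha(i) = j$ whose existence is guaranteed by the similarity of $i$ and $j$. Setting $k = i$ yields $f(\alpha \cdot O)_j = f(O)_i$, so the theorem reduces to checking $\alpha \cdot O = O$, i.e.\ $o_{\alpha^{-1}(m)} = o_m$ for every $m$. The hypothesis $o_i = o_j$ delivers this equality at the critical coordinate; the natural reading of the statement is that $i$ and $j$ are similar as nodes of the attributed graph $(G, O)$, in which case $\alpha$ can be chosen to preserve the entire observation labelling, $\alpha \cdot O = O$ holds by construction, and the conclusion drops out.

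The main obstacle is precisely this passage from the single-coordinate equality $o_i = o_j$ to the global invariance of $O$ under $\alpha$ that equivariance actually requires: in general, a structural automorphism of the unlabelled $G$ sending $i$ to $j$ need not fix the remaining observations, so without further care the equivariance identity $f(\alpha \cdot O)_j = f(O)_i$ involves the wrong input on its left-hand side. I would handle this by making the attributed-graph reading of ``similar'' explicit in the formalisation, so that the automorphism witnessing similarity is required to preserve observation labels as well as edges; at that point the rest of the proof is a one-line appeal to the equivariance lemma established above.
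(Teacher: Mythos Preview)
Your approach is essentially the paper's: both reduce to GNN equivariance and then apply it at an automorphism of the attributed graph sending $i$ to $j$. The paper does not spell out the layer-by-layer induction you sketch; it simply cites that GNNs are equivariant and works on the attributed graph $G'$ (i.e.\ $G$ with node labels $O$). Two differences are worth noting. First, the paper takes $\sigma = (i\ j)$, the transposition, and asserts $\sigma \circ G' = G'$; this is actually a stronger claim than ``$i$ and $j$ are similar'' warrants (being in the same orbit does not force the bare swap to be an automorphism), so your use of an arbitrary witnessing automorphism $\alpha$ is the cleaner move. Second, the obstacle you flag---that $o_i = o_j$ alone does not give $\alpha \cdot O = O$---is present in the paper too: it writes ``$i$ and $j$ are similar in $G'$ (using the same automorphism)'' without justifying that the structural automorphism preserves the remaining labels. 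Your resolution, reading ``similar'' as similarity in the attributed graph $(G,O)$, is exactly what the paper is implicitly assuming, and once that reading is fixed both arguments go through identically.
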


Full proofs for all theorems in this paper can be found in Appendix \ref{app:proofs}. We formally state the desired behaviour of GDNs -- which we refer to as \emph{symmetry breaking} -- that would enable them to solve such coordination problems. Given a graph $G$ with orbits $R(G)$, a GDN $g$ ought to be able to produce, or \emph{target}, a \changemarker{multiset} of labels $A_k$ for each orbit $r_k$:

$$\forall r_k \in R(G), \{ g(G)_i ~|~ i \in r_k \} = A_k, $$

where $g(G)_i$ is the output of $g(G)$ for agent $i$. Thus, ideally, MARL communication methods should possess all the following properties: (1) universal expressivity for equivariant graph functions, (2) symmetry breaking for coordination problems, and (3) computational efficiency. We apply two existing GNN augmentations to GDNs to achieve this, both of which come with minimal extra computational cost. \changemarker{In the following section, we provide theorems which prove that the first two properties are satisfied by these augmentations.}

\subsection{Expressive Graph Decision Networks}

\paragraph{Random Node Initialization}
% Introduce RNI. State equivariant theorem. Just append RNI to agent observations. Explain how unique IDs naively solves the problem.
\citet{sato2021random} propose augmenting GNNs with \emph{random node initialization} (RNI), where for each node in the input graph, a number of randomly sampled values are concatenated to the original node attribute. For all graphs / nodes, the random values are sampled from the same distribution. \citet{abboud2020surprising} prove that such GNNs are universal and can approximate any permutation invariant / equivariant graph function. Technically, random initialization breaks the node invariance in GNNs, since the result of the message passing will depend on the structure of the graph as well as the values of the random initializations. However, when one views the model as computing a random variable, the random variable is still invariant when using RNI. In expectation, the mean of random features will be used for GNN predictions, and is the same across each node. However, the variability of the random samples allows the GNN to discriminate between nodes that have different random initializations, breaking the 1-WL upper bound.

\citet{abboud2020surprising} formally state and prove a universal approximation result for invariant graph functions. They note that it can be extended to equivariant functions, which is what GDNs are. As such, we adapt and state the theorem for equivariant functions. Let $G_n$ be the class of all $n$-node graphs. Let $f: G_n \to \mathbb{R}^{n}$, a graph function which outputs a real value for each node in $V(G)$. We say that a randomized function $X$ that associates with every graph $G \in G_n$ a sequence of random variables $X_1(G), X_2(G), ..., X_{n}(G)$, one for each node, is an $(\epsilon, \delta)$-\emph{approximation} of $f$ if for all $G \in G_n$ it holds that $\forall i \in \{ 1, 2, ..., n \}$, $\text{Pr}(|f(G)_i - X_i(G)| \leq \epsilon) \geq 1 - \delta$, where $f(G)_i$ is the output of $f(G)$ for node $i$. Note that a GNN $h$ with RNI computes such functions $X$. If $X$ is computed by $h$, we say that $h$ $(\epsilon, \delta)$-\emph{approximates} $f$. We can now state the following theorem:

\begin{theorem} \label{thm:rni_equivariant}
Let $n \geq 1$ and let $f: G_n \to \mathbb{R}^{n}$ be equivariant. Then for all $\epsilon, \delta > 0$, there is a GNN with RNI that $(\epsilon, \delta)$-\emph{approximates} $f$.
\end{theorem}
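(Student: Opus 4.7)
The plan is to follow the proof strategy of \citet{abboud2020surprising} for the invariant case and adapt it to the equivariant setting by retaining per-node outputs instead of collapsing to a single scalar. Their argument relies on two facts: (i) with high probability, the RNI features at the $n$ nodes are pairwise distinct, so the nodes carry implicit unique identifiers; (ii) conditioned on distinct identifiers, a sufficiently wide GNN with MLP-based $\fupdate$ can approximate any permutation-equivariant function of the tagged graph, via standard WL-refinement arguments combined with the universal approximation theorem for MLPs. The invariant theorem is then obtained by reading off a single scalar after a graph-level aggregation, whereas the equivariant theorem should follow by reading off one scalar per node.

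For each $i \in \{1, \dots, n\}$, define the auxiliary function $g_i(G) := f(G)_i$. Equivariance of $f$ implies that $g_i$ depends only on the isomorphism class of the rooted graph $(G, i)$: if $\pi$ is an automorphism of $G$ with $\pi(i) = i$, then $f(\pi \cdot G)_i = \pi \cdot f(G)|_i = f(G)_i$. On a graph whose RNI tags are all distinct, node $i$ is canonically identified by its own random tag, so $g_i$ becomes an invariant function of the tagged rooted graph in exactly the sense handled by \citet{abboud2020surprising}. Applying their construction would then yield a GNN $h$ whose output $h(G)_i$ approximates $g_i(G)$ to within $\epsilon$ with probability at least $1 - \delta'$ for each $i$; the (arbitrarily small, for continuous RNI) probability of a tag collision can be absorbed to raise the failure probability to at most $\delta$. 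Because the theorem asks only for a pointwise $(\epsilon, \delta)$-guarantee in the node index, no additional union bound over $i$ is required; parameter sharing of the GNN delivers the simultaneous per-node approximation automatically.

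The main obstacle is fact (ii): showing that a single shared-weight message-passing network can, given only implicit RNI-based identifiers, actually implement the target invariant computation at every node simultaneously. The construction of \citet{abboud2020surprising} handles this by using wide MLPs inside $\fupdate$ to encode the WL-refinement of the tagged graph into each node's hidden state, so that after sufficiently many layers each node's state uniquely determines its rooted-subtree equivalence class; a final MLP then projects out the target value. Reusing this construction verbatim while reading outputs per node, rather than aggregating across the graph, is the essential step that converts their invariant statement into the equivariant one claimed here. A minor bookkeeping step at the end is to check that the failure probability does not accumulate: since the definition of $(\epsilon,\delta)$-approximation quantifies over $i$ outside the probability, the same $\delta$ budget can be recycled for every node, and only the single collision event needs to be absorbed.
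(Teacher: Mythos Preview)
Your proposal is correct and follows the same overarching strategy as the paper: lift \citet{abboud2020surprising}'s invariant result to the equivariant setting by keeping per-node outputs rather than collapsing to a graph-level scalar. The difference is in the technical machinery. The paper stays close to Abboud et al.'s actual proof, which routes through the $C^2$-logic correspondence: it adapts their Lemma~A.4 from $C^2$-\emph{sentences} (no free variables, graph-level) to 1-variable $C^2$-\emph{formulas} $\phi_h(v)$ (node-level), constructing $\phi_h(v) := \bigvee_{u \in \mathcal{V}} \phi_u(v)$ where each $\phi_u$ picks out node $u$ in the individualized coloured graph, and then invokes the GNN$\leftrightarrow C^2$ equivalence at the node level. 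You instead bypass the logic step and argue semantically: once RNI individualizes the nodes, enough rounds of WL/message-passing give each node a state that pins down the isomorphism class of the rooted graph $(G,i)$, after which a final MLP reads off $f(G)_i$. Both routes are valid and essentially equivalent (the WL and $C^2$ viewpoints are two sides of the same expressivity coin); the paper's version is a more literal adaptation of the cited proof, while yours is more self-contained but leaves the ``single shared-weight network handles all $i$ at once'' step slightly implicit, whereas the paper's 1-variable formula makes that uniformity explicit. Your observation that the $\forall i$ sits outside the probability in the definition of $(\epsilon,\delta)$-approximation, so no union bound over nodes is required, matches the paper's setup exactly.
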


Such GNNs are also able to solve symmetry-breaking coordination problems by using RNI to disambiguate between otherwise identical agents. To formally state this property, we need to define $(\epsilon,\delta)$-approximation for sets. We say that two \changemarker{multisets} $A, B$ containing random variables are $(\epsilon,\delta)$-equal, denoted $A \cong_{\epsilon,\delta} B$, if there exists a bijection $\tau: A \to B$ such that $\forall a \in A,~ \text{Pr}(| a - \tau(a) | \leq \epsilon) \geq 1 - \delta$.

\begin{theorem} \label{thm:rni_coordination}
Let $n \geq 1$ and consider a set $T$, where each $(G, A) \in T$ is a graph-labels pair, such that $G \in G_n$ and there is a \changemarker{multiset} of target labels $A_k \in A$ for each orbit $r_k \in R(G)$, with $| A_k | = | r_k |$. Then for all $\epsilon, \delta > 0$ there is a GNN with RNI $g$ which satisfies:

$$\forall (G, A) \in T ~~ \forall r_k \in R(G), \{ g(G)_i ~|~ i \in r_k \} \cong_{\epsilon,\delta} A_k $$
\end{theorem}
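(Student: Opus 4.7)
The plan is to reduce Theorem \ref{thm:rni_coordination} to Theorem \ref{thm:rni_equivariant} by constructing a deterministic target function on augmented graphs (where each node carries its RNI value as an extra feature) that encodes the desired coordination behaviour, then invoking universal approximation.

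First, I would exploit the fact that when the RNI values $r = (r_1, \ldots, r_n)$ are drawn from a continuous distribution, they are almost surely pairwise distinct, so each orbit $r_k \in R(G)$ inherits a canonical total ordering of its members by ascending $r_i$. Fixing for each target multiset $A_k$ some canonical ordering $a_{k,1}, \ldots, a_{k,|r_k|}$, I would define $f^*(G, r)_i := a_{k,j}$ whenever node $i$ is the $j$-th ranked node of its orbit $r_k$. Any simultaneous permutation of the nodes and their RNI values preserves orbit membership and the relative ranks within each orbit, so $f^*$ is equivariant on the augmented input. To handle all pairs of $T$ uniformly with a single network, I would fold the target multisets $A$ into the initial observations (e.g., as per-node features identifying the orbit-multiset), so that distinct $(G, A) \in T$ yield distinct augmented inputs and $f^*$ extends to a single globally defined equivariant map.

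Next, I would apply Theorem \ref{thm:rni_equivariant} (noting that a GNN with RNI naturally computes functions of $(G, r)$ via concatenation of the random features) to obtain a GNN $g$ that $(\epsilon, \delta')$-approximates $f^*$ for some $\delta' < \delta$, with the slack $\delta - \delta'$ absorbing the measure-zero event that some random values collide. The bijection $\tau$ mapping $g(G)_i$ to $a_{k,j}$ whenever $i$ is the $j$-th ranked node of its orbit $r_k$ then witnesses $\{ g(G)_i ~|~ i \in r_k \} \cong_{\epsilon,\delta} A_k$, giving the theorem.

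The main obstacle will be justifying that Theorem \ref{thm:rni_equivariant} delivers an approximation of $f^*$ even though $f^*$ depends on the RNI values in a discontinuous, rank-based manner; this should follow from the underlying universality argument, which allows MLPs on top of message-passing features to express arbitrary equivariant graph functions once the random labels suffice to distinguish otherwise-similar nodes within each orbit. A secondary subtlety is choosing the canonical orderings of each $A_k$ coherently across the entire set $T$ so that $f^*$ is single-valued and equivariant; this can be arranged by encoding the ordered target sequence itself into the observation augmentation, so that the rank-to-label assignment is determined by the input.
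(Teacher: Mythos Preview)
Your approach differs substantially from the paper's, and the gap you yourself flag is real. Theorem~\ref{thm:rni_equivariant} is stated for equivariant functions $f:G_n\to\mathbb{R}^n$ of the graph alone: the RNI is the source of randomness in the approximating GNN, not an argument of the target function. Your $f^*$ depends on the RNI seed $r$ through the within-orbit ranking, so invoking Theorem~\ref{thm:rni_equivariant} as written is a type mismatch---the theorem promises that the GNN's random output concentrates near a \emph{fixed} value $f(G)$, whereas you need it to track the \emph{moving} target $f^*(G,r)$ using that same seed $r$. Your fallback to ``the underlying universality argument'' is the right instinct (once RNI individualises the graph almost surely, a plain GNN can implement the rank-to-label map), but that means re-opening the proof of Theorem~\ref{thm:rni_equivariant} rather than applying it as a black box; as the proposal stands, the reduction is incomplete.

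The paper avoids this circularity by separating the two roles. Theorem~\ref{thm:rni_equivariant} is invoked only to approximate an \emph{orbit identifier} $r_i$, a genuine equivariant function of $G$; the raw RNI value $n_i$ is simply carried alongside it. The symmetry-breaking is then handled by an \emph{explicit} construction: $n$ appended layers each use a global readout to select the node with the currently maximal RNI value, assign it the next unused element of its orbit's target multiset $A_k$ (tracked via a per-orbit counter), and zero out that node's RNI so it will not be selected again; a final layer projects out the stored label. Your one-shot reduction is conceptually cleaner, but the paper's two-stage construction is what allows Theorem~\ref{thm:rni_equivariant} to be used exactly as stated.
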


In the GDN case, adding RNI means concatenating noise to the agent observations, thus achieving universal approximation and enabling the solving of symmetry-breaking coordination problems.

\paragraph{Unique Node Identifiers}
% Expressivity when using unique IDs. Cite CLIP.

\citet{dasoulas2019coloring} augment GNNs with a coloring scheme to define \emph{colored local iterative procedure} (CLIP). They use colors to differentiate otherwise identical node attributes, with $k$-CLIP corresponding to $k$ different colorings being sampled and maximized over. They prove theoretically that when maximizing over all such possible colorings, $\infty$-CLIP can represent any invariant graph function.

Assigning nodes unique IDs is equivalent to 1-CLIP, since this guarantees that every node with identical attributes will have a unique ``color'': its particular unique ID. Therefore, we can leverage the universality result for 1-CLIP (Theorem 4 in \citep{dasoulas2019coloring}), which states that with any given degree of precision, 1-CLIP can approximate any invariant graph function. However, \citet{dasoulas2019coloring} note that such solutions may be difficult to converge to and require a large number of training steps in practice. Intuitively, this is because the GNN has to learn to deal with $n!$ permutations of unique IDs. Similarly to the RNI case, we extend their theorem to equivariant functions.

\begin{theorem} \label{thm:clip_universal}
Let $n \geq 1$ and let $f: G_n \to \mathbb{R}^{n}$ be equivariant. Then for all $\epsilon > 0$, there is a GNN with unique node IDs that $\epsilon$-\emph{approximates} $f$.
\end{theorem}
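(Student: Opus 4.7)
The plan is to reduce the equivariant case to the invariant case, since Dasoulas et al.\ (2019, Theorem 4) already establishes that 1-CLIP GNNs with unique node IDs can universally approximate \emph{invariant} graph functions. The reduction mirrors the argument used to lift Abboud et al.'s RNI result from the invariant to the equivariant setting in Theorem \ref{thm:rni_equivariant}, exploiting the fact that an equivariant per-node function can be reinterpreted, node by node, as an invariant function on rooted graphs.

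Given an equivariant $f: G_n \to \mathbb{R}^n$, I would first define an auxiliary function $\hat{f}$ on the class of $n$-node graphs with one distinguished node, by $\hat{f}(G^{(i)}) := f(G)_i$, where $G^{(i)}$ denotes $G$ augmented with a marker attribute on node $i$. Equivariance of $f$ guarantees that $\hat{f}$ is well-defined and invariant on marked graphs: any automorphism of $G^{(i)}$ must fix $i$, and equivariance then forces $f(G)_i$ to be unchanged. Applying the invariant 1-CLIP universality result to $\hat{f}$ yields, for any $\epsilon > 0$, a GNN $h$ with unique node IDs that approximates $\hat{f}$ within $\epsilon$ on all marked graphs of size $n$.

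Next, I would assemble a single GNN $\tilde{h}$ with unique IDs whose output at each node $i$ approximates $f(G)_i$. Since each node already carries a unique ID, it can locally attach a self-marker to its own attribute as a shared preprocessing layer. Composing this preprocessing with $h$ produces, at node $i$, an approximation of $\hat{f}(G^{(i)}) = f(G)_i$. Taking the worst-case error over the $n$ possible marker positions yields the required uniform $\epsilon$-approximation of $f$.

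The main obstacle I expect is making precise how the same shared-parameter GNN can simultaneously carry out the $n$ distinct rooted computations at its $n$ nodes. In the invariant case only one scalar output per graph is needed, whereas here the same message-passing pipeline must yield $n$ correct outputs in parallel. The key point is that the self-marker step is a purely local function of a node's own ID, so each node independently constructs its own rooted view via shared layers, and the subsequent layers of $h$ act on these locally-marked inputs; the per-node approximation guarantee of $h$ then transfers to the per-node outputs of $\tilde{h}$. A subtlety worth checking is that introducing the self-marker does not alter the distinguishing power provided by the unique IDs, which is immediate since the marker is appended rather than replacing the ID.
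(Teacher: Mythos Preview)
Your reduction idea---viewing $f(G)_i$ as an invariant function $\hat f$ on rooted graphs---is sound, but the assembly step has a genuine gap. The invariant approximator $h$ you obtain from 1-CLIP takes as input a graph $G^{(i)}$ in which \emph{exactly one} node carries the marker, and returns a single scalar via a global readout. If every node ``locally attaches a self-marker'' in a shared preprocessing layer, the resulting graph has \emph{all} nodes marked, which is not $G^{(i)}$ for any $i$. Message passing is not local in the sense you need: node $j$'s outgoing messages depend on $j$'s marker, so after several rounds the feature at node $i$ depends on the markers throughout its receptive field, not just its own. Running $h$ on the all-marked graph therefore does not reproduce $h$'s behaviour on any single $G^{(i)}$, and there is no reason its per-node outputs approximate $f(G)_i$. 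You correctly flag the parallelisation as the main obstacle, but the resolution you offer (``the self-marker step is purely local'') does not address it.

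The paper sidesteps this by working at a different level of Dasoulas et al.'s machinery. Instead of invoking their invariant universality theorem and then trying to parallelise, it uses their separation--universality correspondence (their Corollary~1) directly on the space $\textbf{Node}_m$ of nodes. It takes a 1-CLIP GNN that separates $\textbf{Graph}_m$, replaces the final global readout by one that writes the graph-level representation $r$ into every node, and pairs this with the node's retained unique ID $u_i$. The tuple $(u_i, r)$ then separates all nodes across all graphs, and universality for equivariant functions follows. Your rooted-graph route could be repaired by having the GNN simulate $n$ disjoint copies of $h$ in $n$ parallel feature channels (one per choice of root), with node $i$ selecting channel $i$ at the end; but note that the initial channel indicators are then exactly the one-hot unique IDs, so the construction essentially collapses back to the paper's argument.
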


Such GNNs can also solve symmetry-breaking coordination problems, in a similar way to ones with RNI. We say that two \changemarker{multisets} $A, B$, which do not contain random variables, are $\epsilon$-equal, denoted $A \cong_{\epsilon} B$, if there exists a bijection $\tau: A \to B$ such that $\forall a \in A,~ | a - \tau(a) | \leq \epsilon$.

\begin{theorem} \label{thm:clip_coordination}
Let $n \geq 1$ and consider a set $T$, where each $(G, A) \in T$ is a graph-labels pair, such that $G \in G_n$ and there is a \changemarker{multiset} of target labels $A_k \in A$ for each orbit $r_k \in R(G)$, with $| A_k | = | r_k |$. Then for all $\epsilon > 0$ there is a GNN with unique node IDs $g$ which satisfies:

$$\forall (G, A) \in T ~~ \forall r_k \in R(G), \{ g(G)_i ~|~ i \in r_k \} \cong_{\epsilon} A_k $$
\end{theorem}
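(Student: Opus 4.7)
The plan is to follow the proof of Theorem \ref{thm:rni_coordination} closely, but with the deterministic tie-breaking supplied by unique node identifiers replacing the probabilistic tie-breaking of RNI; consequently the guarantee becomes deterministic $\epsilon$-equality $\cong_\epsilon$ rather than $(\epsilon,\delta)$-equality. The core intuition is that unique IDs make all nodes within any orbit distinguishable, so the network has enough discriminative power to place each prescribed label $a \in A_k$ on a designated node of $r_k$.

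First, for each pair $(G, A) \in T$ and each orbit $r_k \in R(G)$, I would fix an arbitrary enumeration $a_{k,1}, \ldots, a_{k,|r_k|}$ of the multiset $A_k$. Next I would define a target function on ID-augmented graphs: within each orbit $r_k$, sort its nodes by their attached unique IDs and assign $a_{k,j}$ to the node of rank $j$; on inputs not matching any graph in $T$, output a fixed constant (e.g.\ zero). Because joint permutations of the node indices together with their attached IDs preserve the within-orbit ID ordering, this target is equivariant in the sense required by the universality construction underlying Theorem \ref{thm:clip_universal}.

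I would then apply Theorem \ref{thm:clip_universal} to obtain a GNN with unique IDs $g$ that $\epsilon$-approximates this target. For any $(G, A) \in T$ and any orbit $r_k$, writing $\mathrm{rank}(i)$ for the position of node $i \in r_k$ in the ID-ordering of $r_k$, the assignment $g(G)_i \mapsto a_{k,\mathrm{rank}(i)}$ is a bijection from $\{g(G)_i \mid i \in r_k\}$ onto $A_k$, and the approximation guarantee gives $|g(G)_i - a_{k,\mathrm{rank}(i)}| \leq \epsilon$ for every such $i$, which is precisely the condition for $\cong_\epsilon$.

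The main obstacle is to justify that the target described above really falls within the hypothesis of Theorem \ref{thm:clip_universal}: the function uses IDs to discriminate nodes inside an orbit of the bare graph $G$, and IDs move with node relabellings, so equivariance only holds once the IDs are treated as part of the input rather than as indices the permutation is acting on. This is the same subtlety as in the RNI case (Theorem \ref{thm:rni_coordination}), and is resolved by appealing to exactly the construction used in the equivariant extension of CLIP's universality; once this point is settled, the remainder of the argument is bookkeeping about enumerations and rank maps.
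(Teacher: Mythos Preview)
Your proposal is correct but takes a different route from the paper. The paper proves Theorem~\ref{thm:clip_coordination} by transplanting the explicit layer-by-layer construction of Theorem~\ref{thm:rni_coordination} verbatim: first invoke Theorem~\ref{thm:clip_universal} to get a GNN that outputs, at each node, the pair (orbit identifier, unique ID); then append $n$ hand-built layers, each of which picks out the node with the currently largest ID, writes the next unused label from that node's target multiset into its attribute, and zeroes its ID; a final layer strips everything but the written label. The only change from the RNI proof is that one-hot IDs are compared as binary numbers in place of random scalars, and the $\delta$ disappears because a unique maximum always exists deterministically.

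Your approach instead defines the whole target assignment in one shot---rank nodes within each orbit by ID and hand out the $A_k$ in that order---and then appeals to universality to realise it. This is cleaner and avoids the $n$-layer bookkeeping, but it does not fit the literal statement of Theorem~\ref{thm:clip_universal}, which is about equivariant functions on \emph{bare} graphs; your target is only equivariant once the IDs are part of the input. You correctly flag this and point to the separation construction inside the proof of Theorem~\ref{thm:clip_universal}, which indeed suffices (individualised inputs let the GNN realise any equivariant function on the augmented graphs). By contrast, the paper's construction only needs Theorem~\ref{thm:clip_universal} for the orbit-identifier component---genuinely equivariant on bare graphs---while the ID is simply passed through and the remaining layers are built by hand, so it stays strictly within the theorem as stated. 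In short: your route is more conceptual and compact; the paper's is more self-contained and explicit about the mechanism.
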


In the GDN case, this means that by giving each agent a unique ID in its observations, we can achieve universal approximation and enabling the solving of symmetry-breaking coordination problems.
\section{Experiments}
\subsection{Methods} \label{sec:methods}
\paragraph{Baselines}
For evaluation, we adopt a diverse selection of MARL communication methods which fall under the GDN paradigm. These are shown in Table \ref{tab:baseline_models}, along with the communication graph structure, agent model, and GNN architecture. We use the code provided by \citet{jiang2018graph, niu2021multi} as starting points. All of the implementations are extended to support multiple rounds of message-passing and the baselines are augmented with the ability for their communication to be masked by the environment (e.g.\ based on distance or obstacles in the environment). We fix the number of message-passing rounds to be 4 and otherwise use the original models and hyperparameters from \citep{jiang2018graph, niu2021multi}. Full experiment and hyperparameter details can be found in Appendix \ref{app:experiments}, and full results are shown in Appendix \ref{app:results}.

\begin{table}
  \caption{Architecture of the Baselines}
  \label{tab:baseline_models}
  \centering
  \begin{tabular}{llll}
    \toprule
    Name & Communication Graph & Agents & GNN Architecture \\
    \midrule
    CommNet \citep{sukhbaatar2016learning} & Complete (or environment-based) & Recurrent A2C & Sum Aggregation \\
    IC3Net \citep{singh2018learning} & Complete + Gating & Recurrent A2C & Sum Aggregation \\
    TarMAC \citep{das2019tarmac} & Complete + Learned Soft Edges & Recurrent A2C & GAT \\
    T-IC3Net \citep{singh2018learning, das2019tarmac} & Gating + Learned Soft Edges & Recurrent A2C & GAT \\
    MAGIC \citep{niu2021multi} & Learned & Recurrent A2C & GAT \\
    DGN \citep{jiang2018graph} & Environment-based & Q-network & GCN \\
    \bottomrule
  \end{tabular}
\end{table}

\paragraph{Environments}
\textbf{Predator-Prey} \citep{das2019tarmac, li2020deep, liu2020multi, niu2021multi, singh2018learning} and \textbf{Traffic Junction} \citep{das2019tarmac, li2020deep, liu2020multi, niu2021multi, singh2018learning, sukhbaatar2016learning} are common MARL communication benchmarks. In Predator-Prey, predator agents are tasked with capturing prey and in Traffic Junction, agents need to successfully navigate a traffic intersection (full descriptions of each environment are given in Appendix \ref{app:environments}). We perform evaluations on these benchmarks to test how well our universally expressive GDN models perform when there is not necessarily a benefit to having communication expressivity beyond 1-WL. We also introduce two new environments, Drone Scatter and Box Pushing, to respectively test symmetry-breaking and communication expressivity beyond 1-WL.

\textbf{Drone Scatter} consists of 4 drones in a homogeneous field surrounded by a fence. Their goal is to move around and find a target hidden in the field, which they can only notice when they get close to. The drones do not have GPS and can only see directly beneath them using their cameras, as well as observing their last action. The best way for them to locate the target is to split up and search in different portions of the field, despite them all having the same observations; thus, they are given rewards for splitting up.

\textbf{Box Pushing} consists of 10 robots in a 12x12 construction site, which has boxes within that need to be moved to the edge of the site: the clearing area. Robots attach themselves to boxes before they can move them; when attached, robots can no longer see around themselves. Free-roaming robots can communicate with any other free-roaming robots, but attached robots can only communicate with the robots directly adjacent to them. The environment either spawns with one large box or two small boxes and agents spawn already attached. 4 attached robots all moving in the same direction are needed to move a small box, and 8 all power moving in the same direction to move a large box. To solve the environment, the robots need to be able to communicate with each other to figure out which type of box they are on and all push correctly, at the same time, and in the same direction. Since the communication graphs corresponding to the scenarios with small and large boxes are 1-WL indistinguishable, communication beyond 1-WL is needed to optimally solve the environment.

\paragraph{Evaluation Procedure}
% Say how RNI and unique IDs will be used for main testing. State that [-1, 1] uniform used and one-hot agent IDs used. Mention other expressivity evaluations and motivate for decisions.
We augment baseline communication methods with RNI and unique IDs to perform our evaluations. Agent IDs are represented by one-hot encodings and ``0.25 RNI'' refers to 25\% of the observation space being randomly initialized. We sample each RNI value uniformly from $[-1, 1]$. For each scenario and for every baseline communication method, we compare 4 models: the baseline without modifications, the baseline augmented with unique IDs for each agent, the baseline augmented with 0.75 RNI, and finally 0.25 RNI. The only exceptions are the Drone Scatter evaluations, where 0.25 RNI is not used since the observation space is not large enough, and the Drone Scatter experiments using stochastic evaluation, where DGN is not used since it does not support stochastic evaluation.

% State evaluation procedure for most methods: aggregate across 5 seeds. State evaluation procedure for BP: report max results over time and aggregate over 10 seeds. Report 95\% CI using std error. 95\% CI is 1.96 * std error if the data is normally distributed.
For each run, corresponding to a random initialization (one seed) of the model in question, we perform periodic evaluations during training. Each epoch consists of 5000 training episodes, after which 100 evaluation episodes are used to report aggregate metric scores, yielding an evaluation score for the model after every epoch. Following an established practice in MARL evaluation \citep{foerster2018counterfactual, hu2019simplified, papoudakis2020benchmarking, saeed2021domain, weber2022remember, yu2021surprising, zhao2022dqmix}, we take the value of a metric for a run to be the \emph{best} value achieved during training, so that our metrics are robust against runs which converge at some point and then degrade in performance as they continue to train. In such cases, one would use the parameters from the best performing model found during training for real-world evaluation; thus, that performance makes more sense to report than the model performance once training has finished. We utilize 10 seeds for Box Pushing experiments and 5 for all others. For each scenario, metric, baseline communication method, and variant thereof, we report the mean metric value across all seeds and a 95\% confidence interval. To calculate the confidence interval, we assume a normal distribution and compute the interval as $1.96 \times \text{SEM}$ (standard error of the mean). Finally, for all Box Pushing experiments, we make use of a form of hybrid imitation learning to help deal with exceptionally sparse rewards (full details are given in Appendix \ref{app:hybrid_learning}).

\subsection{Results} \label{sec:results}
\paragraph{Benchmark Environments}
% Explain why sparse rewards leads to RNI methods struggling to train.
Experimental results on the benchmark environments are shown in Table \ref{tab:results:tj_easy} for Easy Traffic Junction, Table \ref{tab:results:pp} for Predator-Prey, and Table \ref{tab:results:tj_medium} for Medium Traffic Junction. In general, unique IDs tends to perform comparably to the baseline. The only exception to this is for IC3Net on Medium Traffic Junction, where unique IDs struggle.

0.75 RNI is categorically the worst method, consistently getting outperformed by all other methods and only coming out on top for MAGIC on Medium Traffic Junction, which is not significant due to the instability of that set of results. 75\% of observations being randomly initialized appears far too much for the system to be able to learn effective policies. However, universality results still hold for lower ratios of RNI.

0.25 RNI exhibits strong performance on Easy Traffic Junction, always solving the environment and almost always outperforming the baseline. However, on sparse-reward problems (such as Predator-Prey and, to a lesser extent, Medium Traffic Junction) RNI methods typically take longer to converge than the baseline and unique IDs, and 0.25 RNI can struggle to reach the performance of baseline methods.  This aligns with \citet{abboud2020surprising}'s observation that GNNs with RNI take significantly longer to converge than normal GNNs. This is only is exacerbated in a MARL setting with sparse reward signals, where slow convergence is expected regardless of the RNI augmentation. Indeed, on all examples, RNI methods typically take longer to converge than the baselines and unique IDs.

Overall, we conclude that both unique IDs and 0.25 RNI achieve sufficient performance on the benchmarks to qualify them for use, especially given that the extra expressivity they provide is not strictly necessary. With respect to the different baselines, we note that simple baselines such as CommNet work the best when the optimal policy is also simple, such as for Easy Traffic Junction, but that more sophisticated baselines outperform them on the complex environments. We also note the very unstable performance of MAGIC for the Traffic Junction environments.

%
% Easy TJ Table
%
\begin{table}
\small % Make text small to fit
\caption{Mean and 95\% confidence interval for Easy Traffic Junction across all baselines}
\label{tab:results:tj_easy}
\centering
\begin{tabular}{@{}rrrrrrrrrrrrrr@{}}
    \toprule
    Baseline & Metric & \textbf{Baseline} && \textbf{Unique IDs} && \textbf{0.75 RNI} && \textbf{0.25 RNI} \\
    \midrule
    
CommNet & Success & \pmb{$ 1 \pm 0 $} && \pmb{$ 1 \pm 0 $} && \pmb{$ 1 \pm 0 $} && \pmb{$ 1 \pm 0 $} \\
DGN & Success & $ 0.987 \pm 0 $ && $ 0.99 \pm 0 $ && $ 0.848 \pm 0.15 $ && \pmb{$ 0.996 \pm 0 $} \\
IC3Net & Success & \pmb{$ 1 \pm 0 $} && \pmb{$ 1 \pm 0 $} && \pmb{$ 1 \pm 0 $} && $ 0.986 \pm 0.02 $ \\
MAGIC & Success & $ 0.634 \pm 0.11 $ && $ 0.764 \pm 0.13 $ && $ 0.684 \pm 0.11 $ && \pmb{$ 0.787 \pm 0.09 $} \\
TarMAC & Success & $ 0.994 \pm 0.01 $ && \pmb{$ 1 \pm 0 $} && $ 0.933 \pm 0.04 $ && \pmb{$ 1 \pm 0 $} \\
T-IC3Net & Success & \pmb{$ 1 \pm 0 $} && $ 0.998 \pm 0 $ && $ 0.94 \pm 0.04 $ && $ 0.974 \pm 0.04 $ \\

    \bottomrule
\end{tabular}
\end{table}

%
% Medium PP Table
%
\begin{table}
\small % Make text small to fit
\caption{Mean and 95\% confidence interval for Predator-Prey across all baselines}
\label{tab:results:pp}
\centering
\begin{tabular}{@{}rrrrrrrrrrrrrr@{}}
    \toprule
    Baseline & Metric & \textbf{Baseline} && \textbf{Unique IDs} && \textbf{0.75 RNI} && \textbf{0.25 RNI} \\
    \midrule
    
CommNet & Success & $ 0.88 \pm 0.03 $ && \pmb{$ 0.908 \pm 0.02 $} && $ 0.194 \pm 0.02 $ && $ 0.476 \pm 0.05 $ \\
DGN & Success & $ 0.014 \pm 0 $ && $ 0.016 \pm 0 $ && $ 0.026 \pm 0.03 $ && \pmb{$ 0.032 \pm 0.01 $} \\
IC3Net & Success & \pmb{$ 0.952 \pm 0 $} && $ 0.93 \pm 0.02 $ && $ 0.454 \pm 0.08 $ && $ 0.933 \pm 0.02 $ \\
MAGIC & Success & \pmb{$ 0.892 \pm 0.02 $} && $ 0.888 \pm 0.05 $ && $ 0.112 \pm 0.03 $ && $ 0.451 \pm 0.09 $ \\
TarMAC & Success & $ 0.169 \pm 0.09 $ && \pmb{$ 0.24 \pm 0.11 $} && $ 0.068 \pm 0.01 $ && $ 0.086 \pm 0.02 $ \\
T-IC3Net & Success & \pmb{$ 0.938 \pm 0.02 $} && \pmb{$ 0.938 \pm 0.01 $} && $ 0.27 \pm 0.02 $ && $ 0.913 \pm 0.02 $ \\

    \bottomrule
\end{tabular}
\end{table}

%
% Medium TJ Table
%
\begin{table}
\small % Make text small to fit
\caption{Mean and 95\% confidence interval for Medium Traffic Junction across all baselines}
\label{tab:results:tj_medium}
\centering
\begin{tabular}{@{}rrrrrrrrrrrrrr@{}}
    \toprule
    Baseline & Metric & \textbf{Baseline} && \textbf{Unique IDs} && \textbf{0.75 RNI} && \textbf{0.25 RNI} \\
    \midrule
    
CommNet & Success & $ 0.761 \pm 0.31 $ && \pmb{$ 0.793 \pm 0.33 $} && $ 0.046 \pm 0 $ && $ 0.614 \pm 0.11 $ \\
DGN & Success & \pmb{$ 1 \pm 0 $} && \pmb{$ 1 \pm 0 $} && $ 0.062 \pm 0 $ && $ 0.619 \pm 0.4 $ \\
IC3Net & Success & \pmb{$ 0.971 \pm 0.04 $} && $ 0.804 \pm 0.1 $ && $ 0.588 \pm 0.03 $ && $ 0.855 \pm 0.13 $ \\
MAGIC & Success & $ 0.551 \pm 0.28 $ && $ 0.526 \pm 0.33 $ && \pmb{$ 0.734 \pm 0.21 $} && $ 0.4 \pm 0.35 $ \\
TarMAC & Success & \pmb{$ 0.064 \pm 0 $} && $ 0.052 \pm 0 $ && $ 0.05 \pm 0 $ && $ 0.054 \pm 0.01 $ \\
T-IC3Net & Success & $ 0.89 \pm 0.17 $ && $ 0.909 \pm 0.08 $ && $ 0.362 \pm 0.18 $ && \pmb{$ 0.962 \pm 0.02 $} \\

    \bottomrule
\end{tabular}
\end{table}

\begin{figure}
    \centering
    \caption{\changemarker{Training curves for IC3Net and CommNet on the benchmark communication environments}}
    \label{fig:results:benchmarks}
    
    \includegraphics[width=0.32\linewidth]{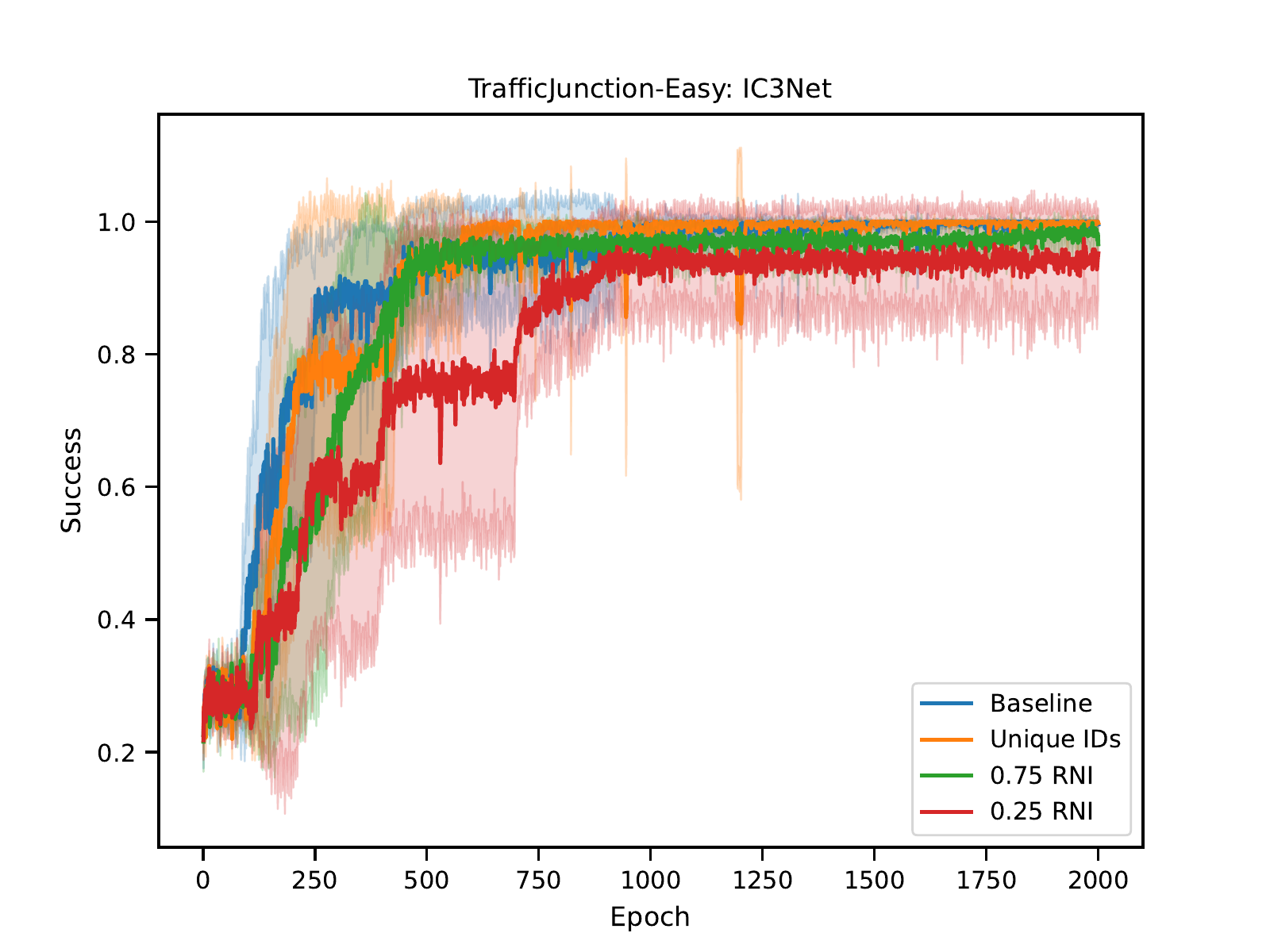}
    \includegraphics[width=0.32\linewidth]{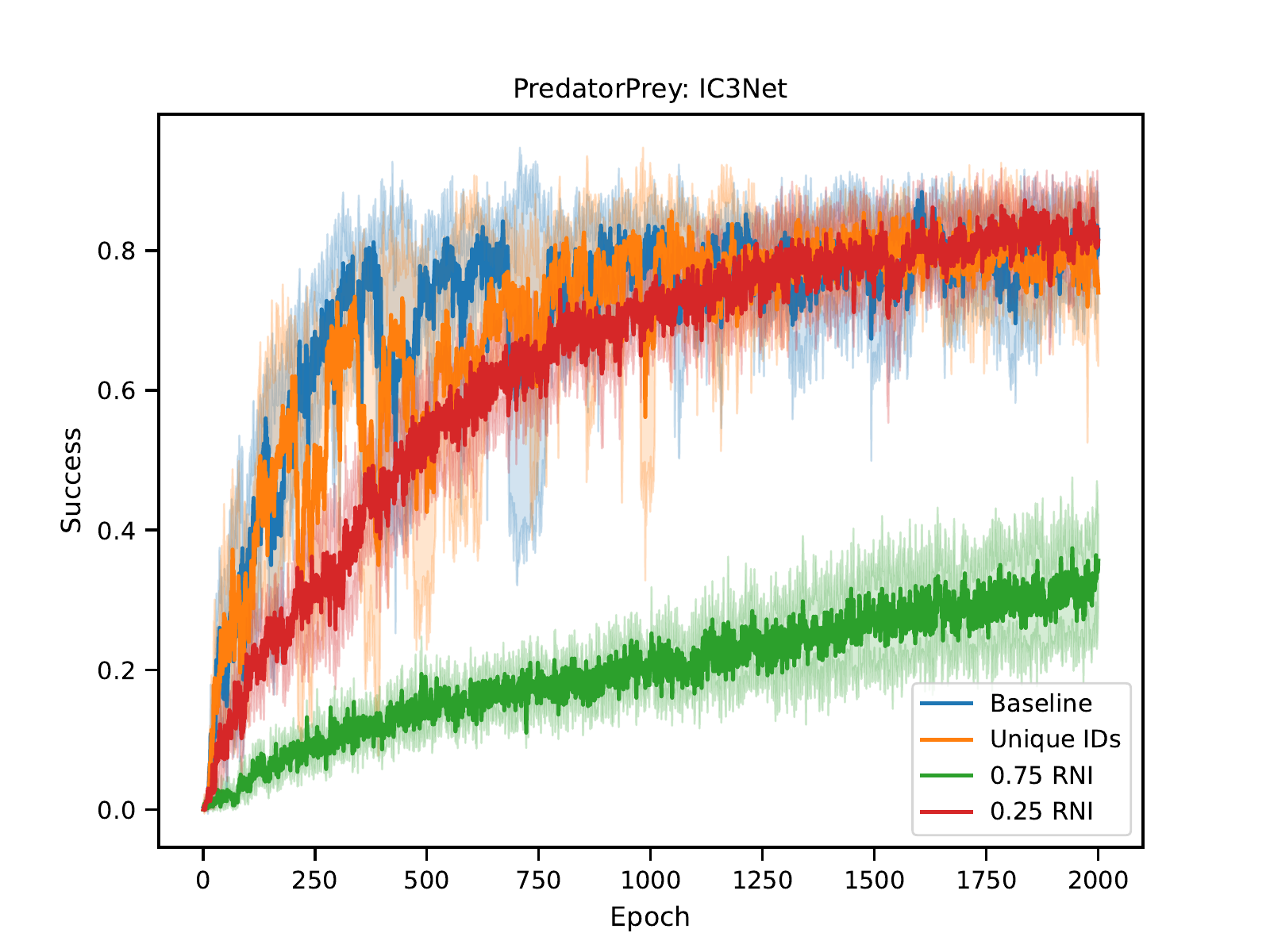}
    \includegraphics[width=0.32\linewidth]{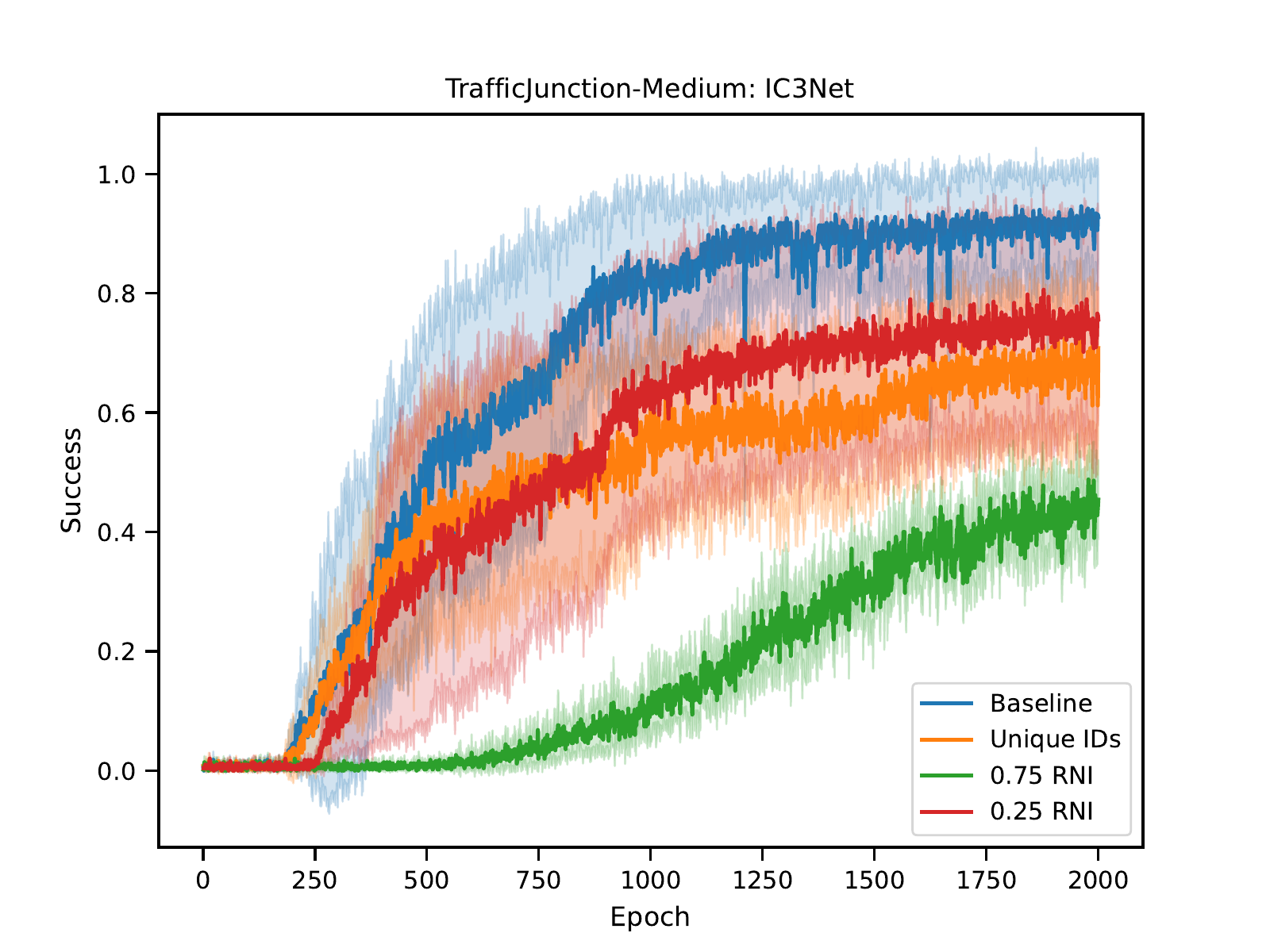}

    \includegraphics[width=0.32\linewidth]{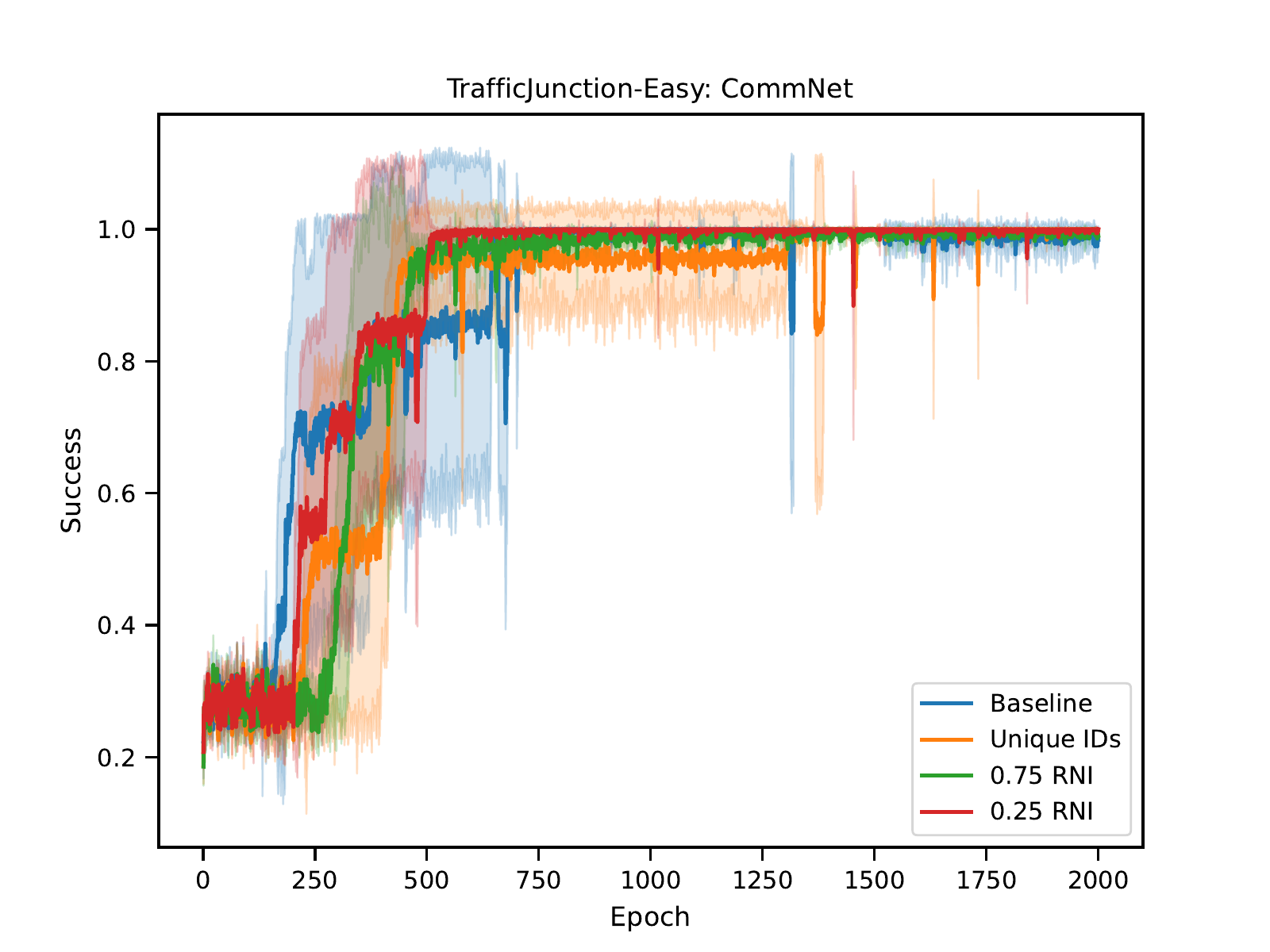}
    \includegraphics[width=0.32\linewidth]{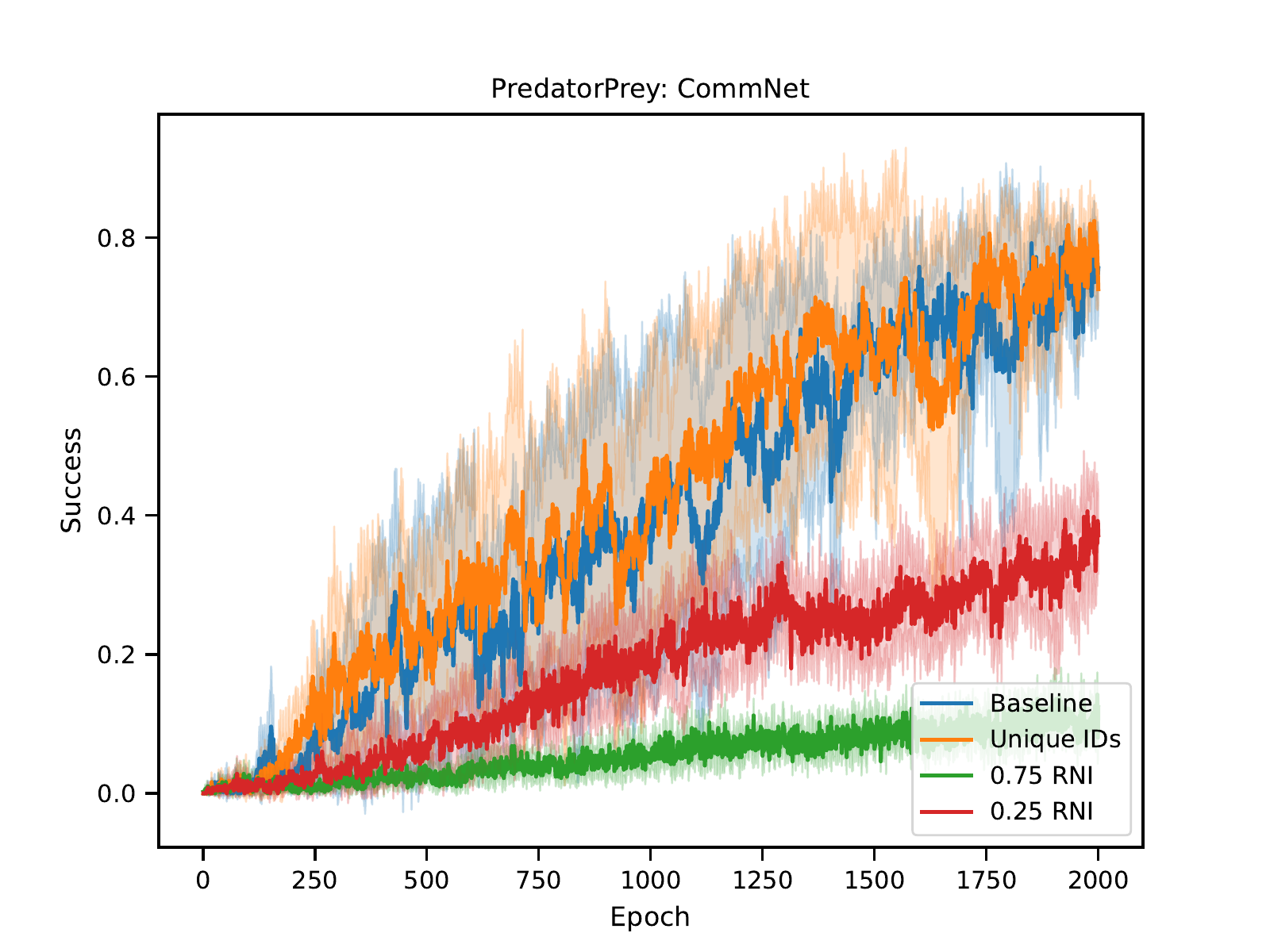}
    \includegraphics[width=0.32\linewidth]{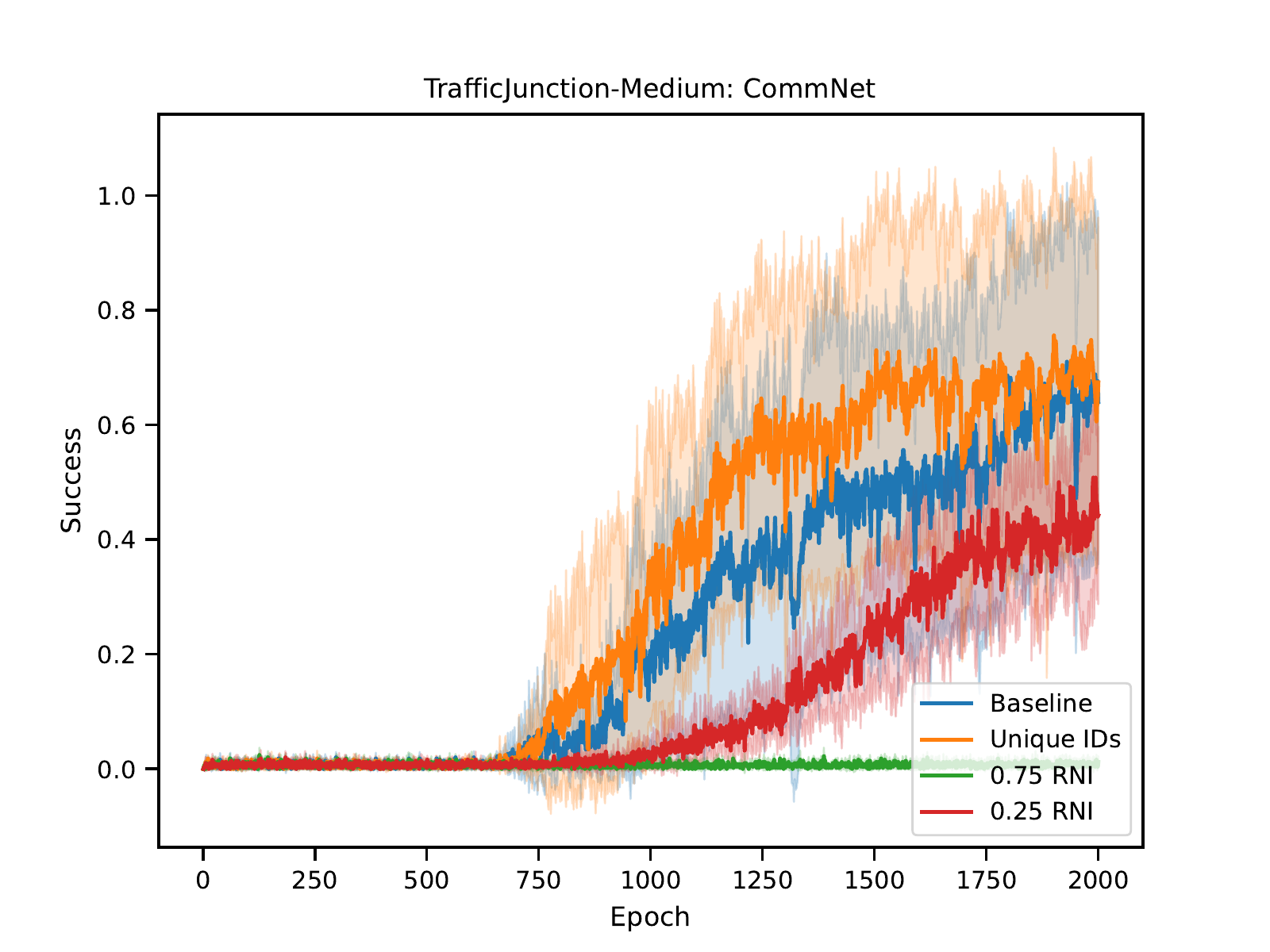}
\end{figure}

\paragraph{Weisfeiler-Lehman Expressivity}
% BoxPushing. Explain how 1-WL models can technically solve the BP env by alternating push strategies between time steps.
% Summary: for solving the BP env, designed to test expressivity, RNI is the clear winner, followed by unique IDs, and then normal models.
Results for the Box Pushing environment are shown in Table \ref{tab:results:bp}. 0.25 RNI is the clear winner, achieving the top performance across almost all baselines and only slightly worse performance in the others. It is never outperformed by the baseline. This indicates that when communication expressivity beyond 1-WL is helpful for solving the environment, using RNI is the clear choice. Across all but one method, unique IDs also outperformed the baseline, demonstrating the benefit of having higher expressivity. However, unique IDs tend to yield less stable solutions and less effective policies than 0.25 RNI. \changemarker{We postulate that it is easier for agents to overfit on the particular unique IDs used, since they are deterministically assigned. On the other hand, using RNI encourages the agents to learn policies which respect the permutation invariance between agents since agents will receive different random observation augmentations at each time step.}

For completeness, we note that expressivity beyond 1-WL is not strictly needed to solve the Box Pushing environment, as demonstrated by several baselines achieving a ``ratio cleared'' score greater than 0.5, meaning they learned to sometimes clear both types of boxes. This is because the environment can be solved, albeit inefficiently, by recurrent policies which learn to alternate actions between ``normal'' and ``power'' moves each time step. Such policies are guaranteed to move the box every 2 time steps.

%
% BoxPushing Table
%
\begin{table}
\small % Make text small to fit
\caption{Mean and 95\% confidence interval for Box Pushing across all baselines}
\label{tab:results:bp}
\centering
\begin{tabular}{@{}rrrrrrrrrrrrrr@{}}
    \toprule
    Baseline & Metric & \textbf{Baseline} && \textbf{Unique IDs} && \textbf{0.75 RNI} && \textbf{0.25 RNI} \\
    \midrule
    
CommNet & Ratio Cleared & $ 0.786 \pm 0.08 $ && \pmb{$ 0.829 \pm 0.08 $} && $ 0.768 \pm 0.08 $ && $ 0.795 \pm 0.09 $ \\
DGN & Ratio Cleared & $ 0.603 \pm 0 $ && $ 0.756 \pm 0.06 $ && \pmb{$ 0.958 \pm 0 $} && $ 0.957 \pm 0.01 $ \\
IC3Net & Ratio Cleared & $ 0.49 \pm 0.15 $ && $ 0.617 \pm 0.14 $ && $ 0.34 \pm 0.18 $ && \pmb{$ 0.676 \pm 0.06 $} \\
MAGIC & Ratio Cleared & $ 0.958 \pm 0.04 $ && $ 0.985 \pm 0.01 $ && $ 0.975 \pm 0.04 $ && \pmb{$ 0.998 \pm 0 $} \\
TarMAC & Ratio Cleared & $ 0.629 \pm 0.14 $ && $ 0.578 \pm 0.11 $ && $ 0.662 \pm 0.06 $ && \pmb{$ 0.679 \pm 0.06 $} \\
T-IC3Net & Ratio Cleared & $ 0.558 \pm 0.13 $ && $ 0.596 \pm 0.11 $ && $ 0.458 \pm 0.18 $ && \pmb{$ 0.643 \pm 0.15 $} \\

    \bottomrule
\end{tabular}
\end{table}

\paragraph{Symmetry Breaking}
% DroneScatter. \textbf{Show comparison numbers in stochastic evaluation for completely random agents}. Explain how agents with stochastic policies can learn to solve the environment without RNI or unique IDs. Do evaluation on A2C methods with greedy evaluation.
Results for the Drone Scatter experiments are shown in Table \ref{tab:results:ds_stochastic} and Table \ref{tab:results:ds_greedy}, for agents with stochastic and greedy evaluation respectively. Across all of them, unique IDs exhibits consistently superior performance than 0.75 RNI and the baseline, since it deterministically breaks the symmetry between agents and allows them to split up easily to solve the environment. In the stochastic case, 0.75 RNI performs similarly to the baseline, but is markedly superior to the baseline in the greedy case. The comparison to a purely random agent indicates that the models are learning much more effective policies than just moving around at random.

Baseline methods with stochastic evaluation achieve consistently higher pairwise distances than their greedy counterparts, meaning they are learning to split up to find the target. They are capable of this due to a combination of 3 things: a stochastic policy, recurrent networks, and agents observing their last actions. Initially, agents cannot differentiate between each other, and all produce the same action distribution. However, if the distribution is diverse, then they are expected to produce different actions since they sample randomly from this distribution, which are observed in the next time step. The different observations lead to different hidden states in the recurrent networks, effectively changing the observations between agents and allowing them to differentiate between each other. This is not the case for greedy action selection, which is common when doing policy evaluation in MARL, where the actions chosen will always be the same.

%
% DroneScatter-Stochastic Table
%
\begin{table}
\small % Make text small to fit
\caption{Mean and 95\% confidence interval for Drone Scatter across all baselines except DGN, including a purely random agent. Stochastic evaluation}
\label{tab:results:ds_stochastic}
\centering
\begin{tabular}{@{}rrrrrrrrrrrr@{}}
    \toprule
    Baseline & Metric & \textbf{Baseline} && \textbf{Unique IDs} && \textbf{0.75 RNI} \\
    \midrule
    
CommNet & Pairwise Distance & $ 11.34 \pm 0.9 $ && \pmb{$ 12.08 \pm 1.12 $} && $ 8.687 \pm 1.4 $ \\
& Steps Taken & $ 11.5 \pm 0.26 $ && \pmb{$ 9.767 \pm 0.32 $} && $ 11.74 \pm 1.39 $ \\
IC3Net & Pairwise Distance & $ 9.108 \pm 1.45 $ && \pmb{$ 13.3 \pm 0.71 $} && $ 10.99 \pm 0.38 $ \\
& Steps Taken & $ 11.94 \pm 0.84 $ && \pmb{$ 10.13 \pm 0.25 $} && $ 11.66 \pm 0.22 $ \\
MAGIC & Pairwise Distance & $ 7.693 \pm 1.47 $ && \pmb{$ 12.59 \pm 1 $} && $ 7.216 \pm 0.76 $ \\
& Steps Taken & $ 13.05 \pm 0.58 $ && \pmb{$ 11.12 \pm 1.05 $} && $ 13.54 \pm 0.21 $ \\
TarMAC & Pairwise Distance & $ 7.448 \pm 0.89 $ && \pmb{$ 10.26 \pm 0.69 $} && $ 8.486 \pm 0.36 $ \\
& Steps Taken & $ 13.49 \pm 0.1 $ && \pmb{$ 10.7 \pm 0.35 $} && $ 12.85 \pm 0.57 $ \\
T-IC3Net & Pairwise Distance & $ 8.891 \pm 0.27 $ && \pmb{$ 12.9 \pm 0.78 $} && $ 9.552 \pm 0.57 $ \\
& Steps Taken & $ 12.28 \pm 0.6 $ && \pmb{$ 10.33 \pm 0.46 $} && $ 12.22 \pm 0.82 $ \\

    \midrule
    Random & Pairwise Distance & $ 5.8 \pm 0.02 $ && -- && -- \\
    & Steps Taken & $ 17.39 \pm 0.04 $ && -- && -- \\

    \bottomrule
\end{tabular}
\end{table}

%
% DroneScatter-Greedy Table
%
\begin{table}
\small % Make text small to fit
\caption{Mean and 95\% confidence interval for Drone Scatter across all baselines. Greedy evaluation}
\label{tab:results:ds_greedy}
\centering
\begin{tabular}{@{}rrrrrrrrrrrr@{}}
    \toprule
    Baseline & Metric & \textbf{Baseline} && \textbf{Unique IDs} && \textbf{0.75 RNI} \\
    \midrule
    
CommNet & Pairwise Distance & $ 8.849 \pm 0.63 $ && \pmb{$ 13.28 \pm 1.27 $} && $ 8.589 \pm 1.35 $ \\
& Steps Taken & $ 13.79 \pm 0.12 $ && \pmb{$ 9.554 \pm 0.33 $} && $ 12.62 \pm 1.19 $ \\
DGN & Pairwise Distance & $ 3.221 \pm 0.18 $ && \pmb{$ 4.427 \pm 0.67 $} && $ 3.706 \pm 0.83 $ \\
& Steps Taken & $ 13.36 \pm 0.15 $ && \pmb{$ 13.27 \pm 0.21 $} && $ 13.46 \pm 0.14 $ \\
IC3Net & Pairwise Distance & $ 7.69 \pm 1.03 $ && \pmb{$ 14.09 \pm 0.54 $} && $ 11 \pm 0.86 $ \\
& Steps Taken & $ 13.25 \pm 0.4 $ && \pmb{$ 10.14 \pm 0.2 $} && $ 11.42 \pm 0.48 $ \\
MAGIC & Pairwise Distance & $ 6.61 \pm 1.28 $ && \pmb{$ 12.58 \pm 0.6 $} && $ 7.107 \pm 1.59 $ \\
& Steps Taken & $ 13.27 \pm 0.18 $ && \pmb{$ 11.84 \pm 0.68 $} && $ 13.61 \pm 0.25 $ \\
TarMAC & Pairwise Distance & $ 8.666 \pm 0.28 $ && \pmb{$ 12.09 \pm 0.73 $} && $ 8.999 \pm 0.94 $ \\
& Steps Taken & $ 13.73 \pm 0.21 $ && \pmb{$ 11.01 \pm 0.86 $} && $ 12.19 \pm 0.82 $ \\
T-IC3Net & Pairwise Distance & $ 7.28 \pm 0.69 $ && \pmb{$ 13.51 \pm 0.98 $} && $ 10.87 \pm 1.17 $ \\
& Steps Taken & $ 13.96 \pm 0.26 $ && \pmb{$ 10.63 \pm 0.66 $} && $ 11.73 \pm 0.54 $ \\

    \bottomrule
\end{tabular}
\end{table}

\begin{figure}
    \centering
    \caption{\changemarker{Training curves for IC3Net (top) and CommNet / DGN (bottom) on Box Pushing (left), Drone Scatter with stochastic evaluation (middle), and Drone Scatter with greedy evaluation (right)}}
    \label{fig:results:designed_envs}
    
    \includegraphics[width=0.32\linewidth]{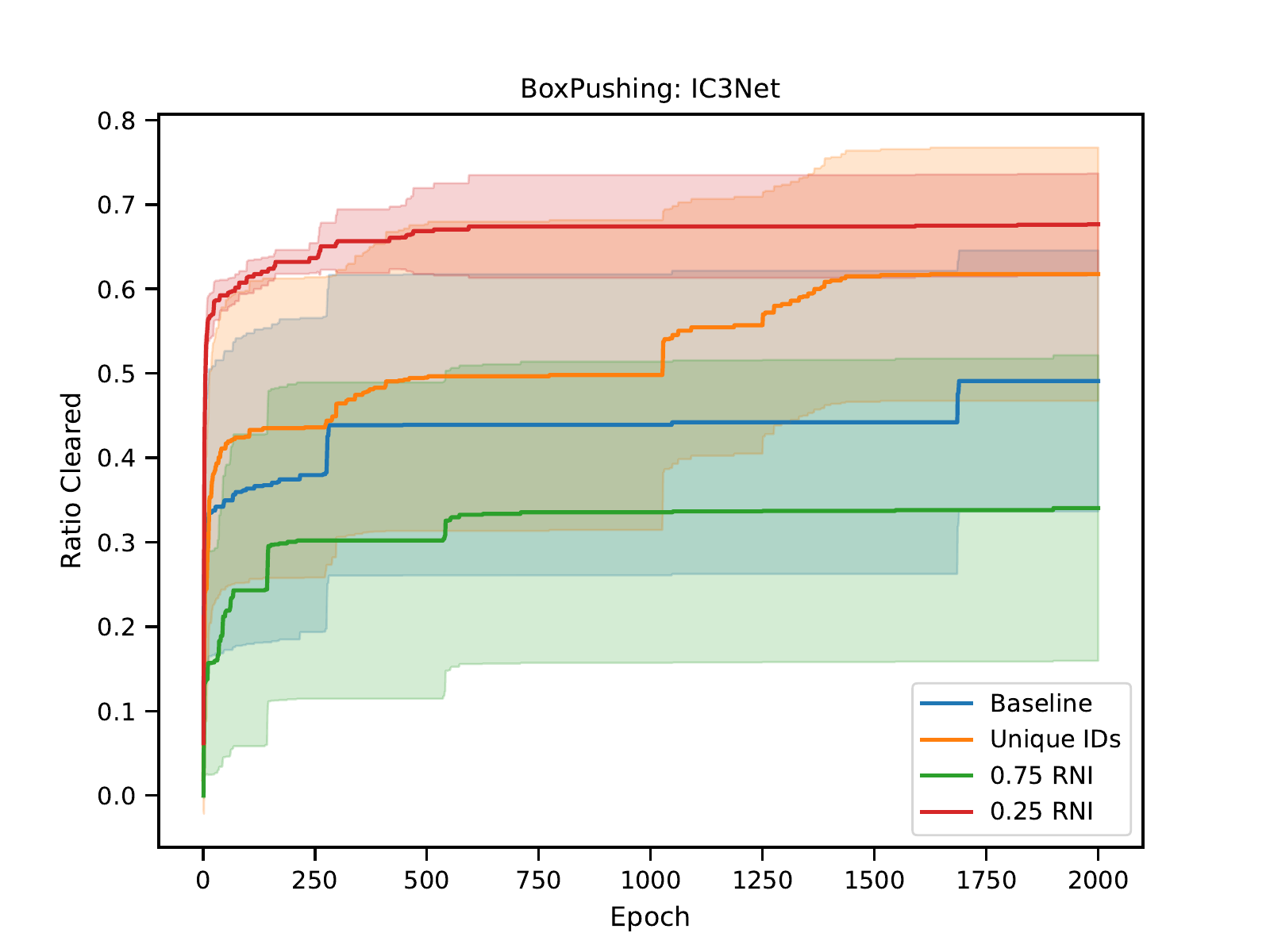}
    \includegraphics[width=0.32\linewidth]{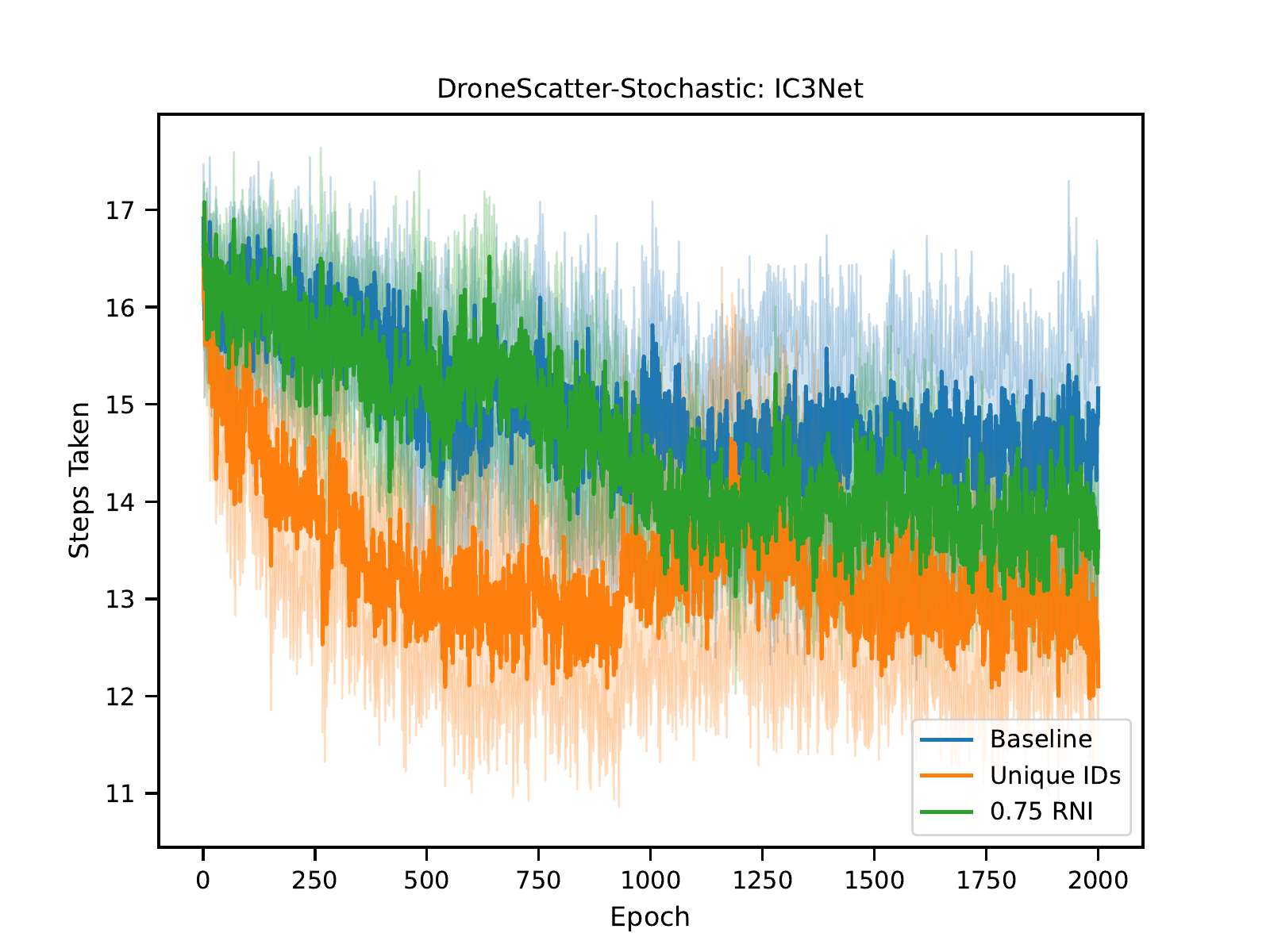}
    \includegraphics[width=0.32\linewidth]{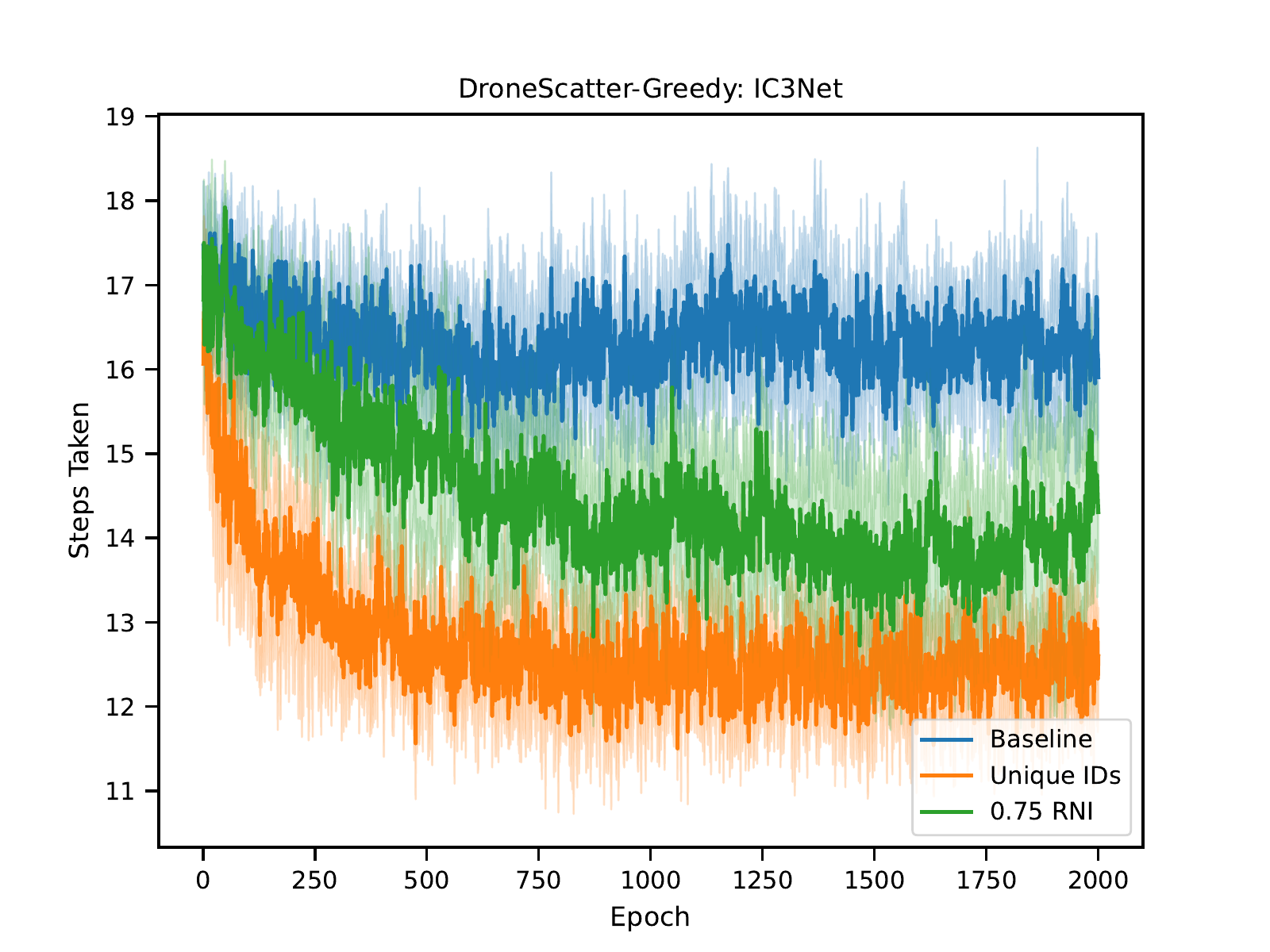}

    \includegraphics[width=0.32\linewidth]{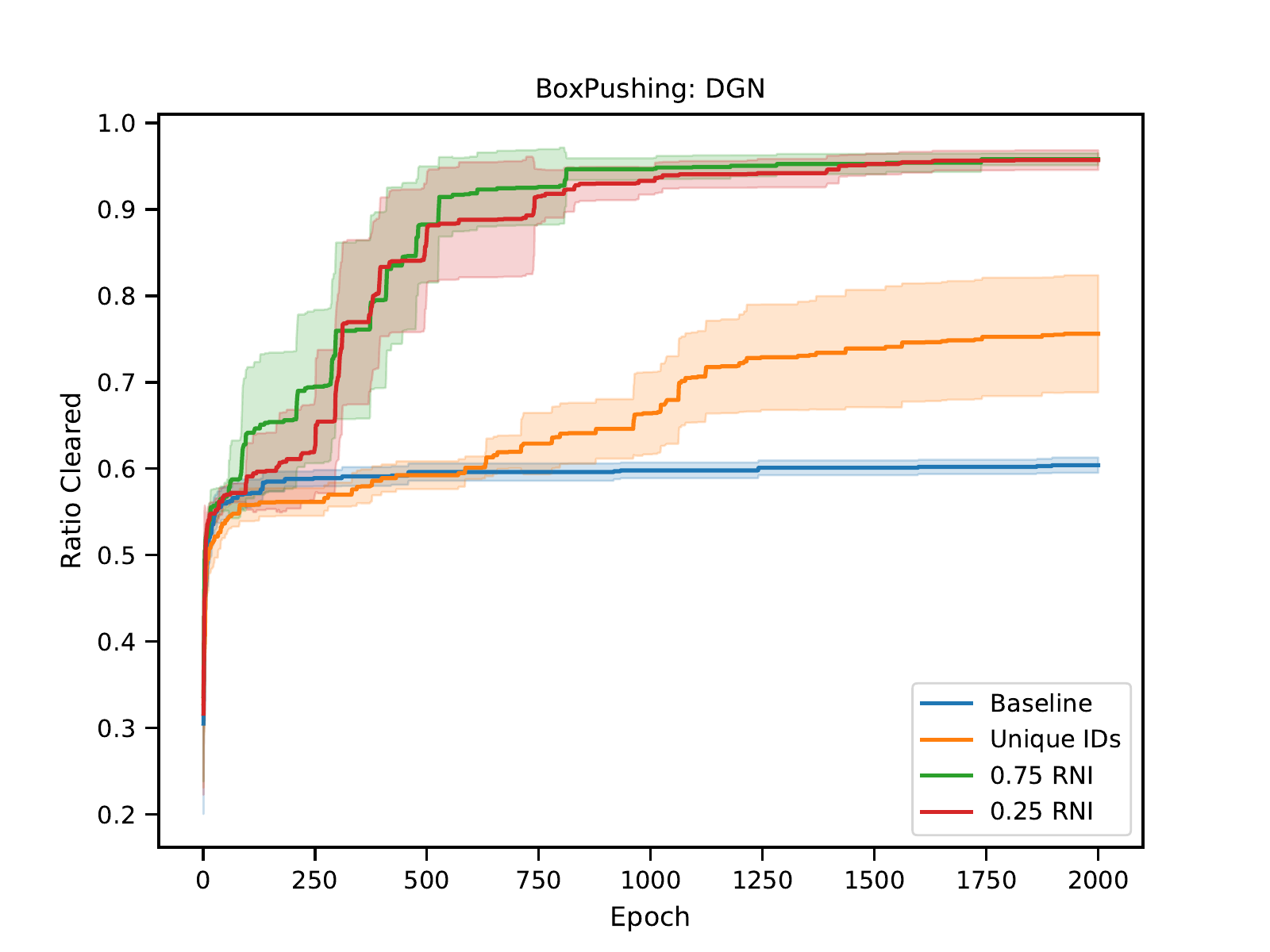}
    \includegraphics[width=0.32\linewidth]{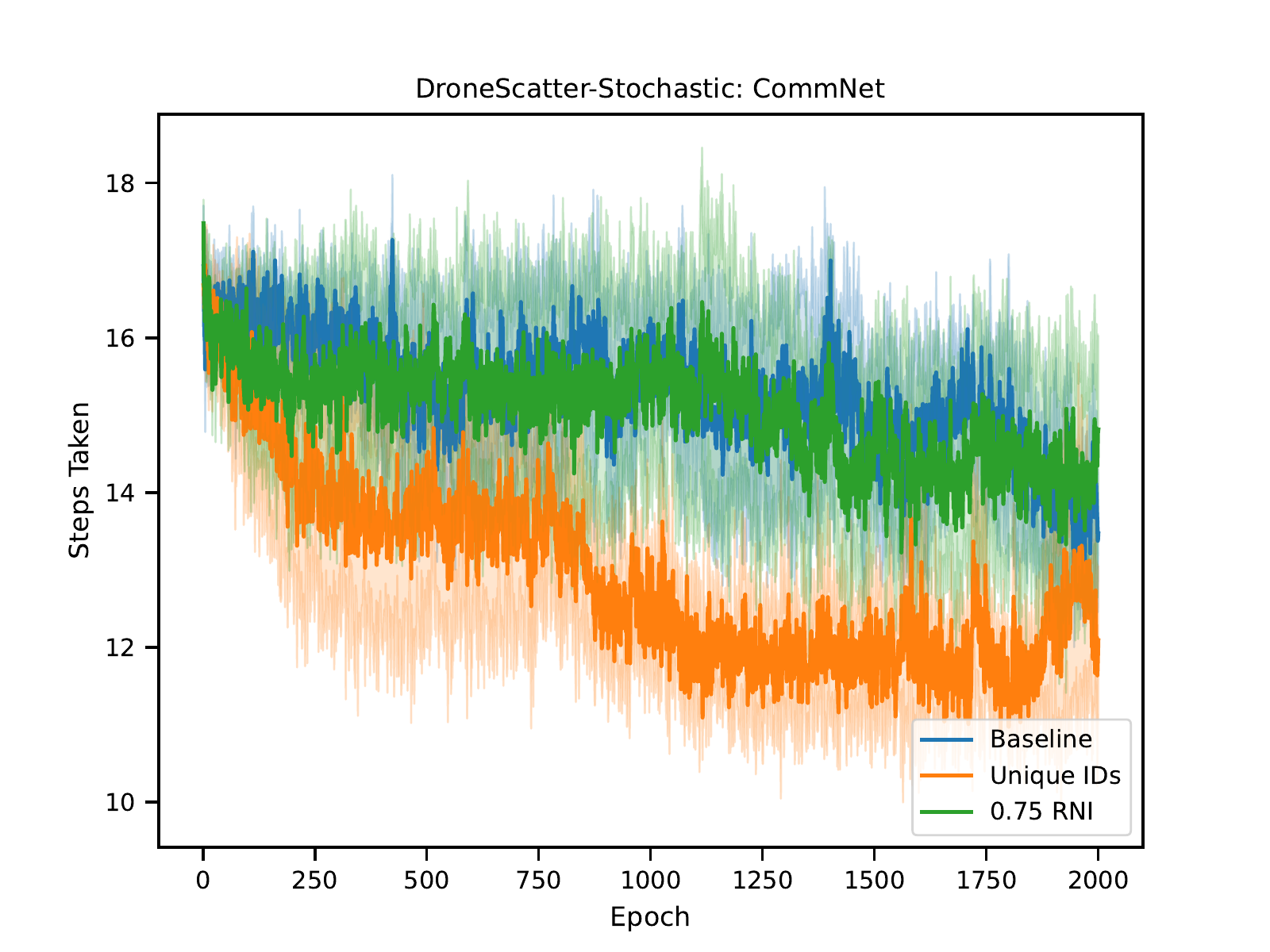}
    \includegraphics[width=0.32\linewidth]{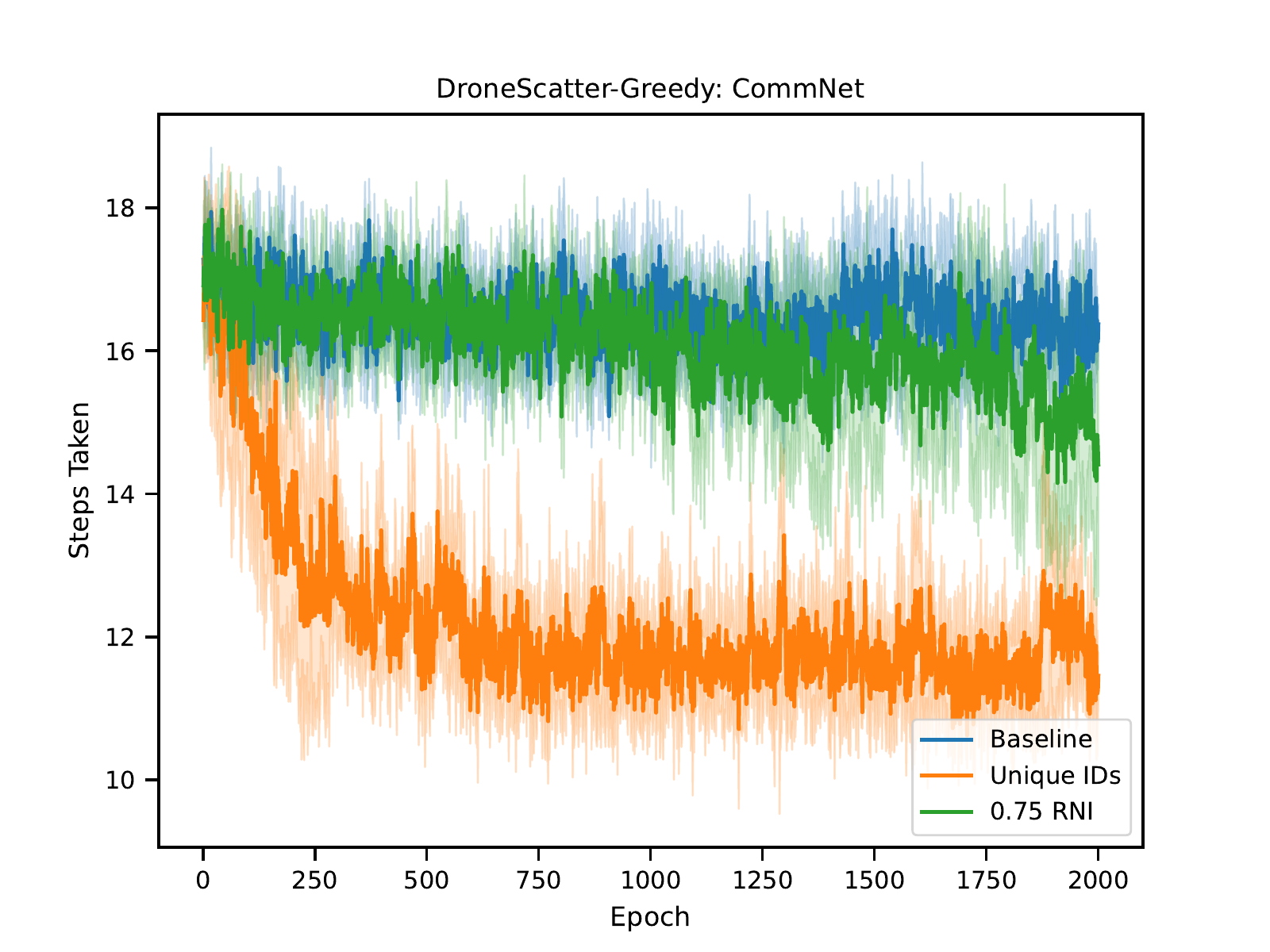}
\end{figure}
\section{Conclusion}
We introduce GDNs, a framework for MARL communication, and formally show how it corresponds to the node labelling problem in GNNs. Our theoretical contributions use this observation to demonstrate that existing MARL communication methods relying on conventional GNN architectures have provably limited expressivity. Driven by this, we prove how augmenting agent observations with unique IDs or random noise yields universally expressive invariant communication in MARL, whilst also providing desirable properties such as being able to perform symmetry-breaking: targeting arbitrary sets of joint actions for identical agents.

Experimentally, we compare these augmentations across 6 different MARL communication baselines that fall within the GDN paradigm, using 3 benchmark communication environments and 2 tasks designed to separately test expressivity and symmetry-breaking. Ultimately, we find that, whilst unique IDs or smaller RNI augmentations can typically be applied without detriment on standard environments, they do not readily provide improved performance either. However, on environments where more complex coordination is required, these augmentations are essential for strong performance. With RNI and unique IDs being best suited to environments requiring increased expressivity and symmetry-breaking, respectively, it is interesting to note that no single method emerges which we can recommend as the \emph{de facto} choice for MARL practitioners. This suggests that a more complete picture of the relationship between communication expressivity and downstream performance on relevant tasks remains an open question for future research. Furthermore, insights into GNN architectures can be leveraged in GDNs, which opens many promising avenues for future work in MARL communication.

\newpage

% Acknowledgements commented out for now
% \input{contents/6-ack}

\bibliographystyle{plainnat}
\bibliography{ref}

\appendix

% Checklist commented out for now
\newpage
% \section{Appendix}
%Optionally include extra information (complete proofs, additional experiments and plots) in the appendix.
%This section will often be part of the supplemental material.

%
% Expanded Background
%
\section{Full Background} \label{app:background}
In this appendix, for technical and theoretical completeness, we expand upon information given in the background of the paper.

\subsection{Multi-Agent Reinforcement Learning}
We consider the case of Decentralized Partially Observable Markov Decision Processes (Dec-POMDPs) \citep{littman1994markov} augmented with communication between agents. In a Dec-POMDP, at each timestep $t$ every agent $i \in \{ 1, ..., N \}$ gets a local observation $o^t_i$, takes an action $a^t_i$, and receives a reward $r^t_i$. The objective is to maximize the agent rewards over the actions. We consider two agent paradigms: value-based and actor-critic.

In value-based methods, such as MADQN \citep{tampuu2017multiagent}, the aim is to learn a function $Q_\theta$ with parameters $\theta$ that estimates the value of taking an action $a_i$ after observing $o_i$. Such methods are often trained using a replay buffer, to which tuples $(O, A, O', R)$ are added, where $O = \{ o_1, ..., o_N \}$ is the set of observations, $A$ is the set of actions, $O'$ is the set of next observations, and $R$ is the set of rewards. If the environment includes communication, the underlying communication graph $C$ can also be included in the replay buffer. These experiences are added to the buffer whilst the agents interact with the environment. To collect diverse experiences, methods such as $\epsilon$-greedy exploration \citep{tokic2010adaptive} can be used. For training, random minibatches of size $B$ are sampled from the buffer, and loss similar to the following is minimized:

$$ L(\theta) =  \frac{1}{B} \sum^B_{b=1} \frac{1}{N} \sum^N_{i=1} (y_i - Q_\theta(o_i, a_i))^2 $$

where $y_i = r_i + \gamma \text{ max}_{a'} Q_{\theta'}({o_i}', {a_i}')$. In this formula, $Q_{\theta'}$ is referred to as the \emph{target network}, and its parameters $\theta'$ are updated softly or intermittently from $\theta$ during training.

In actor-critic methods such as MADDPG \citep{lowe2017multi}, the aim is to learn a policy function $\pi_\theta$ that maps observations onto distributions over actions, where the action most likely to maximize the reward is assigned the highest probability. The policy gradient is estimated by the following:

$$ \nabla_\theta J(\theta) = \mathbb{E}_{o \sim \rho^\pi,~a \sim \pi_\theta} [\sum_{t=1}^{T} \nabla_\theta \log \pi_\theta(a_t | o_t) (R_t - V(o_t))] $$

where $\rho^\pi$ is the observation distribution, $\pi_\theta$ is the policy distribution, $R_t = \sum_{t'=t}^T \gamma^{t'-t} r(s_{t'}, a_{t'})$ is the discounted reward, and $V$ is a learned value function, used to decrease the variance of the estimated policy gradient.

For brevity, we collectively refer to the policy network / value function or Q-network as the \emph{actor network}. Often in MARL, instead of learning an actor network for each agent, a single shared network will be used for all agents. For example, COMA \citep{foerster2018counterfactual}, Q-Mix \citep{rashid2018qmix}, and Mean Field RL \citep{yang2018mean} all share parameters in their neural networks. This parameter sharing typically yields faster and more stable training \citep{gupta2017cooperative}.

\subsection{Communication in MARL}
Many environments require agents to coordinate to solve tasks. Communication is crucial for enabling this. \citet{foerster2016learning, sukhbaatar2016learning} were among the first to propose learned communication in multi-agent reinforcement learning. Since then, many different methods for communication in MARL have been proposed \citep{zhu2022survey}. When communicating, there is a helpful structural inductive bias which can be used: the order in which incoming messages from other agents are processed should not affect the outcome. More formally: agents ought to be \emph{permutation invariant} when using incoming messages. A function $\rho$ is permutation invariant if for any input $X = (x_1, x_2, ..., x_k)$ and any permutation $\sigma$ on $X$, $\rho(X) = \rho(\sigma \circ X)$.

In Section \ref{sec:gdn}, we define Graph Decision Networks (GDNs) using the framework of GNNs, a neural architecture which respects permutation invariance between nodes when doing message passing.

Many of the most successful models for MARL communication fall within this paradigm, including CommNet \citep{sukhbaatar2016learning}, IC3Net \citep{singh2018learning}, GA-Comm \citep{liu2020multi}, MAGIC \citep{niu2021multi}, Agent-Entity Graph \citep{agarwal2019learning}, IP \citep{qu2020intention}, TARMAC \citep{das2019tarmac}, IMMAC \citep{sun2021intrinsic}, DGN \citep{jiang2018graph}, VBC \citep{zhang2019efficient}, MAGNet \citep{malysheva2018deep}, and TMC \citep{zhang2020succinct}. Other models such as ATOC \citep{jiang2018learning} and BiCNet \citep{peng2017multiagent} do not fall within the paradigm since they use LSTMs for combining messages, which are not permutation invariant, and models such as RIAL, DIAL \citep{foerster2016learning}, ETCNet \citep{hu2020event}, and SchedNet \citep{kim2019learning} do not since they used a fixed message-passing structure.

\subsection{Graph Neural Networks}
A graph consists of nodes and edges connecting them. Nodes and edges can have attributes (also referred to as features or labels), which often take the form of real vectors. Formally, we define an attributed graph $G$ as a triple $(V, E, a)$, where $V(G)$ is a finite set of nodes, $E(G) \subseteq \{ (u, v) ~|~ u, v \in V(G) \}$ is a set of directed edges, and $a: V(G) \cup E(G) \to \mathbb{R}^d$ is an attribute function where $d > 0$. For $w \in V(G) \cup E(G)$, $a(w)$ is the attribute of $w$. Undirected graphs are ones in which $E(G)$ is a symmetric relation on $V(G)$. Graphs are found in many different areas of application, leading to a plethora of research of how to learn using graph structured data \citep{barabasi2004network, easley2010networks, simonovsky2017dynamic}.

When considering functions operating on graphs, it is sensible to demand permutation invariance and equivariance. Intuitively: the output of any function on a graph should not depend on the order of the nodes (invariance), and if the function provides outputs for each node, then re-ordering the nodes of the input graph should be equivalent to applying the same re-ordering to the output values (equivariance). Formally, let $S(V(G))$ be the set of all permutations of $V(G)$, $D$ a set of graphs, and $L$ a set of potential output attributes (e.g. $\mathbb{R}^3$). Then a function $f: D \to L$ is \emph{invariant} if:

$$ \forall \text{ graphs } G \in D, \forall \text{ permutations } \sigma \in S(V(G)), f(G) = f(\sigma \circ G) $$

A function $f: D \to L^{|V(G)|}$ is likewise \emph{equivariant} if instead $f(\sigma \circ G) = \sigma \circ f(G)$.

GNNs belong to a class of neural methods that operate upon graphs. The term is often used to refer to a large variety of models; in this paper, we define the term ``Graph Neural Networks'' to correspond to the definition of Message Passing Neural Networks (MPNNs) by \citet{gilmer2017neural}, which is the most common GNN architecture. Notable instances of this architecture include Graph Convolutional Networks (GCNs) \citep{duvenaud2015convolutional}, GraphSAGE \citep{hamilton2017inductive}, and Graph Attention Networks (GATs) \citep{velivckovic2017graph}. A GNN consists of multiple message-passing layers, where each layer aggregates node attribute information to every node from its neighbours in the graph, and then uses the aggregated information with the current node attribute to assign a new value to the attribute, passing all updated node attributes to the next GNN layer. GNNs are often augmented with global readouts, where in each layer we also aggregate a feature vector for the whole graph and use it in conjunction with the local aggregations \citep{barcelo2020logical}. For layer $m$ and node $i$ with current attribute $v_i^m$, the new attribute $v_i^{m+1}$ is computed as

$$ v_i^{m+1} := \fupdate^{\theta_m}(v_i^{m},~ \faggr^{\theta_m'}(\{ v_j^m ~|~ j \in N(i) \}),~ \fread^{\theta_m''}(\{ v_j^m ~|~ j \in V(G) \}) ) $$

where $N(i)$ is all nodes with edges connecting to $i$ and $\theta_m, \theta_m', \theta_m''$ are the (possibly trainable) parameters of the update, aggregation, and readout functions for layer $m$. Parameters may be shared between layers, e.g. $\theta_0 = \theta_1$. The functions $\faggr^{\theta_m'}, \fread^{\theta_m''}$ are permutation invariant. A global graph feature can be computed by having a final \emph{readout layer} which aggregates all node attributes into a single feature. Importantly, GNNs are invariant / equivariant graph functions.

\subsection{Expressivity and Related Work} \label{sec:related_work}
% Introduce papers that talk about GNN expressivity / RL expressivity (just a thumbnail sketch of each)
\citet{morris2019weisfeiler, xu2018powerful} concurrently showed that any GNN cannot be more powerful than the 1-WL graph-coloring algorithm in terms of distinguishing non-isomorphic graphs, meaning that there are pairs of non-isomorphic graphs $G_1, G_2$ which for any GNN $f$, $f(G_1) = f(G_2)$. \citet{morris2019weisfeiler} also define $k$-GNNs, a class of higher-order GNNs which have the same expressive power as the $k$-WL algorithm. \citet{chen2020can} prove that GNNs cannot count certain types of sub-structures and that certain higher-order GNNs can.

\citet{garg2020generalization} prove that some important graph properties cannot be computed by GNNs, and also provide data dependent generalization bounds for GNNs. \citet{nt2019revisiting} show that GNNs only perform low-pass filtering on attributes and investigate their resilience to noise in the features. \citet{barcelo2020logical} prove a direct correspondence between GNNs and Boolean classifers expressed in the first-order logic fragment FOC$_2$.

\citet{loukas2019graph} demonstrates how GNNs lose expressivity when their depth and width are restricted. \citet{loukas2020hard} further analyzes the expressive power of GNNs with respect to their \emph{communication capacity}, a measure of how much information the nodes of a network can exchange during message-passing. \citet{oono2019graph} analyze the expressive power of GNNs as the number of layers tends to infinity, proving that under certain conditions, the output will carry no information other than node degrees and connected components.

In the space of expressivity for reinforcement learning, \citet{dong2020expressivity} compare model-free reinforcement learning with the model-based approaches with respect to the expressive power of neural networks for policies and Q-functions. \citet{castellini2019representational} empirically evaluate the representational power of different value-based RL network architectures using a series of one-shot games. The simplistic games capture many of the issues that arise in the multi-agent setting, such as the lack of an explicit coordination mechanism, which provides good motivation for the inclusion of communication.

\subsection{Weisfeiler-Lehman Expressivity}
% Weisfeiler-Lehman expressivity. Important because equivalence of distinguishing vs approximating? Talk about other ways to go beyond.
1-WL \citep{weisfeiler1968reduction} is a graph coloring algorithm that tests if two graphs are non-isomorphic by iteratively re-coloring the nodes. Given an initial graph coloring corresponding to the node labels, in each iteration, two nodes with the same color get assigned different colors if the number of identically colored neighbors is not equal. If, at some point, the number of nodes assigned a specific color is different across the two graphs, the algorithm asserts that the graphs are not isomorphic. However, there are non-isomorphic graphs which the algorithm will not recognize as non-isomorphic, e.g. in Figure \ref{fig:1wl_graphs_full}. \citet{morris2019weisfeiler, xu2018powerful} proved that for any two non-isomorphic graphs indistinguishable by 1-WL, there is no GNN that can produce different outputs for the two graphs. Furthermore, there is a fundamental link between this graph separation power and function approximation power. \citet{chen2019equivalence} proved that a class of models can separate all graphs if and only if it can approximate any continuous invariant function.

There are several GNN architectures which are designed to go beyond 1-WL expressivity. \citet{morris2019weisfeiler} propose $k$-GNNs, which are equivalent to the $k$-WL algorithm. \citet{morris2021weisfeiler} also show the link between $k$-order equivariant graph networks (EGNs) \citep{kondor2018covariant} and the $k$-WL algorithm. Other attempts also include using unique node IDs and passing matrix features \citep{vignac2020building}, relational pooling \citep{murphy2018janossy}, and random dropouts \citep{papp2021dropgnn}. \citet{morris2021weisfeiler} provide an overview of many such higher-order models. However, many of these models do not scale well and are computationally infeasible in practice.

\begin{figure}
    \centering
    \caption{A pair of graphs indistinguishable by 1-WL}
    \label{fig:1wl_graphs_full}

    \begin{tikzpicture}
        \node[main node] (1) {};
        \node[main node] (2) [right = of 1]  {};
        \node[main node] (3) [right = of 2] {};
        \node[main node] (4) [right = of 3] {};
        \node[main node] (5) [below = of 4] {};
        \node[main node] (6) [left = of 5] {};
        \node[main node] (7) [left = of 6] {};
        \node[main node] (8) [left = of 7] {};
    
        \path[draw,thick]
        (1) edge node {} (2)
        (2) edge node {} (3)
        (3) edge node {} (4)
        (4) edge node {} (5)
        (5) edge node {} (6)
        (6) edge node {} (7)
        (7) edge node {} (8)
        (8) edge node {} (1)
        ;
        \begin{scope}[xshift=7cm]
        \node[main node, fill=orange] (1) {};
        \node[main node, fill=orange] (2) [right = of 1]  {};
        \node[main node, fill=orange] (3) [below = of 2] {};
        \node[main node, fill=orange] (4) [below = of 1] {};
    
        \path[draw,thick]
        (1) edge node {} (2)
        (2) edge node {} (3)
        (3) edge node {} (4)
        (4) edge node {} (1)
        ;
        \end{scope}
        
        \begin{scope}[xshift=10cm]
        \node[main node, fill=orange] (1) {};
        \node[main node, fill=orange] (2) [right = of 1]  {};
        \node[main node, fill=orange] (3) [below = of 2] {};
        \node[main node, fill=orange] (4) [below = of 1] {};
    
        \path[draw,thick]
        (1) edge node {} (2)
        (2) edge node {} (3)
        (3) edge node {} (4)
        (4) edge node {} (1)
        ;
        \end{scope}
    \end{tikzpicture}
\end{figure}
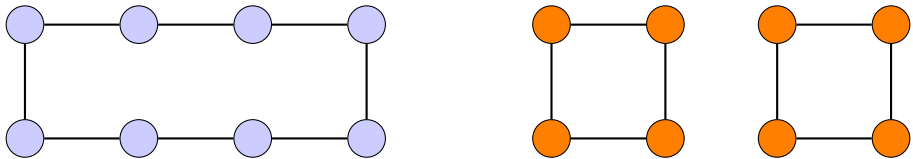

\subsection{Random Node Initialization}
% RNI background
\citet{sato2021random} propose augmenting GNNs with \emph{random node initialization} (RNI), where for each node in the input graph, a number of randomly sampled values are concatenated to the original node attribute. For all graphs / nodes, the random values are sampled from the same distribution. \citet{abboud2020surprising} prove that such GNNs are universal and can approximate any permutation invariant graph function. Technically, random initialization breaks the node invariance in GNNs, since the result of the message passing will depend on the structure of the graph as well as the values of the random initializations. However, when one views the model as computing a random variable, the random variable is still invariant when using RNI. In expectation, the mean of random features will be used for GNN predictions, and is the same across each node. However, the variability of the random samples allow the GNN to discriminate between nodes that have different random initializations, breaking the 1-WL upper bound.

The above is formally described by \citet{abboud2020surprising} as follows. Let $G_n$ be the class of all $n$-node graphs (i.e. graphs that consist of at most $n$ nodes) and let $f: G_n \to \mathbb{R}$. We say that a randomized function $X$ that associates with every graph $G \in G_n$ a random variable $X(G)$ is an $(\epsilon, \delta)$-\emph{approximation} of $f$ if for all $G \in G_n$ it holds that $\text{Pr}(|f(G) - X(G)| \leq \epsilon) \geq 1 - \delta$. Note that an MPNN $N$ with
RNI computes such functions $X$. If $X$ is computed by $N$, we say that $N$ $(\epsilon, \delta)$-\emph{approximates} $f$.

They state the following theorem:

\begin{theorem*} \label{thm:rni_universal}
Let $n \geq 1$ and let $f: G_n \to \mathbb{R}$ be invariant. Then for all $\epsilon, \delta > 0$, there is an MPNN with RNI that $(\epsilon, \delta)$-\emph{approximates} $f$.
\end{theorem*}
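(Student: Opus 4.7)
The plan is to reduce the random initialization setting to one with unique node identifiers and then invoke a universal approximation argument for MPNNs on such ``pre-labelled'' graphs. There are three ingredients: bounding the probability that RNI fails to produce distinct features, establishing universality on inputs with unique IDs, and combining both to obtain the claimed $(\epsilon,\delta)$-approximation.

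First I would control the failure probability of the random labelling. Concatenating $k$ coordinates drawn independently from a sufficiently rich distribution (for example uniform on $[0,1]$, where equalities occur with probability zero, or a fine enough discretization) and taking a union bound over the $\binom{n}{2}$ pairs of nodes shows that the probability of \emph{any} two nodes sharing their random augmentation can be driven below $\delta$ by choosing $k$ large enough. Call $\mathcal{E}$ the event that all $n$ nodes receive pairwise distinct RNI vectors; then $\Pr(\mathcal{E}) \geq 1-\delta$, and conditional on $\mathcal{E}$ every node carries a unique identifier.

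Next I would appeal to a universal approximation result for MPNNs on graphs whose nodes already carry distinct labels. The standard route is to argue that, with depth at least the diameter of the graph and sufficient width, a GIN-style aggregator (injective on multisets of bounded-norm embeddings, in the spirit of Xu et al.) can encode the full multiset of labelled $r$-hop neighbourhoods at every node for $r$ up to $n$, which is enough to distinguish every isomorphism class in $G_n$. Composing a final MLP readout and invoking classical MLP universal approximation yields, for any fixed $\epsilon > 0$, an MPNN that uniformly approximates $f$ to within $\epsilon$ on the finite set of isomorphism classes of $G_n$. Invariance of $f$ ensures that the particular assignment of random labels to nodes does not bias the target, so the same network works for every realization of the RNI vector in $\mathcal{E}$. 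Combining this with the probability bound on $\mathcal{E}$ gives the $(\epsilon,\delta)$-approximation directly.

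The hard part will be the second step: actually constructing MPNN update and aggregation functions that remain expressible by standard layers, handling the finite-precision issue introduced by sampling (so that the ``distinct'' labels are separable by the chosen aggregator with sufficient margin), and making the construction uniform in $n$ so that a single architecture approximates $f$ across \emph{all} graphs in $G_n$ simultaneously rather than only for a fixed target graph. The probability bound and the reduction to unique IDs are conceptually straightforward; the universality step carries the real technical content, and essentially recapitulates the injective-aggregation arguments underlying the proof of Abboud et al.\ \citep{abboud2020surprising}.
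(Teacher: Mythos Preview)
First, note that the paper does not actually prove this theorem: it is stated verbatim as a result of \citet{abboud2020surprising} and cited without argument. The paper's own contribution is the \emph{equivariant} extension (Theorem~\ref{thm:rni_equivariant}), whose proof explicitly piggybacks on the machinery in the appendix of \citet{abboud2020surprising}. So the relevant comparison is between your proposal and the Abboud et al.\ proof as the paper summarises it.

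Your high-level decomposition---(i) RNI yields pairwise distinct node features with probability at least $1-\delta$, (ii) universality on graphs with distinct identifiers, (iii) combine---matches theirs; step (i) is precisely their Lemma~A.5. The routes diverge at step (ii). Abboud et al.\ do \emph{not} argue via GIN-style injective aggregation. Instead they go through two-variable counting logic: for every individualised coloured graph $G$ there is a $C^2$-sentence $\chi_G$ identifying it (their Lemma~A.3); any invariant Boolean $h$ is then captured by a finite disjunction $\psi_h=\bigvee_{G:h(G)=1}\chi_G$ (their Lemma~A.4); and an MPNN is built to evaluate $\psi_h$, leaning on the known correspondence between MPNNs and $\mathrm{FOC}_2$. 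Real-valued $f$ is handled by approximating with Boolean indicators. Your closing remark that the hard step ``essentially recapitulates the injective-aggregation arguments underlying'' their proof is therefore off: the actual engine is logical, not combinatorial.

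This matters for the one point in your sketch that is genuinely underspecified. You need a \emph{single} MPNN whose output is $\epsilon$-close to $f(G)$ for \emph{every} realisation of the random features in $\mathcal{E}$, not merely for one fixed labelling. Saying ``invariance of $f$ ensures the same network works for every realisation'' is not yet an argument: $f$ being label-blind does not by itself make your network label-blind. The logic route sidesteps this because the sentence $\psi_h$ is, by construction, satisfied by $G$ under \emph{any} individualising colouring, so the MPNN that evaluates it inherits this uniformity for free (after a discretisation of the RNI range to keep the set of colourings finite). A GIN-based construction can be made to work, but you would have to show explicitly that the readout collapses all individualising labellings of the same underlying $G$ to the same value up to $\epsilon$; that is the real content you have deferred, and it is not the path Abboud et al.\ take.
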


\subsection{Other Conditions for Expressivity}
% Number of message passing steps (also i.t.o communication capacity), choice of aggregator, communication capacity (particularly message size), and expressivity of policy networks / Q-networks.
For another method of analyzing GDN expressivity through the lens of GNNs, consider the work of \citet{loukas2020hard}, which defines the \emph{communication capacity / complexity} $c_g$ of GNNs, a measure of how much information the nodes can exchange during message passing. The following intuitive theorem is proven:

\begin{theorem*}
Let $f$ be an MPNN with $d$ layers, where each has width $w_m$ (attribute size), message size $a_m$ (output of aggregation), and a global state of size $\gamma_m$ (output of global readout). For any disjoint partition of $V$ into $V_a, V_b$, where $\text{cut}(V_a, V_b)$ is the size of the smallest cut separating $V_a, V_b$:

$$ c_g \leq \text{cut}(V_a, V_b) \sum_{m=1}^{d} \text{min}(a_m, w_m) + \sum_{m=1}^{d} \gamma_m $$
\end{theorem*}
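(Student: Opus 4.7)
The plan is to treat communication capacity $c_g$ as the maximum information (say in the sense of mutual information, or equivalently the binary communication complexity under Yao's model) that can be conveyed between the input attributes restricted to $V_a$ and the final representations on $V_b$ under the GNN $f$, and then to upper bound this quantity by tracking, layer by layer, every ``channel'' by which information can traverse the cut. My first step would be to fix the partition $V = V_a \sqcup V_b$ and identify the cut edges $\partial = \{(u,v) \in E : u \in V_a, v \in V_b\} \cup \{(u,v) \in E : u \in V_b, v \in V_a\}$ with $|\partial| = \mathrm{cut}(V_a, V_b)$, and then argue by induction on the layer index $m$ that any information about the $V_a$-side of the input that appears in node states on the $V_b$ side at layer $m$ must have flowed through either (i) one of the $|\partial|$ cut edges in some layer $m' \le m$, or (ii) the global readout in some layer $m' \le m$.

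Next, I would carry out the per-layer accounting. For each cut edge at layer $m$, the only object crossing it is a single aggregated message, whose size is at most $a_m$ bits; however, this message is produced as a deterministic function of the sending node's attribute of width $w_m$, so by the data processing inequality it also carries at most $w_m$ bits, giving the combined bound $\min(a_m, w_m)$ per cut edge. For the global readout at layer $m$, every node on the $V_b$ side simultaneously receives the same summary vector of size $\gamma_m$, so the total novel information entering $V_b$ via the readout in that layer is bounded by $\gamma_m$ regardless of how many receiving nodes lie in $V_b$. Summing these two contributions gives a per-layer cross-cut bound of $\mathrm{cut}(V_a, V_b) \cdot \min(a_m, w_m) + \gamma_m$.

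To complete the proof, I would chain the per-layer bounds by composition. Concretely, let $S_m$ denote the state on $V_b$ after layer $m$; then $S_m$ is a function of $S_{m-1}$ together with the cut-crossing messages and the global readout generated at layer $m$, so $I(X_{V_a}; S_m) \le I(X_{V_a}; S_{m-1}) + \mathrm{cut}(V_a, V_b) \cdot \min(a_m, w_m) + \gamma_m$, with $I(X_{V_a}; S_0) = 0$ since the layer-$0$ states on $V_b$ depend only on inputs in $V_b$. Unrolling this recursion for $m = 1, \dots, d$ gives exactly the claimed bound on $c_g$.

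The main obstacle I anticipate is the first step rather than the arithmetic: pinning down a definition of $c_g$ that is both faithful to Loukas's notion of communication complexity and makes the information-theoretic inequalities above clean. In particular, care is needed with the global readout, because it is itself computed from all node states (including those in $V_a$), so one must separately argue that the $\gamma_m$ bits it contributes to $V_b$ cannot be re-amplified by combining with later cut-edge messages; this is handled by the data processing inequality applied to the composition, but writing it out rigorously for MPNNs with arbitrary (possibly non-linear, possibly randomized) update functions is the delicate part. Once that formalism is fixed, the edge-by-edge and layer-by-layer bookkeeping is routine.
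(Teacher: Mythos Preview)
The paper does not prove this theorem; it is quoted verbatim as a result of \citet{loukas2020hard} in the appendix section ``Other Conditions for Expressivity,'' with the proof deferred entirely to that reference. There is therefore no proof in the paper to compare your proposal against.

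That said, your sketch is broadly the right shape for how Loukas's argument goes, though the formalism differs. Loukas defines $c_g$ not via mutual information but through Yao's two-party communication model: Alice holds the inputs on $V_a$, Bob holds the inputs on $V_b$, and $c_g$ is the number of bits they must exchange to simulate the MPNN's output. The proof then shows that the MPNN can be simulated by a protocol in which, at each layer $m$, the parties exchange exactly the messages on the cut edges (each costing $\min(a_m, w_m)$ bits, since the message is both an $a_m$-sized vector and a function of a $w_m$-sized state) plus the global readout vector (costing $\gamma_m$ bits, sent once). Your mutual-information recursion $I(X_{V_a}; S_m) \le I(X_{V_a}; S_{m-1}) + \ldots$ is a valid alternative lens and yields the same arithmetic, but it is not how the original result is framed, and your own caveat about ``pinning down a definition of $c_g$'' is exactly the point: once you adopt the protocol-simulation definition, the data-processing subtleties you worry about with the global readout disappear, because you are just counting transmitted bits rather than bounding conditional entropies.
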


\citet{loukas2020hard} prove that MPNNs with sub-quadratic and sub-linear capacity (with respect to the number of nodes) cannot compute the isomorphism class of graphs and trees respectively, demonstrating that capacity is an important consideration for GNN expressivity. This is also supported empirically. When designing a GNN model, we have no control over the structure of the input graphs. Thus, all features apart from $\text{cut}(V_a, V_b)$ are important considerations for the architecture. In GDNs, this corresponds to the message sizes $m$ and the number of rounds of message passing $d$. We provide the practical recommendation that when choosing $\{ m, d \}$, one ought to scale them such that $m \cdot d \in \Omega(n^2)$, where $n$ is the number of agents. In environments where communication is limited by range or obstacles, the number of rounds of message passing is also important when it comes to increasing an agent's \emph{receptive field}: from how many edges away information is propagated to the agent.

%
% Proofs
%
\newpage
\section{Proofs} \label{app:proofs}

Before we dive into the proofs, here follows a few brief notes / clarifications on the theory outlined in the paper.

\paragraph{Neural Network Function Approximation}
GNNs consist of the composition of update, aggregate, and readout functions, all of which are approximated by neural networks. Thus, given that neural networks are only universal approximators for \emph{continuous} functions, if standard neural architectures are used within GNNs, then GNNs can only ever approximate continuous functions. As such, if standard neural networks are used, then all universal approximation results in this paper require the additional assumption that the function being approximated is continuous.

\paragraph{GNN Vector Targeting}
For all theorems and proofs that utilize $\mathbb{R}$, such as equivariant graph functions with codomain $\mathbb{R}^n$, note that the scalar $\mathbb{R}$ can be replaced with the vector $\mathbb{R}^k$ for any $k > 1$, whilst maintaining correctness.

To demonstrate this, let $k > 1$ and consider an equivariant graph function $f: G_n \to (\mathbb{R}^k)^n$ that we are trying to approximate. We can instead approximate $k$ different functions $f_1, f_2, ..., f_k$ with $f_j: G_n \to \mathbb{R}^n$, such that $\forall G \in G_n ~ \forall i \in \{ 1, 2, ..., n \}, ~ f(G)_i = (f_1(G)_i, f_2(G)_i, ..., f_n(G)_i)$. These functions can be approximated using the theoretical results we currently have, with GNNs $g_1, g_2, ..., g_k$. We will simulate the application of all of these GNNs using a single GNN.

Thus, $f$ can also be approximated by the following construction: use an initial GNN layer with the update function defined such that $\fupdate(v) := (v, v, ..., v)$, where $v$ is transformed into $k$ copies of itself. Then, define the update, aggregation, and readout functions of each layer using the GNNs $g_1, g_2, ..., g_k$, conditioning each one on a portion of the node attributes. For example, the update function of layer $m$ for node $i$ on attribute $(v_1, v_2, ..., v_k)$ is defined by $\fupdate^{\theta_m}(v_1, v_2, ..., v_k) := ( g_{1 \text{ update}}^{\theta_m}(v_1), ..., g_{k \text{ update}}^{\theta_m}(v_k) )$.

This construction simulates the application of $f_1, f_2, ..., f_k$ in parallel and thus approximates $f$.

\paragraph{Recurrent GNNs}
In Section \ref{sec:gdn}, it is stated that models with recurrent networks also fall within the GDN paradigm, where the hidden or cell states for the networks can be considered as part of the agent observations. In this section, we briefly demonstrate how this can be done.

More concretely, consider a scenario with $n$ agents, where each GNN layer $m \in R \subseteq \{ 1, 2, ..., M \} $ uses hidden or cell states $C^{m-1} := \{ c_1^{m-1}, c_2^{m-1}, ..., c_n^{m-1} \}$. In this case, $R$ represents the list of layers that use recurrent networks. In a non-recurrent GNN, if layer $m$ is expressed as a function, it takes as input only the node attributes from the previous layer: $V^{m-1}$. In the recurrent case, it also takes $C^{m-1}$ as input: $m(V^{m-1}, C^{m-1})$.

To express this in terms of a non-recurrent GNN, instead modify the initial node attributes $V^0 = \{ v_1^0, v_2^0, ..., v_n^0 \}$ such that for each node $i$, $v_i^0 := (v_i^0, c_i^1, c_i^2, ..., c_i^{m'})$, where $m' := $ max value in $R$. Then in each layer $m \in \{ 1, 2, ..., M \} $ of the GNN and for each node $i$, only update $v_i^{m-1}$ to $v_i^m$ in $(v_i^{m-1}, c_i^1, c_i^2, ..., c_i^{m'})$, leaving all other portions of the attribute as-is. As input to the update, if $m \in R$ then use only $v_i^{m-1}$ and $c_i^m$, and otherwise use only $v_i^{m-1}$.

The above-described non-recurrent GNN computes the exact same output as the recurrent GNN it is emulating, by simply pulling hidden or cell states from a portion of the initial node attributes that has been set aside for them.

% Equivariant functions same graph orbit
\newpage
\subsection{Theorem \ref{thm:equi_same}}
\begin{theorem*}
Given a GDN $f$, observations $O = \{ o_1, ..., o_n \}$, and communication graph $G$ such that nodes $i$ and $j$ are similar in $G$ and $o_i = o_j$, then it holds that $f(O)_i = f(O)_j$.
\end{theorem*}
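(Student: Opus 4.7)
The plan is to invoke the equivariance of the GDN as a function on the attributed communication graph, combined with the existence of a label-preserving automorphism swapping $i$ and $j$. I view the pair $(G, O)$ as a single attributed graph $H$ whose underlying graph is $G$ and whose node-labels are the observations. The hypothesis that $i$ and $j$ are similar in $G$ together with $o_i = o_j$ yields an automorphism $\alpha$ of $H$ with $\alpha(i) = j$ (this is the natural reading: similarity is taken in the attributed-graph sense, and the observation-matching condition is precisely what is required for $\alpha$ to be label-preserving at the relevant pair).

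The next step is to recall, as stated in Section~2, that every GNN layer is equivariant under graph automorphisms, and hence the composition defining $f$ is equivariant as a function of $H$. Concretely, for any permutation $\sigma$ of nodes acting jointly on edges and labels, $f(\sigma \circ H) = \sigma \circ f(H)$. I would state this explicitly as a short observation right before applying it, so the reader sees that the final actor layer (absorbed as $\fupdate^{\theta_M}(v, \sim) := P(v)$ in Section~\ref{sec:gdn}) inherits the equivariance from the rest of the stack because it depends only pointwise on each node's final attribute.

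Given equivariance, the conclusion is one line: since $\alpha$ is an automorphism of $H$, we have $\alpha \circ H = H$, hence
\[
f(H) \;=\; f(\alpha \circ H) \;=\; \alpha \circ f(H),
\]
so evaluating the equality at node $j = \alpha(i)$ yields $f(O)_j = f(O)_i$.

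The main obstacle, such as it is, is purely a matter of definitional bookkeeping: making it clear that the automorphism must act on the attributed graph (not just on the bare structure), and therefore that $o_i = o_j$ is the hypothesis that promotes the structural similarity to a label-preserving symmetry. Once this is pinned down, no computation is required and the theorem follows from equivariance in a single step.
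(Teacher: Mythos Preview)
Your proposal is correct and follows essentially the same route as the paper: both arguments identify the GDN with a GNN acting on the attributed graph, invoke equivariance, and apply an automorphism sending $i$ to $j$ that fixes the attributed graph to conclude $f(O)_i = f(O)_j$. The only cosmetic difference is that the paper works with the transposition $\sigma = (i\ j)$ explicitly, whereas you use a generic automorphism $\alpha$; your treatment of the ``attributed-graph'' reading of similarity is also exactly the interpretive move the paper's proof relies on.
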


\begin{proof}
Since $f$ is a GDN, there exists a GNN $g$ whose output when operating on the graph $G'$, equal to $G$ augmented with initial node attributes of $O$, coincides with that of $f$ on $O$.

Since $i$ and $j$ are similar in $G$ and $o_i = o_j$, $i$ and $j$ are similar in $G'$ (using the same automorphism). Also $g$ is a GNN, so $g$ is an equivariant function on $G'$. Thus

$$\forall \text{ graphs } G, \forall \text{ permutations } \sigma \in S(V(G)), f(\sigma \circ G) = \sigma \circ f(G)$$

Define $\sigma := (i ~ j)$, the permutation that maps $i \to j$ and $j \to i$, while mapping all other nodes to themselves. Note that since $i$ and $j$ are similar, $\sigma \circ G' = G'$. Furthermore, $\sigma = \sigma^{-1}$. So

$$f(O)_i = g(G')_i = (\sigma^{-1} \circ \sigma \circ g(G'))_i = (\sigma^{-1} \circ g(\sigma \circ G'))_i = g(\sigma \circ G')_j = g(G')_j = f(O)_j$$
\end{proof}

% RNI universal equivariant functions
\subsection{Theorem \ref{thm:rni_equivariant}}
\begin{theorem*}
Let $n \geq 1$ and let $f: G_n \to \mathbb{R}^{n}$ be equivariant. Then for all $\epsilon, \delta > 0$, there is a GNN with RNI that $(\epsilon, \delta)$-\emph{approximates} $f$.
\end{theorem*}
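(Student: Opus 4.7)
The plan is to reduce the equivariant approximation problem to the invariant one that is already established by \citet{abboud2020surprising} and restated in the appendix. The core observation is that an equivariant function $f: G_n \to \mathbb{R}^n$ is in bijective correspondence with an invariant scalar-valued function $\tilde{f}$ on \emph{pointed} $n$-node graphs (i.e. graphs with one distinguished root), via $\tilde{f}(G, v) := f(G)_v$. Equivariance of $f$ ensures that $\tilde{f}(G, v)$ depends only on the isomorphism class of the pointed pair $(G, v)$, so $\tilde{f}$ is genuinely invariant. The pointed structure can be encoded by augmenting each node's attribute with one extra binary coordinate marking the root.

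First I would invoke the invariant version of the theorem on $\tilde{f}$ with tolerance $\epsilon$ and failure probability $\delta/n$, obtaining a GNN $\mathcal{N}$ with RNI that $(\epsilon, \delta/n)$-approximates $\tilde{f}$ over pointed $n$-node graphs. Then I would build the target GNN $g$ on un-pointed graphs by letting every node simulate $\mathcal{N}$ from its own vantage point as the root. This is possible because a GNN's update rule natively separates each node's own attribute from the aggregation of its neighbours: at node $v$, the initial attribute can be set so that $v$ sees itself carrying the root marker, while the aggregation from neighbours contributes their (non-root) attributes. Combined with RNI, which with probability approaching $1$ assigns each node a distinct random vector, the network retains enough information at deeper layers to preserve this "self versus other" distinction through all of $\mathcal{N}$'s message-passing rounds. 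A union bound over the $n$ nodes then turns the pointwise $(\epsilon, \delta/n)$-approximation into the required $(\epsilon, \delta)$-approximation, since the probability that $|g(G)_i - f(G)_i| > \epsilon$ is at most $\delta/n$ for every fixed $i$.

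The hardest part of this plan is justifying the per-node simulation of Step 3 rigorously. One must argue that the combined GNN $g$, which is a single equivariant map on un-pointed graphs, can emulate $\mathcal{N}$ applied to $(G, v)$ simultaneously at every $v$. The key technical point is that the update function at node $v$ already treats $v$ specially, so the root marker required by $\mathcal{N}$ can be folded into the self-coordinate of every node's initial attribute, while RNI ensures that each node's contribution remains distinguishable from those of isomorphic neighbours after the first round of aggregation. A delicate sub-point is making sure that the union bound is applied against the \emph{joint} randomness of the RNI draws shared across all $n$ outputs, rather than requiring independent draws; since each output is computed from the same RNI sample, the per-coordinate failure probability bound still holds and the union bound is valid. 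The appendix's vector-targeting remark can be cited to package the $n$ parallel simulations into a single GNN with vector-valued outputs, closing the argument.
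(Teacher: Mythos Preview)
Your high-level strategy---reduce the equivariant problem to the invariant one on pointed graphs and black-box Abboud et al.---is a legitimate alternative to what the paper does, but Step~3 as you describe it has a genuine gap. The claim that ``at node $v$, the initial attribute can be set so that $v$ sees itself carrying the root marker, while the aggregation from neighbours contributes their (non-root) attributes'' is only correct at the \emph{first} layer of $\mathcal{N}$. At layer~$2$, node $v$ must aggregate its neighbours' layer-$1$ states \emph{as computed in $\mathcal{N}(G,v)$}; there, neighbour $u$ carries marker $0$ and has already received a marked message from $v$. In your $g$, however, $u$ is simultaneously simulating $\mathcal{N}(G,u)$ with itself marked, so the layer-$1$ state $u$ sends to $v$ is wrong for $v$'s simulation. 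The self/other asymmetry of the update rule does not repair this, and RNI distinguishability alone does not route the correct per-root messages. The vector-targeting remark you cite packages $k$ GNNs run on the \emph{same} attributed graph; it does not handle one GNN run on $n$ differently-marked inputs, so it does not close the argument as stated.

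The reduction can be rescued, but it needs a heavier construction: use the RNI draw (distinct values with probability $1$) to fix a global ordering, have every node maintain $n$ parallel channels where channel $j$ is initialised with the marker on the rank-$j$ node, run $\mathcal{N}$ separately in each channel, and have node $i$ output channel $\mathrm{rank}(i)$. This is materially different from what you wrote. By contrast, the paper avoids any simulation: it reopens Abboud et al.'s proof and replaces the $C^2$-sentences that identify graphs with one-variable $C^2$-formulas $\phi(v)$ that identify (graph, node) pairs, adapting their Lemma~A.4 accordingly; the equivariant GNN then falls out directly from the logical characterisation. As a minor aside, the paper's equivariant $(\epsilon,\delta)$-approximation is defined per coordinate, so invoking the invariant theorem with failure probability $\delta$ already suffices and your union bound over $n$ outputs is unnecessary.
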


\begin{proof}
For this proof, we assume the reader to be familiar with the proofs in the appendix of \citet{abboud2020surprising}, as we make use of their definitions, notation, lemmas, and proofs.

\citet{abboud2020surprising} state and prove the following. Let $G_n$ be the class of all $n$-node graphs (i.e., graphs that consist of at most $n$ nodes) and let $f: G_n \to \mathbb{R}$. We say that a randomized function $X$ that associates with every graph $G \in G_n$ a random variable $X(G)$ is an $(\epsilon, \delta)$-\emph{approximation} of $f$ if for all $G \in G_n$ it holds that $\text{Pr}(|f(G) - X(G)| \leq \epsilon) \geq 1 - \delta$. Note that an MPNN $N$ with
RNI computes such functions $X$. If $X$ is computed by $N$, we say that $N$ $(\epsilon, \delta)$-\emph{approximates} $f$.

\begin{theorem*}
Let $n \geq 1$ and let $f: G_n \to \mathbb{R}$ be invariant. Then for all $\epsilon, \delta > 0$, there is an MPNN with RNI that $(\epsilon, \delta)$-\emph{approximates} $f$.
\end{theorem*}

We extend this theorem to equivariant functions. Recall that we define the following. Let $G_n$ be the class of all $n$-node graphs. Let $f: G_n \to \mathbb{R}^{n}$, a graph function which outputs a real value for each node in $V(G)$. We say that a randomized function $X$ that associates with every graph $G \in G_n$ a sequence of random variables $X_1(G), X_2(G), ..., X_{n}(G)$, one for each node, is an $(\epsilon, \delta)$-\emph{approximation} of $f$ if for all $G \in G_n$ it holds that $\forall i \in \{ 1, 2, ..., n \}$, $\text{Pr}(|f(G)_i - X_i(G)| \leq \epsilon) \geq 1 - \delta$, where $f(G)_i$ is the output of $f(G)$ for node $i$. Note that a GNN $h$ with RNI computes such functions $X$. If $X$ is computed by $h$, we say $h$ $(\epsilon, \delta)$-\emph{approximates} $f$.

We adapt the proof of \citet{abboud2020surprising}, shown in their appendix, to correspond to equivariant functions instead. Notice that equivariant functions have their output at the node level instead of the graph level, so instead of identifying graphs with $C^2$-sentences that have no input variables, we identify a graph and a node in the graph by a 1-variable formula $\phi(v)$, where $v$ identifies the node.

Lemma A.3 from \citet{abboud2020surprising} proves that for every individualized colored graph $G$ there
is a $C^2$-sentence $\chi_G$ that identifies $G$. Thus, for every individualized colored graph $G$ and node $u$, the formula $\phi_{G, u}(v) := \chi_G \land \text{Node}_u(v)$ identifies $G$ and the node $u$, where $\text{Node}_u(v)$ is a Boolean function that is only true when $u=v$. In fact, $\phi_{G, u}(v) := \phi_u(v) := \text{Node}_u(v)$ already identifies $G$ by identifying the exact node.

We can similarly adapt Lemma A.4 and state the following:
\begin{lemma*}
Let $h: \mathcal{G}_{n, k} \to \{0, 1\}^n$ be an equivariant Boolean function. Then there exists a 1-variable formula $\phi_h(v)$ such that for all $G \in \mathcal{G}_{n, k}$ and all $v \in G$ it holds that $[[\phi_h(v)]](G) = h(G)_v$.
\end{lemma*}

To prove this, let $\mathcal{V} \subseteq \{ V(G) ~|~ G \in \mathcal{G}_{n, k} \}$ be the subset consisting of all nodes $u$ with $h(G)_u = 1$, where $G$ is the graph such that $u \in V(G)$. Then let

$$ \phi_h(v) := \bigvee\limits_{u \in \mathcal{V}} \phi_u(v) $$

We eliminate duplicates in the disjunction. Since, up to isomorphism, the class $\mathcal{G}_{n, k}$ is finite, and the number of nodes in each graph is upper-bounded by $n$, the disjunction over $\mathcal{V}$ is finite and hence $\phi_h(v)$ is well-defined.

We adapt Corollary A.1 in the same way as Lemma A.4. Lemma A.5 can be used as-is to show that RNI yields individualized colored graphs with high probability. From here, the remainder of the proof works analogously, substituting in equivariant functions for invariant ones and $\phi_h(v)$ for $\psi_h$.

\end{proof}

% RNI coordination solving
\subsection{Theorem \ref{thm:rni_coordination}}

\begin{theorem*}
Let $n \geq 1$ and consider a set $T$, where each $(G, A) \in T$ is a graph-labels pair, such that $G \in G_n$ and there is a \changemarker{multiset} of target labels $A_k \in A$ for each orbit $r_k \in R(G)$, with $| A_k | = | r_k |$. Then for all $\epsilon, \delta > 0$ there is a GNN with RNI $g$ which satisfies:

$$\forall (G, A) \in T ~~ \forall r_k \in R(G), \{ g(G)_i ~|~ i \in r_k \} \cong_{\epsilon,\delta} A_k $$
\end{theorem*}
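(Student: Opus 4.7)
The plan is to reduce the multiset-targeting property to the approximation guarantee of Theorem \ref{thm:rni_equivariant}, using the RNI noise both to individualize the graph and to induce a canonical ordering within each orbit.

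First I would invoke the individualization lemma underlying \citet{abboud2020surprising}'s construction (their Lemma A.5): by drawing the RNI values from a sufficiently fine distribution, all $n$ sampled values are pairwise distinct with probability at least $1 - \delta/2$, so the augmented graph $G^R$ is individualized. Conditioning on this event, each orbit $r_k \in R(G)$ inherits a canonical total order from the RNI ranks of its nodes. I would then introduce a target function $\tilde h$ on individualized colored graphs by sorting each $A_k$ in the natural order of $\mathbb{R}$ as $(a_{k,1}, \ldots, a_{k,|r_k|})$ and setting $\tilde h(G^R)_i := a_{k, \mathrm{rank}(i)}$ for every $i \in r_k$, where $\mathrm{rank}(i)$ denotes the position of $i$ among the nodes of $r_k$ sorted by RNI value. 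The crucial check is that $\tilde h$ is equivariant: any permutation of the nodes transports orbits and RNI values in lockstep, so the rank-based assignment behaves covariantly.

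Next I would apply the machinery driving Theorem \ref{thm:rni_equivariant} -- identification of each node in an individualized graph by the $1$-variable formula $\phi_u(v) := \mathrm{Node}_u(v)$, together with the $C^2$-to-GNN expressivity bridge of \citet{barcelo2020logical} -- to obtain a standard GNN $N$ which $\epsilon$-approximates $\tilde h$ pointwise on individualized inputs. Pipelining RNI concatenation with $N$ yields a GNN with RNI $g$; on the individualization event, $|g(G)_i - a_{k, \mathrm{rank}(i)}| \leq \epsilon$ simultaneously at every node in every orbit, provided $N$'s approximation error is tightened enough that a union bound over the $n$ nodes fits inside the remaining $\delta/2$ slack. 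The bijection $\tau$ required by $\cong_{\epsilon,\delta}$ is then the sort-rank pairing $g(G)_i \leftrightarrow a_{k, \mathrm{rank}(i)}$, and combining the two $\delta/2$ events gives $\Pr(|a - \tau(a)| \leq \epsilon) \geq 1 - \delta$ for every $a \in \{g(G)_i ~|~ i \in r_k\}$.

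The hard part will be reconciling the bijection $\tau$ demanded by $\cong_{\epsilon,\delta}$ with the fact that the natural matching here is induced by the random RNI realization. The resolution is that, once conditioned on the individualization event, $\mathrm{rank}(i)$ is an almost-surely well-defined measurable function of the sample space, so $\tau$ can be taken as a random bijection whose per-element approximation guarantees survive the union bound intact. A smaller subtlety is that $T$ might in principle associate a single graph $G$ with multiple labelings $A$; this can be handled either by restricting $T$ to be a functional relation, or by observing that $T$ is finite up to isomorphism within $G_n$ and casing on the pair $(G, A)$ inside the definition of $\tilde h$.
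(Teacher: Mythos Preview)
Your approach is correct and constitutes a genuinely different route from the paper's argument. The paper proves this theorem by an explicit layered construction: it first invokes Theorem~\ref{thm:rni_equivariant} to obtain a GNN with RNI that outputs, for each node, the pair (original RNI value, unique orbit identifier); it then appends $n$ hand-designed message-passing layers, each of which uses a global $\arg\max$ readout over the surviving RNI values to select one node, assigns that node the next unused element of its orbit's target multiset $A_k$ (tracked via a per-orbit counter), and zeroes out that node's RNI value so it is not selected again; a final layer strips everything but the assigned label. Your proposal instead packages the entire rank-based assignment into a single equivariant target $\tilde h$ on individualized colored graphs and appeals directly to the $C^2$-to-GNN machinery underlying Theorem~\ref{thm:rni_equivariant} to realize it in one shot. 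Both arguments hinge on the same idea---use the RNI values to totally order each orbit and hand out the labels in that order---and both end up with a random bijection $\tau$ determined by the RNI realization (the paper explicitly acknowledges this dependence). What the paper's construction buys is an explicit architecture with a concrete depth bound (at most $n{+}1$ additional layers beyond the base GNN) and a transparent operational picture of how the labels get distributed; what your reduction buys is brevity and a cleaner separation of concerns, since you never have to verify that iterated $\arg\max$ and counter bookkeeping can themselves be $\epsilon$-approximated by GNN layers. Your closing caveat about $T$ possibly pairing one graph with several labelings is apt; the paper's proof tacitly makes the same functionality assumption when it says the update layer ``has access to $A_k$ since it can uniquely identify the input graph and orbit of the node using $r_i$.''
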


\begin{proof}
Recall that we say two \changemarker{multisets} $A, B$ containing random variables are $(\epsilon,\delta)$-equal, denoted $A \cong_{\epsilon,\delta} B$, if there exists a bijection $\tau: A \to B$ such that $\forall a \in A,~ \text{Pr}(| a - \tau(a) | \leq \epsilon) \geq 1 - \delta$.

% TODO: prove theorem. Use rounds where the max RNI value is propagated, then an action is selected from the set, then the RNI value is set to the minimum.

We will define a GNN with RNI $g$ by construction which satisfies the property required in the theorem. We do this in 3 intuitive steps:
\begin{enumerate}
    \item Define a GNN with RNI $f$ that, for each node, outputs a unique identifier for the orbit of the node and the original RNI value given to the node. Such a GNN exists because the function it is approximating is equivariant.
    \item Append $n$ identical layers onto $f$, each of which identifies the node containing the highest RNI value, gives that node a value from the target \changemarker{multiset} of labels corresponding to its orbit, marks off that particular value as claimed, and sets its RNI value to be small.
    \item Append a final layer which extracts only the target labels from the node attributes; these were given to the nodes by the preceding $n$ layers.
\end{enumerate}

First, notice that there exists a GNN with RNI $f: G_n \to (\mathbb{R}^4)^n$ that approximates the outputs $(n_i, r_i, 0, 0)$ for each node $i$ in the input graph $G$, where $n_i$ is the random noise initially added by RNI (before any message passing) and $r_i$ is a unique value corresponding to the graph orbit of $i$. Formally: $r_i = r_j \iff $ $i$ and $j$ are in the same orbit of the same graph (up to isomorphism). Put another way: $r_i = r_j \iff $ there exists an isomorphism $\alpha: G_i \to G_j$ (the graphs containing nodes $i$ and $j$, respectively) such that $\alpha(i) = j$.

Such a GNN $f$ exists because the function it is approximating is equivariant, allowing us to use Theorem \ref{thm:rni_equivariant}. Without loss of generality, we assume that RNI values are sampled from the interval $(0, 1)$ and that we only augment each node with one RNI value.

We define a GNN with RNI $g$ using $f$ as a starting point: we will append further message-passing layers to $f$. Append $n$ identical message-passing layers to $f$, each of which is defined as follows. Each node attribute in these layers will be a tuple $(n_i, r_i, c_i, t_i)$, where $c_i$ is used as a counter and $t_i$ is used to store the eventual node output value, corresponding to some target label. For this proof, we assume that target labels $A_k$ come from $\mathbb{R}$, but note that the proof is easily extended to vectors from $\mathbb{R}$ instead. Furthermore, we allow for $A_k$ to be a multiset (i.e. with repeated elements). Define $\fread$ by

$$ \fread(\{ (n_j, r_j, c_i, t_i) ~|~ j \in V(G) \}) := \text{argmax}_{(n_j, r_j, c_j, t_j) ~\forall j \in V(G)} ~n_j $$

In other words, $\fread$ extracts the tuple containing the maximum value of $n_j$ in the graph. Such a unique maximum exists with probability $1$, since finitely many RNI values are sampled from an infinite distribution. Do not define $\faggr$.

Define $\fupdate$ on the output $(n_j, r_j, c_j, t_j)$ of $\fread$ and the current node value $(n_i, r_i, c_i, t_i)$.

\[
  \fupdate( (n_i, r_i, c_i, t_i), (n_j, r_j, c_j, t_j) ) :=
  \begin{cases}
    (n_i, r_i, c_i, t_i) & \text{if } r_i \neq r_j \\
    (n_i, r_i, c_i + 1, t_i) & \text{if } r_i = r_j \text{ and } n_i \neq n_j \\
    (0, r_i, c_i + 1, (A_k)_{c_i}) & \text{if } r_i = r_j \text{ and } n_i = n_j
  \end{cases}
\]

In the above, $(A_k)_{c_i}$ denotes treating the \changemarker{multiset} of target labels $A_k$ as a sequence and retrieving the element with index $c_i$ (first index is 0). The above has access to $A_k$ since it can uniquely identify the input graph and orbit of the node using $r_i$, by how $r_i$ is defined.

As a consequence of its definition, the update function will retrieve one value from the target labels at a time, updating exactly one node to store this value. If another node $j$ within the same orbit as $i$ is being updated with this value, then the counter of $i$ is incremented to track that a value from the outputs has been claimed. Since the RNI value $n_i$ is set to 0, it ensures that each node will be given exactly one target label after $n$ rounds of message passing.

After appending the above $n$ message-passing layers, append one final layer with only an update function that extracts only the target labels, defined by

$$ \fupdate( (0, r_i, c_i, t_i) ) := t_i $$

The above construction of $g$ satisfies the probability ($\delta$) and approximation ($\epsilon$) requirements of the property stated in the theorem, since $f$ is an $(\epsilon, \delta)$-approximation, a unique maximum RNI value exists for each graph with probability 1, and all other required operations can be $\epsilon$-approximated by neural networks. The exact bijection used to map between the output and target \changemarker{multisets} will depend on the RNI values of the nodes, since they determine the order in which target values are assigned to node attributes in the construction. Whilst our provided construction requires at least $n+1$ message-passing layers, more efficient constructions exist using more complex readout functions than simple maximisation. The final update layer can also be merged with the previous layer. However, we presented the above construction due to its simplicity and how easy it is to understand the mechanism.

\end{proof}

% CLIP universal
\newpage
\subsection{Theorem \ref{thm:clip_universal}}

\begin{theorem*}
Let $n \geq 1$ and let $f: G_n \to \mathbb{R}^{n}$ be equivariant. Then for all $\epsilon > 0$, there is a GNN with unique node IDs that $\epsilon$-\emph{approximates} $f$.
\end{theorem*}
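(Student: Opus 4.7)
The plan is to adapt the proof strategy of Theorem~\ref{thm:rni_equivariant} to the unique-IDs setting, substituting Dasoulas et al.'s Theorem~4 (the $1$-CLIP universality result for invariant functions) for Abboud et al.'s corresponding RNI theorem. The key observation is that the logical/combinatorial machinery in the RNI proof which reduces equivariant approximation to invariant approximation does not depend on the particular source of node individualization -- it only uses that nodes end up with distinct identifiers. Since unique node IDs directly and deterministically produce an individualized colored graph, the reduction carries through, and moreover the probabilistic parameter $\delta$ disappears, leaving only the $\epsilon$ approximation from the underlying neural-network universal approximation.

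Concretely, I would proceed as follows. First, for each graph $G \in G_n$ and each node $u \in V(G)$, because unique IDs individualize every node, I can identify the pair $(G,u)$ by a $1$-variable $C^2$-formula $\phi_{G,u}(v)$ (in fact by $\mathrm{Node}_u(v)$, exactly as in the proof of Theorem~\ref{thm:rni_equivariant}). Second, given the equivariant target $f: G_n \to \mathbb{R}^n$, I would discretize / approximate $f$ at precision $\epsilon$ by a Boolean equivariant function $h: G_n \to \{0,1\}^n$ following the same discretization used in the RNI case, and then express $h$ as a $1$-variable formula
\[
\phi_h(v) \;:=\; \bigvee_{u \in \mathcal{V}} \phi_u(v),
\]
where $\mathcal{V}$ is the (finite, up to isomorphism) set of nodes on which $h$ outputs $1$. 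Third, I would invoke the logical-equivalence side of Dasoulas et al.'s argument (the $1$-CLIP analogue of Barceló et al.'s correspondence between $C^2$ and GNNs) to realize $\phi_h(v)$ by a GNN with unique node IDs, and finally compose with the standard neural network $\epsilon$-approximation of the real-valued bits of $f$.

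The main obstacle, and what deserves careful checking, is the transition from the invariant to the equivariant statement in the unique-ID setting: one must verify that once IDs are assigned, the GNN can genuinely evaluate a node-indexed formula $\phi_h(v)$ rather than only a sentence, i.e.\ that the pooling/readout machinery used in Dasoulas et al.'s proof can be localized to a single node. This is essentially the same step that was handled in Theorem~\ref{thm:rni_equivariant} by replacing $\chi_G$ with $\phi_{G,u}(v) = \chi_G \wedge \mathrm{Node}_u(v)$, and since unique IDs already encode the $\mathrm{Node}_u$ predicate directly in the initial features, the argument becomes, if anything, cleaner than in the RNI case. The remaining pieces -- finiteness of $\mathcal{V}$ up to isomorphism, well-definedness of the disjunction, and the standard $\epsilon$-approximation of the real-valued target by neural networks -- go through verbatim, and because no random initialization is involved the probability bound collapses to certainty, yielding $\epsilon$-approximation rather than $(\epsilon,\delta)$-approximation.
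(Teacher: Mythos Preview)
Your proposal is plausible and could be made rigorous, but it takes a genuinely different route from the paper's own proof.

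The paper does \emph{not} port the $C^2$-formula machinery from the RNI proof. Instead it stays entirely inside Dasoulas et~al.'s framework: it first argues that a GNN with unique node IDs is precisely $1$-CLIP, then invokes Dasoulas et~al.'s Theorem~4 (universality of $1$-CLIP for invariant functions on the space $\textbf{Graph}_m$), and finally lifts this to the equivariant setting by defining a ``node space'' $\textbf{Node}_m$, exhibiting a separating representation of it, and applying Dasoulas et~al.'s Corollary~1 (the Stone--Weierstrass-style step from separation to universality). Concretely, the separating representation is built by taking a GNN that separates $\textbf{Graph}_m$, replacing its final global readout by a per-node copy of that readout, and carrying the unique ID alongside; the resulting pair $(u_i, r_i)$ distinguishes every node of every graph, which is exactly what Corollary~1 needs.

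Your route instead reuses the logical argument of Theorem~\ref{thm:rni_equivariant}: identify each node via a $C^2$ formula, express the discretized equivariant target as a finite disjunction over such formulas, and realize the formula by a GNN. This works because unique IDs deterministically individualize the graph, so Abboud et~al.'s Lemmas A.3/A.4 apply without the probabilistic step (Lemma~A.5), and the $\delta$ disappears. The advantage of your approach is uniformity with the RNI proof; the advantage of the paper's approach is that it cites Dasoulas et~al.'s own toolkit directly rather than re-deriving their Theorem~4 through a different technique. One small correction: there is no ``logical-equivalence side of Dasoulas et~al.'s argument'' --- their proof is purely topological (separation $+$ Stone--Weierstrass), so what you are really invoking in your step~3 is the Barcel\'o et~al.\ $C^2$/GNN correspondence applied to graphs that happen to carry unique IDs, not anything from Dasoulas et~al.
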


\begin{proof}
% Extension of CLIP Theorem 4. Show unique IDs = 1-CLIP. Talk about invariant / equivariant. Use unique IDs to separate the space of nodes, instead of space of graphs.
For this proof, we assume the reader to be familiar with the theorems and proofs of \citet{dasoulas2019coloring}, particularly their Theorem 4. First, we need to prove that GNNs with unique node IDs are equivalent to 1-CLIP, which is $k$-CLIP with $k=1$.

CLIP is defined as a 3-step process, the first of which is assigning colours to nodes. In CLIP, the essential part of each colouring chosen is that all nodes with the same attributes will be assigned different colours. By representing colours with unique node IDs (using a one-hot encoding) we ensure that the above property holds, since all nodes will be assigned different colours. Since we are considering 1-CLIP, we set $k=1$ and only sample one colouring, which is the particular set of unique IDs we assign.

Step 2 of CLIP is just standard GNN message-passing on our created coloured graph. Step 3 of CLIP maximizes over all possible colourings, of which we have only one, so the maximization can be dropped. This yields a standard GNN global readout layer. Thus, GNNs with unique node IDs are equivalent to 1-CLIP. Theorem 4 of \citet{dasoulas2019coloring} states the universality of 1-CLIP for invariant functions, which we provide here as a Lemma:

\begin{lemma*}
The 1-CLIP algorithm with one local iteration is a random representation whose expectation is a universal representation of the space $\textbf{Graph}_m$ of graphs with node attributes.
\end{lemma*}

They further state that for any colouring, 1-CLIP returns an $\epsilon$-approximation of the target function and that, given sufficient training, the variance can be reduced to an arbitrary precision. The above only applies to invariant functions because it is a representation of the space $\textbf{Graph}_m$, which is defined using invariance by permutation of the labels. Note that when defining $\textbf{Graph}_m$ in this paper, we use $n_{\text{max}} := n$, instead of just considering some arbitrary large $n_{\text{max}}$.

To apply this theorem to equivariant functions, we need to consider the space $\textbf{Node}_m$, which we define to be the set of all nodes from graphs in $\textbf{Graph}_m$. $\textbf{Node}_m$ is Hausdorff as a trivial consequence of $\textbf{Graph}_m$ being Hausdorff, using the same quotient space of orbits. If we can separate this space in a continuous and concatenable way, then we can utilize Corollary 1 of \citet{dasoulas2019coloring} to show that it is universal. To do this, consider a GNN $f$ which separates the space $\textbf{Graph}_m$, which we know exists due to Theorem 4 of \citet{dasoulas2019coloring}. We define a new GNN $g$ using $f$ as a starting point. Substitute the final global readout layer $M$ of $f$ for a new layer with the same readout function, but have the output of the readout be assigned to every node. Formally, define

$$ \fupdate^{\theta_M}(v_i^{M},~ \faggr^{\theta_M'}(\{ v_j^M ~|~ j \in N(i) \}),~ \fread^{\theta_M''}(\{ v_j^M ~|~ j \in V(G) \}) ) := \fread^{\theta_M''}(\{ v_j^M ~|~ j \in V(G) \}), $$

where $\fread^{\theta_M''}$ is the former global readout layer. Then, change the update functions of each layer of $f$ such that a portion of each node attribute is reserved for the unique ID of the node, and do not use or change this ID in each layer of $f$. Then, after the final layer $M$, the attribute $v_i^{M+1}$ of each node $i$ will be

$$ v_i^{M+1} = (u_i, r_i) := (u_i, \fread^{\theta_M''}(\{ v_j^M ~|~ j \in V(G) \})), $$ 

where $u_i$ is the unique ID given to node $i$. Since $f$ separates $\textbf{Graph}_m$, $r_i$ uniquely identifies the graph provided in the input. Furthermore, $u_i$ uniquely identifies each node in the input graph. Thus, $v_i^{M+1} = (u_i, r_i)$ uniquely identifies every node in the space $\textbf{Node}_m$, meaning that $g$ yields a separable representation of $\textbf{Node}_m$. Furthermore, $g$ is continuous and concatenable by its construction, so we can apply Corollary 1 of \citet{dasoulas2019coloring} to state that $g$ is universal.

\end{proof}

% CLIP coordination solving
\subsection{Theorem \ref{thm:clip_coordination}}

\begin{theorem*}
Let $n \geq 1$ and consider a set $T$, where each $(G, A) \in T$ is a graph-labels pair, such that $G \in G_n$ and there is a \changemarker{multiset} of target labels $A_k \in A$ for each orbit $r_k \in R(G)$, with $| A_k | = | r_k |$. Then for all $\epsilon > 0$ there is a GNN with unique node IDs $g$ which satisfies:

$$\forall (G, A) \in T ~~ \forall r_k \in R(G), \{ g(G)_i ~|~ i \in r_k \} \cong_{\epsilon} A_k $$
\end{theorem*}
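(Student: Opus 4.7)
My plan is to mirror the three-step construction from the proof of Theorem \ref{thm:rni_coordination}, replacing the random RNI values with the deterministic unique node IDs. Because the IDs are distinct by construction, no probability bookkeeping is needed and we obtain a pure $\epsilon$-approximation rather than an $(\epsilon, \delta)$ one; this is the structural difference between the two proofs.

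First, I would apply Theorem \ref{thm:clip_universal} to obtain a GNN with unique IDs $f$ that $\epsilon$-approximates the equivariant function sending each node $i$ of any input graph $G$ to the 4-tuple $(u_i, r_i, 0, 0)$, where $u_i$ is the unique ID of $i$ and $r_i$ is a canonical identifier of the orbit of $i$ (so $r_i = r_j$ iff $i$ and $j$ lie in the same orbit of the same graph up to isomorphism). This target function is well-defined and equivariant, and the finite look-up it requires is well-defined because $T$ contains only finitely many graphs up to isomorphism.

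Second, I would append $n$ identical message-passing layers, identical in form to those in the proof of Theorem \ref{thm:rni_coordination} except that the global readout $\fread$ selects the tuple with the maximum $u_j$ rather than the maximum $n_j$. The update $\fupdate$ branches on whether the current node $i$ shares an orbit identifier with the selected node: the selected node itself writes the next element of $A_k$ (indexed by its counter $c_i$) into its target-label slot and resets its $u_i$ to a sentinel value so that it is not chosen again; other nodes in the same orbit simply increment their counter so they track which label in $A_k$ is next to be assigned; and nodes in other orbits are unchanged. Because $r_i$ uniquely identifies the (isomorphism class, orbit) pair and $T$ is finite up to isomorphism, the update can legitimately depend on the associated multiset $A_k$ via a finite look-up. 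I would then add a single extraction layer that reads off the target-label slot, so that each node holds exactly one element of the multiset assigned to its orbit.

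After these $n+1$ additional layers, every node in every orbit $r_k$ carries a distinct element of $A_k$, and the bijection $\tau$ realising $\cong_\epsilon$ is determined by the order in which the unique IDs are popped from the successive argmaxes. The main obstacle will be the same one implicit in the RNI proof: argmax and case-split operations are not continuous, so one must justify that the continuous neural networks inside a GNN can $\epsilon$-approximate them on the relevant inputs. This is handled by restricting to the compact domain provided by finite graphs, bounded IDs, and finitely many label multisets, and invoking standard universal approximation, with the resulting slack absorbed into the target $\epsilon$. Because the unique IDs guarantee a deterministically unique maximum at every step, the argument is strictly cleaner than its RNI counterpart.
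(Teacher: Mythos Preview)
Your proposal is correct and follows essentially the same approach as the paper: the paper's proof simply states that the construction is nearly identical to that of Theorem~\ref{thm:rni_coordination}, with Theorem~\ref{thm:clip_universal} replacing Theorem~\ref{thm:rni_equivariant}, unique IDs replacing RNI values (with one-hot encodings maximised by treating them as binary numbers), and the $(\epsilon,\delta)$-approximation becoming a pure $\epsilon$-approximation since no randomness is involved. Your write-up is actually more explicit than the paper's about the argmax/case-split approximation and the finiteness underpinning the look-up, but the overall route is the same.
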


\begin{proof}
Recall that two \changemarker{multisets} $A, B$ without random variables are $\epsilon$-equal, denoted $A \cong_{\epsilon} B$, if there exists a bijection $\tau: A \to B$ such that $\forall a \in A,~ | a - \tau(a) | \leq \epsilon$.

The proof for this theorem is by construction, where the construction is nearly identical to the one used in the proof of Theorem \ref{thm:rni_coordination}. A GNN $f$ exists due to Theorem \ref{thm:clip_universal} instead of Theorem \ref{thm:rni_equivariant}. Unique ID values are used instead of RNI values. One-hot encodings can be maximised over in a similar way, by treating them as binary numbers. A unique maximum unique ID will always exist by definition.

The construction otherwise proceeds analogously, except that it presents an $\epsilon$-approximation of each target \changemarker{multiset} instead of an $(\epsilon, \delta)$-approximation, since no randomness is used.
\end{proof}

%
% Experiment details
% (note, results go further down, not here)
%
\newpage
\section{Experiments} \label{app:experiments}
In this appendix, we provide full details about our experiments for reproducibility.

\subsection{Baseline Communication Methods} \label{app:baselines}
% Talk about models we use. Explain why TarMAC counts. Explain modifications to the baselines.

For evaluation, we adopt a diverse selection of MARL communication methods which fall under the GDN paradigm. These are shown in Table \ref{tab:baseline_models_full}, along with the respective paradigm (whether the method simply falls within GDNs or whether GNNs are explicitly used for communication), the MARL paradigm, and communication graph structure. We use the code provided by \citet{niu2021multi, jiang2018graph} as starting points. The \href{https://github.com/jiechuanjiang/pytorch_DGN}{code} of \citet{jiang2018graph} uses an MIT license and the \href{https://github.com/CORE-Robotics-Lab/MAGIC}{code} of \citet{niu2021multi} does not have one. All of the implementations are extended to be able to support multiple rounds of message-passing and the baselines are augmented with the ability for their communication to be masked by the environment (e.g. based on distance or obstacles in the environment).

\citet{sukhbaatar2016learning} define CommNet, which has a single, basic, learnable communication channel. They define it in such a way that agents can enter and leave the communication range of other agents. It maps directly onto a GDN approach where mean is used for aggregation. \citet{singh2018learning} define IC3Net, which operates in a similar manner to CommNet, except the communication graph is complete and communication is controlled by gating, meaning each agent can decide whether or not to broadcast to another agent. \citet{das2019tarmac} define TarMAC, where a soft attention mechanism is used to decide how much of a message an agent should process. This implicitly yields a complete communication graph, which a graph attention network (GAT) is able to model. TarMAC is also extended to use IC3Net's reward and communication structure, which we refer to as \emph{T-IC3Net}.

\citet{jiang2018graph} define DGN, which operates on graphs that arise deterministically from the environment (e.g. based on agent proximity). The model consists of an encoding layer from the observations, two convolutional layers that use multi-head dot-product attention as the convolutional kernel, and a shared Q-network between all agents. There are skip connections between the convolutional layers. Note that among our chosen methods, DGN is the only value-based one. \citet{niu2021multi} define MAGIC, which learns to construct a communication graph and then uses GNNs to operate on the graph. The Scheduler learns which agents should communicate with each other and outputs a communication graph. The Message Processor then uses GATs for multiple rounds of communication on the graph.

\begin{table}
  \caption{Architecture of the Baselines}
  \label{tab:baseline_models_full}
  \centering
  \begin{tabular}{llll}
    \toprule
    Name & Communication Graph & MARL Paradigm & GNN Usage \\
    \midrule
    CommNet \citep{sukhbaatar2016learning} & Complete (or environment-based) & Recurrent A2C & Implicit \\
    IC3Net \citep{singh2018learning} & Complete + Gating & Recurrent A2C & Implicit \\
    TarMAC \citep{das2019tarmac} & Complete + Learned Soft Edges & Recurrent A2C & Implicit GAT \\
    T-IC3Net \citep{singh2018learning, das2019tarmac} & Gating + Learned Soft Edges & Recurrent A2C & Implicit GAT \\
    MAGIC \citep{niu2021multi} & Learned & Recurrent A2C & Explicit GAT \\
    DGN \citep{jiang2018graph} & Environment-based & Q-network & Explicit GCN \\
    \bottomrule
  \end{tabular}
\end{table}

\subsection{Environments} \label{app:environments}
% Briefly explain TJ and PP. Introduce BP and DS.
Predator-Prey \citep{singh2018learning, das2019tarmac, liu2020multi, li2020deep, niu2021multi} and Traffic Junction \citep{sukhbaatar2016learning, singh2018learning, das2019tarmac, liu2020multi, li2020deep, niu2021multi} are common MARL communication benchmarks. We perform evaluations on them to test how well our universally expressive GDN models perform when there is not necessarily a benefit to having communication expressivity beyond 1-WL. We also introduce two new environments, Drone Scatter and Box Pushing, to respectively test symmetry-breaking and communication expressivity beyond 1-WL.

\textbf{Predator-Prey}, introduced by \citet{singh2018learning}, consists of predators (agents) with limited vision trying to find stationary prey. They can communicate with each other within a range of 5 and at each time step move one grid cell in any cardinal direction. An episode is deemed a success if all agents have found and are sitting on top of the prey. We utilize the ``cooperative'' reward setting of the environment, meaning that reward is given at each time step proportional to the number of agents on the prey. The environment is demonstrated in Figure \ref{fig:pp_image_full}.

\begin{figure}
    \centering
    \includegraphics[scale=0.4]{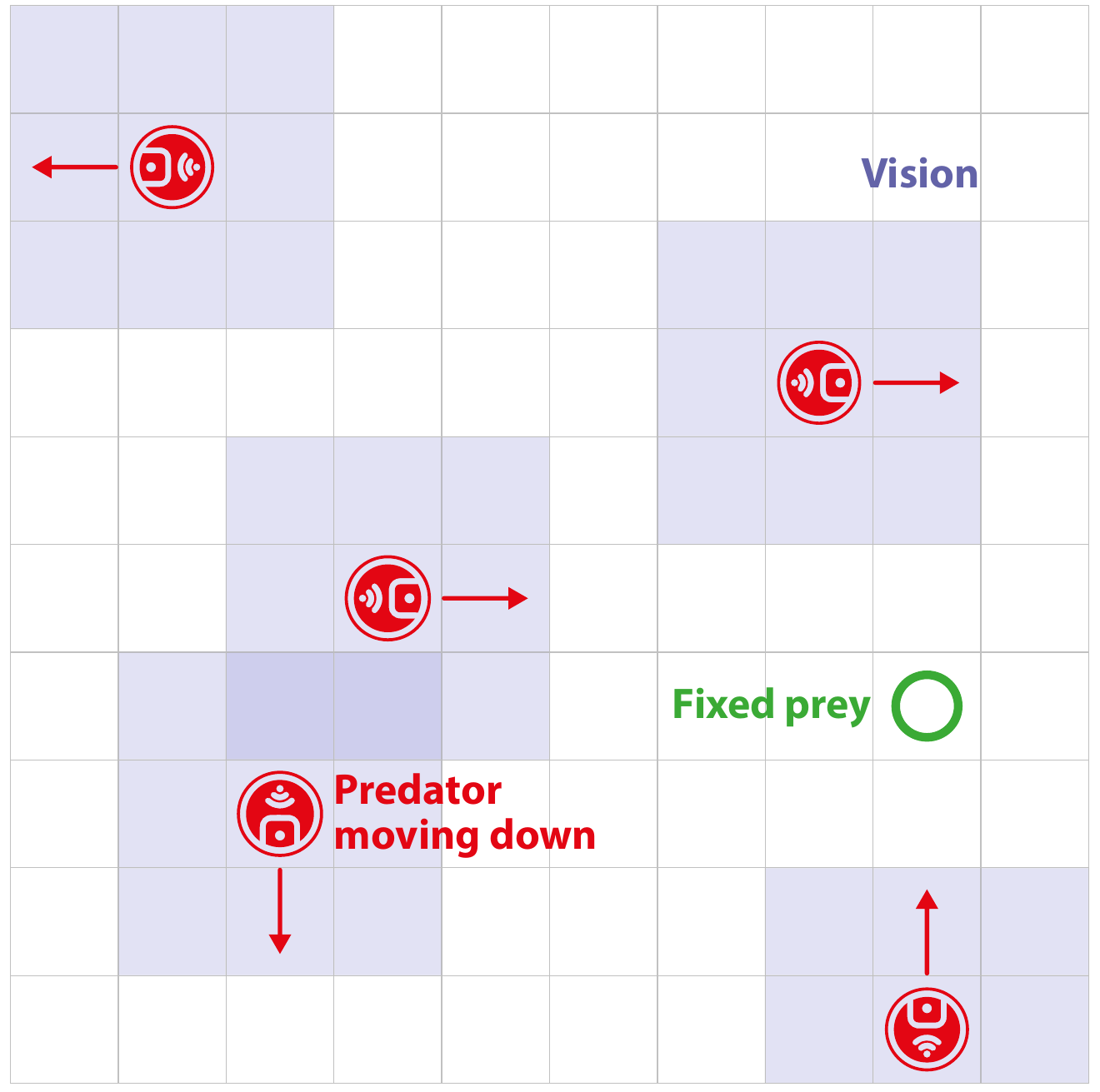}
    \caption{PredatorPrey Environment}
    \label{fig:pp_image_full}
\end{figure}

\textbf{Traffic Junction}, introduced by \citet{sukhbaatar2016learning}, consists of intersecting roads with cars (agents) driving along them. The agents have limited vision and need to communicate to avoid collisions; an episode is deemed a success if it had no collisions. Each car can communicate with any other car within a range of 3. At each time step, cars enter the environment with a given probability, provided the number of cars in the environment does not exceed the allowed maximum. At each step, a car can either ``gas'' or ``break'', leading to it either moving forward one cell on its route or remaining stationary. We utilize both the ``Easy'' and ``Medium'' versions of the environment, which respectively consist of two intersecting 1-way roads and two intersecting 2-way roads. The environment is demonstrated in Figure \ref{fig:tj_image_full}.

\begin{figure}
    \centering
    \includegraphics[scale=0.4]{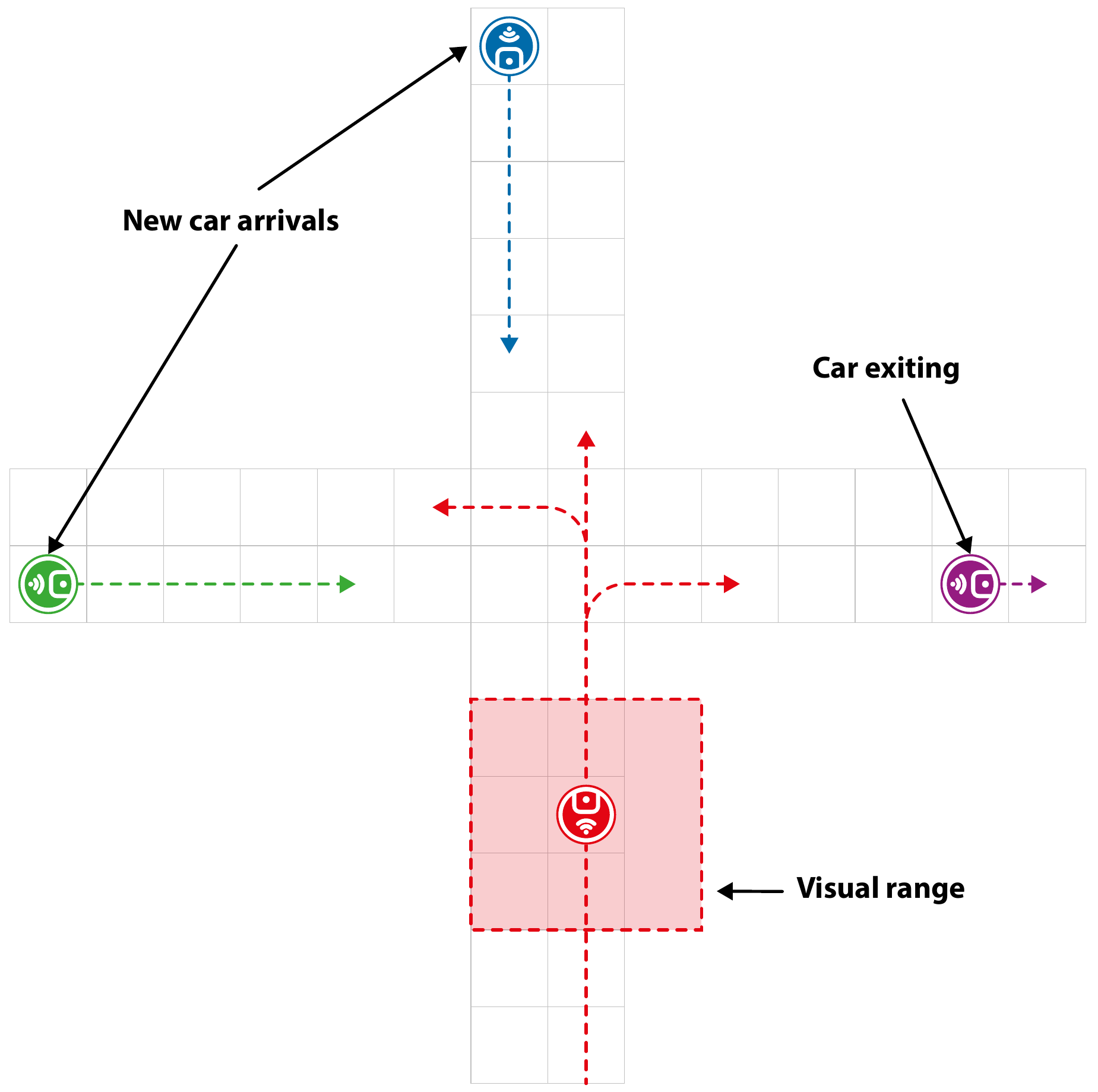}
    \caption{Medium TrafficJunction Environment}
    \label{fig:tj_image_full}
\end{figure}

The \textbf{Drone Scatter} environment is a grid environment, which we design to test the ability of communication models to perform symmetry-breaking. It consists of 4 drones in a 20x20 homogeneous field. The outer lines of the field are marked by fences. The drones can move any of the 4 cardinal directions at each time step. Their goal is to move around and find a target hidden in the field, which they can only notice when they get close to. The drones do not have GPS and can only see directly beneath them using their cameras. They also know which action they took in the last time step. The best way for them to locate the target is to split up and search in different portions of the field. Thus, in the ``easy'' version of the environment, they are encouraged to do this by being given a reward based on how far away they are from the rest of the drones. They are also always given a reward for finding the target. The environment is demonstrated in Figure \ref{fig:ds_image_full}.

\begin{figure}
    \centering
    \includegraphics[scale=0.4]{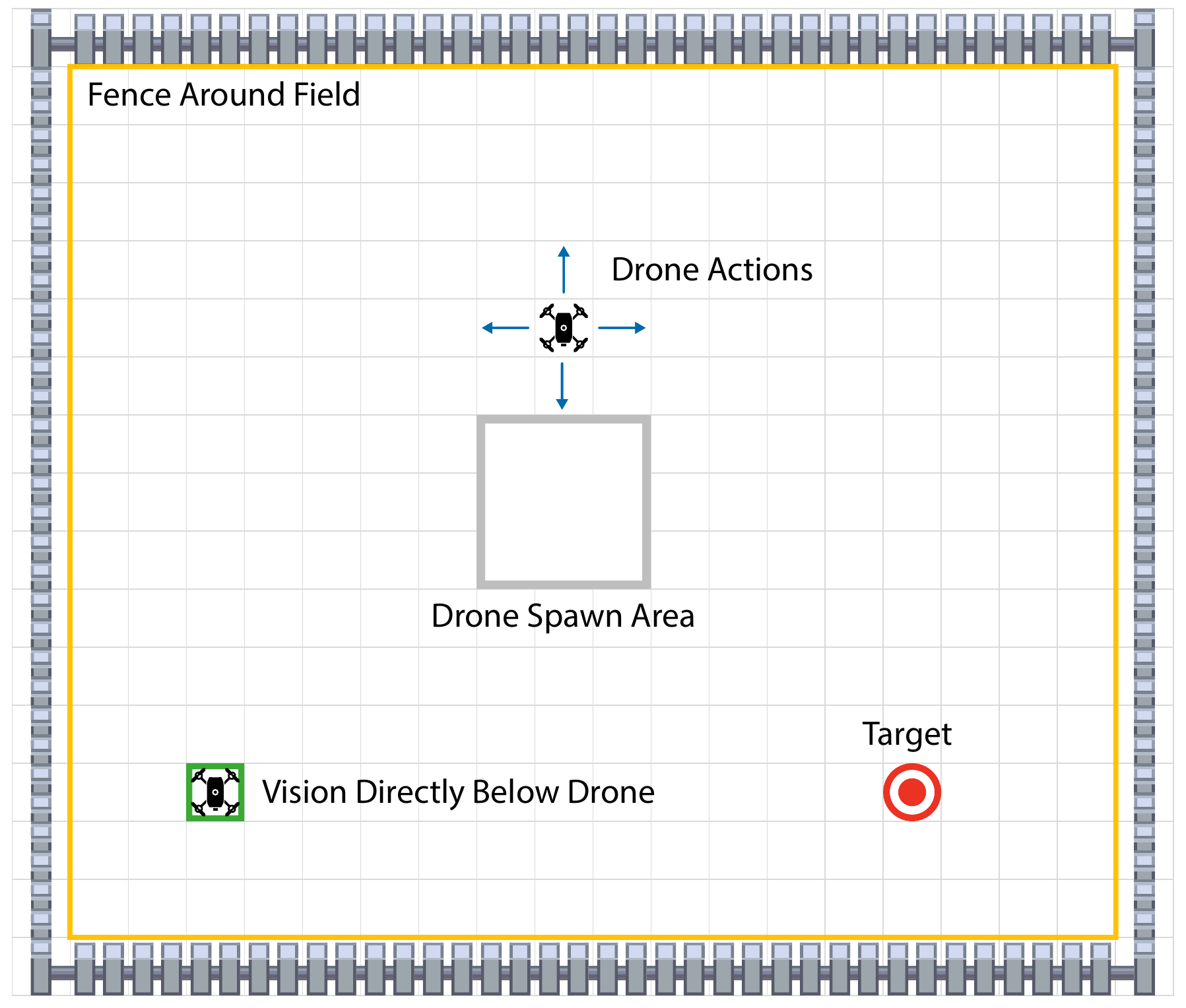}
    \caption{DroneScatter Environment}
    \label{fig:ds_image_full}
\end{figure}

The \textbf{Box Pushing} environment is also a grid environment, which we design to test communication expressivity beyond 1-WL. It has 10 robots in a 12x12 construction site. The out-most 3 grid cells on every side represent the ``clearing area'', into which robots need to clear the boxes from the central area of the site. Robots can see the cells next to them. Robots attach themselves to boxes before they can move them. When they are attached, robots cannot see around them any more. Free-roaming robots can communicate with any other free-roaming robots, but attached robots can only communicate with the robots directly adjacent to them. The environment either spawns with one large box or two small boxes. 4 attached robots are needed to move a small box. 8 attached robots are needed to move a large box. The agents have 9 possible actions: stay, move in 1 of the 4 cardinal directions, or power move in 1 of the 4 cardinal directions. A small box only moves if all attached agents move in the same direction. A large box only moves if all attached agents power move in the same direction. Robots are penalised for exerting themselves without moving the box, and are rewarded for moving the box closer to the clearing area or clearing the box. Once a box has been cleared, it is removed from the site and the agents are free to continue moving around the environment.

To solve the environment, the robots need to be able to communicate with each other to figure out which type of box they are on and all push correctly, at the same time, and in the same direction. Since the communication graphs corresponding to the scenarios with small and large boxes are 1-WL indistinguishable, communication beyond 1-WL is needed to properly solve the environment. In the ``easy'' version, robots spawn already attached to the boxes. The environment is demonstrated in Figure \ref{fig:bp_image_full}.

\begin{figure}
    \centering
    \includegraphics[scale=0.45]{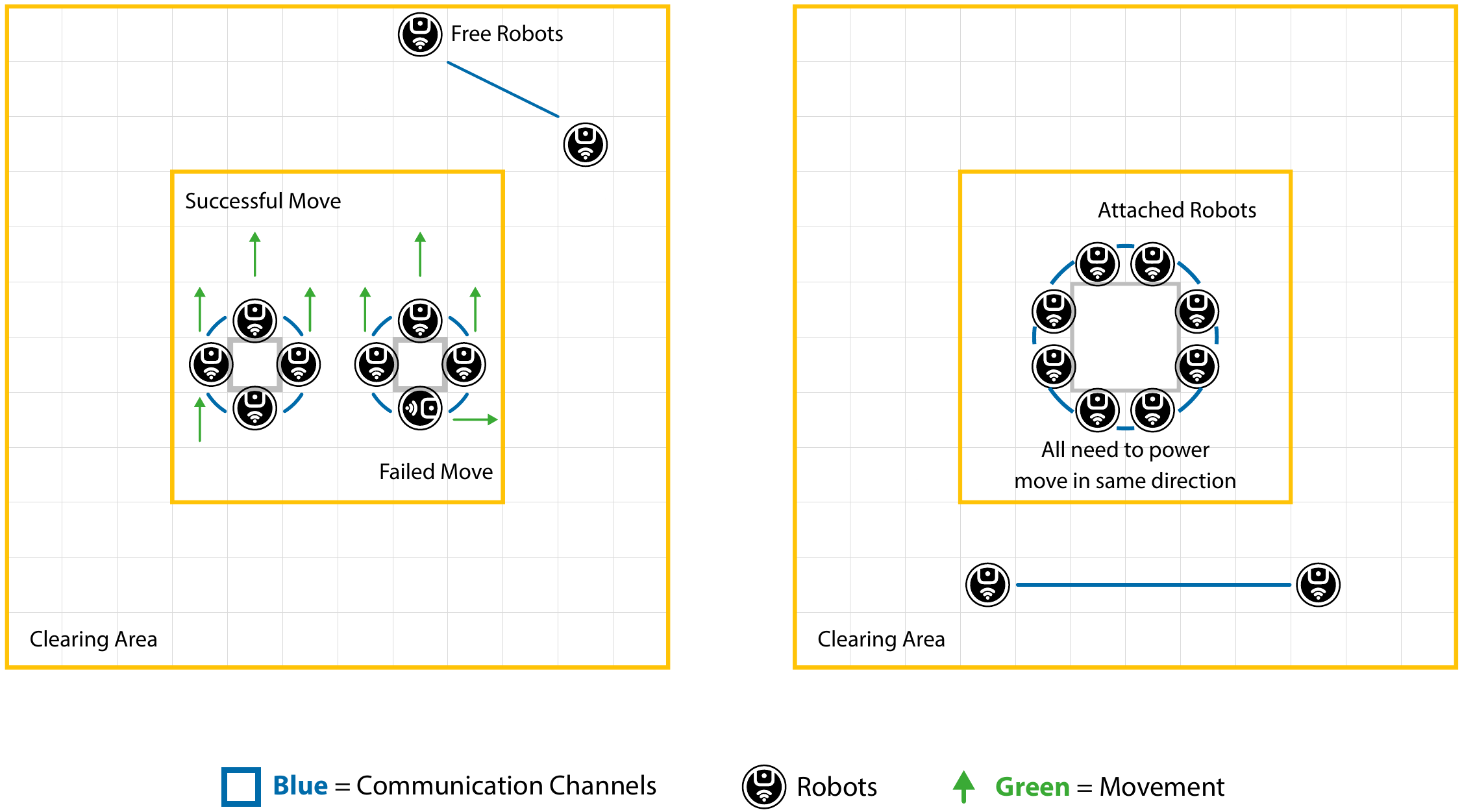}
    \caption{BoxPushing Environment}
    \label{fig:bp_image_full}
\end{figure}

\subsection{Hybrid Imitation Learning} \label{app:hybrid_learning}
% Briefly explain and justify hybrid imitation learning for BP environment. Only a short explanation needed about exceptional sparsity of reward signal in BoxPushing. Papers to cite: \citep{hester2018deep, subramanian2016exploration, wang2016sample}.
We use hybrid imitation learning for all Box Pushing experiments to solve the issue of the exceptionally sparse rewards. Since agents are only given a reward when they all take the same action and thus move the box, random policies will struggle to ever obtain a meaningful reward signal during exploration. When agents are attached to a large box, the probability of this happening at a single time step is $\nicefrac{4}{9} \times (\nicefrac{1}{9})^7 = 9.29 \mathrm{e}{-8}$, meaning $\approx 10^8$ time steps of experience are needed before any reward can be expected.

\citet{hester2018deep} propose using expert demonstrations for training and \citet{subramanian2016exploration} propose using some expert demonstrations to help exploration. Inspired by this, during training, we interleave 100 expert experiences for every 500 experiences collected by the agents in the environment. While this is stable for the value-based method DGN, doing so for the A2C methods leads to very unstable performance during training. \citet{wang2016sample} propose several ways to improve such training of A2C methods, but we choose not to implement them as it is not the focus of this paper.

\subsection{Model and Environment Hyperparameters}
% State all hyperparameters used

\subsubsection{Fixed Model Hyperparameters}
Model hyperparameters that are fixed across all experiments are shown in Table \ref{tab:params:all}, along with which group they belong to, their fixed values, and their descriptions.

\begin{table}
\small % Make text small to fit
\caption{Fixed model parameters for all experiments}
\label{tab:params:all}
\centering
\begin{tabular}{@{}llll@{}}
    \toprule
    \textbf{Group} & \textbf{Parameter} & \textbf{Value} & \textbf{Description} \\
    \midrule
    
    Training & epoch\_size & 10 & Number of update iterations in an epoch \\
    & batch\_size & 500 & Number of steps before each update (per thread) \\
    & nprocesses & 1 & How many processes to run \\
    
    \midrule
    
    DGN Training & update\_interval & 5 & How many episodes between model update steps \\
    & train\_steps & 5 & How many times to train the model in a training step \\
    & dgn\_batch\_size & 128 & Batch size  \\
    & epsilon\_start & 1 & Epsilon starting value \\
    & epsilon\_min & 0.1 & Minimum epsilon value \\
    & buffer\_capacity & 40000 & Capacity of the replay buffer \\
    
    \midrule
    
    Model & hid\_size & 128 & Hidden layer size \\
    & qk\_hid\_size & 16 & Key and query size for soft attention \\
    & value\_hid\_size & 32 & Value size for soft attention \\
    & recurrent & True & Make the A2C model recurrent in time \\
    & num\_evals & 10 & Number of evaluation runs for each training iteration \\
    & env\_graph & True & Whether the environment masks communication \\
    & comm\_passes & 4 & Number of comm passes per step over the model \\
    
    \midrule
    
    Optimization & gamma & 1 & Discount factor \\
    & normalize\_rewards & False & Normalize rewards in each batch \\
    & lrate & 0.001 & Learning rate \\
    & entr & 0 & Entropy regularization coefficient \\
    & value\_coeff & 0.01 & Coefficient for value loss term \\
    
    \midrule
    
    A2C Models & comm\_mode & avg & Mode for communication tensor calculation \\
    & comm\_mask\_zero & False & Mask all communication \\
    & mean\_ratio & 1 & How much cooperation? 1 means fully cooperative \\
    & rnn\_type & MLP & Type of RNN to use [LSTM | MLP] \\
    & detach\_gap & 10 & Detach hidden and cell states for RNNs at this interval \\
    & comm\_init & uniform & How to initialise comm weights [uniform | zeros] \\
    & hard\_attn & False & Whether to use hard attention: action - talk | silent \\
    & comm\_action\_one & False & Whether to always talk \\
    & advantages\_per\_action & False & Whether to multiply action log prob with advantages \\
    & share\_weights & True & Share model parameters between agents \\
    
    \midrule
    
    MAGIC & directed & True & Whether the communication graph is directed \\
    & self\_loop\_type & 1 & Self loop type in the GAT layers (1: with self loop) \\
    & gat\_num\_heads & 4 & Number of heads in GAT layers except the last one \\
    & gat\_num\_heads\_out & 1 & Number of heads in output GAT layer \\
    & gat\_hid\_size & 32 & Hidden size of one head in GAT \\
    & message\_decoder & True & Whether use the message decoder \\
    & gat\_normalize & False & Whether to normalize the GAT coefficients \\
    & ge\_num\_heads & 4 & Number of heads in the GAT encoder \\
    & gat\_encoder\_normalize & False & Normalize the coefficients in the GAT encoder \\
    & use\_gat\_encoder & False & Whether use the GAT encoder \\
    & gat\_encoder\_out\_size & 64 & Hidden size of output of the GAT encoder \\
    & graph\_complete & False & Whether the communication graph is complete \\
    & learn\_different\_graphs & False & Learn a new communication graph each round \\
    & message\_encoder & False & Whether to use the message encoder \\

    \bottomrule
\end{tabular}
\end{table}

\subsubsection{TrafficJunction-Easy}
Model hyperparameters for the Easy Traffic Junction experiments are shown in Table \ref{tab:params:easy_tj}, along with which group they belong to, their values (sometimes a set of values), and their descriptions.

\begin{table}
\small % Make text small to fit
\caption{Hyperparameters for Easy Traffic Junction experiments}
\label{tab:params:easy_tj}
\centering
\begin{tabular}{@{}llp{3.4cm}l@{}}
    \toprule
    \textbf{Group} & \textbf{Parameter} & \textbf{Value(s)} & \textbf{Description} \\
    \midrule
    
    Environment & difficulty & easy & Difficulty level [easy | medium | hard] \\
    & dim & 6 & Dimension of box (i.e length of road) \\
    & env\_name & traffic\_junction & Environment name \\
    & max\_steps & 20 & Force to end the game after this many steps \\
    & nagents & 5 & Number of agents \\
    & vision & 1 & Vision of car \\
    & add\_rate\_min & 0.3 & Min probability to add car (till curr. start) \\
    & add\_rate\_max & 0.3 & Max rate at which to add car \\
    & curr\_start & 0 & Start making harder after this epoch \\
    & curr\_end & 0 & When to make the game hardest \\
    & vocab\_type & bool & Type of location vector to use [bool | scalar] \\
    & comm\_range & 3 & Agent communication range \\
    
    \midrule
    
    Model & epsilon\_step & $2\mathrm{e}{-5}$ & Amount to subtract from epsilon each episode \\
    & model & [commnet, tarmac, ic3net, tarmac\_ic3net, dgn, magic] & Which baseline model to use \\
    & num\_epochs & 2000 & Number of training epochs \\
    & rni & [0.75, 0.25, 0, 1] & RNI ratio. 0 for none. 1 for unique IDs \\
    & seed & [1, 2, 3, 4, 5] & Random seed \\

    \bottomrule
\end{tabular}
\end{table}

\subsubsection{PredatorPrey}
Model hyperparameters for the Predator-Prey experiments are shown in Table \ref{tab:params:pp}, along with which group they belong to, their values (sometimes a set of values), and their descriptions.

\begin{table}
\small % Make text small to fit
\caption{Hyperparameters for Predator-Prey experiments}
\label{tab:params:pp}
\centering
\begin{tabular}{@{}llp{3.4cm}l@{}}
    \toprule
    \textbf{Group} & \textbf{Parameter} & \textbf{Value(s)} & \textbf{Description} \\
    \midrule
    
    Environment & dim & 10 & Dimension of box (i.e side length) \\
    & env\_name & predator\_prey & Environment name \\
    & max\_steps & 40 & Force to end the game after this many steps \\
    & mode & cooperative & Reward mode \\
    & nagents & 5 & Number of agents \\
    & vision & 1 & Vision of predator \\
    & nenemies & 1 & Total number of preys in play \\
    & moving\_prey & False & Whether prey is fixed or moving \\
    & no\_stay & False & Whether predators have an action to stay \\
    & comm\_range & 5 & Agent communication range \\
    
    \midrule
    
    Model & epsilon\_step & $2\mathrm{e}{-5}$ & Amount to subtract from epsilon each episode \\
    & model & [commnet, tarmac, ic3net, tarmac\_ic3net, dgn, magic] & Which baseline model to use \\
    & num\_epochs & 2000 & Number of training epochs \\
    & rni & [0.75, 0.25, 0, 1] & RNI ratio. 0 for none. 1 for unique IDs \\
    & seed & [1, 2, 3, 4, 5] & Random seed \\

    \bottomrule
\end{tabular}
\end{table}

\subsubsection{TrafficJunction-Medium}
Model hyperparameters for the Medium Traffic Junction experiments are shown in Table \ref{tab:params:medium_tj}, along with which group they belong to, their values (sometimes a set of values), and their descriptions.

\begin{table}
\small % Make text small to fit
\caption{Hyperparameters for Medium Traffic Junction experiments}
\label{tab:params:medium_tj}
\centering
\begin{tabular}{@{}llp{3.4cm}l@{}}
    \toprule
    \textbf{Group} & \textbf{Parameter} & \textbf{Value(s)} & \textbf{Description} \\
    \midrule
    
    Environment & difficulty & medium & Difficulty level [easy | medium | hard] \\
    & dim & 14 & Dimension of box (i.e length of road) \\
    & env\_name & traffic\_junction & Environment name \\
    & max\_steps & 40 & Force to end the game after this many steps \\
    & nagents & 10 & Number of agents \\
    & vision & 1 & Vision of car \\
    & add\_rate\_min & 0.3 & Min probability to add car (till curr. start) \\
    & add\_rate\_max & 0.3 & Max rate at which to add car \\
    & curr\_start & 0 & Start making harder after this epoch \\
    & curr\_end & 0 & When to make the game hardest \\
    & vocab\_type & bool & Type of location vector to use [bool | scalar] \\
    & comm\_range & 3 & Agent communication range \\
    
    \midrule
    
    Model & epsilon\_step & $2\mathrm{e}{-5}$ & Amount to subtract from epsilon each episode \\
    & model & [commnet, tarmac, ic3net, tarmac\_ic3net, dgn, magic] & Which baseline model to use \\
    & num\_epochs & 2000 & Number of training epochs \\
    & rni & [0.75, 0.25, 0, 1] & RNI ratio. 0 for none. 1 for unique IDs \\
    & seed & [1, 2, 3, 4, 5] & Random seed \\

    \bottomrule
\end{tabular}
\end{table}

\subsubsection{BoxPushing}
Model hyperparameters for the Box Pushing experiments are shown in Table \ref{tab:params:bp}, along with which group they belong to, their values (sometimes a set of values), and their descriptions.

\begin{table}
\small % Make text small to fit
\caption{Hyperparameters for Box Pushing experiments}
\label{tab:params:bp}
\centering
\begin{tabular}{@{}lp{1.75cm}p{3.4cm}l@{}}
    \toprule
    \textbf{Group} & \textbf{Parameter} & \textbf{Value(s)} & \textbf{Description} \\
    \midrule
    
    Environment & difficulty & easy & Difficulty level. Easy: robots already attached \\
    & dim & 12 & Dimension of area (i.e. side length) \\
    & env\_name & box\_pushing & Environment name \\
    & max\_steps & 20 & Force to end the game after this many steps \\
    & nagents & 10 & Number of agents \\
    & vision & 1 & Vision of robot \\

    \midrule
    
    Model & epsilon\_step & $2\mathrm{e}{-5}$ & Amount to subtract from epsilon each episode \\
    & imitation & True & Whether to use hybrid imitation learning \\
    & model & [commnet, tarmac, ic3net, tarmac\_ic3net, dgn, magic] & Which baseline model to use \\
    & num\_epochs & 2000 & Number of training epochs \\
    & num\_\newline imitation\_\newline experiences & 100 & Number of experiences coming from imitation \\
    & num\_\newline normal\_\newline experiences & 500 & Number of normal policy experiences \\
    & rni & [0.75, 0.25, 0, 1] & RNI ratio. 0 for none. 1 for unique IDs \\
    & seed & [1, 2, 3, 4, 5, 6, 7, 8, 9, 10] & Random seed \\

    \bottomrule
\end{tabular}
\end{table}

\subsubsection{DroneScatter-Stochastic}
Model hyperparameters for the Drone Scatter experiments with stochastic evaluation are shown in Table \ref{tab:params:ds_stochastic}, along with which group they belong to, their values (sometimes a set of values), and their descriptions.

\begin{table}
\small % Make text small to fit
\caption{Hyperparameters for Drone Scatter experiments with stochastic evaluation}
\label{tab:params:ds_stochastic}
\centering
\begin{tabular}{@{}lp{1.75cm}p{3.4cm}l@{}}
    \toprule
    \textbf{Group} & \textbf{Parameter} & \textbf{Value(s)} & \textbf{Description} \\
    \midrule
    
    Environment & difficulty & easy & Difficulty level. Easy: rewarded for splitting \\
    & dim & 20 & Dimension of field area (i.e. side length) \\
    & env\_name & drone\_scatter & Environment name \\
    & max\_steps & 20 & Force to end the game after this many steps \\
    & nagents & 4 & Number of agents \\
    & comm\_range & 10 & Agent communication range \\
    & find\_range & 3 & Agent distance to target to count as find \\
    & min\_target\_\newline distance & 3 & Min distance target can be from spawn area \\
    
    \midrule
    
    Model & epsilon\_step & $2\mathrm{e}{-5}$ & Amount to subtract from epsilon each episode \\
    & model & [commnet, tarmac, ic3net, tarmac\_ic3net, magic] & Which baseline model to use \\
    & num\_epochs & 2000 & Number of training epochs \\
    & rni & [0.75, 0, 1] & RNI ratio. 0 for none. 1 for unique IDs \\
    & seed & [1, 2, 3, 4, 5] & Random seed \\

    \bottomrule
\end{tabular}
\end{table}

\subsubsection{DroneScatter-Greedy}
Model hyperparameters for the Drone Scatter experiments with greedy evaluation are shown in Table \ref{tab:params:ds_greedy}, along with which group they belong to, their values (sometimes a set of values), and their descriptions.

\begin{table}
\small % Make text small to fit
\caption{Hyperparameters for Drone Scatter experiments with greedy evaluation}
\label{tab:params:ds_greedy}
\centering
\begin{tabular}{@{}lp{1.75cm}p{3.4cm}l@{}}
    \toprule
    \textbf{Group} & \textbf{Parameter} & \textbf{Value(s)} & \textbf{Description} \\
    \midrule
    
    Environment & difficulty & easy & Difficulty level. Easy: rewarded for splitting \\
    & dim & 20 & Dimension of field area (i.e. side length) \\
    & env\_name & drone\_scatter & Environment name \\
    & max\_steps & 20 & Force to end the game after this many steps \\
    & nagents & 4 & Number of agents \\
    & comm\_range & 10 & Agent communication range \\
    & find\_range & 3 & Agent distance to target to count as find \\
    & min\_target\_\newline distance & 3 & Min distance target can be from spawn area \\
    
    \midrule
    
    Model & epsilon\_step & $2\mathrm{e}{-5}$ & Amount to subtract from epsilon each episode \\
    & greedy\_a2c\_\newline eval & True & Whether to evaluate A2C methods greedily \\
    & model & [commnet, tarmac, ic3net, tarmac\_ic3net, dgn, magic] & Which baseline model to use \\
    & num\_epochs & 2000 & Number of training epochs \\
    & rni & [0.75, 0, 1] & RNI ratio. 0 for none. 1 for unique IDs \\
    & seed & [1, 2, 3, 4, 5] & Random seed \\

    \bottomrule
\end{tabular}
\end{table}

%
% Full results
%
\clearpage
\section{Full Results} \label{app:results}
In this appendix, full results from all experiments are shown. Our experiments were done in parallel on an internal cluster, using only CPUs. With regards to compute time, 127 days were used for Easy Traffic Junction, 149 for Predator-Prey, 186 for Medium Traffic Junction, 502 for Box Pushing, 76 for stochastic Drone Scatter, and 89 for greedy Drone Scatter. This comes to a total of 1129 days.

%
% Result tables
%
\subsection{Result Tables}
In this section, scores for all metrics across all experiments are shown in Table \ref{tab:full_results:tj_easy} (Easy Traffic Junction), Table \ref{tab:full_results:pp} (Predator-Prey), Table \ref{tab:full_results:tj_medium} (Medium Traffic Junction), Table \ref{tab:full_results:bp} (Box Pushing), Table \ref{tab:full_results:ds_stochastic} (Drone Scatter with stochastic evaluation), and Table \ref{tab:full_results:ds_greedy} (Drone Scatter with greedy evaluation).

%
% Easy TJ Table
%
\begin{table}
\small % Make text small to fit
\caption{Mean and 95\% confidence interval for Easy TrafficJunction across all baselines}
\label{tab:full_results:tj_easy}
\centering
\begin{tabular}{@{}rrrrrrrrrrrrrr@{}}
    \toprule
    Baseline & Metric & \textbf{Baseline} && \textbf{Unique IDs} && \textbf{0.75 RNI} && \textbf{0.25 RNI} \\
    \midrule
    
CommNet & Success & \pmb{$ 1 \pm 0 $} && \pmb{$ 1 \pm 0 $} && \pmb{$ 1 \pm 0 $} && \pmb{$ 1 \pm 0 $} \\
& Reward & $ -1.7 \pm 0.01 $ && \pmb{$ -1.69 \pm 0 $} && $ -1.78 \pm 0.02 $ && $ -1.7 \pm 0.01 $ \\
DGN & Success & $ 0.987 \pm 0 $ && $ 0.99 \pm 0 $ && $ 0.848 \pm 0.15 $ && \pmb{$ 0.996 \pm 0 $} \\
& Reward & $ -4.48 \pm 0.79 $ && $ -4 \pm 0.17 $ && $ -9.35 \pm 5.57 $ && \pmb{$ -3.99 \pm 0.15 $} \\
IC3Net & Success & \pmb{$ 1 \pm 0 $} && \pmb{$ 1 \pm 0 $} && \pmb{$ 1 \pm 0 $} && $ 0.986 \pm 0.02 $ \\
& Reward & $ -1.71 \pm 0 $ && \pmb{$ -1.7 \pm 0.01 $} && $ -1.74 \pm 0.01 $ && $ -2.02 \pm 0.51 $ \\
MAGIC & Success & $ 0.634 \pm 0.11 $ && $ 0.764 \pm 0.13 $ && $ 0.684 \pm 0.11 $ && \pmb{$ 0.787 \pm 0.09 $} \\
& Reward & $ -16 \pm 1.67 $ && $ -15.8 \pm 1.69 $ && $ -15 \pm 2.05 $ && \pmb{$ -14.7 \pm 2.08 $} \\
TarMAC & Success & $ 0.994 \pm 0.01 $ && \pmb{$ 1 \pm 0 $} && $ 0.933 \pm 0.04 $ && \pmb{$ 1 \pm 0 $} \\
& Reward & $ -2.07 \pm 0.44 $ && \pmb{$ -1.72 \pm 0.02 $} && $ -3.59 \pm 1.28 $ && $ -1.76 \pm 0.04 $ \\
T-IC3Net & Success & \pmb{$ 1 \pm 0 $} && $ 0.998 \pm 0 $ && $ 0.94 \pm 0.04 $ && $ 0.974 \pm 0.04 $ \\
& Reward & \pmb{$ -1.74 \pm 0.01 $} && $ -1.79 \pm 0.11 $ && $ -3.16 \pm 1.03 $ && $ -2.25 \pm 0.91 $ \\

    \bottomrule
\end{tabular}
\end{table}

%
% Medium PP Table
%
\begin{table}
\small % Make text small to fit
\caption{Mean and 95\% confidence interval for PredatorPrey across all baselines}
\label{tab:full_results:pp}
\centering
\begin{tabular}{@{}rrrrrrrrrrrrrr@{}}
    \toprule
    Baseline & Metric & \textbf{Baseline} && \textbf{Unique IDs} && \textbf{0.75 RNI} && \textbf{0.25 RNI} \\
    \midrule
    
CommNet & Success & $ 0.88 \pm 0.03 $ && \pmb{$ 0.908 \pm 0.02 $} && $ 0.194 \pm 0.02 $ && $ 0.476 \pm 0.05 $ \\
& Reward & $ 23.15 \pm 0.92 $ && \pmb{$ 23.71 \pm 1.18 $} && $ 1.828 \pm 0.51 $ && $ 10.53 \pm 1.92 $ \\
DGN & Success & $ 0.014 \pm 0 $ && $ 0.016 \pm 0 $ && $ 0.026 \pm 0.03 $ && \pmb{$ 0.032 \pm 0.01 $} \\
& Reward & $ -6.8 \pm 0.71 $ && $ -7.84 \pm 0.26 $ && $ -7.69 \pm 0.91 $ && \pmb{$ -4.37 \pm 2.63 $} \\
IC3Net & Success & \pmb{$ 0.952 \pm 0 $} && $ 0.93 \pm 0.02 $ && $ 0.454 \pm 0.08 $ && $ 0.933 \pm 0.02 $ \\
& Reward & $ 22.54 \pm 1.19 $ && $ 22.99 \pm 0.52 $ && $ 10.19 \pm 1.47 $ && \pmb{$ 24.38 \pm 1.63 $} \\
MAGIC & Success & \pmb{$ 0.892 \pm 0.02 $} && $ 0.888 \pm 0.05 $ && $ 0.112 \pm 0.03 $ && $ 0.451 \pm 0.09 $ \\
& Reward & \pmb{$ 21.62 \pm 1.31 $} && $ 21.36 \pm 1.65 $ && $ -0.85 \pm 1.63 $ && $ 9.487 \pm 3.07 $ \\
TarMAC & Success & $ 0.169 \pm 0.09 $ && \pmb{$ 0.24 \pm 0.11 $} && $ 0.068 \pm 0.01 $ && $ 0.086 \pm 0.02 $ \\
& Reward & $ 0.323 \pm 3.65 $ && \pmb{$ 3.131 \pm 3.96 $} && $ -5.22 \pm 0.54 $ && $ -3.14 \pm 1.27 $ \\
T-IC3Net & Success & \pmb{$ 0.938 \pm 0.02 $} && \pmb{$ 0.938 \pm 0.01 $} && $ 0.27 \pm 0.02 $ && $ 0.913 \pm 0.02 $ \\
& Reward & \pmb{$ 23.77 \pm 1.03 $} && $ 23.24 \pm 0.43 $ && $ 4.725 \pm 0.97 $ && $ 22.79 \pm 0.46 $ \\

    \bottomrule
\end{tabular}
\end{table}

%
% Medium TJ Table
%
\begin{table}
\small % Make text small to fit
\caption{Mean and 95\% confidence interval for Medium TrafficJunction across all baselines}
\label{tab:full_results:tj_medium}
\centering
\begin{tabular}{@{}rrrrrrrrrrrrrr@{}}
    \toprule
    Baseline & Metric & \textbf{Baseline} && \textbf{Unique IDs} && \textbf{0.75 RNI} && \textbf{0.25 RNI} \\
    \midrule
    
CommNet & Success & $ 0.761 \pm 0.31 $ && \pmb{$ 0.793 \pm 0.33 $} && $ 0.046 \pm 0 $ && $ 0.614 \pm 0.11 $ \\
& Reward & $ -50.4 \pm 36 $ && $ -68.9 \pm 73.8 $ && $ -168 \pm 7.13 $ && \pmb{$ -48.5 \pm 7.87 $} \\
DGN & Success & \pmb{$ 1 \pm 0 $} && \pmb{$ 1 \pm 0 $} && $ 0.062 \pm 0 $ && $ 0.619 \pm 0.4 $ \\
& Reward & \pmb{$ -62.7 \pm 0.09 $} && \pmb{$ -62.7 \pm 0.1 $} && $ -245 \pm 2.61 $ && $ -138 \pm 80.7 $ \\
IC3Net & Success & \pmb{$ 0.971 \pm 0.04 $} && $ 0.804 \pm 0.1 $ && $ 0.588 \pm 0.03 $ && $ 0.855 \pm 0.13 $ \\
& Reward & \pmb{$ -22.6 \pm 1.31 $} && $ -28.1 \pm 3.3 $ && $ -42.2 \pm 2.54 $ && $ -27.1 \pm 4.96 $ \\
MAGIC & Success & $ 0.551 \pm 0.28 $ && $ 0.526 \pm 0.33 $ && \pmb{$ 0.734 \pm 0.21 $} && $ 0.4 \pm 0.35 $ \\
& Reward & $ -132 \pm 61.1 $ && \pmb{$ -112 \pm 59.7 $} && $ -173 \pm 55.1 $ && $ -198 \pm 60.1 $ \\
TarMAC & Success & \pmb{$ 0.064 \pm 0 $} && $ 0.052 \pm 0 $ && $ 0.05 \pm 0 $ && $ 0.054 \pm 0.01 $ \\
& Reward & $ -187 \pm 24.4 $ && \pmb{$ -182 \pm 28.7 $} && $ -245 \pm 2.79 $ && $ -211 \pm 20.7 $ \\
T-IC3Net & Success & $ 0.89 \pm 0.17 $ && $ 0.909 \pm 0.08 $ && $ 0.362 \pm 0.18 $ && \pmb{$ 0.962 \pm 0.02 $} \\
& Reward & $ -26.4 \pm 7.2 $ && $ -24.9 \pm 2.48 $ && $ -94.2 \pm 65.8 $ && \pmb{$ -23.7 \pm 1.1 $} \\

    \bottomrule
\end{tabular}
\end{table}

%
% BoxPushing Table
%
\begin{table}
\small % Make text small to fit
\caption{Mean and 95\% confidence interval for BoxPushing across all baselines}
\label{tab:full_results:bp}
\centering
\begin{tabular}{@{}rrrrrrrrrrrrrr@{}}
    \toprule
    Baseline & Metric & \textbf{Baseline} && \textbf{Unique IDs} && \textbf{0.75 RNI} && \textbf{0.25 RNI} \\
    \midrule
    
CommNet & Ratio Cleared & $ 0.786 \pm 0.08 $ && \pmb{$ 0.829 \pm 0.08 $} && $ 0.768 \pm 0.08 $ && $ 0.795 \pm 0.09 $ \\
& Reward & $ 3777 \pm 728 $ && \pmb{$ 4439 \pm 687 $} && $ 4313 \pm 514 $ && $ 4196 \pm 615 $ \\
DGN & Ratio Cleared & $ 0.603 \pm 0 $ && $ 0.756 \pm 0.06 $ && \pmb{$ 0.958 \pm 0 $} && $ 0.957 \pm 0.01 $ \\
& Reward & $ 3811 \pm 51.5 $ && $ 4127 \pm 278 $ && \pmb{$ 5536 \pm 57 $} && $ 5469 \pm 90.2 $ \\
IC3Net & Ratio Cleared & $ 0.49 \pm 0.15 $ && $ 0.617 \pm 0.14 $ && $ 0.34 \pm 0.18 $ && \pmb{$ 0.676 \pm 0.06 $} \\
& Reward & $ 2528 \pm 950 $ && $ 2990 \pm 864 $ && $ 1341 \pm 711 $ && \pmb{$ 3306 \pm 556 $} \\
MAGIC & Ratio Cleared & $ 0.958 \pm 0.04 $ && $ 0.985 \pm 0.01 $ && $ 0.975 \pm 0.04 $ && \pmb{$ 0.998 \pm 0 $} \\
& Reward & $ 5199 \pm 221 $ && $ 5322 \pm 214 $ && $ 5444 \pm 332 $ && \pmb{$ 5464 \pm 118 $} \\
TarMAC & Ratio Cleared & $ 0.629 \pm 0.14 $ && $ 0.578 \pm 0.11 $ && $ 0.662 \pm 0.06 $ && \pmb{$ 0.679 \pm 0.06 $} \\
& Reward & $ 3425 \pm 820 $ && $ 2961 \pm 729 $ && $ 3343 \pm 555 $ && \pmb{$ 3610 \pm 475 $} \\
T-IC3Net & Ratio Cleared & $ 0.558 \pm 0.13 $ && $ 0.596 \pm 0.11 $ && $ 0.458 \pm 0.18 $ && \pmb{$ 0.643 \pm 0.15 $} \\
& Reward & $ 2979 \pm 753 $ && \pmb{$ 3062 \pm 785 $} && $ 1917 \pm 763 $ && $ 2908 \pm 847 $ \\

    \bottomrule
\end{tabular}
\end{table}

%
% DroneScatter-Stochastic Table
%
\begin{table}
\small % Make text small to fit
\caption{Mean and 95\% confidence interval for DroneScatter across all baselines except DGN, including a purely random agent. Stochastic evaluation}
\label{tab:full_results:ds_stochastic}
\centering
\begin{tabular}{@{}rrrrrrrrrrrr@{}}
    \toprule
    Baseline & Metric & \textbf{Baseline} && \textbf{Unique IDs} && \textbf{0.75 RNI} \\
    \midrule
    
CommNet & Pairwise Distance & $ 11.34 \pm 0.9 $ && \pmb{$ 12.08 \pm 1.12 $} && $ 8.687 \pm 1.4 $ \\
& Steps Taken & $ 11.5 \pm 0.26 $ && \pmb{$ 9.767 \pm 0.32 $} && $ 11.74 \pm 1.39 $ \\
& Reward & $ 269 \pm 18.6 $ && \pmb{$ 314.2 \pm 22 $} && $ 236 \pm 45.3 $ \\
IC3Net & Pairwise Distance & $ 9.108 \pm 1.45 $ && \pmb{$ 13.3 \pm 0.71 $} && $ 10.99 \pm 0.38 $ \\
& Steps Taken & $ 11.94 \pm 0.84 $ && \pmb{$ 10.13 \pm 0.25 $} && $ 11.66 \pm 0.22 $ \\
& Reward & $ 239.2 \pm 33.7 $ && \pmb{$ 316.7 \pm 11.3 $} && $ 260.9 \pm 8.62 $ \\
MAGIC & Pairwise Distance & $ 7.693 \pm 1.47 $ && \pmb{$ 12.59 \pm 1 $} && $ 7.216 \pm 0.76 $ \\
& Steps Taken & $ 13.05 \pm 0.58 $ && \pmb{$ 11.12 \pm 1.05 $} && $ 13.54 \pm 0.21 $ \\
& Reward & $ 205.6 \pm 21.4 $ && \pmb{$ 273.9 \pm 36.7 $} && $ 180 \pm 17.3 $ \\
TarMAC & Pairwise Distance & $ 7.448 \pm 0.89 $ && \pmb{$ 10.26 \pm 0.69 $} && $ 8.486 \pm 0.36 $ \\
& Steps Taken & $ 13.49 \pm 0.1 $ && \pmb{$ 10.7 \pm 0.35 $} && $ 12.85 \pm 0.57 $ \\
& Reward & $ 171.4 \pm 13.8 $ && \pmb{$ 270.4 \pm 7.81 $} && $ 200.2 \pm 14.8 $ \\
T-IC3Net & Pairwise Distance & $ 8.891 \pm 0.27 $ && \pmb{$ 12.9 \pm 0.78 $} && $ 9.552 \pm 0.57 $ \\
& Steps Taken & $ 12.28 \pm 0.6 $ && \pmb{$ 10.33 \pm 0.46 $} && $ 12.22 \pm 0.82 $ \\
& Reward & $ 219 \pm 37.2 $ && \pmb{$ 309 \pm 12 $} && $ 224.9 \pm 22.5 $ \\

    Random & Pairwise Distance & $ 5.8 \pm 0.02 $ && -- && -- \\
    & Steps Taken & $ 17.39 \pm 0.04 $ && -- && -- \\
    & Reward & $ 44.59 \pm 1.73 $ && -- && -- \\

    \bottomrule
\end{tabular}
\end{table}

%
% DroneScatter-Greedy Table
%
\begin{table}
\small % Make text small to fit
\caption{Mean and 95\% confidence interval for DroneScatter across all baselines. Greedy evaluation}
\label{tab:full_results:ds_greedy}
\centering
\begin{tabular}{@{}rrrrrrrrrrrr@{}}
    \toprule
    Baseline & Metric & \textbf{Baseline} && \textbf{Unique IDs} && \textbf{0.75 RNI} \\
    \midrule
    
CommNet & Pairwise Distance & $ 8.849 \pm 0.63 $ && \pmb{$ 13.28 \pm 1.27 $} && $ 8.589 \pm 1.35 $ \\
& Steps Taken & $ 13.79 \pm 0.12 $ && \pmb{$ 9.554 \pm 0.33 $} && $ 12.62 \pm 1.19 $ \\
& Reward & $ 170.6 \pm 6.72 $ && \pmb{$ 319.7 \pm 24.7 $} && $ 204.8 \pm 40.5 $ \\
DGN & Pairwise Distance & $ 3.221 \pm 0.18 $ && \pmb{$ 4.427 \pm 0.67 $} && $ 3.706 \pm 0.83 $ \\
& Steps Taken & $ 13.36 \pm 0.15 $ && \pmb{$ 13.27 \pm 0.21 $} && $ 13.46 \pm 0.14 $ \\
& Reward & $ 147.9 \pm 6.16 $ && \pmb{$ 154.4 \pm 6.63 $} && $ 149.3 \pm 5.38 $ \\
IC3Net & Pairwise Distance & $ 7.69 \pm 1.03 $ && \pmb{$ 14.09 \pm 0.54 $} && $ 11 \pm 0.86 $ \\
& Steps Taken & $ 13.25 \pm 0.4 $ && \pmb{$ 10.14 \pm 0.2 $} && $ 11.42 \pm 0.48 $ \\
& Reward & $ 186.6 \pm 8.87 $ && \pmb{$ 310.1 \pm 11 $} && $ 264.3 \pm 21.9 $ \\
MAGIC & Pairwise Distance & $ 6.61 \pm 1.28 $ && \pmb{$ 12.58 \pm 0.6 $} && $ 7.107 \pm 1.59 $ \\
& Steps Taken & $ 13.27 \pm 0.18 $ && \pmb{$ 11.84 \pm 0.68 $} && $ 13.61 \pm 0.25 $ \\
& Reward & $ 193.3 \pm 15.2 $ && \pmb{$ 222.7 \pm 26.6 $} && $ 161.3 \pm 23.9 $ \\
TarMAC & Pairwise Distance & $ 8.666 \pm 0.28 $ && \pmb{$ 12.09 \pm 0.73 $} && $ 8.999 \pm 0.94 $ \\
& Steps Taken & $ 13.73 \pm 0.21 $ && \pmb{$ 11.01 \pm 0.86 $} && $ 12.19 \pm 0.82 $ \\
& Reward & $ 139.1 \pm 4.43 $ && \pmb{$ 255.5 \pm 31.4 $} && $ 202.9 \pm 32.6 $ \\
T-IC3Net & Pairwise Distance & $ 7.28 \pm 0.69 $ && \pmb{$ 13.51 \pm 0.98 $} && $ 10.87 \pm 1.17 $ \\
& Steps Taken & $ 13.96 \pm 0.26 $ && \pmb{$ 10.63 \pm 0.66 $} && $ 11.73 \pm 0.54 $ \\
& Reward & $ 156.9 \pm 21.9 $ && \pmb{$ 278.6 \pm 32.1 $} && $ 240.3 \pm 25.6 $ \\

    \bottomrule
\end{tabular}
\end{table}

%
% Result plots
%
\clearpage
\subsection{Result Plots}
In this section, we provide the full result plots for all of our experiments. For all but the BoxPushing experiments, results are shown aggregated across 5 seeds, with a 95\% confidence interval. For the BoxPushing experiments, the hybrid imitation learning paradigm yielded unstable training for all A2C methods. Thus, to better visualize the results, for each seed, we first denote the performance at each epoch to be the maximum performance achieved so far. Then, we aggregate these runs across 10 seeds, showing the mean and a 95\% confidence interval.

%
% Easy TJ
%
\subsection{TrafficJunction-Easy}
\centering

\includegraphics[width=.49\linewidth]{results/TrafficJunction-Easy/TrafficJunction-Easy_commnet_success.pdf}
\includegraphics[width=.49\linewidth]{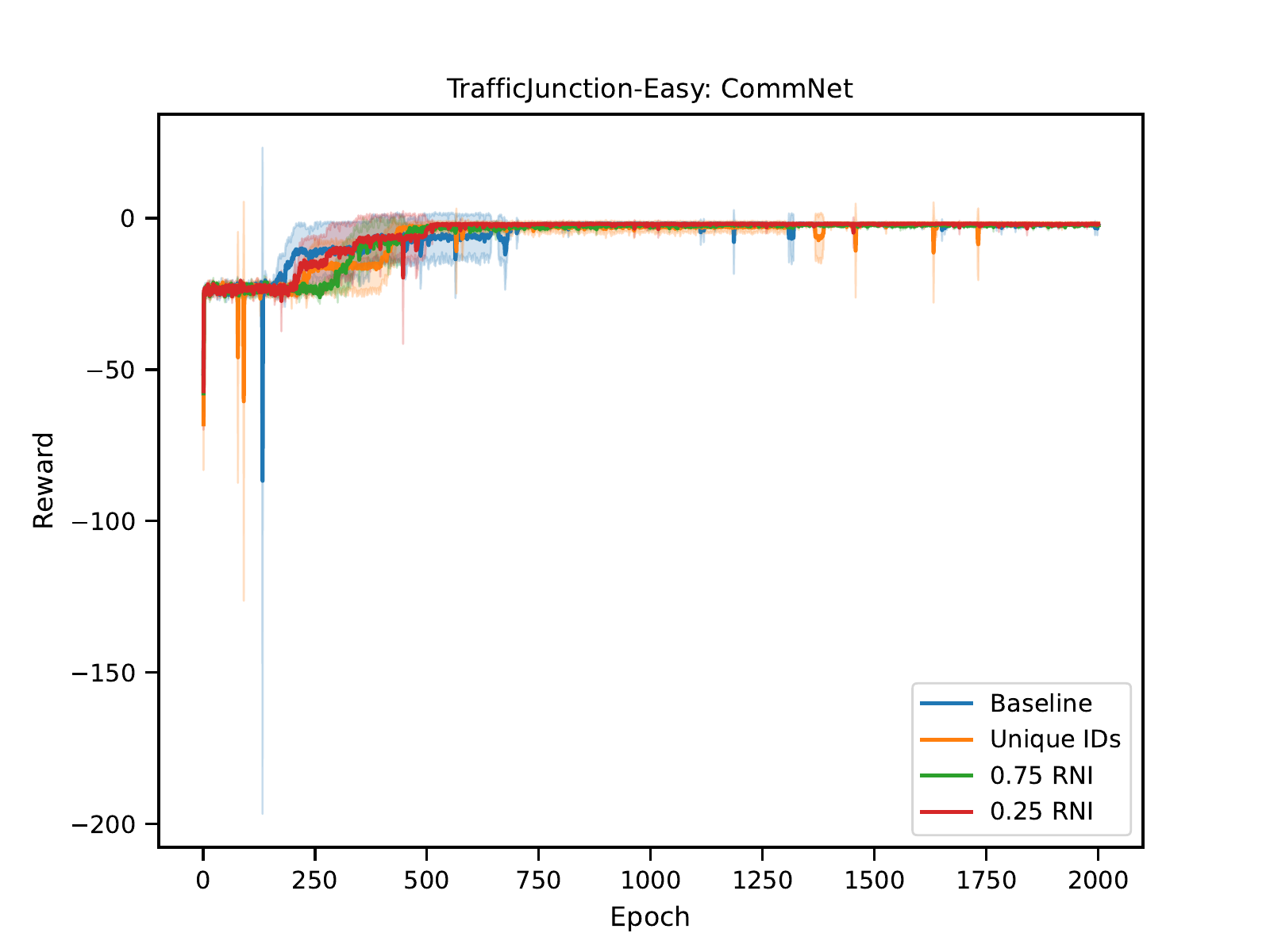}

\includegraphics[width=.49\linewidth]{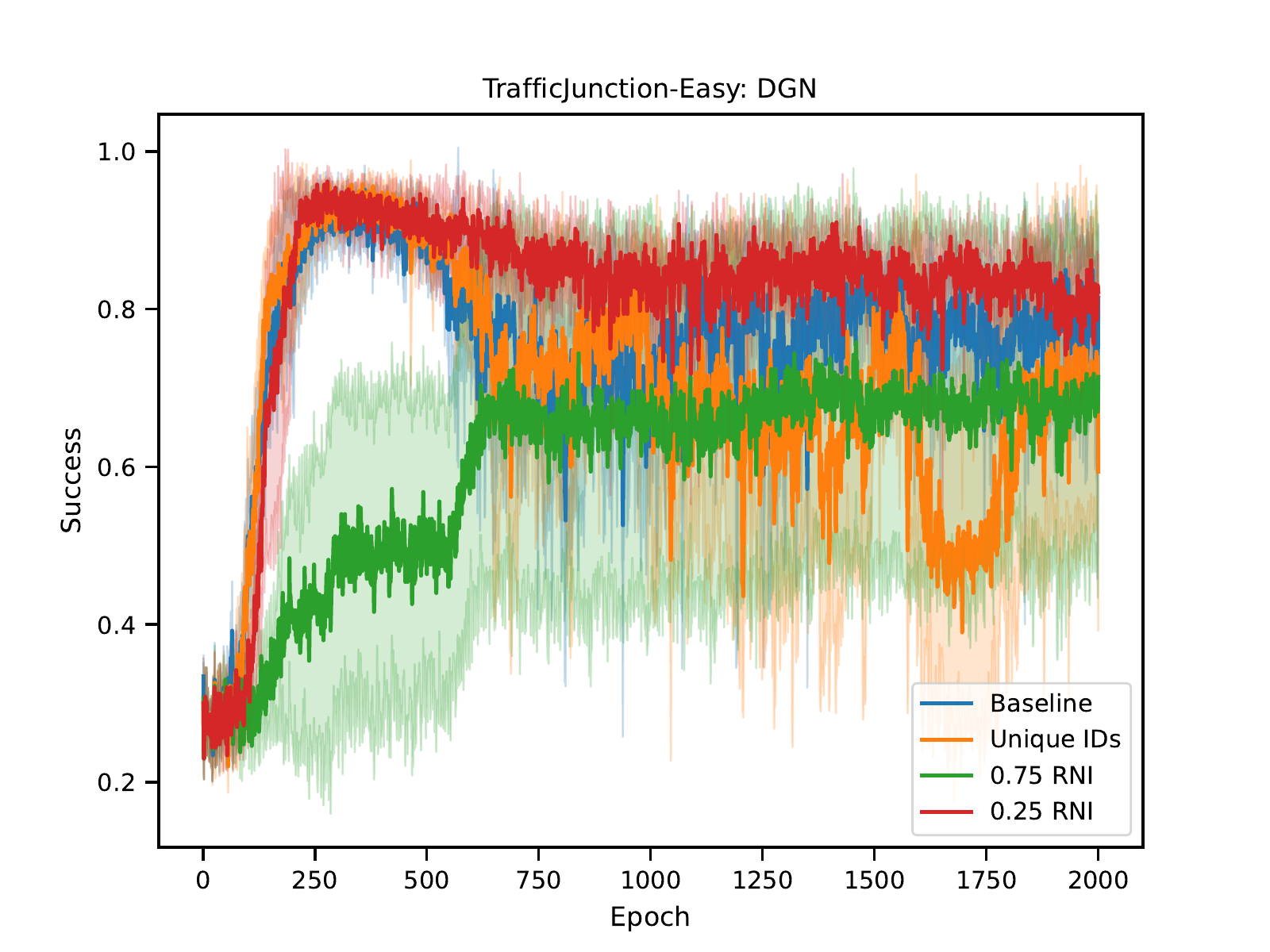}
\includegraphics[width=.49\linewidth]{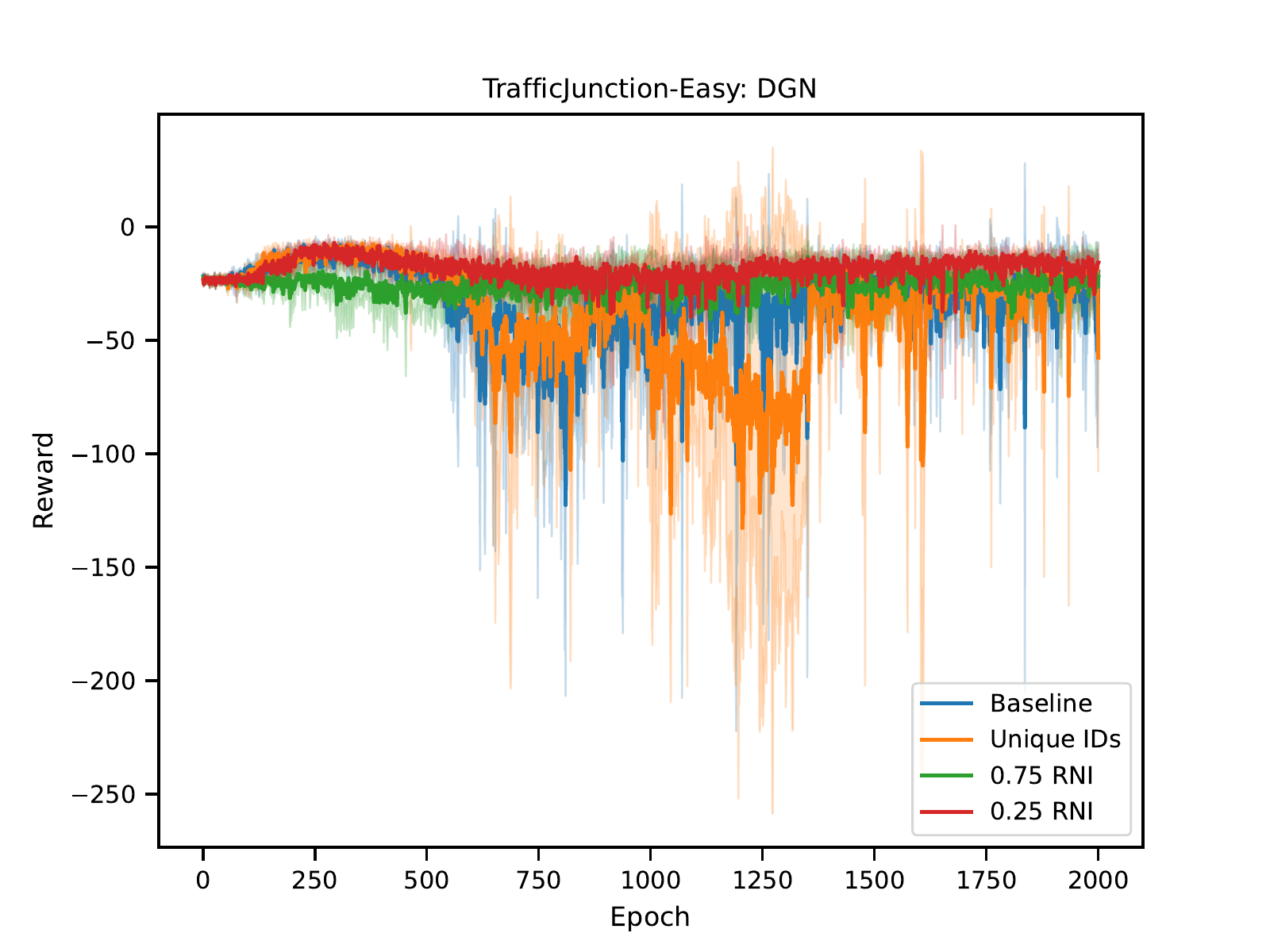}

\includegraphics[width=.49\linewidth]{results/TrafficJunction-Easy/TrafficJunction-Easy_ic3net_success.pdf}
\includegraphics[width=.49\linewidth]{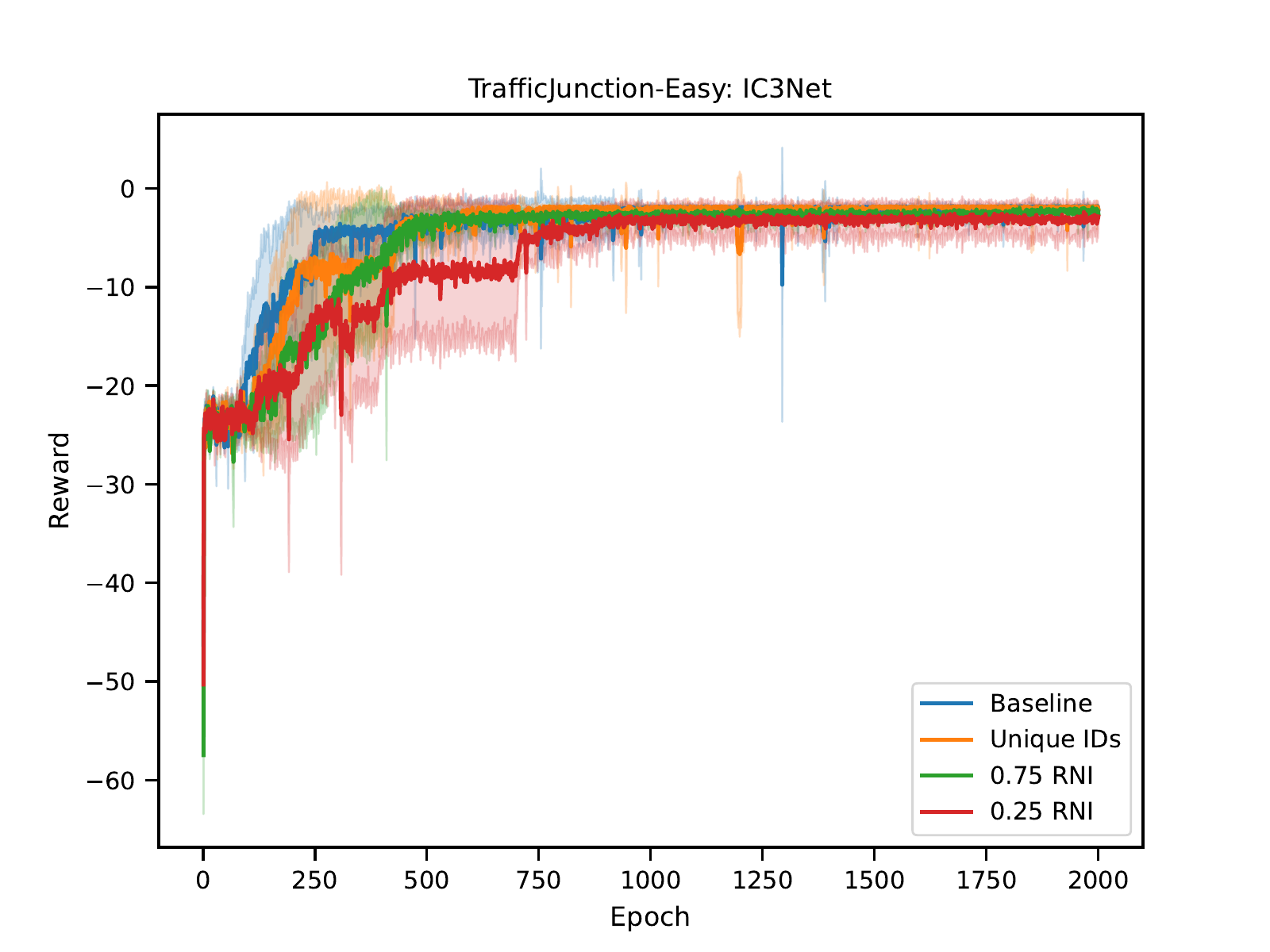}

\includegraphics[width=.49\linewidth]{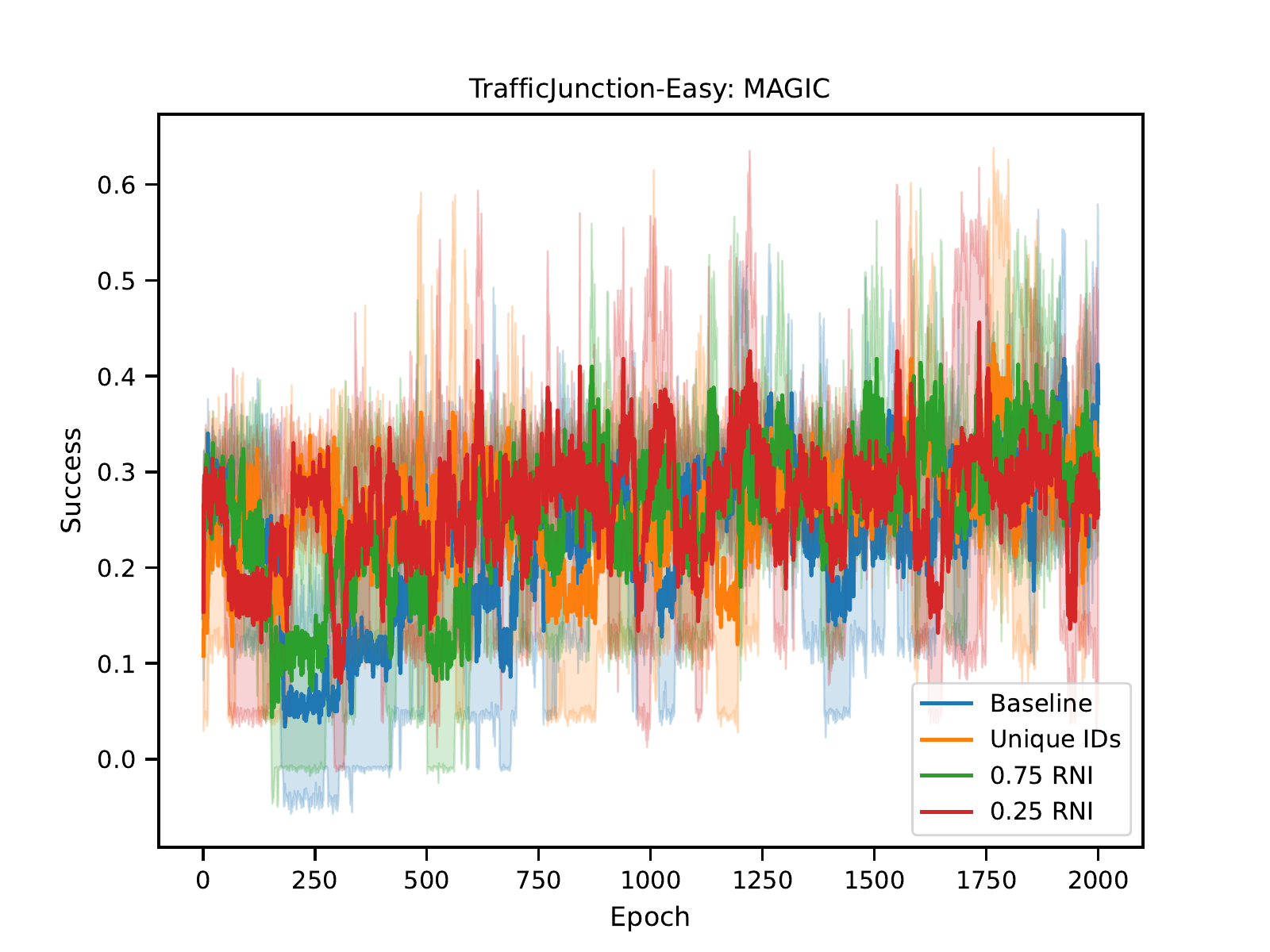}
\includegraphics[width=.49\linewidth]{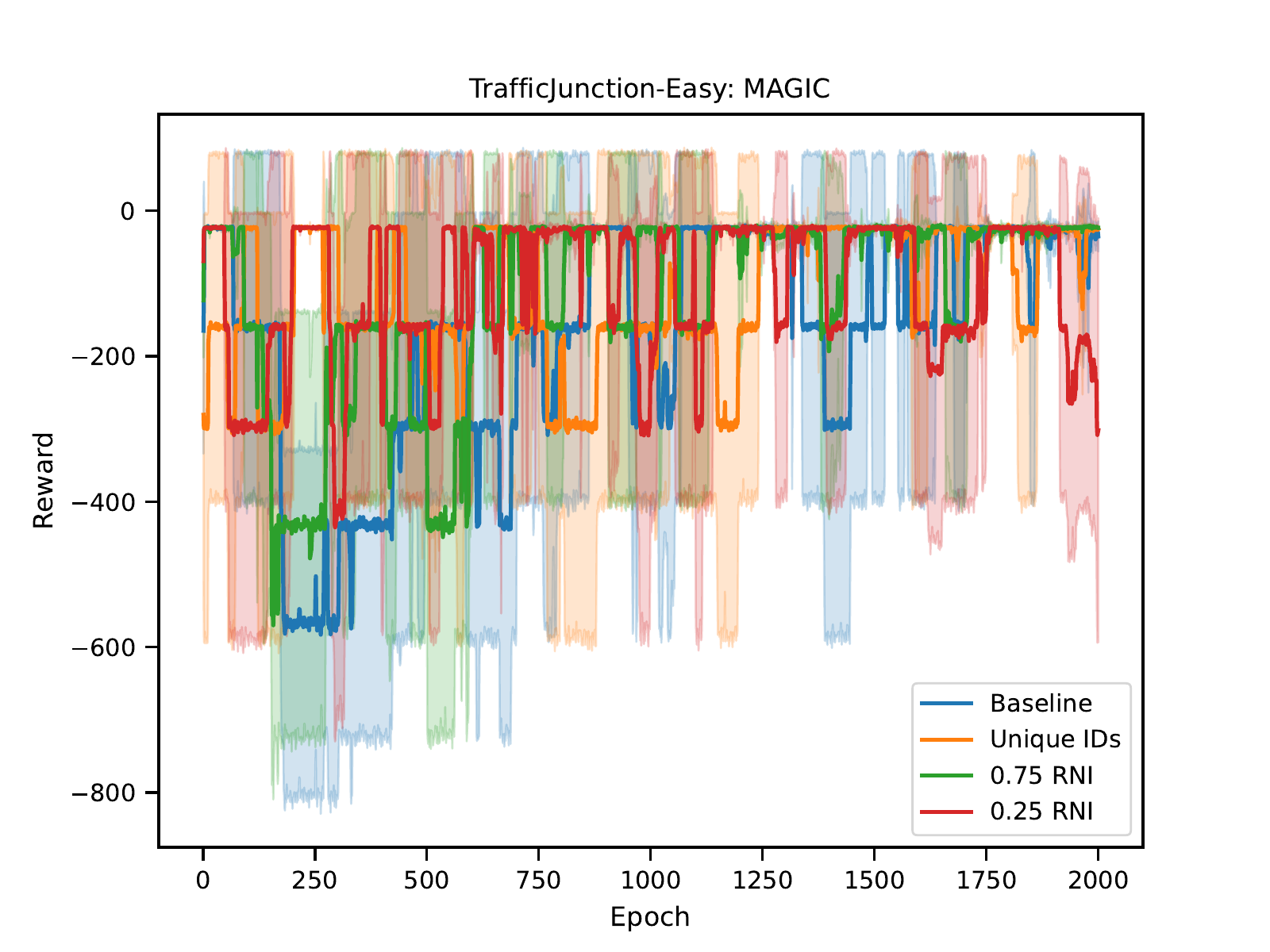}

\includegraphics[width=.49\linewidth]{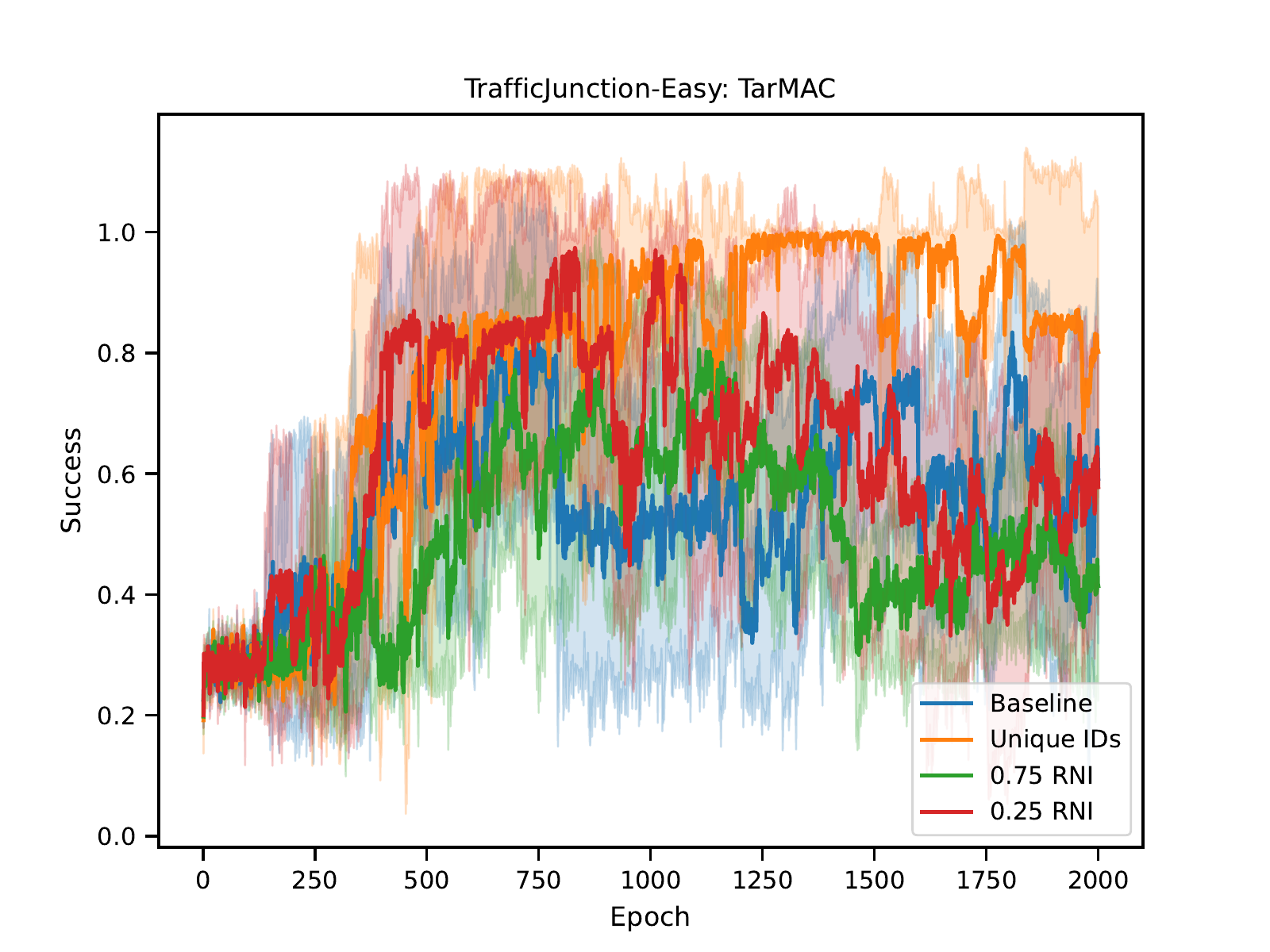}
\includegraphics[width=.49\linewidth]{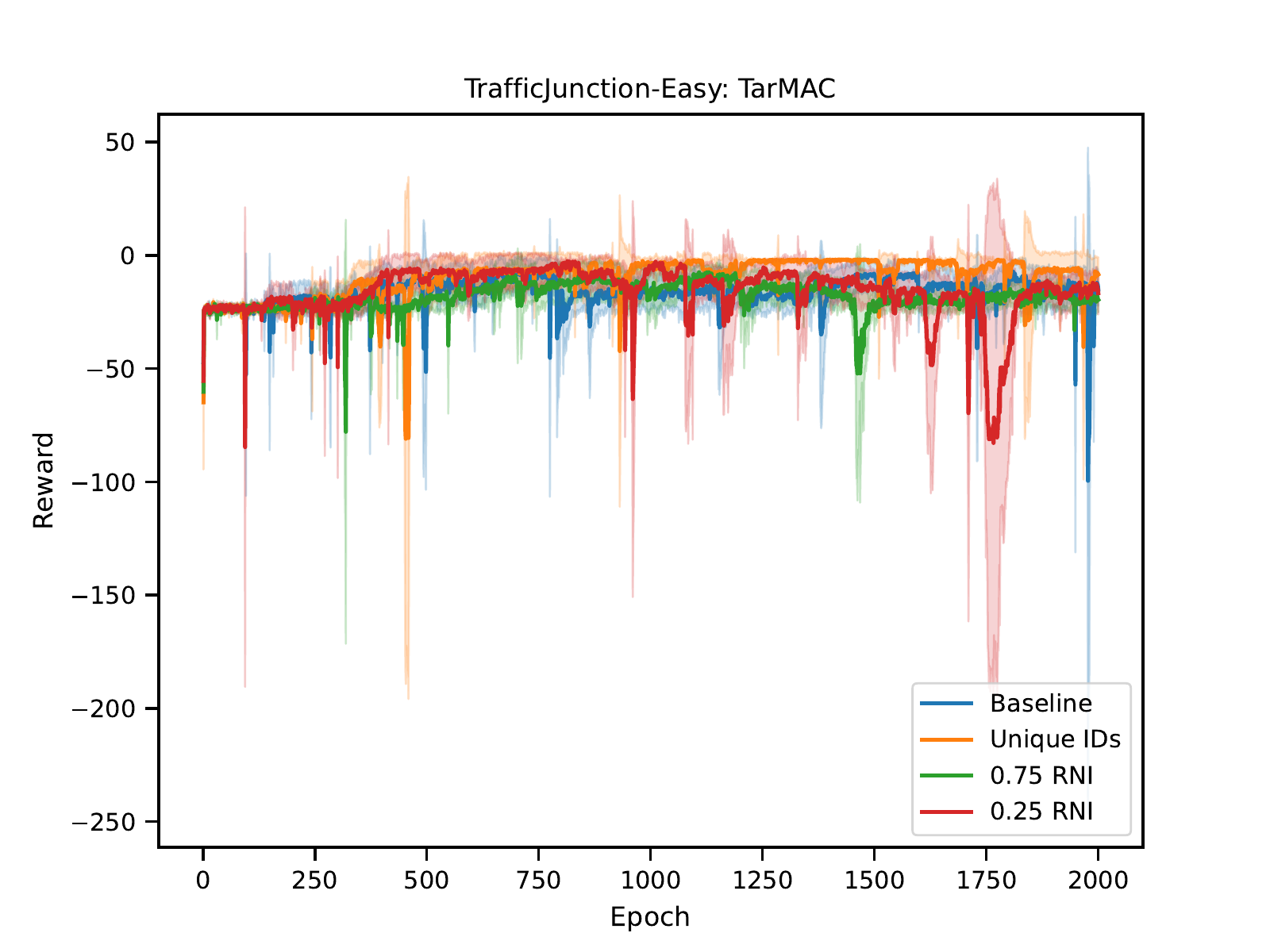}

\includegraphics[width=.49\linewidth]{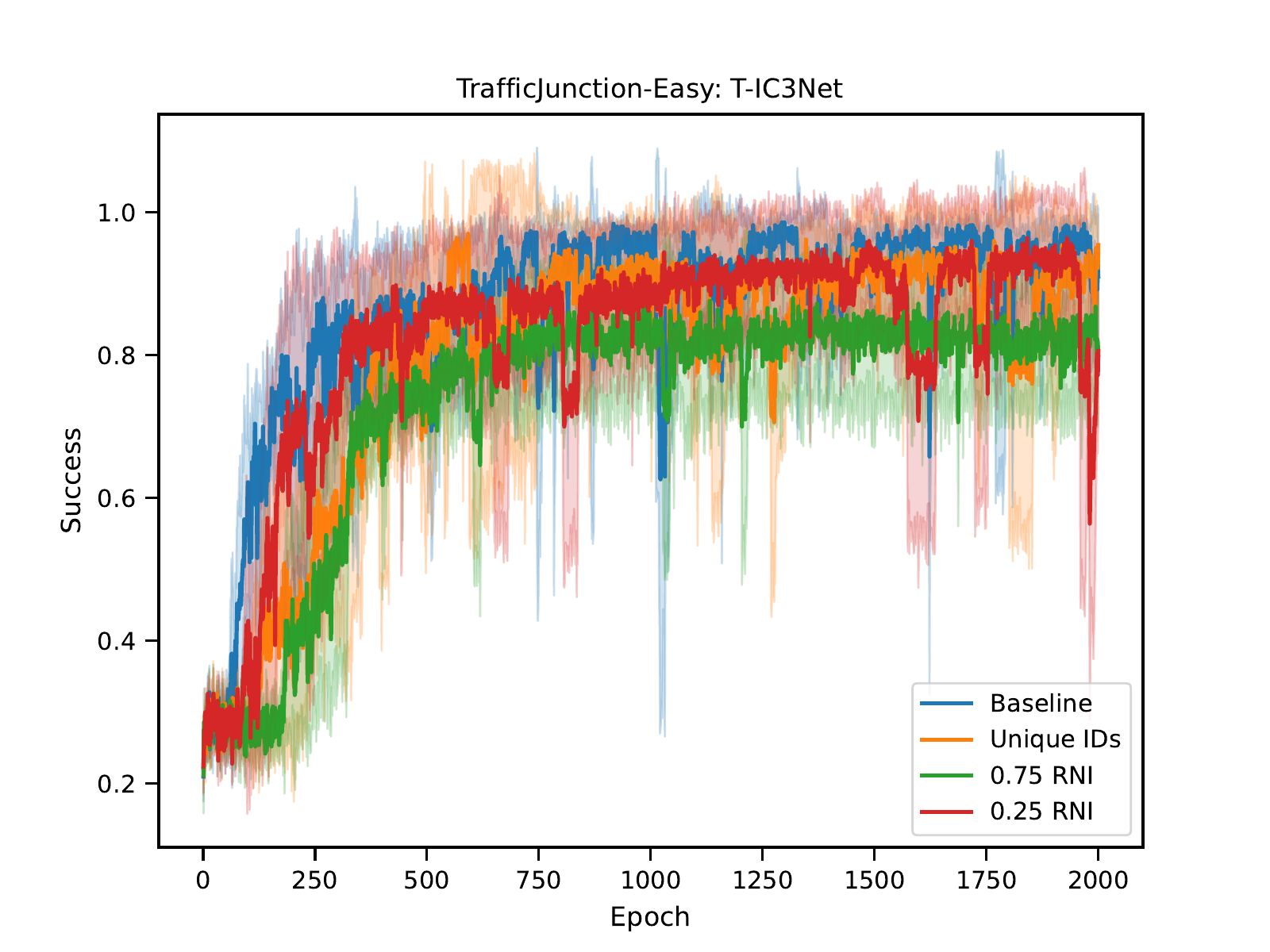}
\includegraphics[width=.49\linewidth]{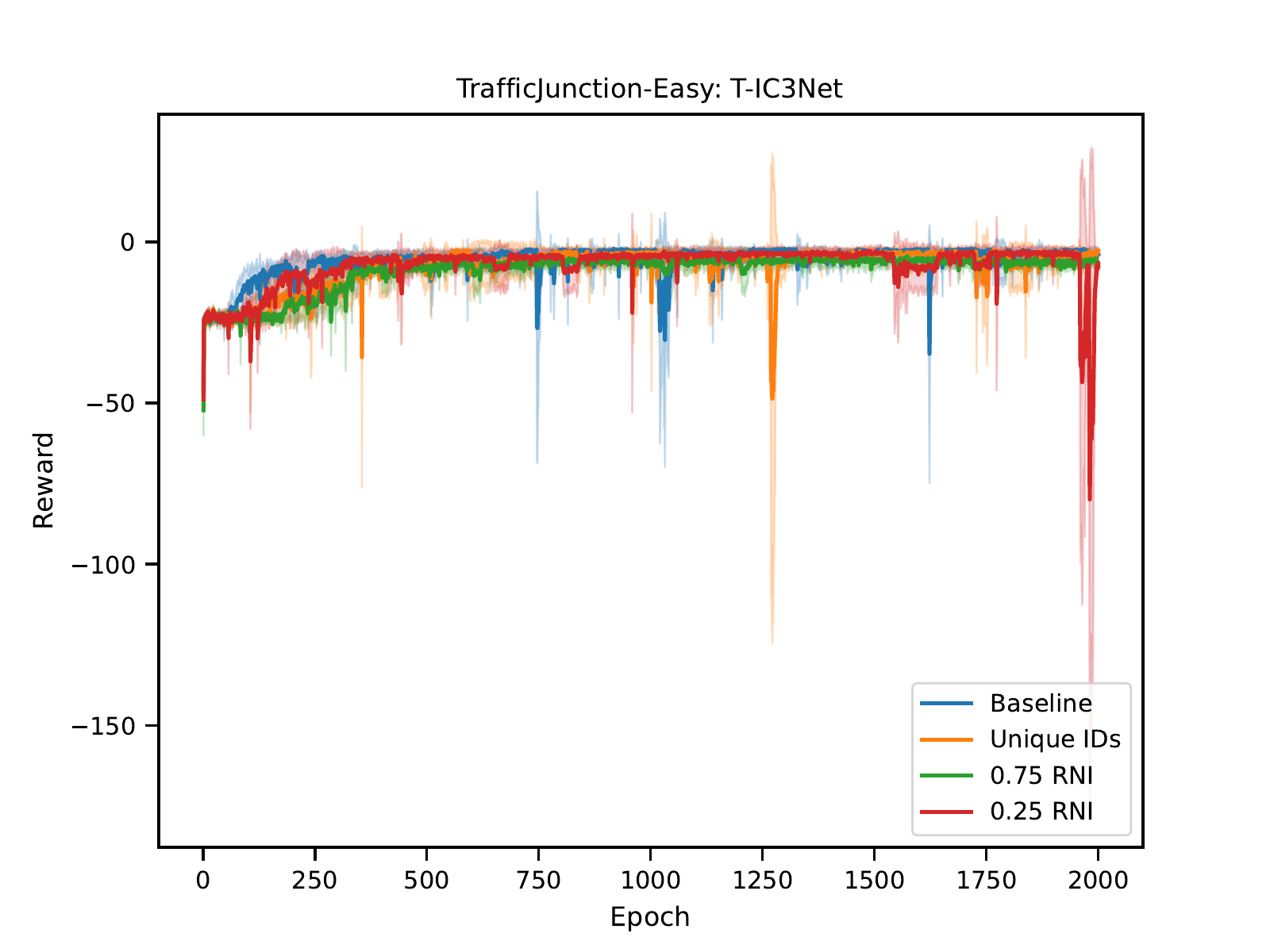}

%
% PredatorPrey
%
\newpage
\subsection{PredatorPrey}
\centering

\includegraphics[width=.49\linewidth]{results/PredatorPrey/PredatorPrey_commnet_success.pdf}
\includegraphics[width=.49\linewidth]{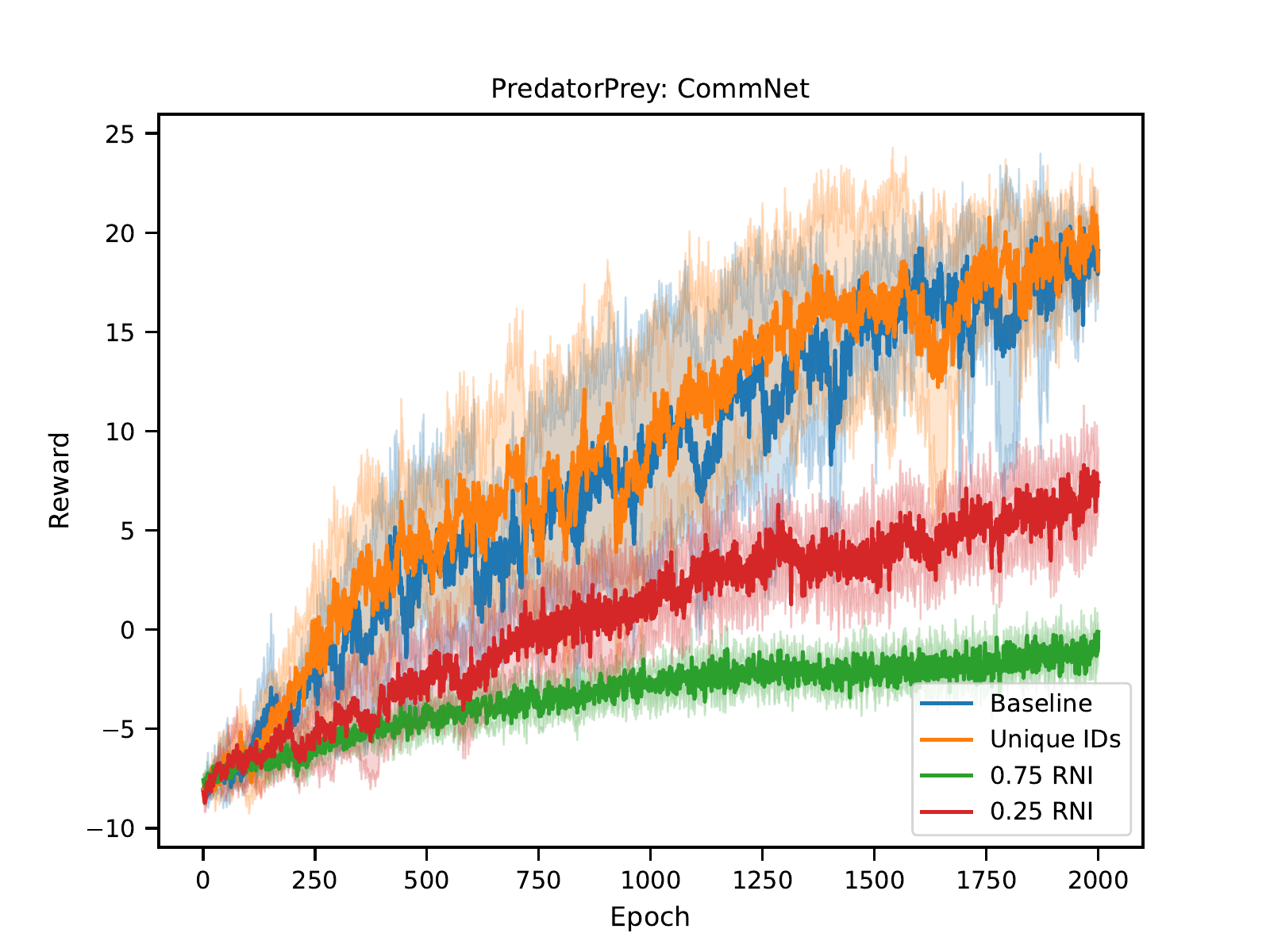}

\includegraphics[width=.49\linewidth]{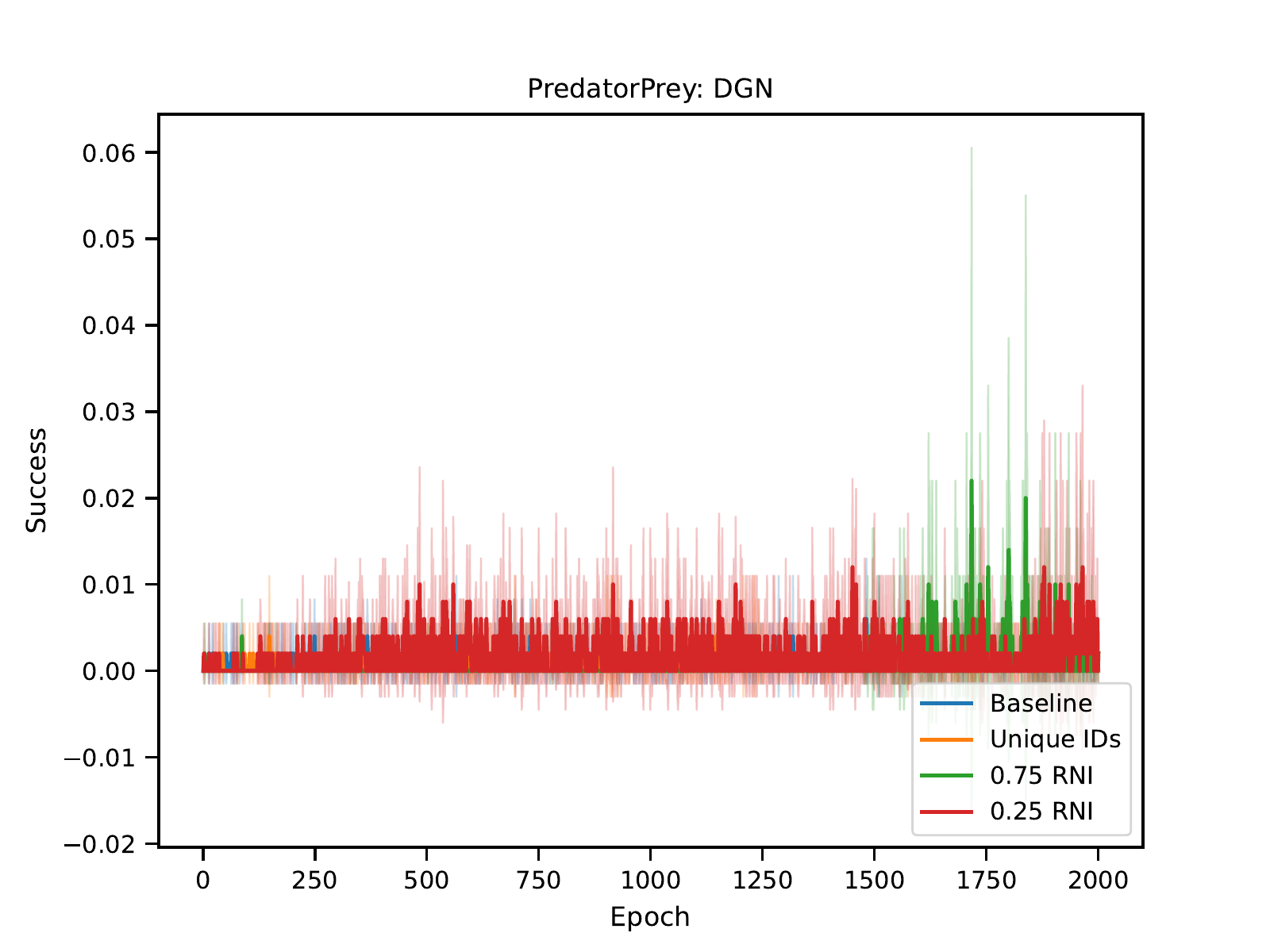}
\includegraphics[width=.49\linewidth]{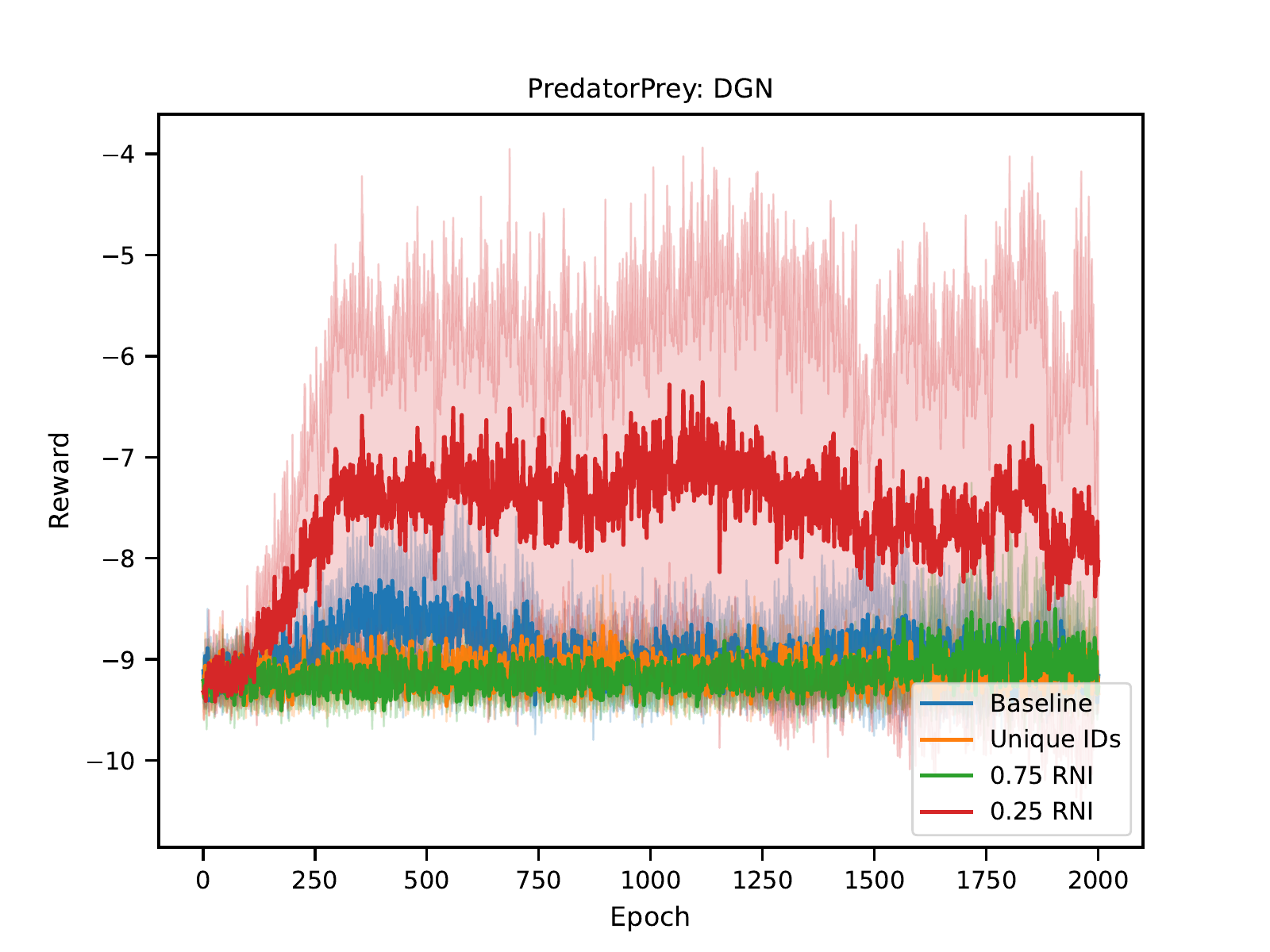}

\includegraphics[width=.49\linewidth]{results/PredatorPrey/PredatorPrey_ic3net_success.pdf}
\includegraphics[width=.49\linewidth]{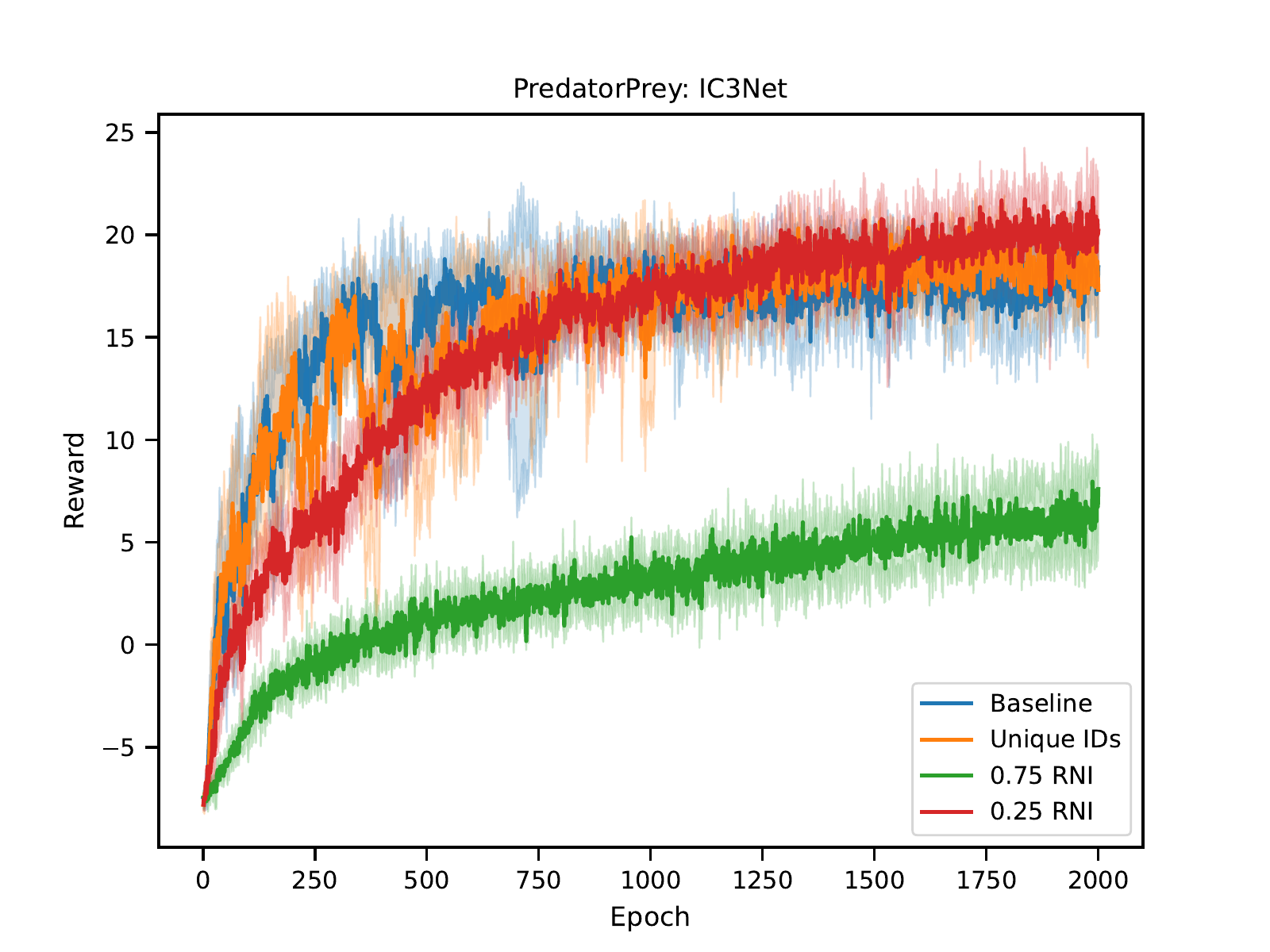}

\includegraphics[width=.49\linewidth]{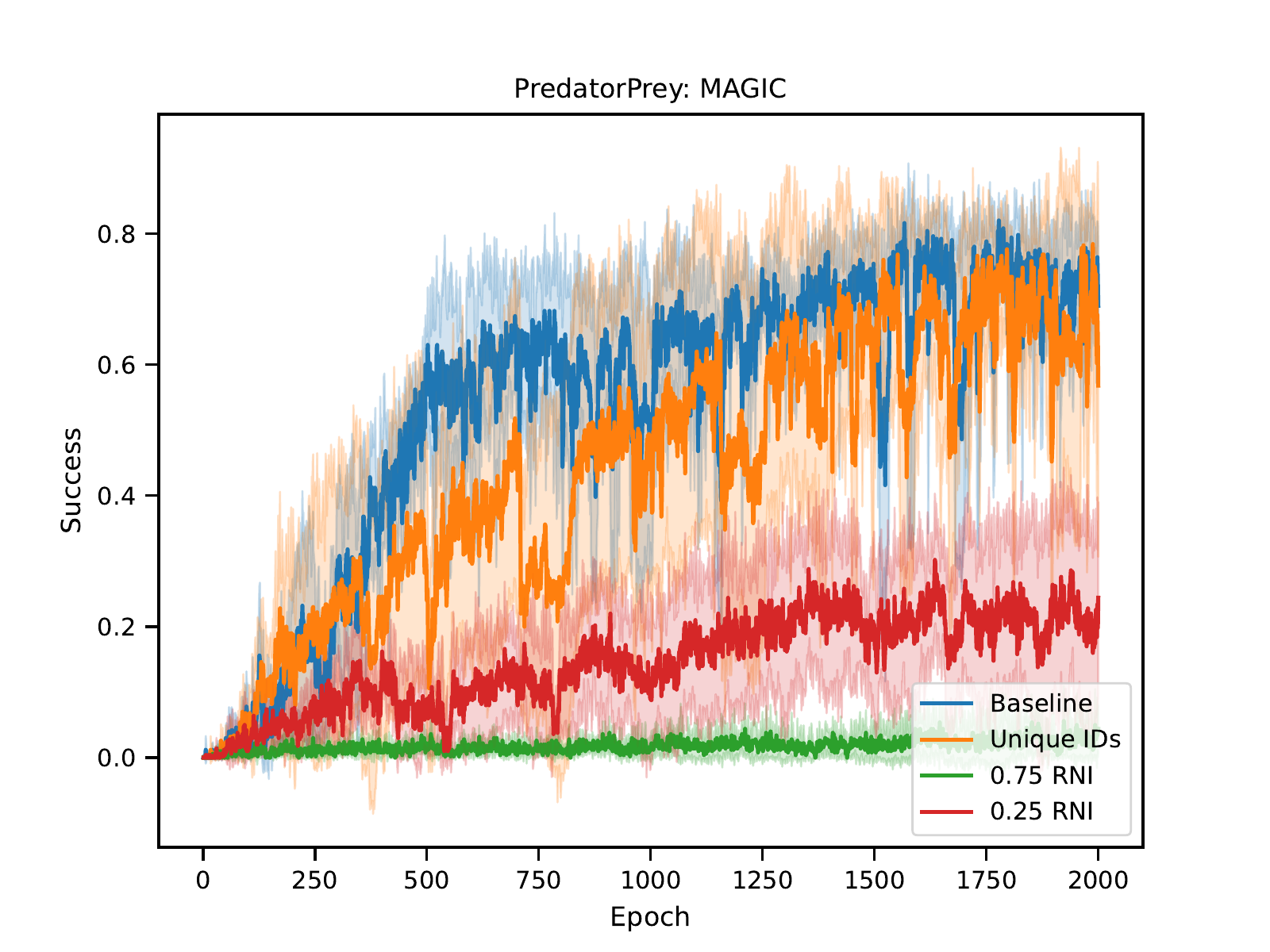}
\includegraphics[width=.49\linewidth]{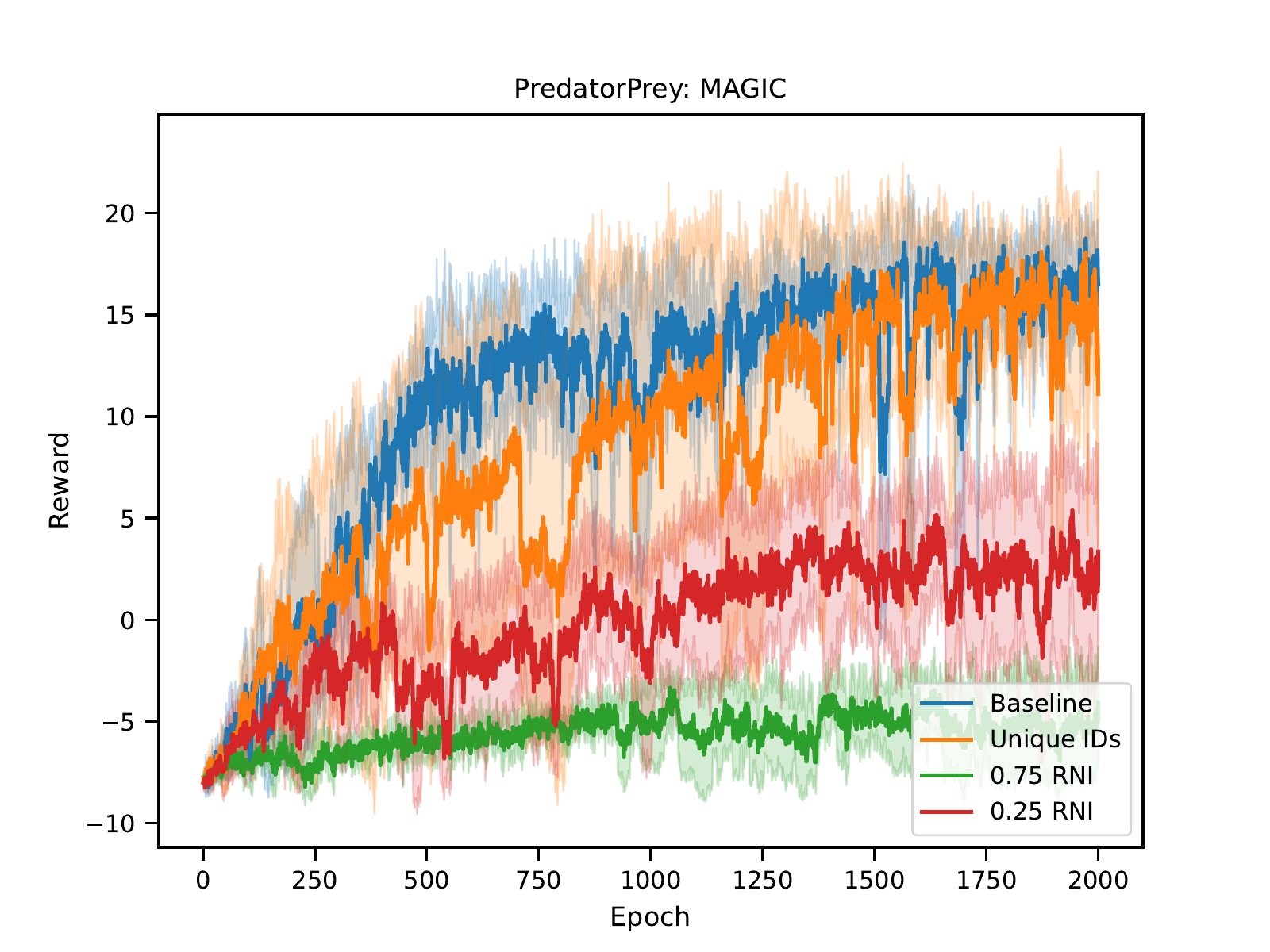}

\includegraphics[width=.49\linewidth]{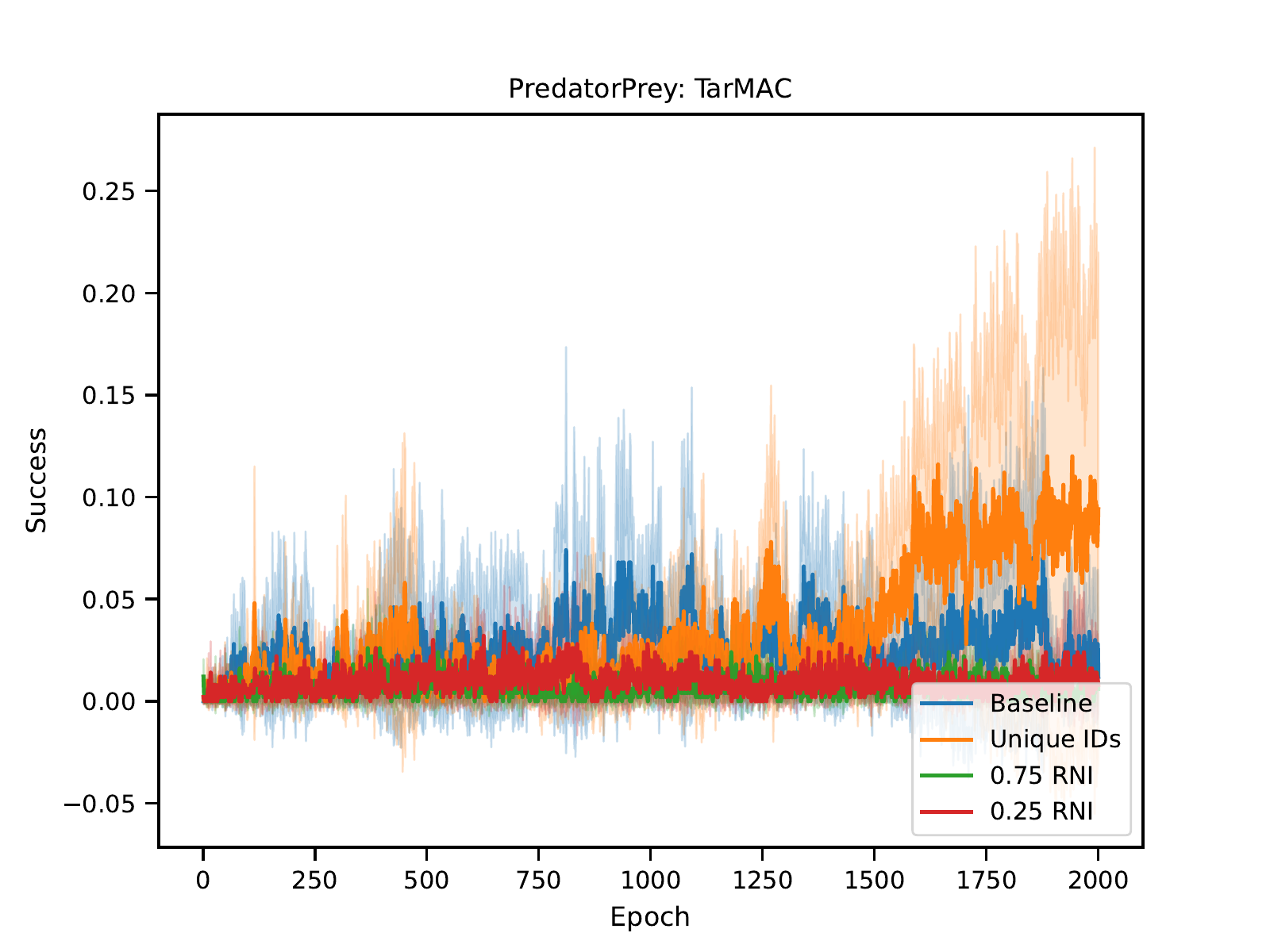}
\includegraphics[width=.49\linewidth]{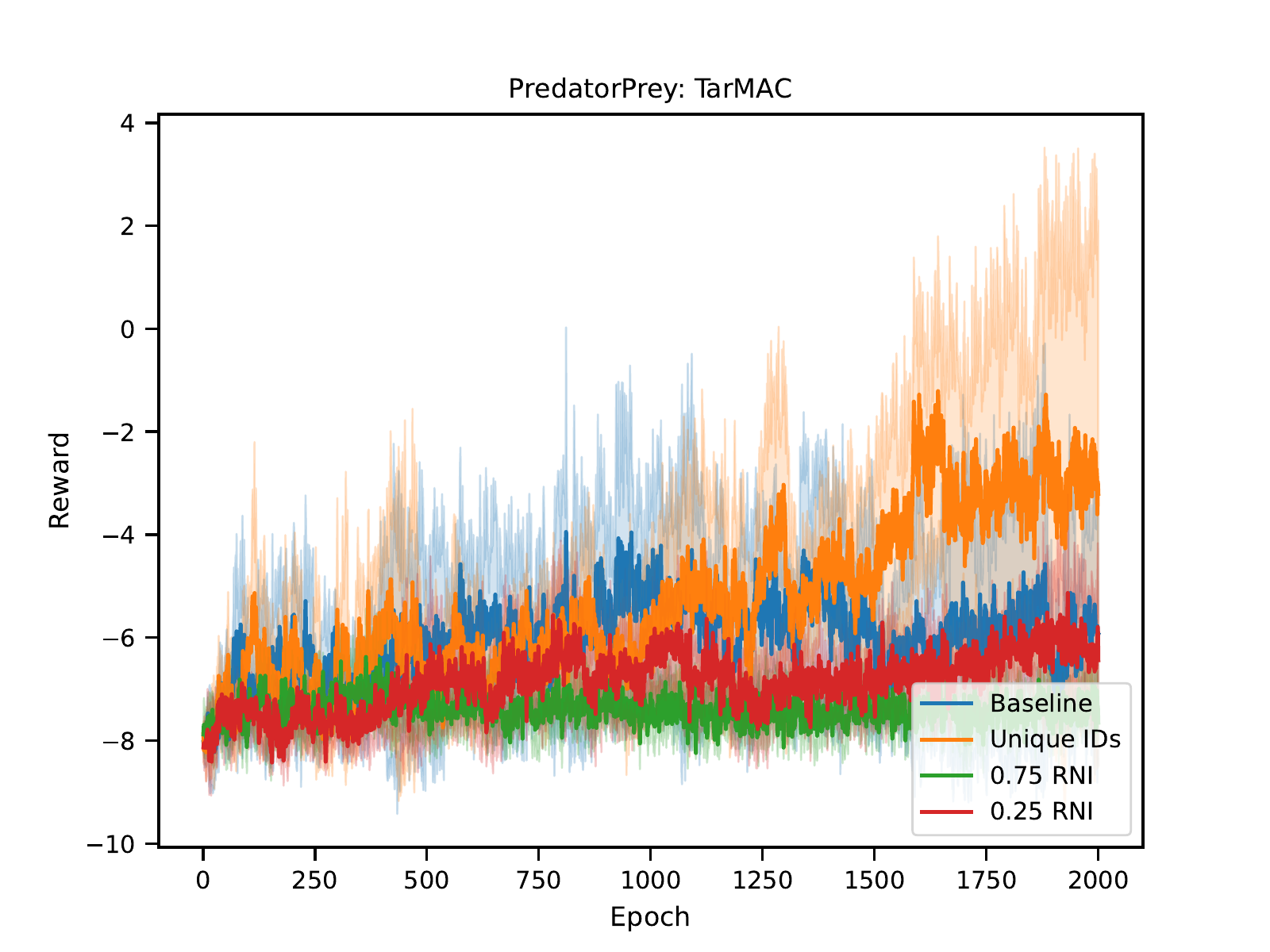}

\includegraphics[width=.49\linewidth]{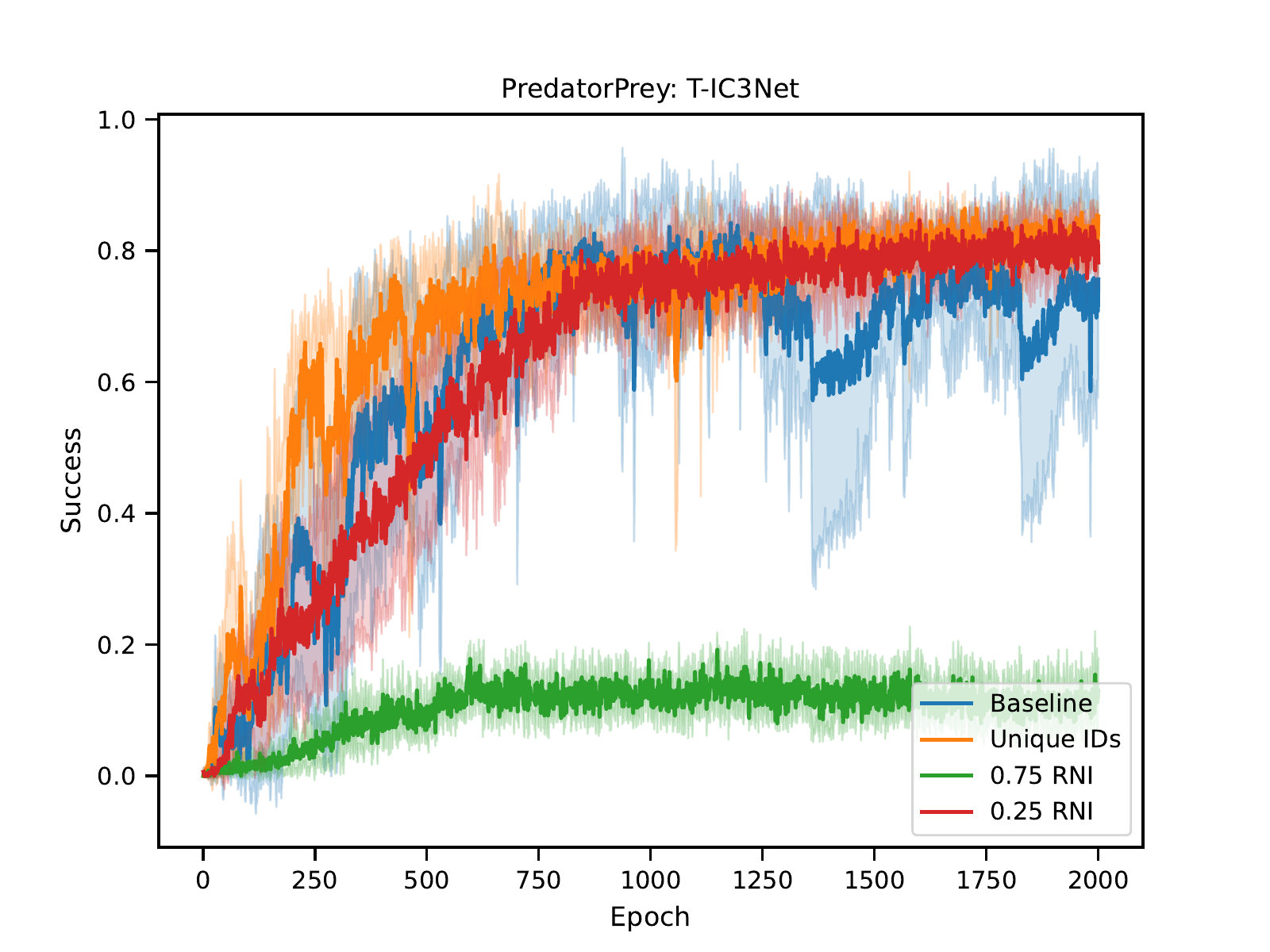}
\includegraphics[width=.49\linewidth]{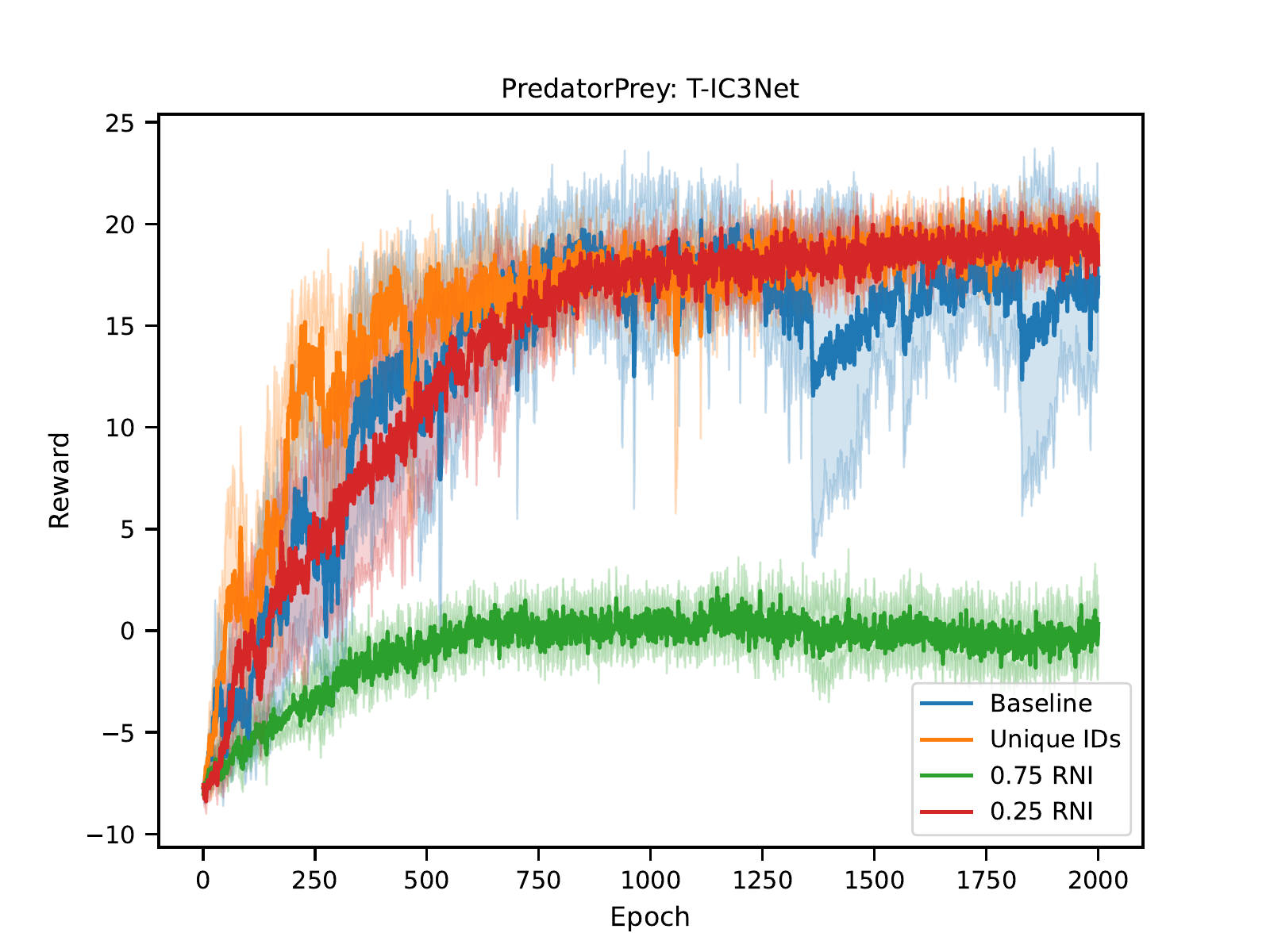}

%
% Medium TJ
%
\subsection{TrafficJunction-Medium}
\centering

\includegraphics[width=.49\linewidth]{results/TrafficJunction-Medium/TrafficJunction-Medium_commnet_success.pdf}
\includegraphics[width=.49\linewidth]{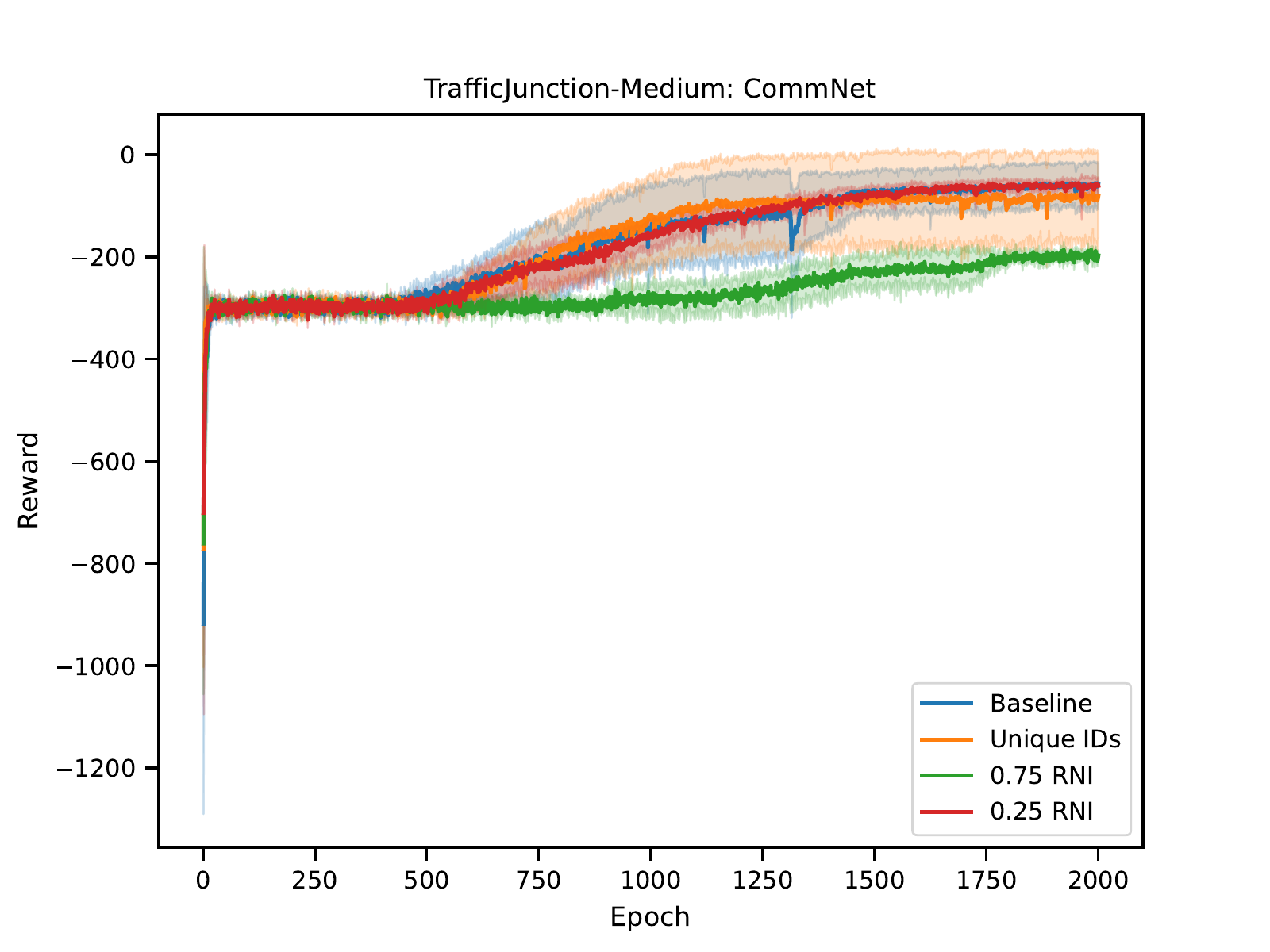}

\includegraphics[width=.49\linewidth]{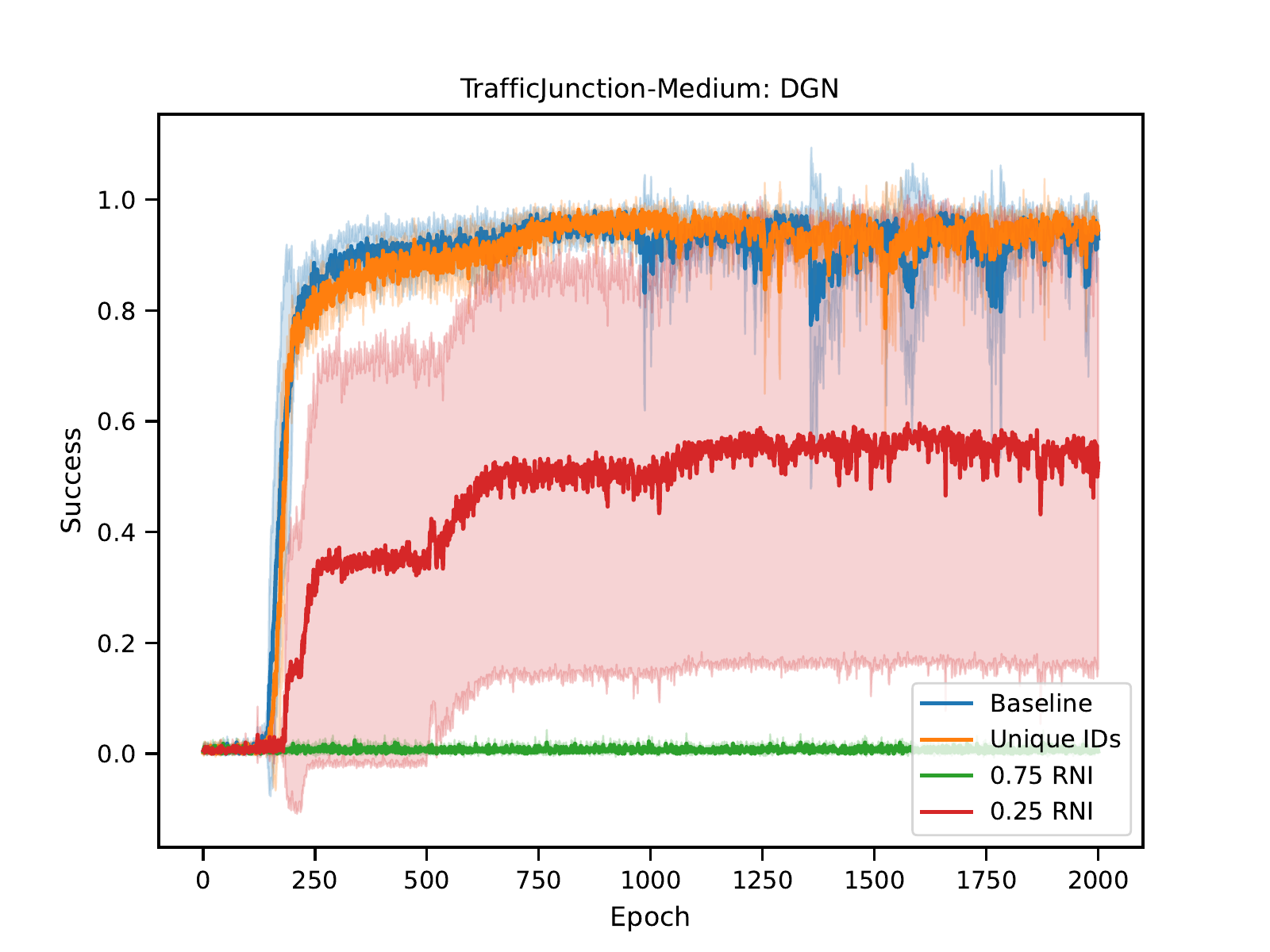}
\includegraphics[width=.49\linewidth]{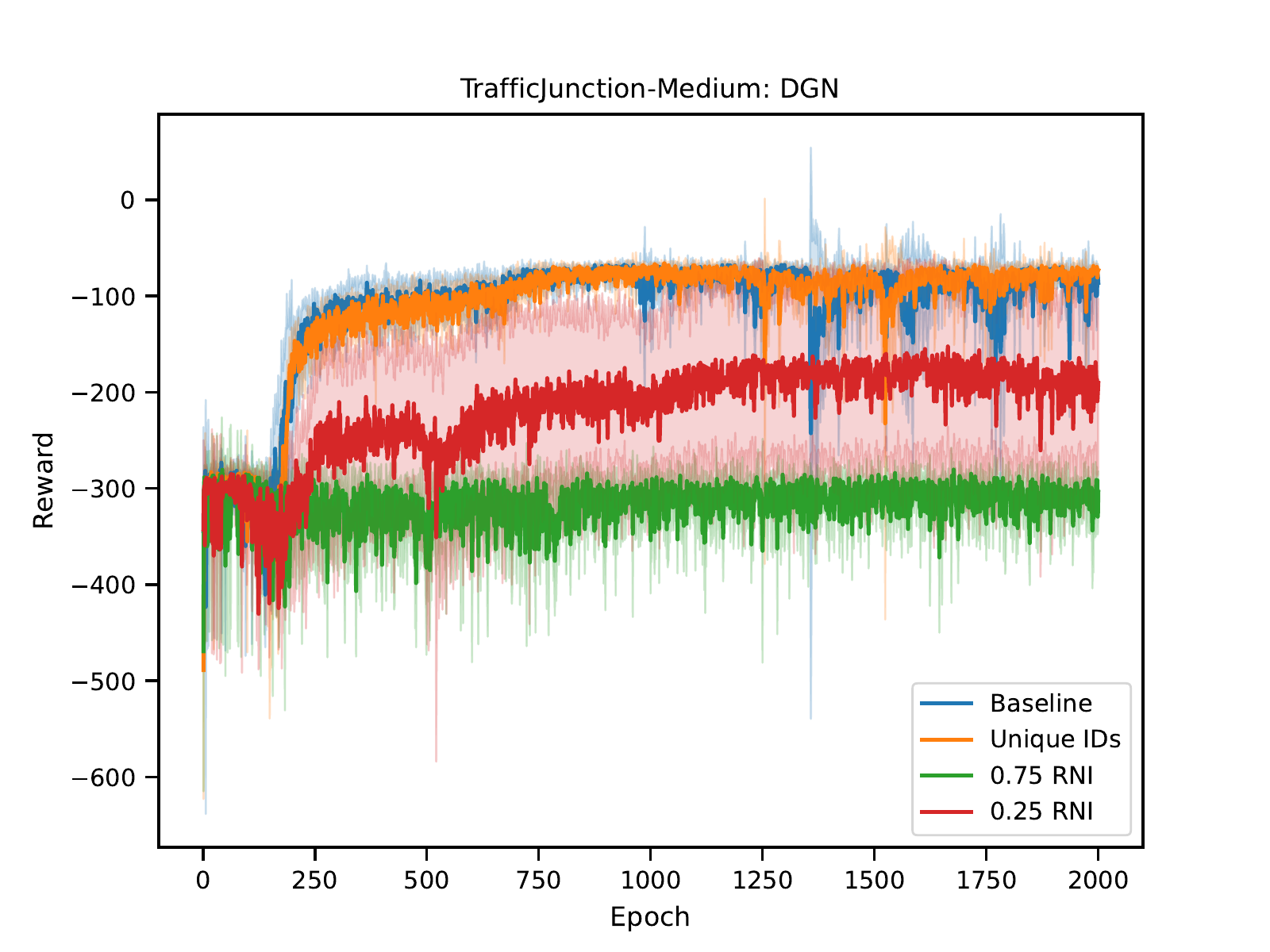}

\includegraphics[width=.49\linewidth]{results/TrafficJunction-Medium/TrafficJunction-Medium_ic3net_success.pdf}
\includegraphics[width=.49\linewidth]{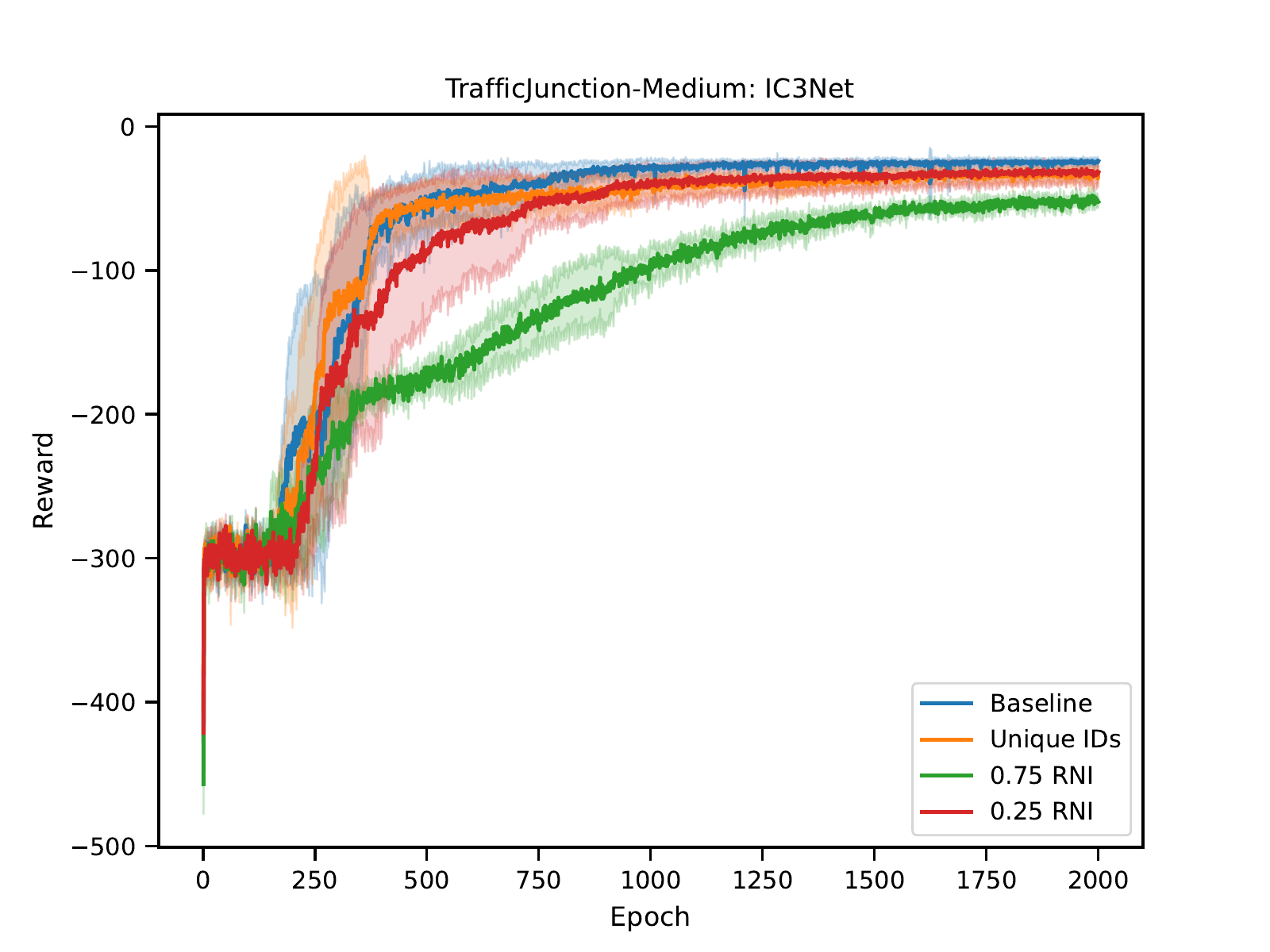}

\includegraphics[width=.49\linewidth]{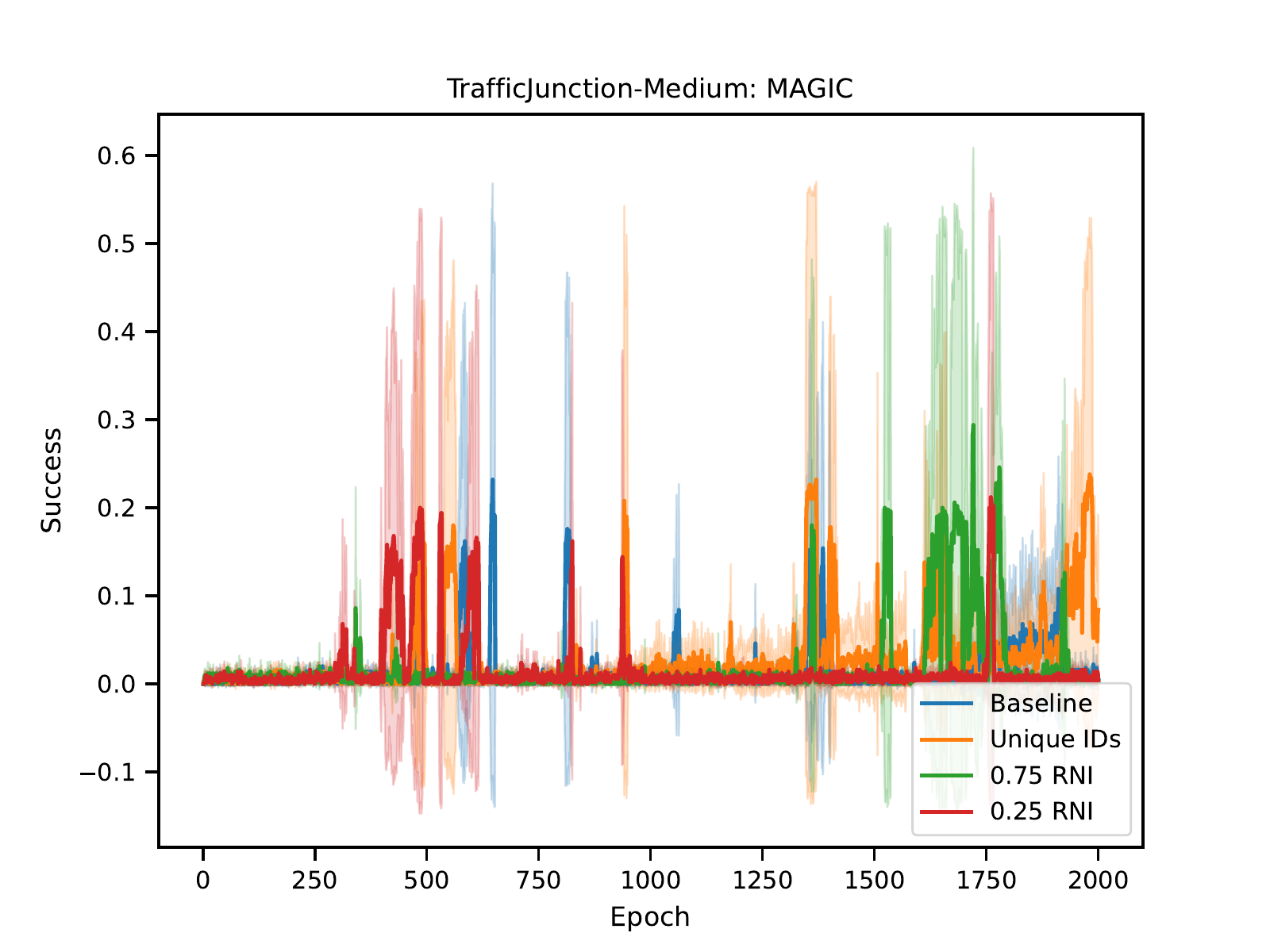}
\includegraphics[width=.49\linewidth]{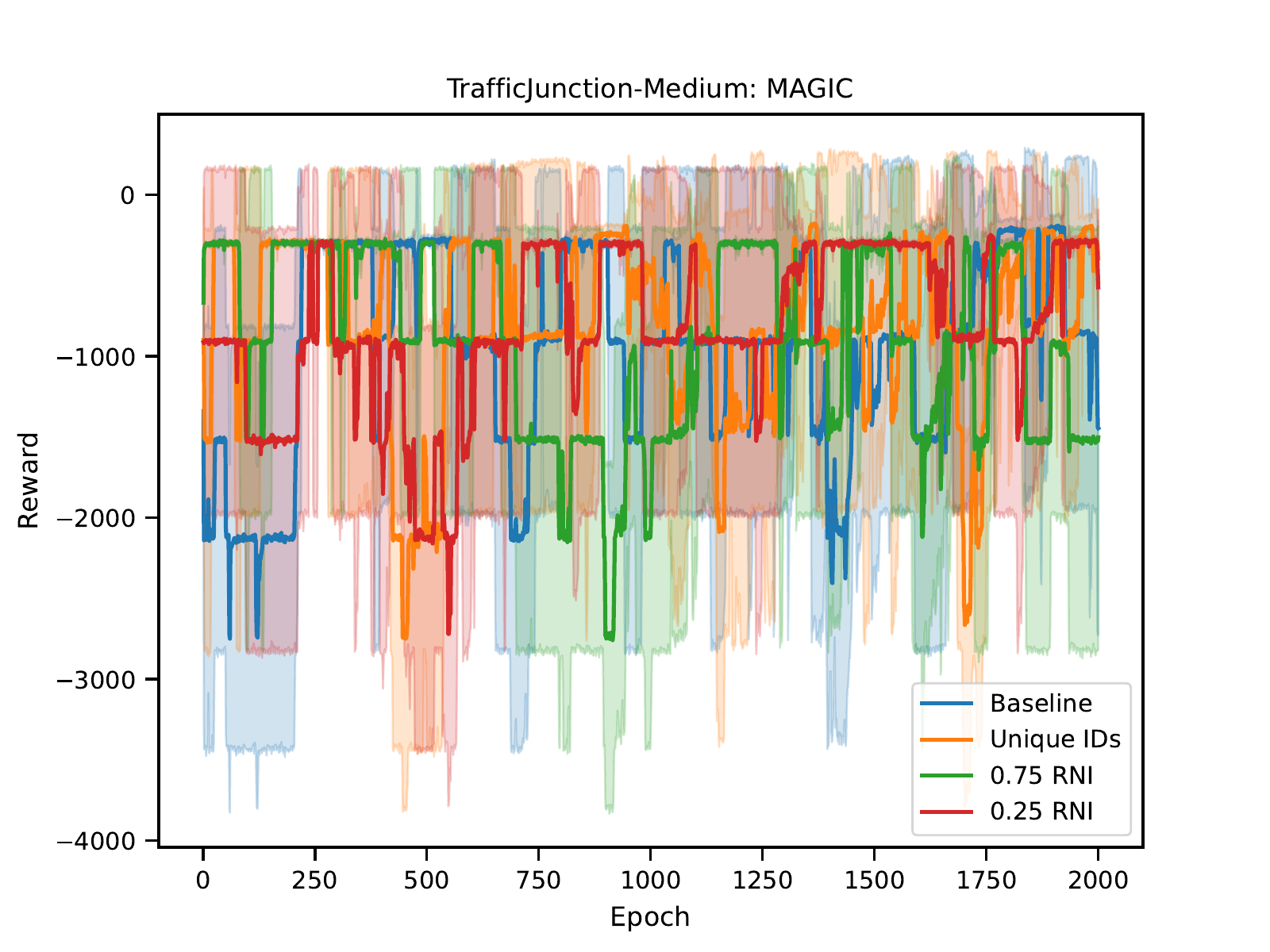}

\includegraphics[width=.49\linewidth]{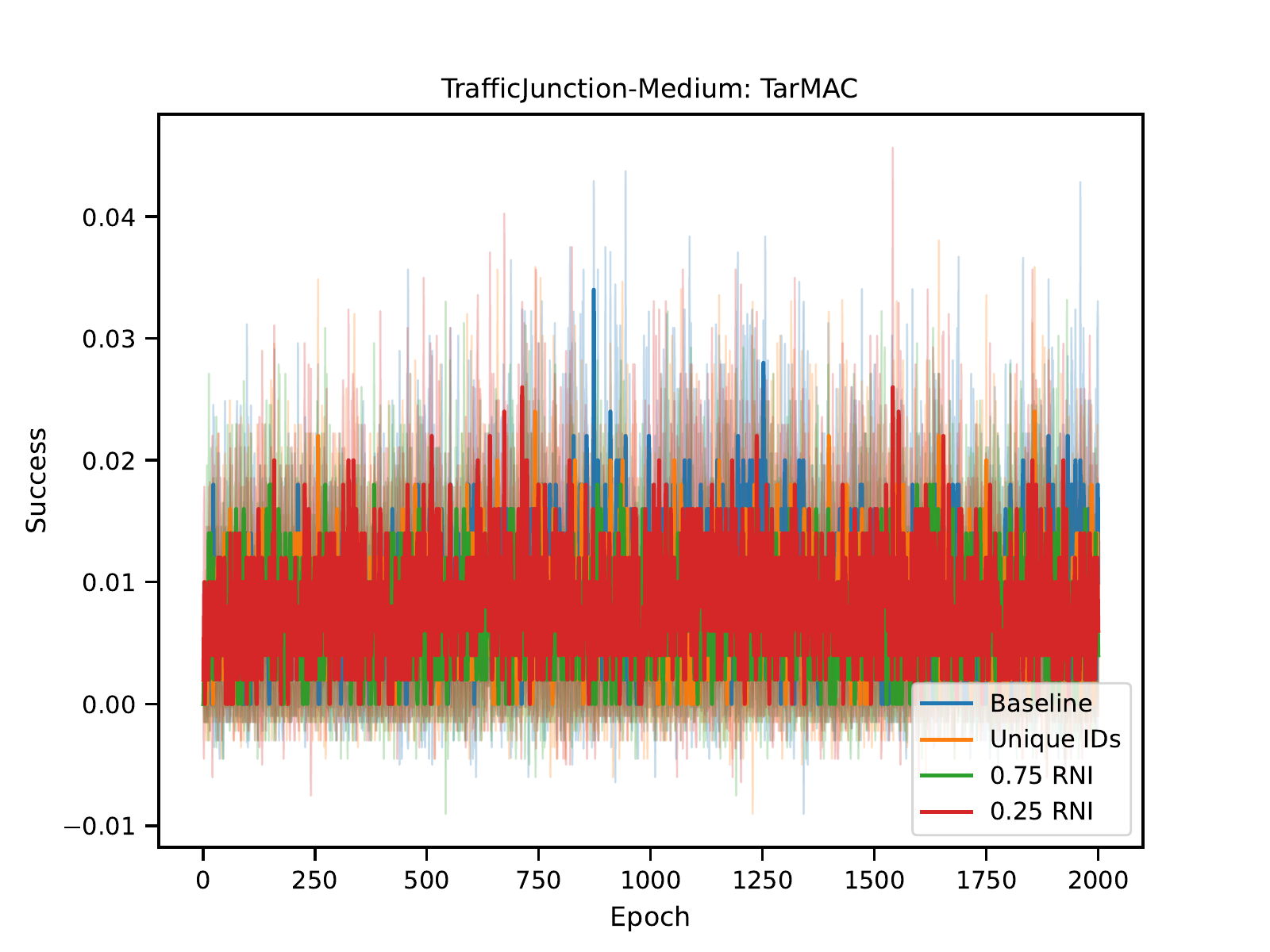}
\includegraphics[width=.49\linewidth]{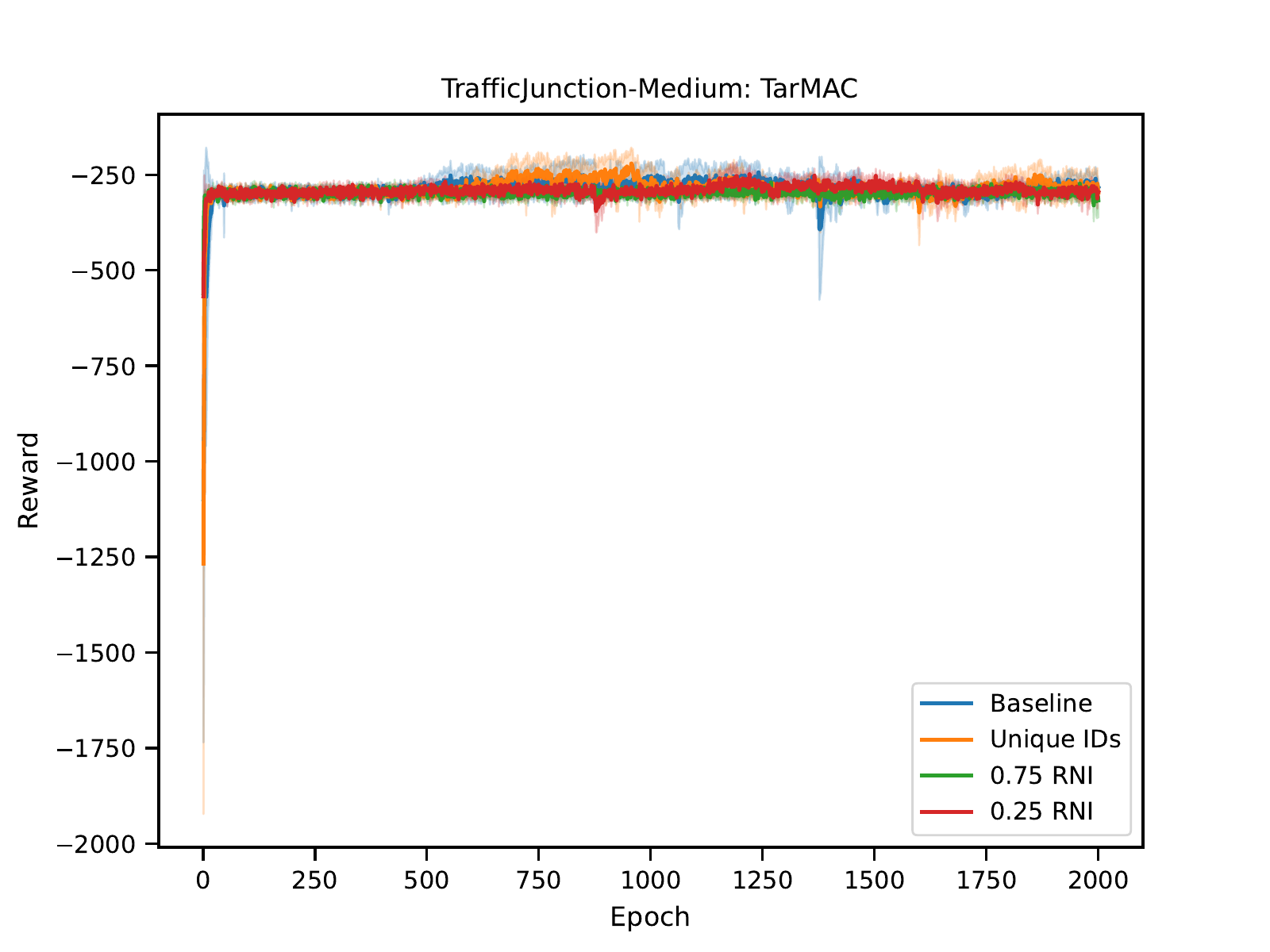}

\includegraphics[width=.49\linewidth]{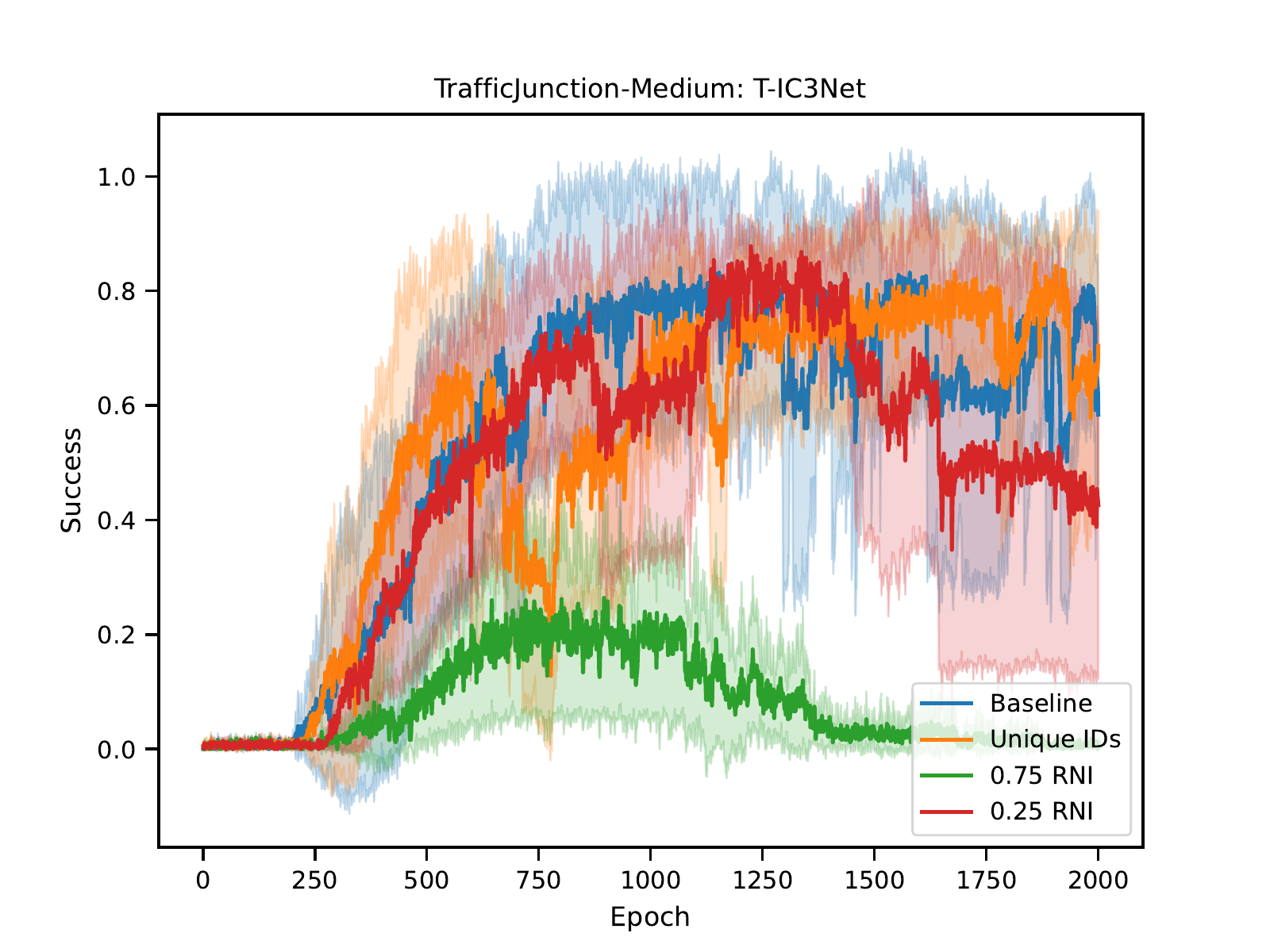}
\includegraphics[width=.49\linewidth]{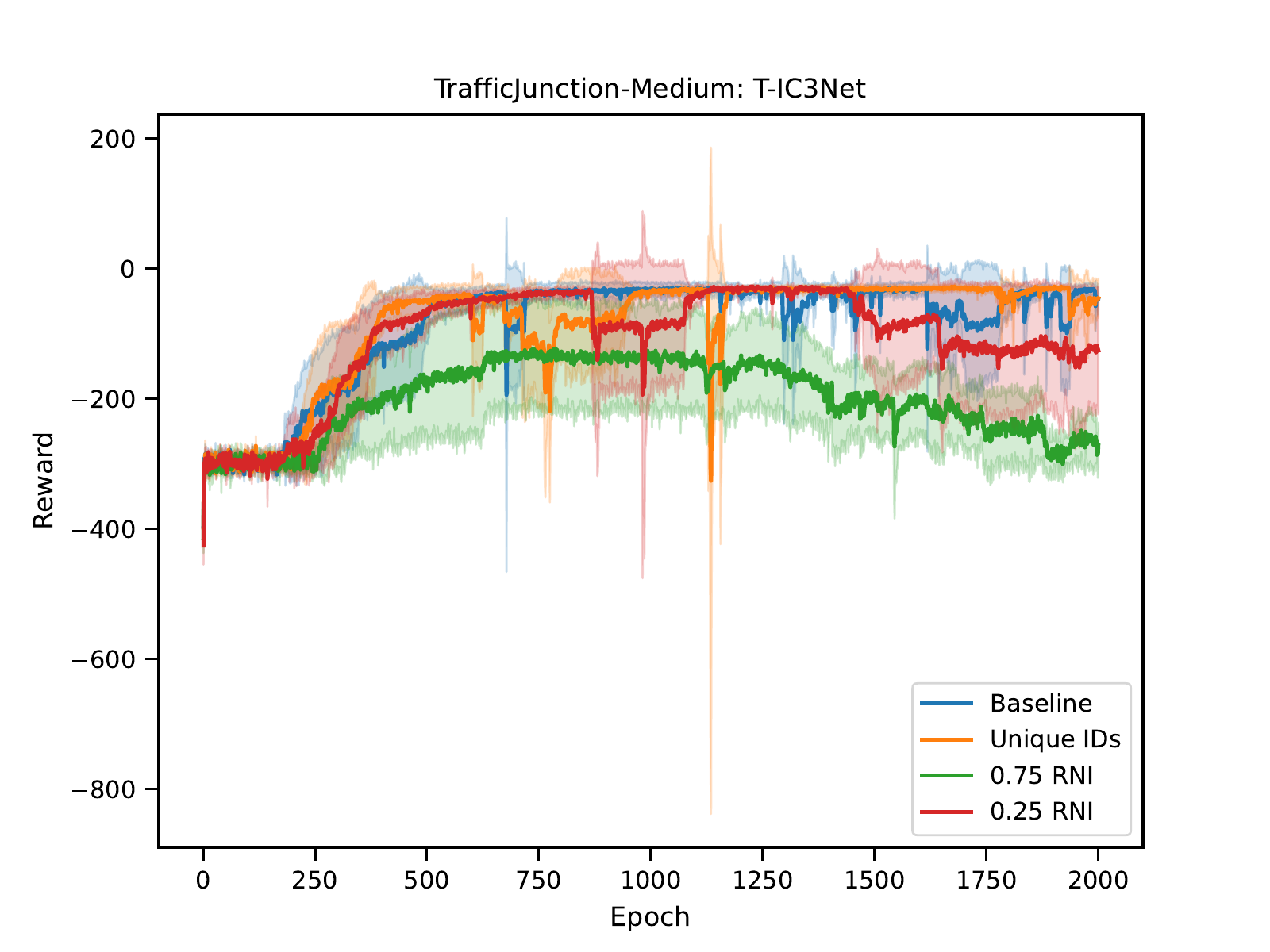}

%
% BoxPushing
%
\subsection{BoxPushing}
\centering

\includegraphics[width=.49\linewidth]{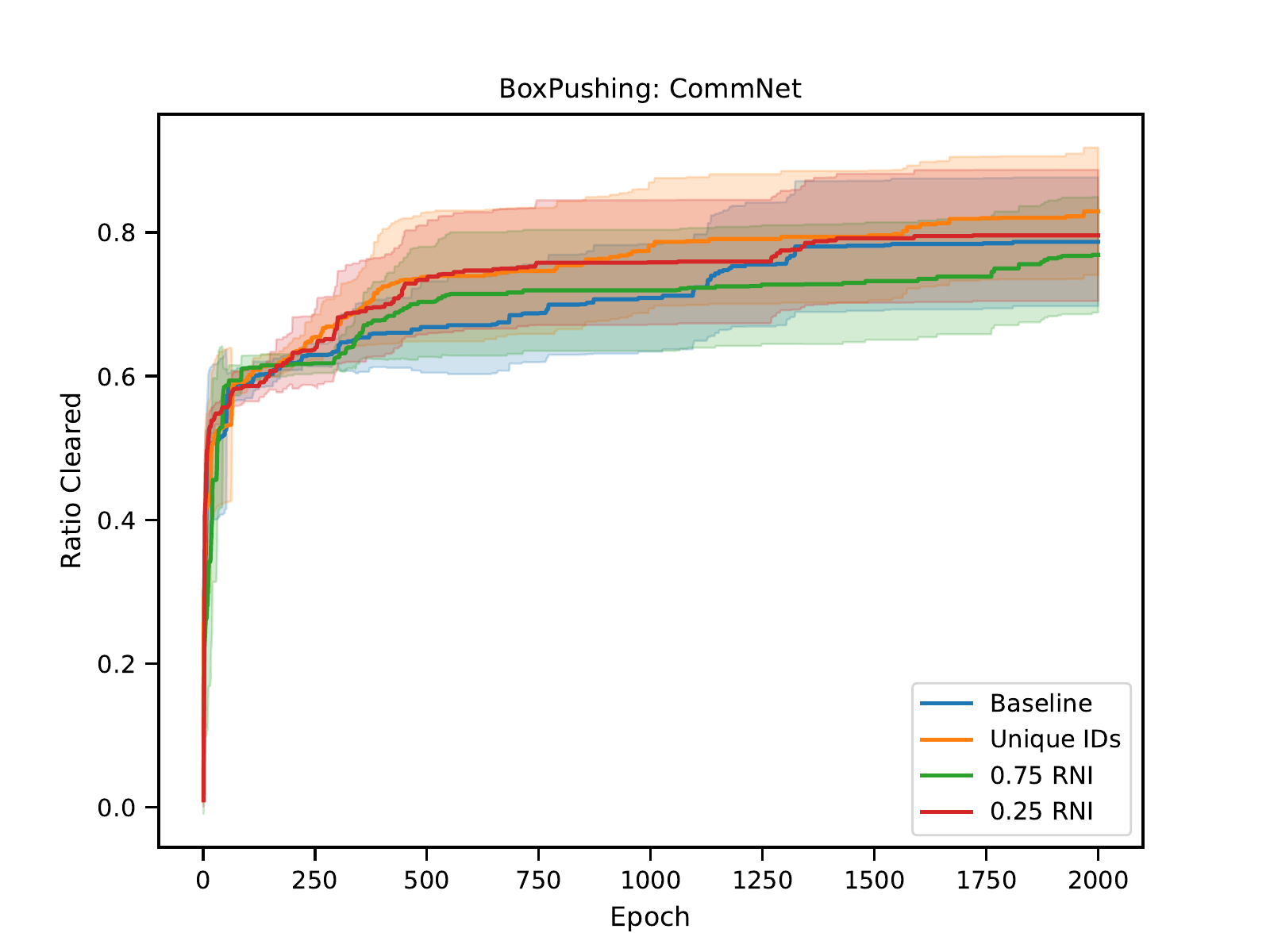}
\includegraphics[width=.49\linewidth]{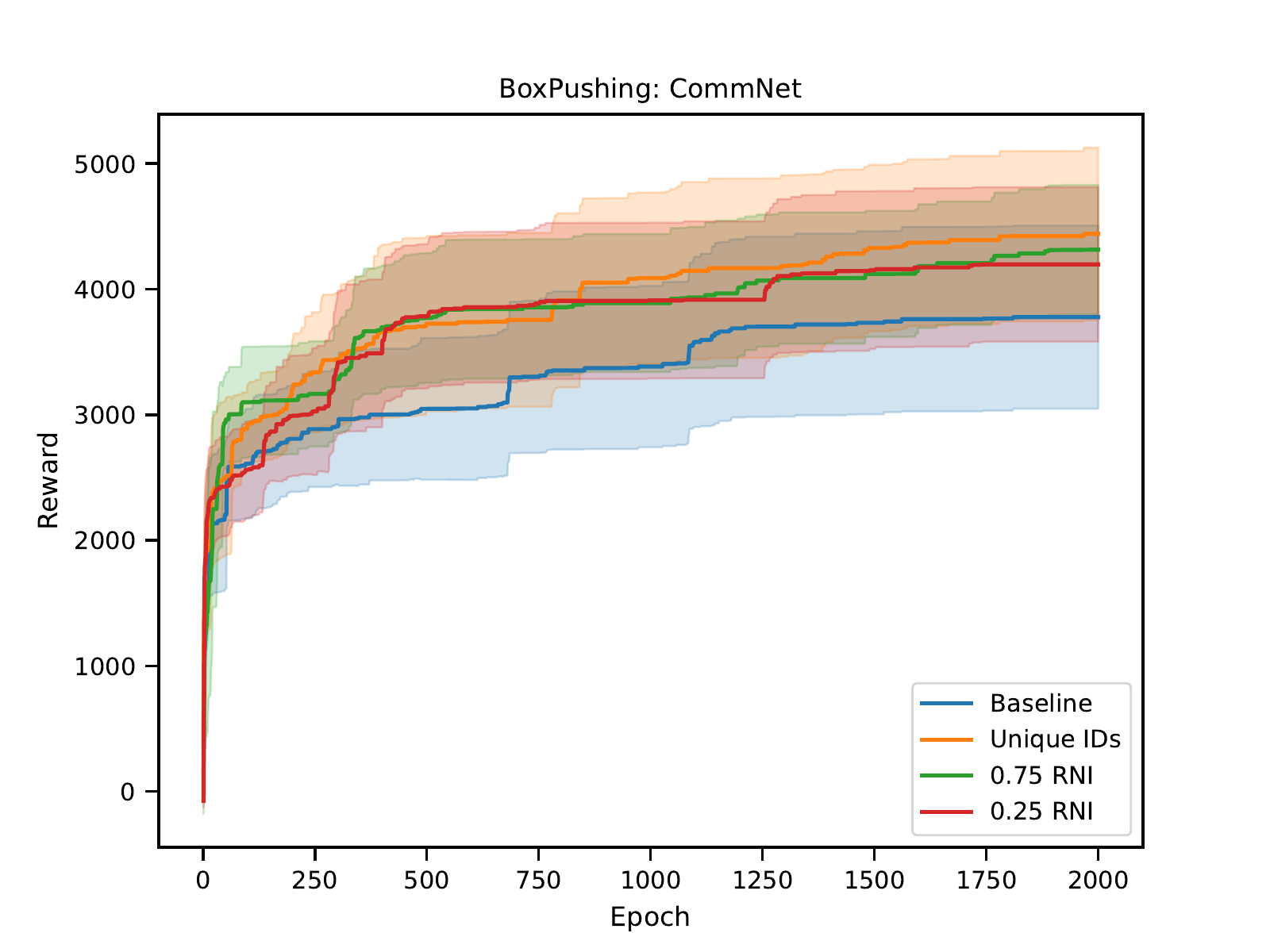}

\includegraphics[width=.49\linewidth]{results/BoxPushing/BoxPushing_dgn_ratio_boxes_cleared.pdf}
\includegraphics[width=.49\linewidth]{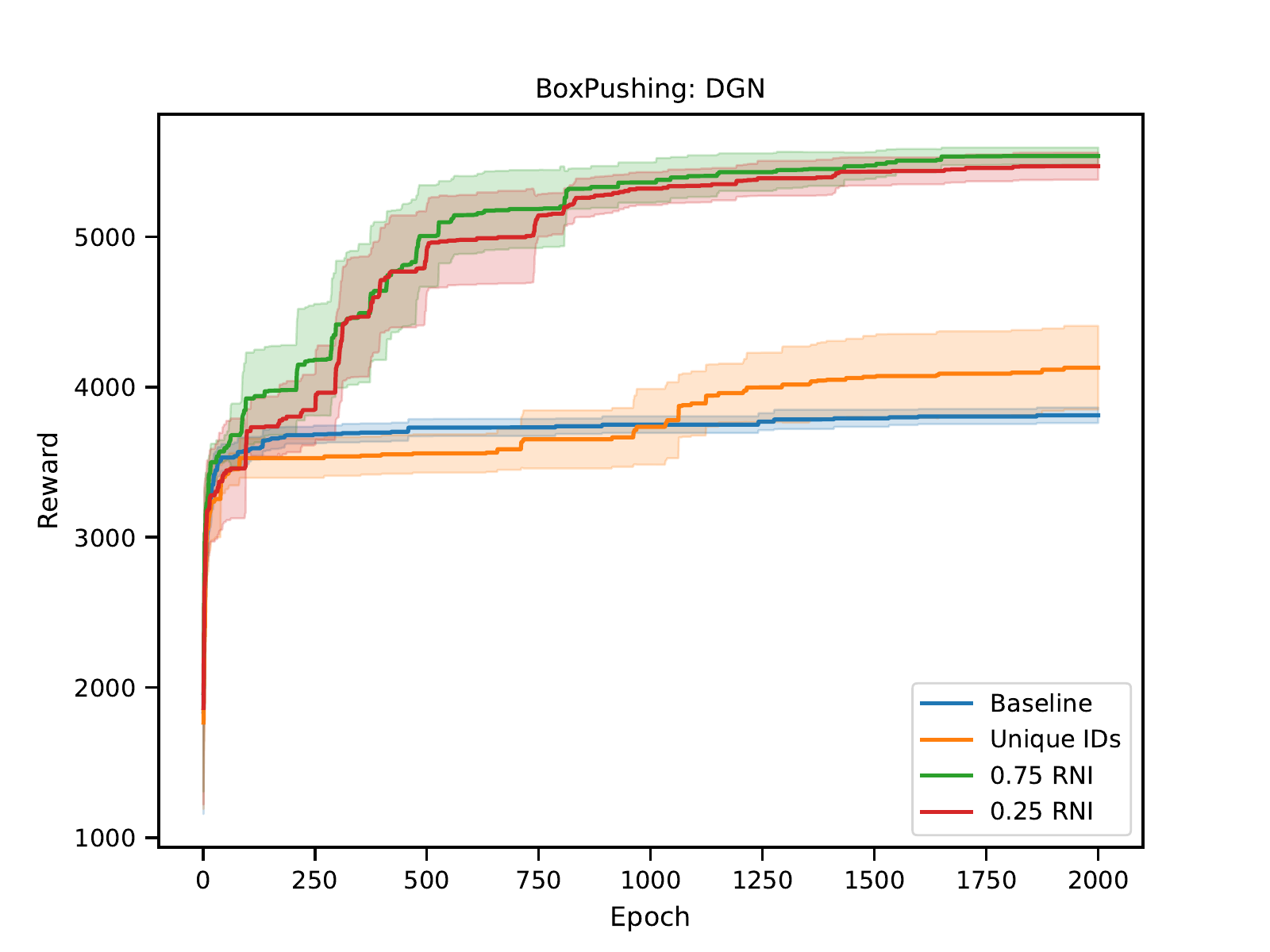}

\includegraphics[width=.49\linewidth]{results/BoxPushing/BoxPushing_ic3net_ratio_boxes_cleared.pdf}
\includegraphics[width=.49\linewidth]{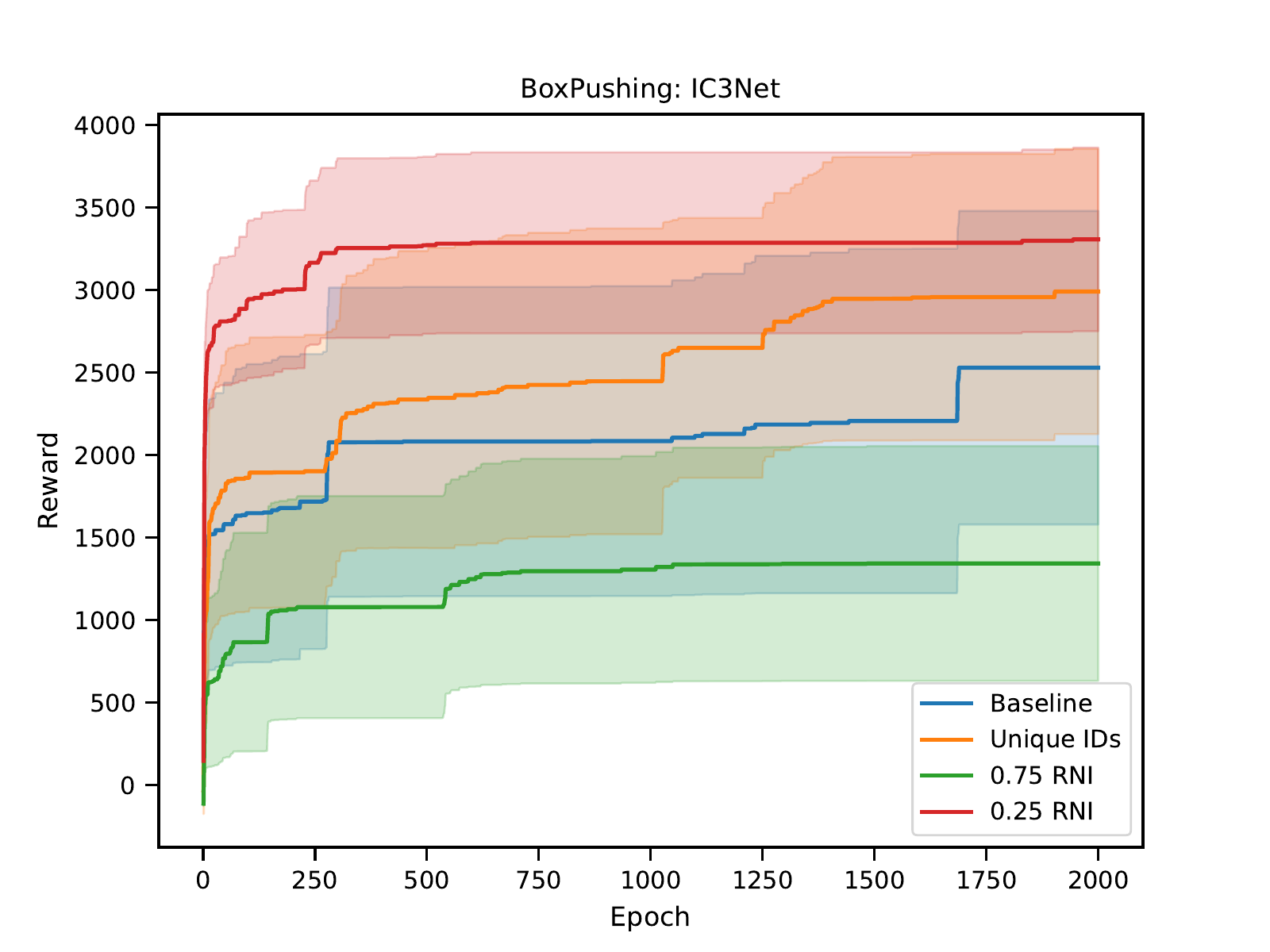}

\includegraphics[width=.49\linewidth]{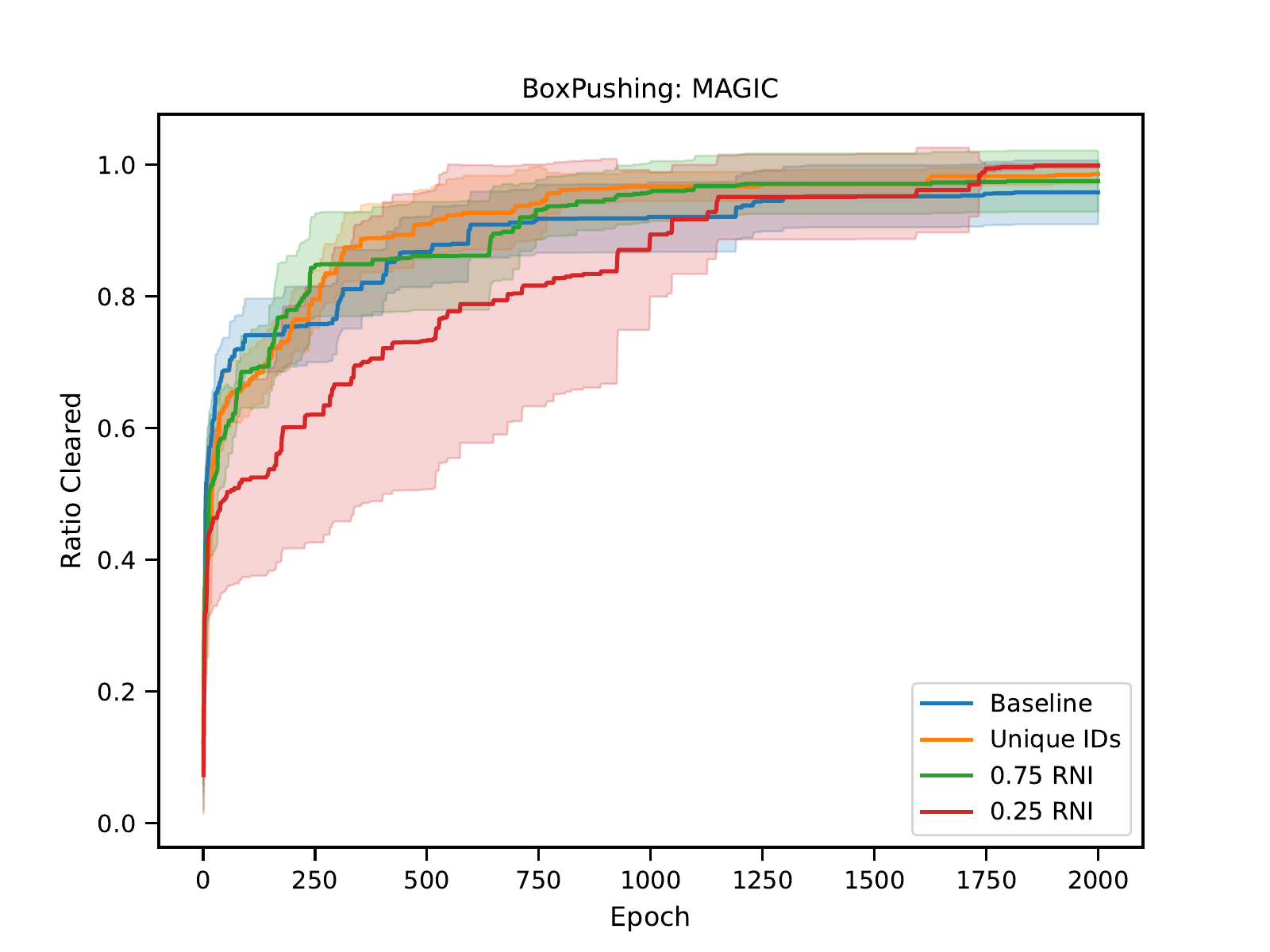}
\includegraphics[width=.49\linewidth]{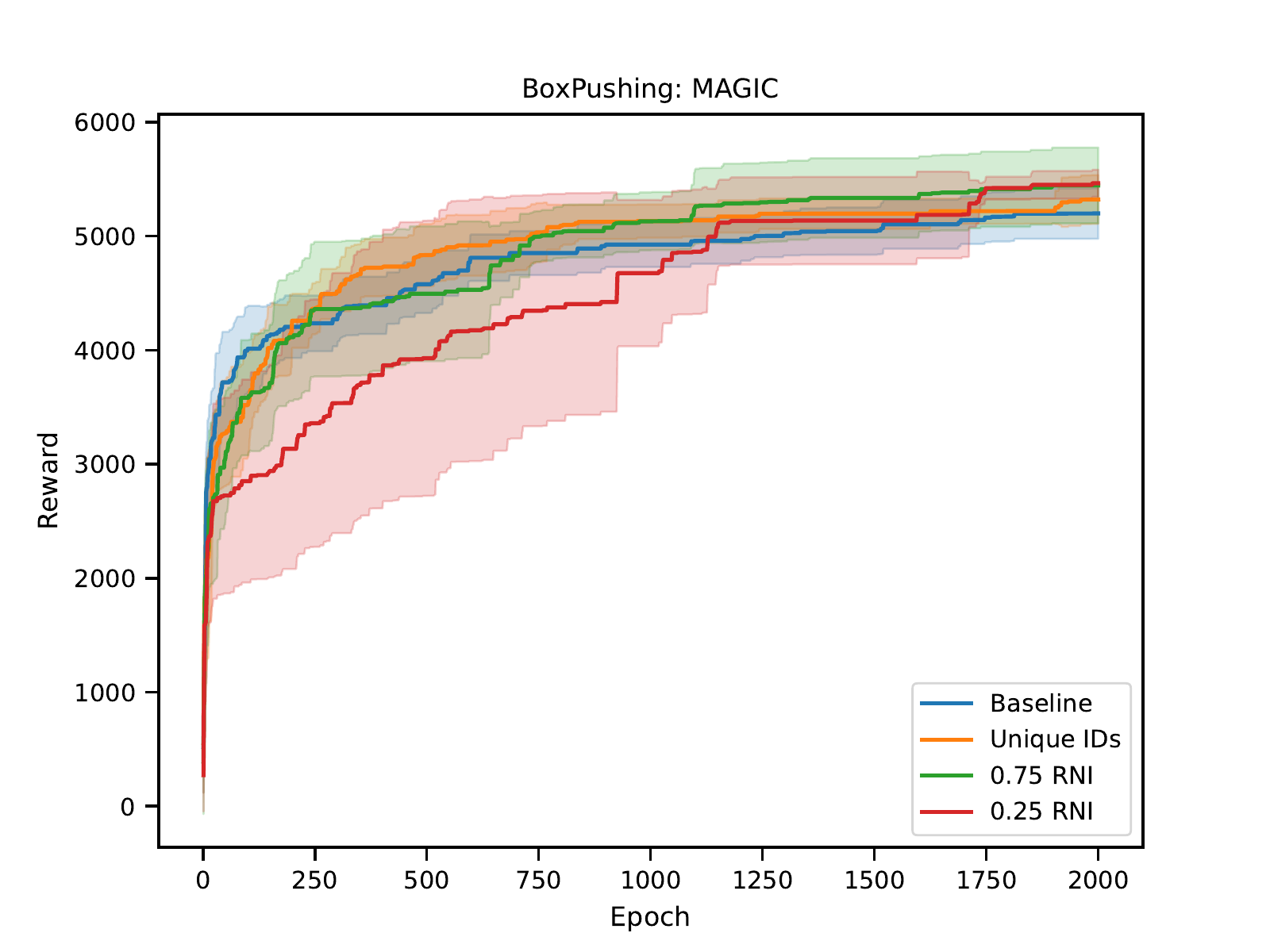}

\includegraphics[width=.49\linewidth]{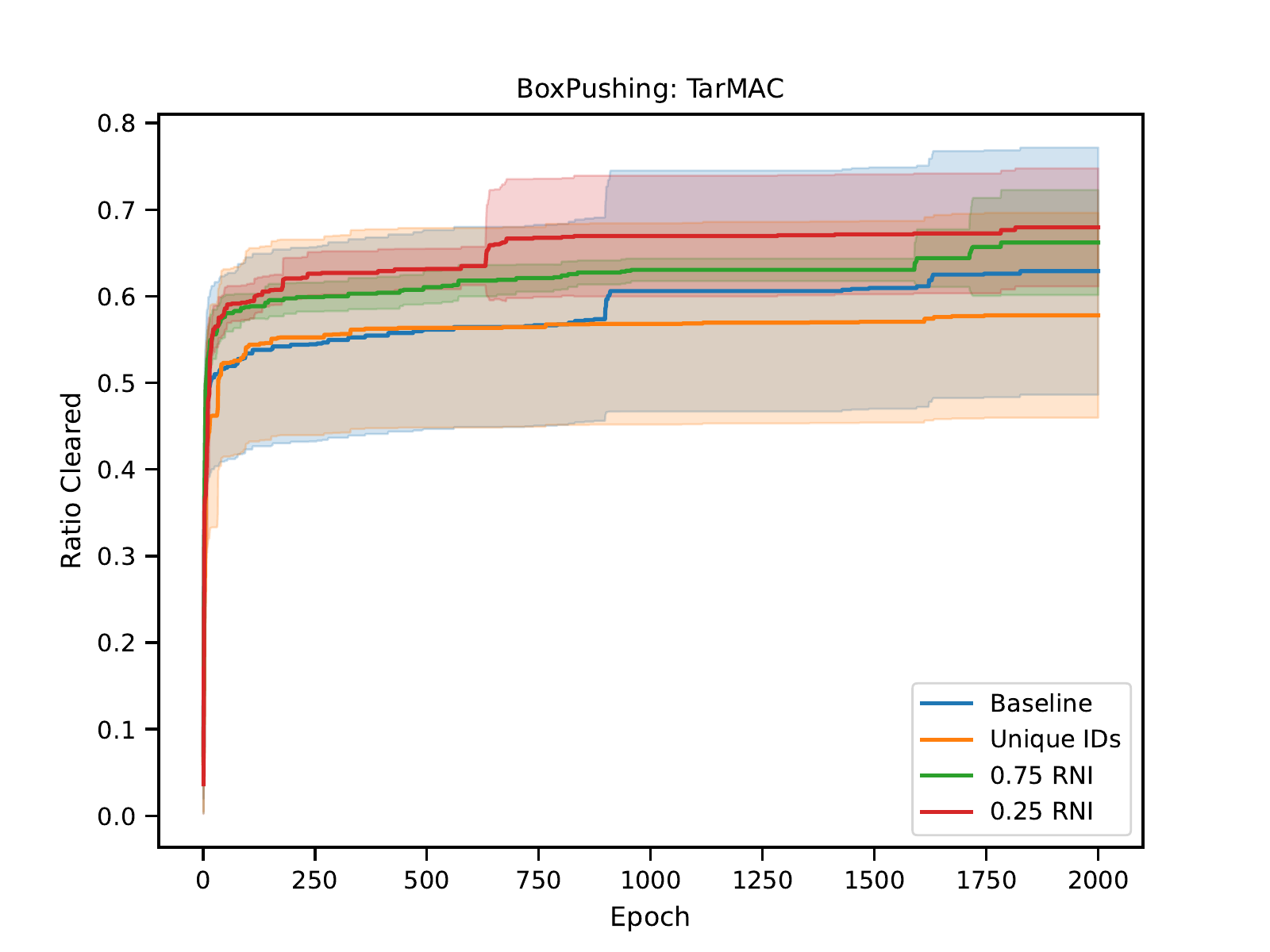}
\includegraphics[width=.49\linewidth]{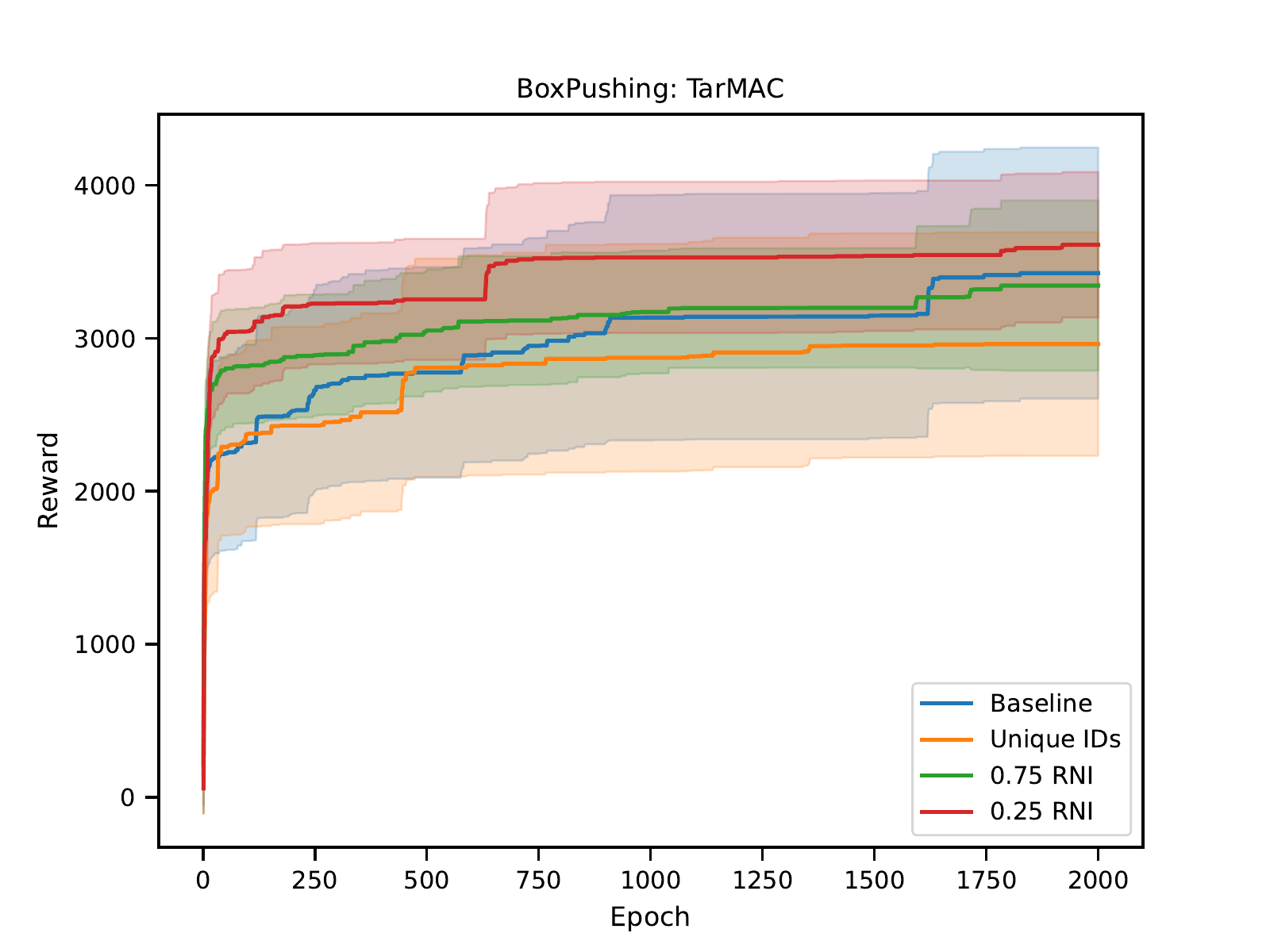}

\includegraphics[width=.49\linewidth]{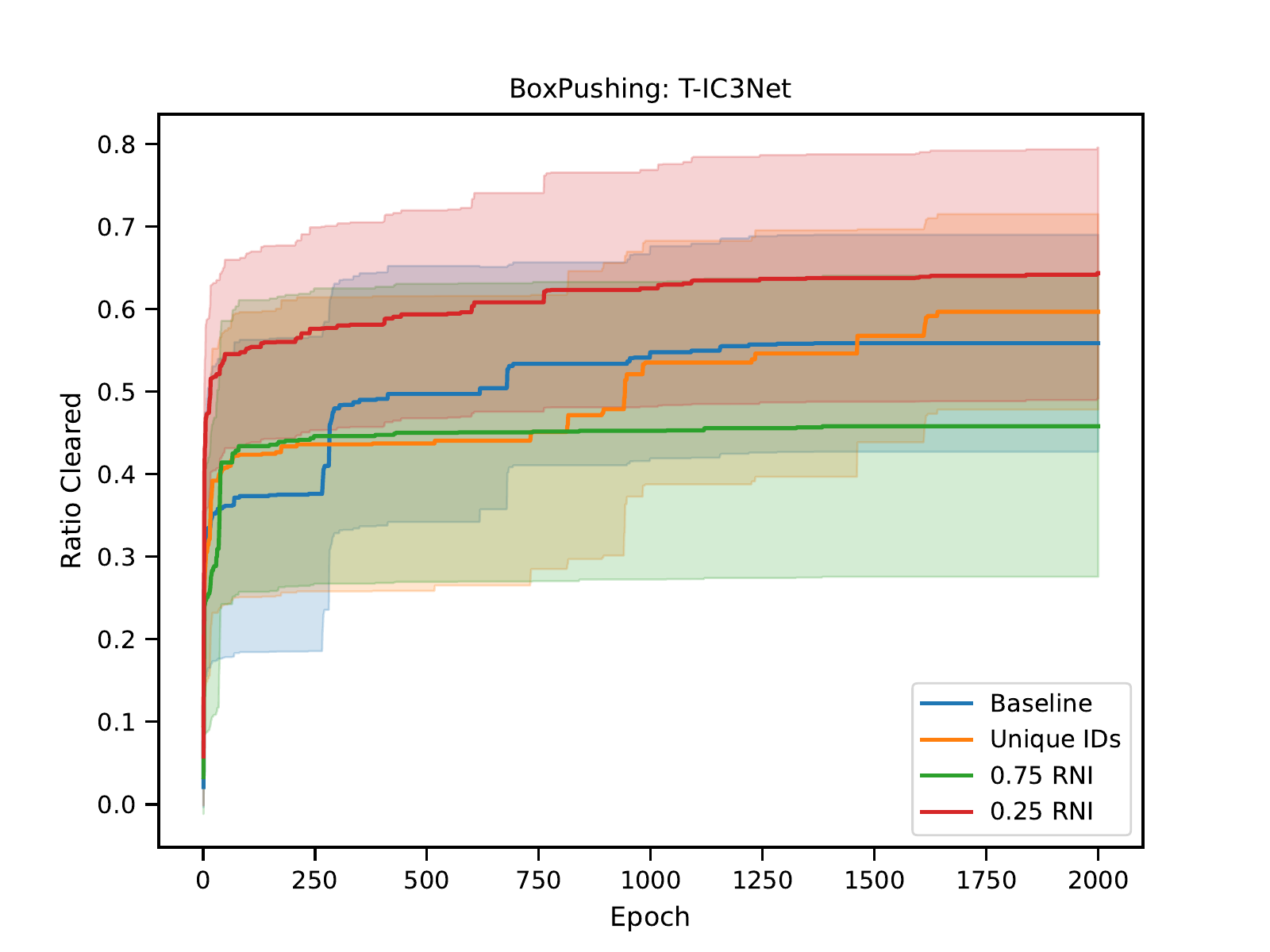}
\includegraphics[width=.49\linewidth]{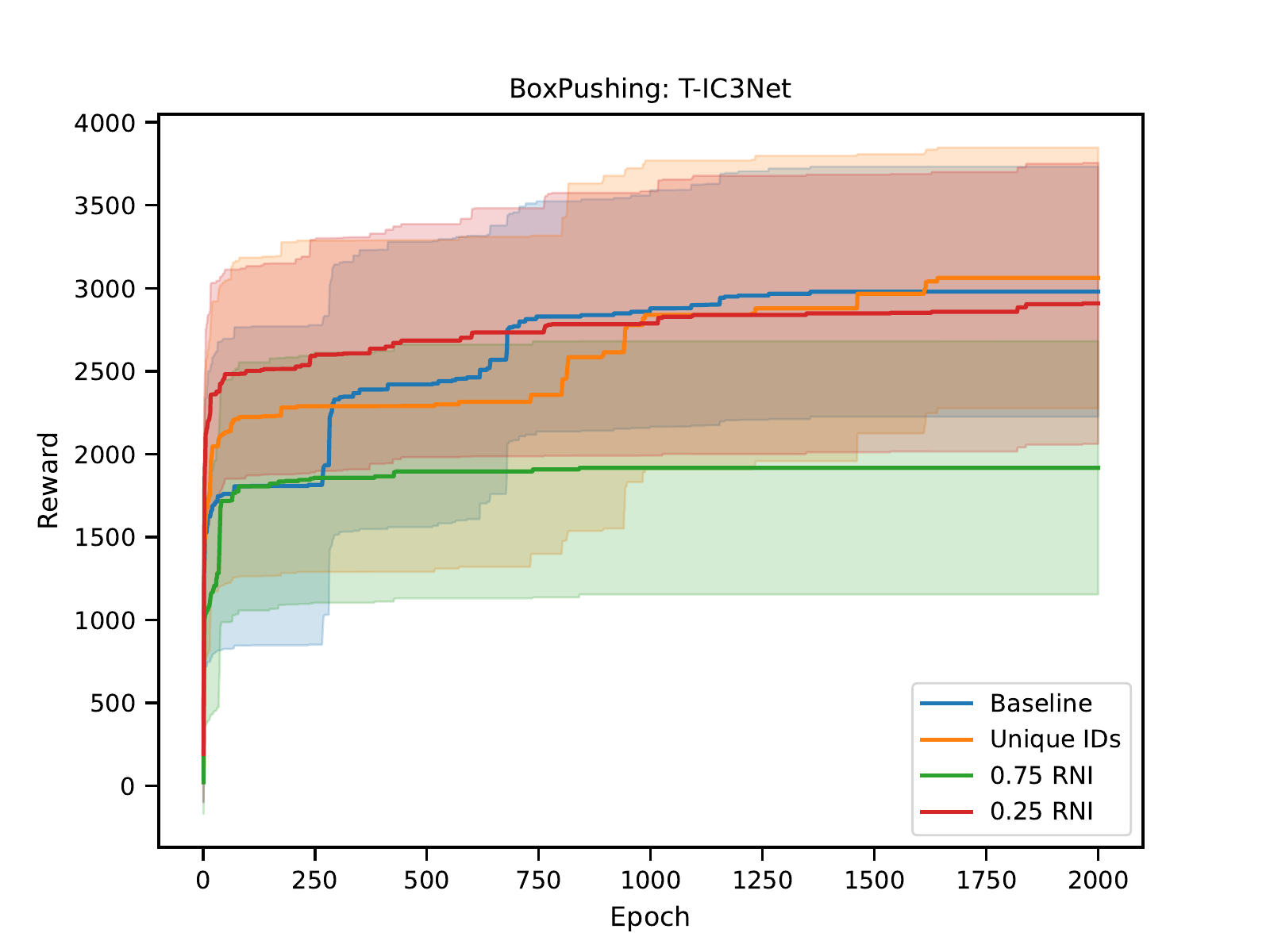}

%
% DroneScatter-Greedy
%
\newpage
\subsection{DroneScatter with Greedy Evaluation}
\centering

\includegraphics[width=.32\linewidth]{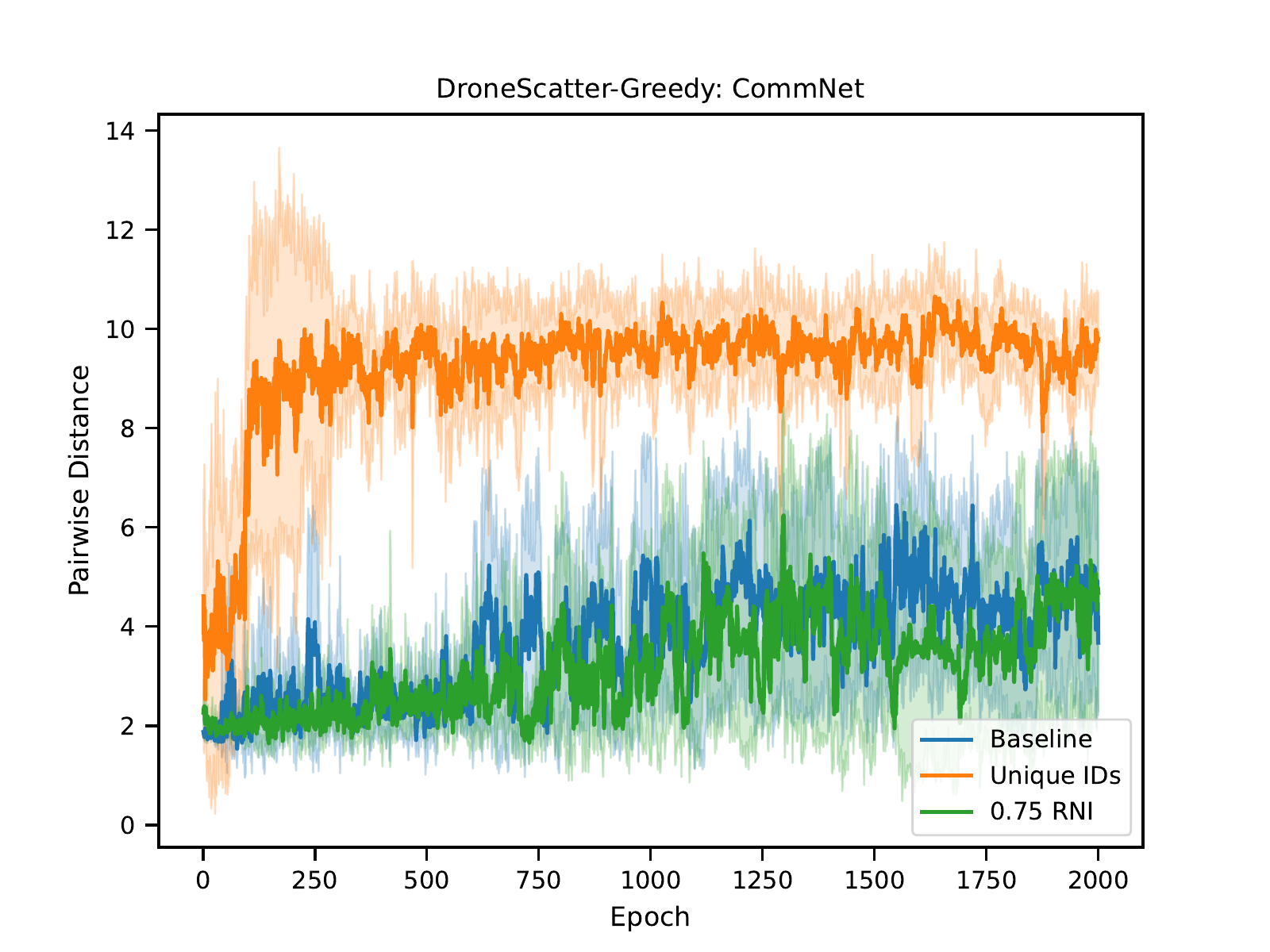}
\includegraphics[width=.32\linewidth]{results/DroneScatter-Greedy/DroneScatter-Greedy_commnet_steps_taken.pdf}
\includegraphics[width=.32\linewidth]{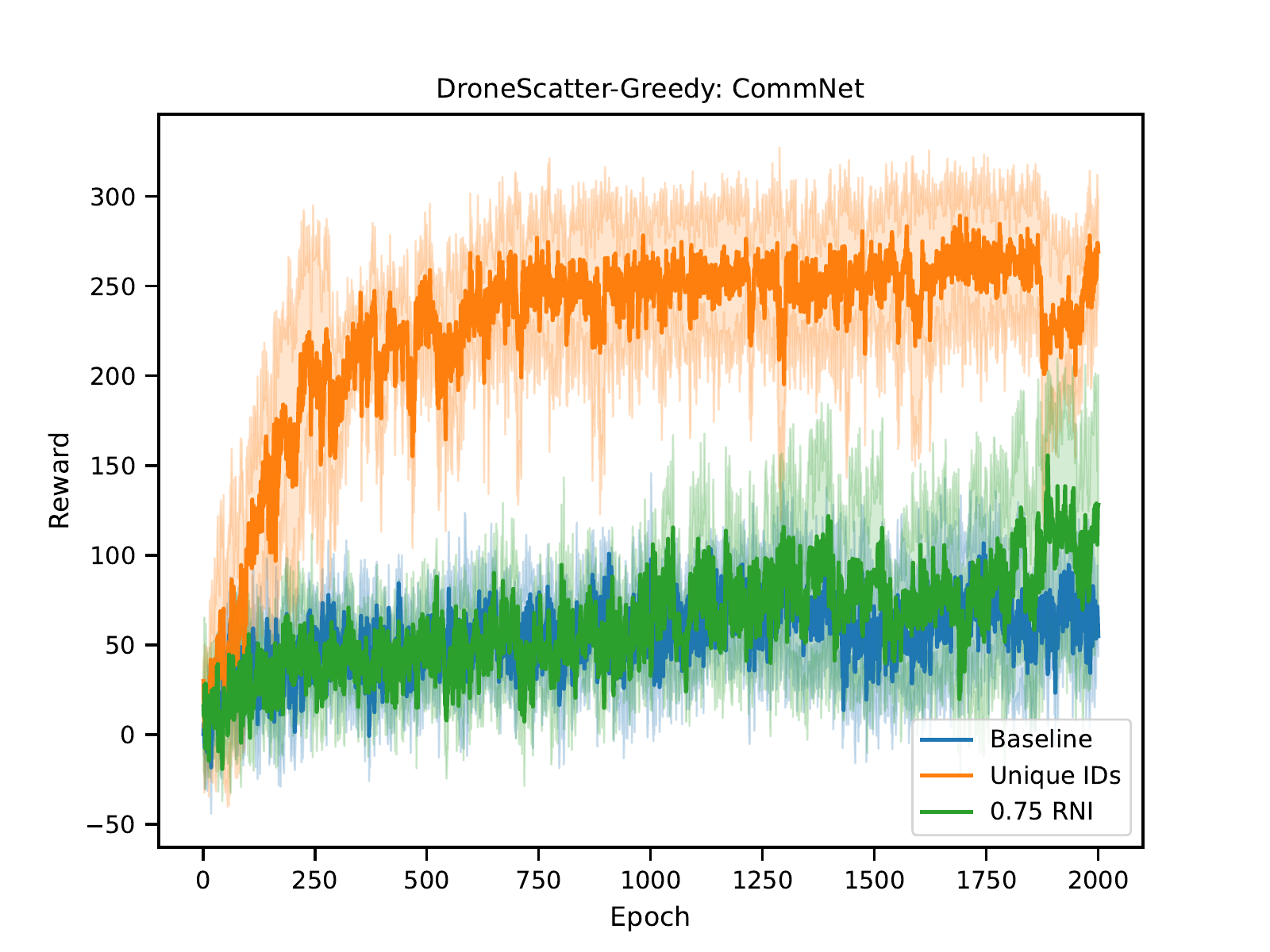}

\includegraphics[width=.32\linewidth]{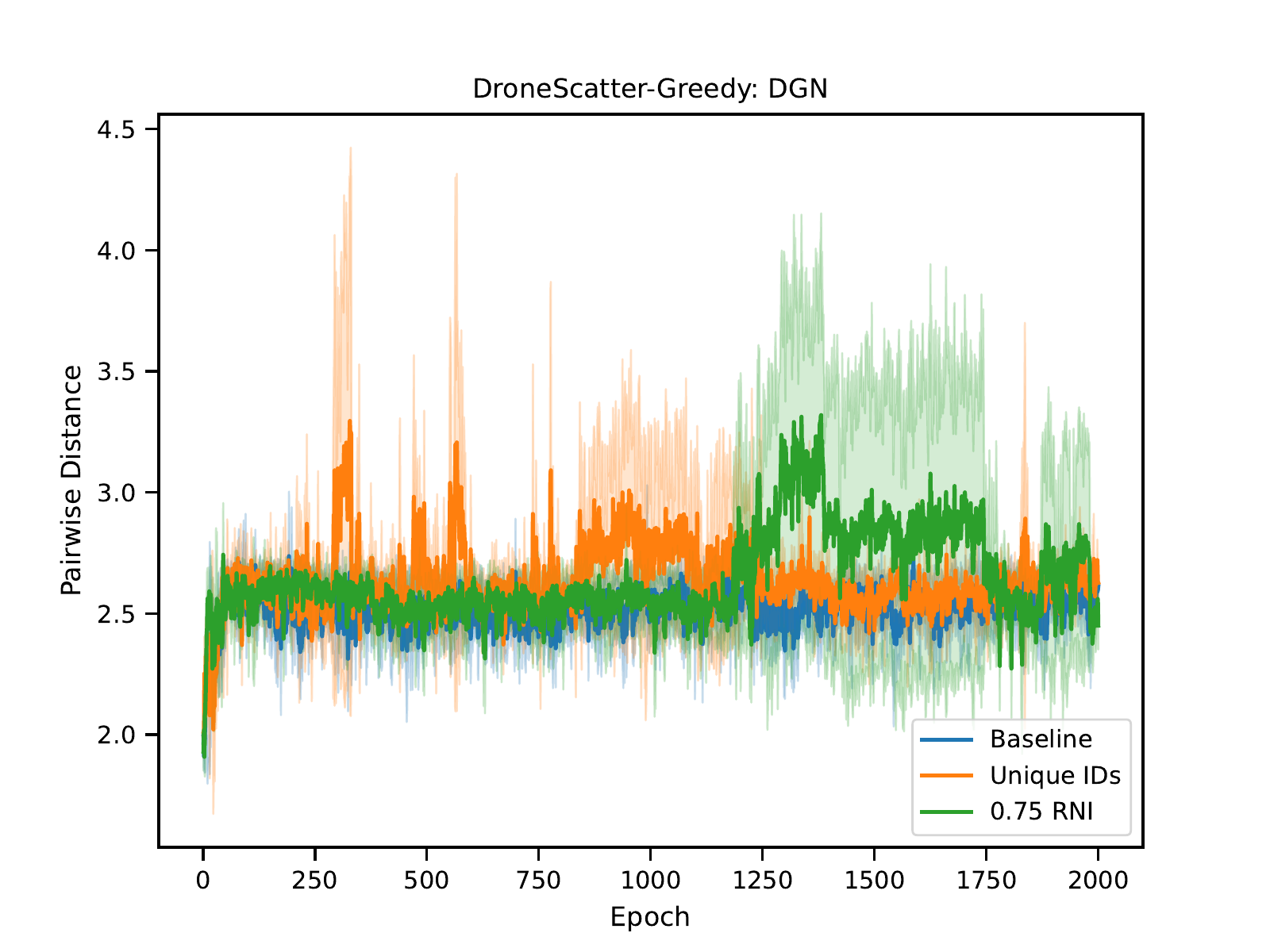}
\includegraphics[width=.32\linewidth]{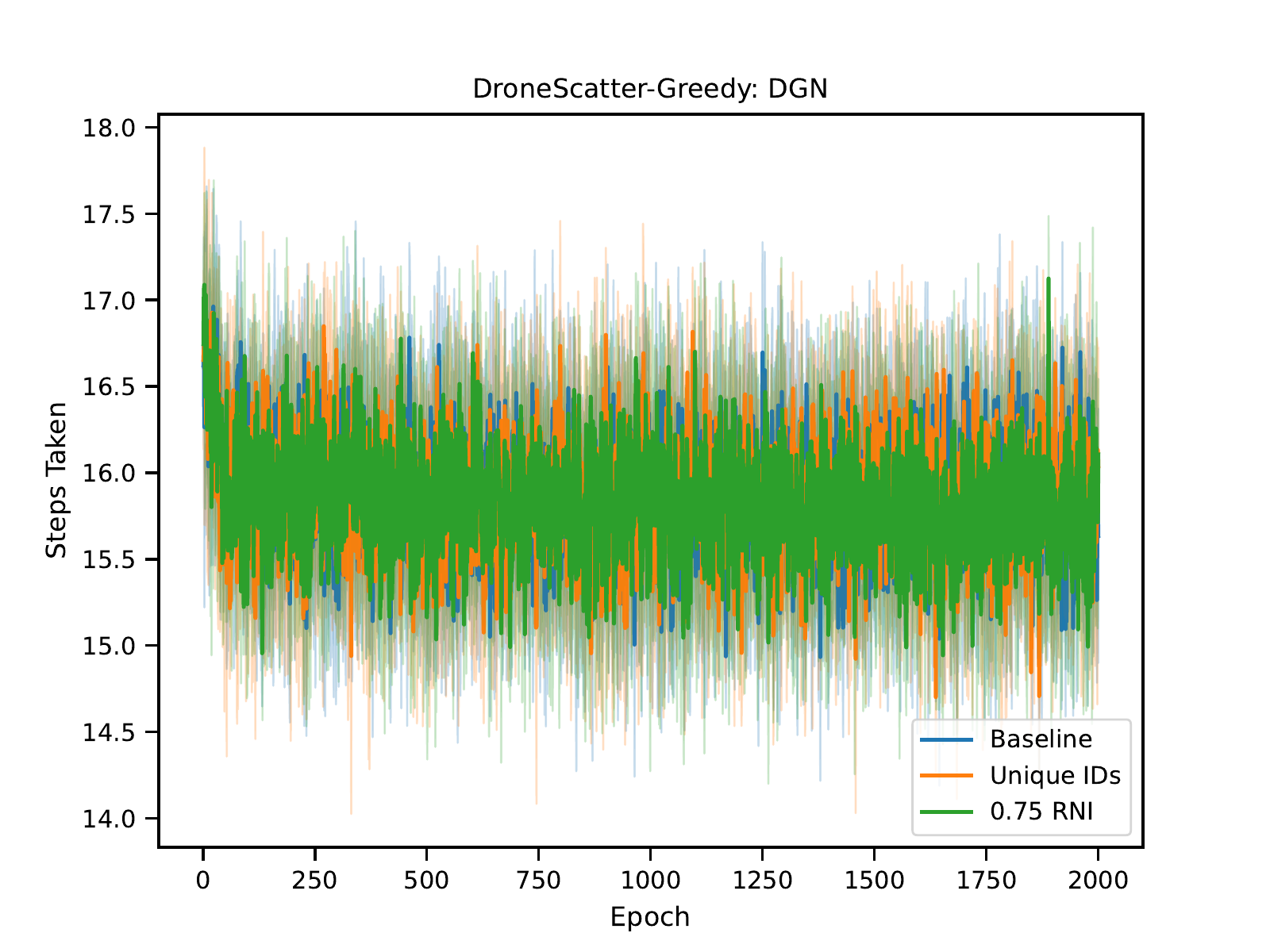}
\includegraphics[width=.32\linewidth]{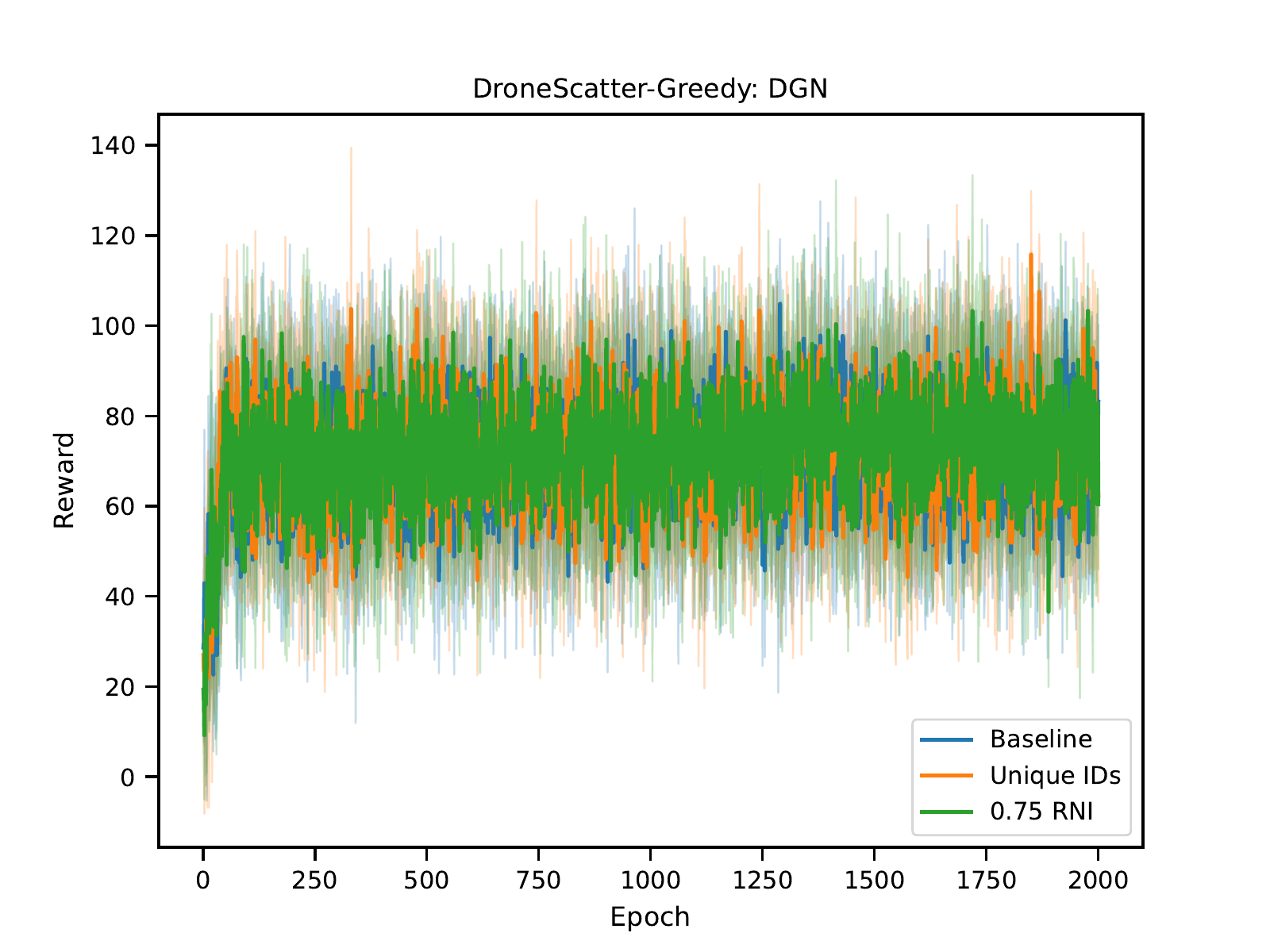}

\includegraphics[width=.32\linewidth]{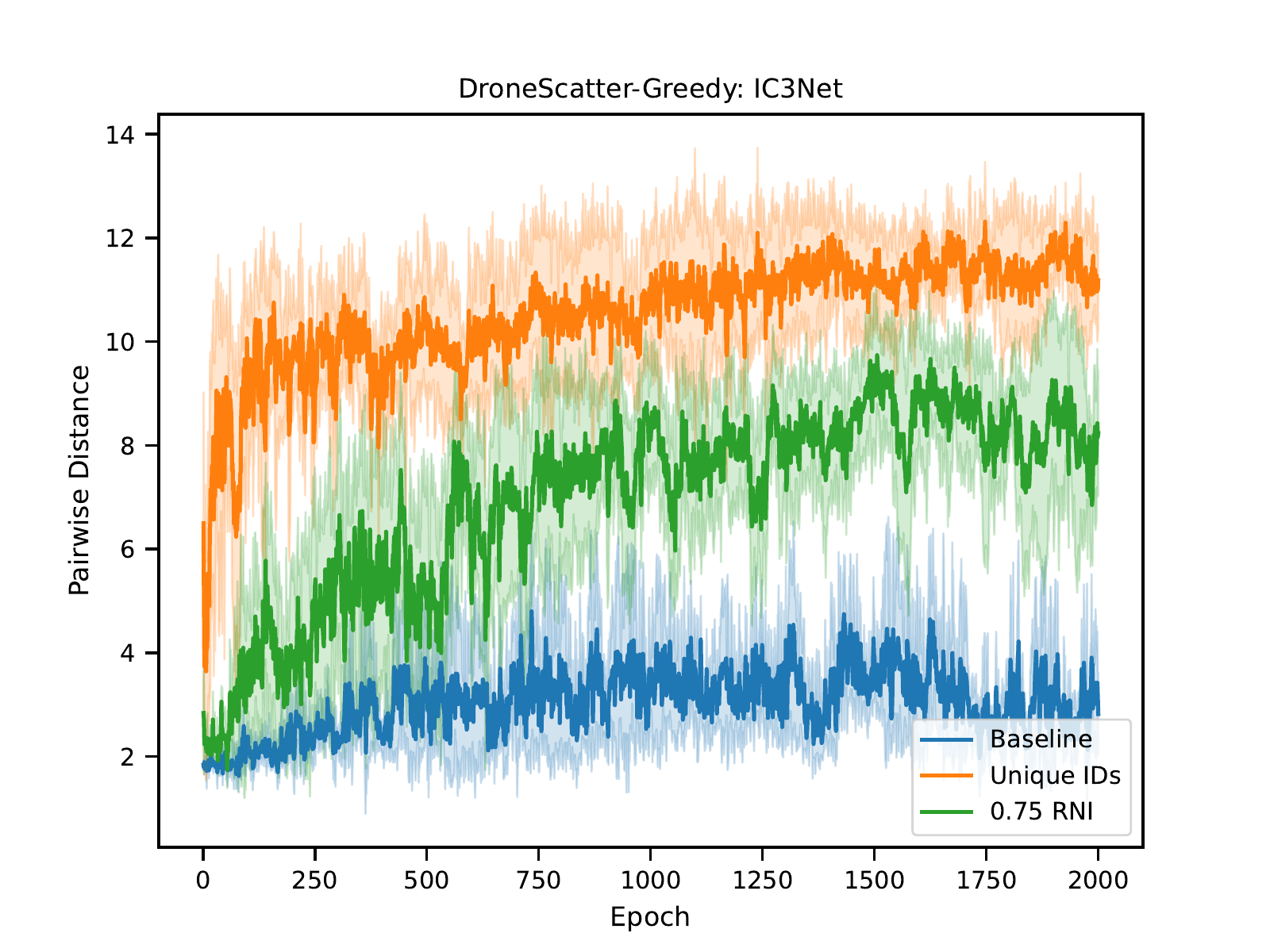}
\includegraphics[width=.32\linewidth]{results/DroneScatter-Greedy/DroneScatter-Greedy_ic3net_steps_taken.pdf}
\includegraphics[width=.32\linewidth]{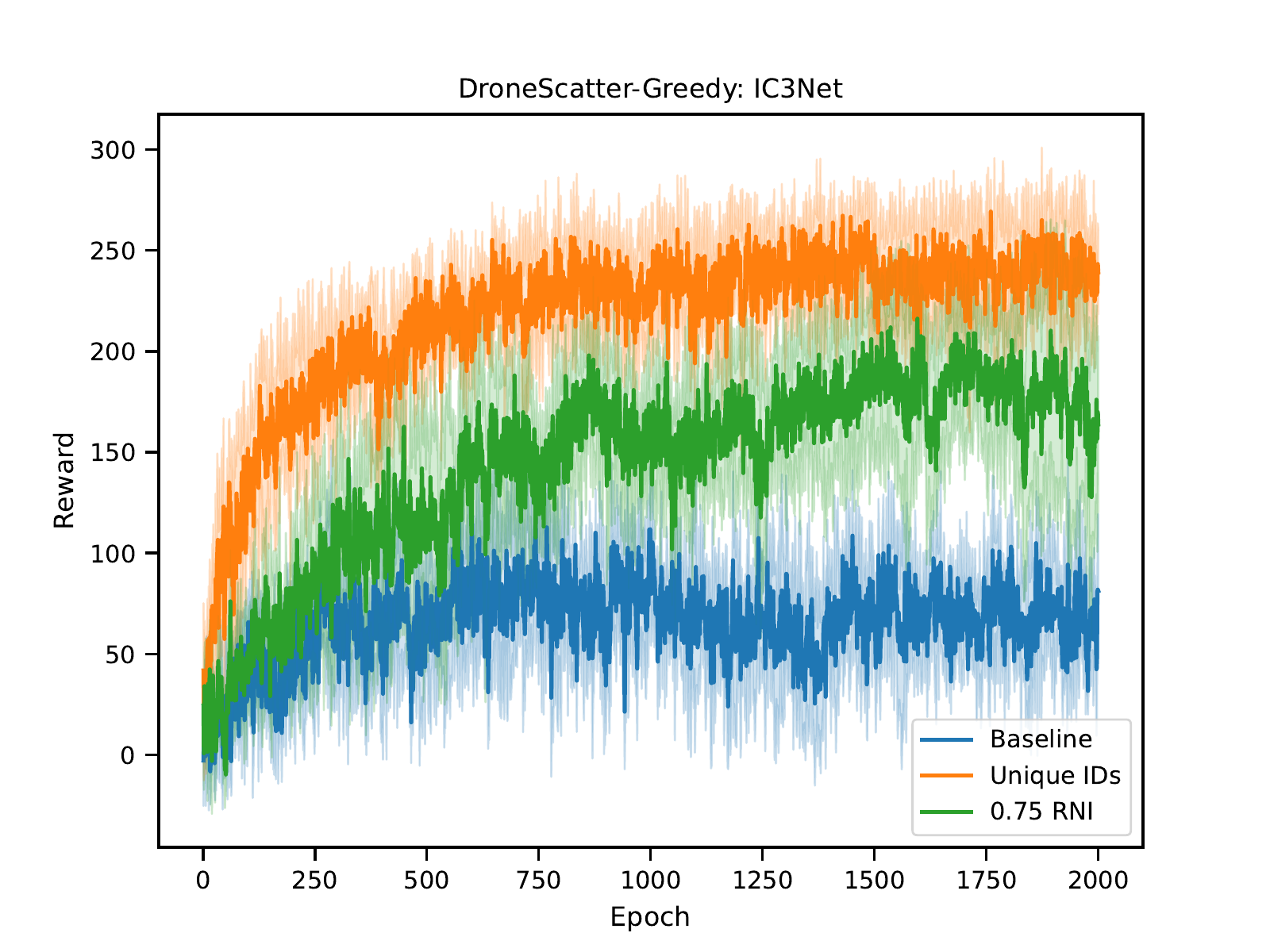}

\includegraphics[width=.32\linewidth]{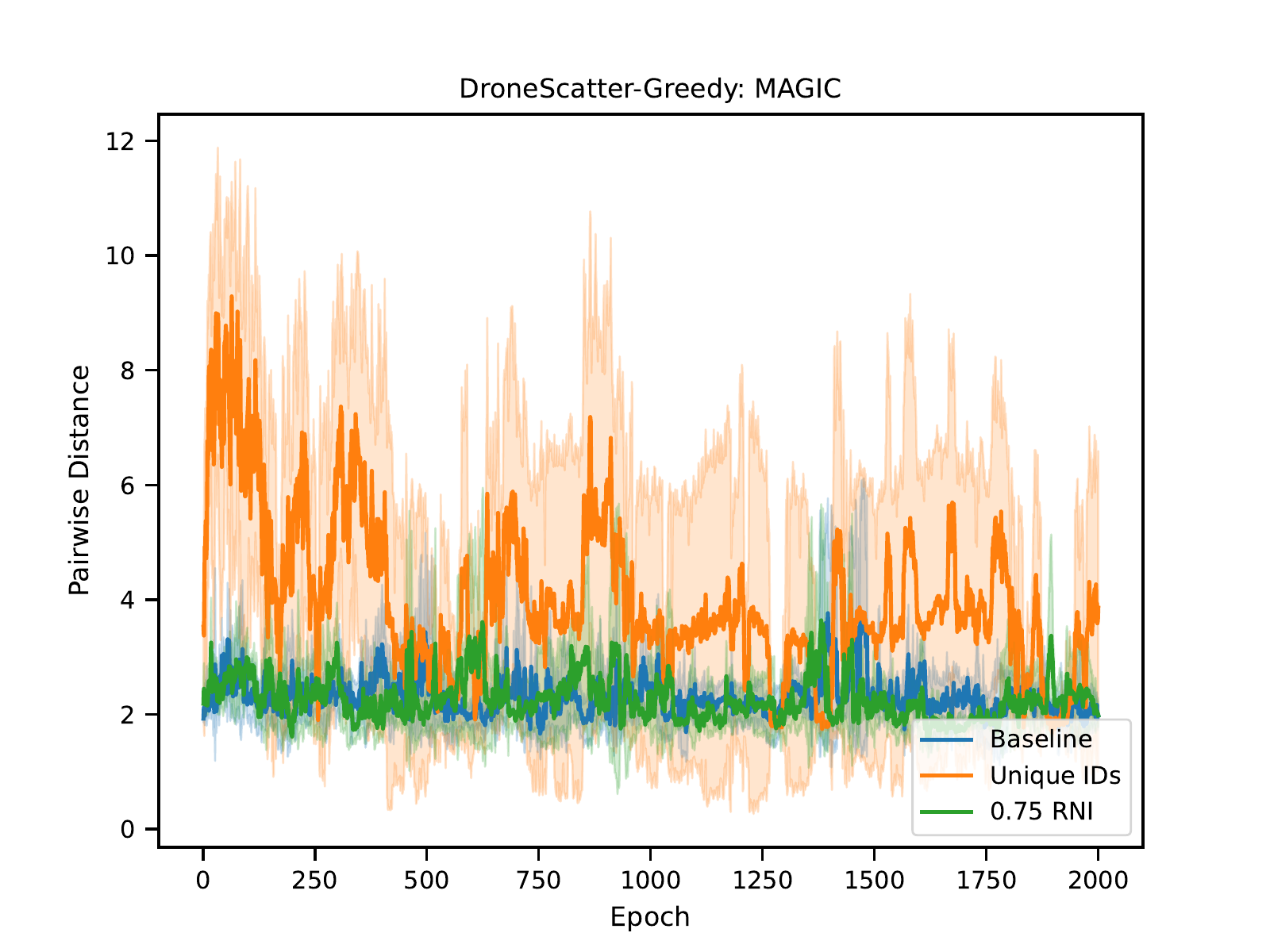}
\includegraphics[width=.32\linewidth]{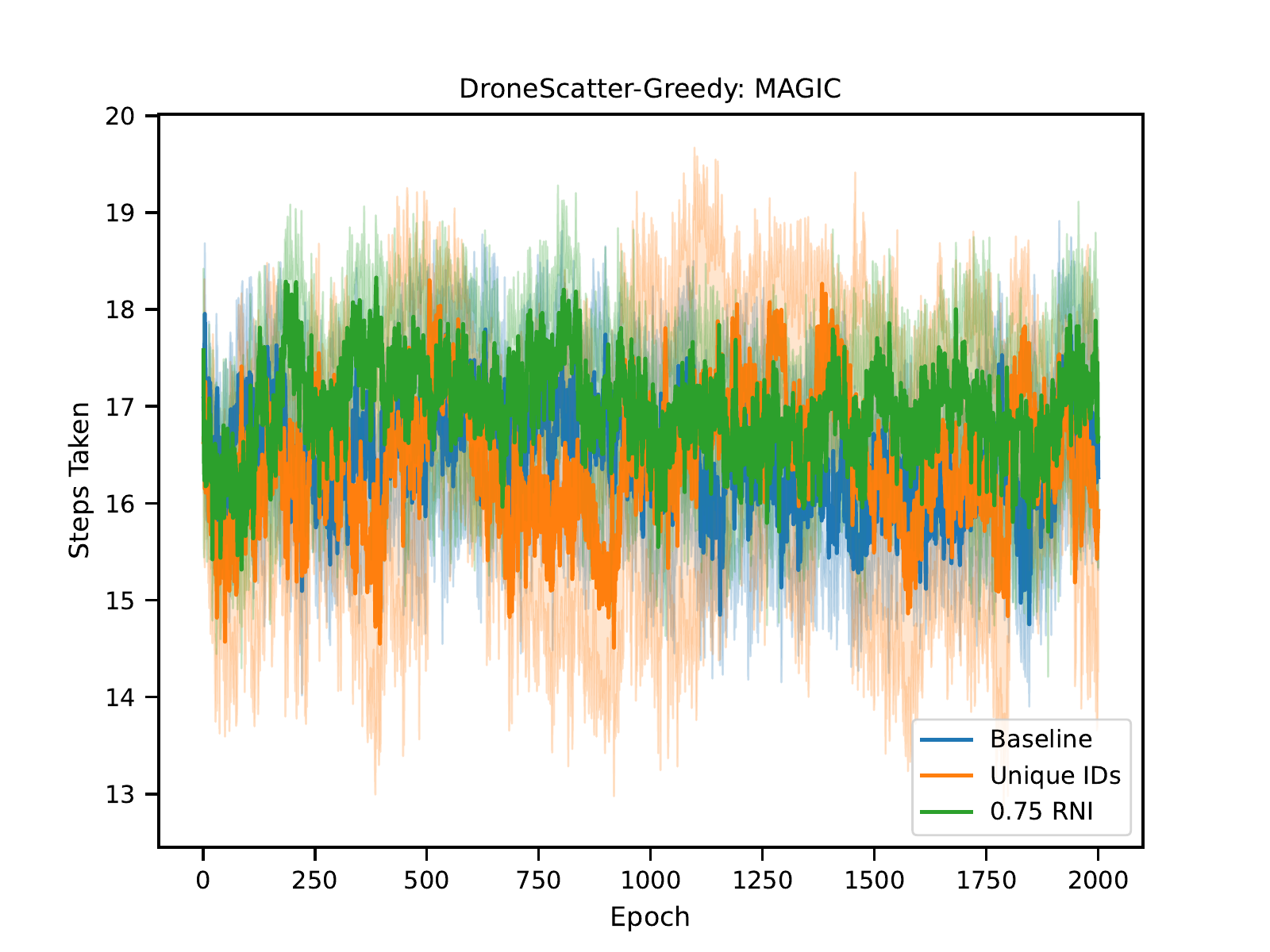}
\includegraphics[width=.32\linewidth]{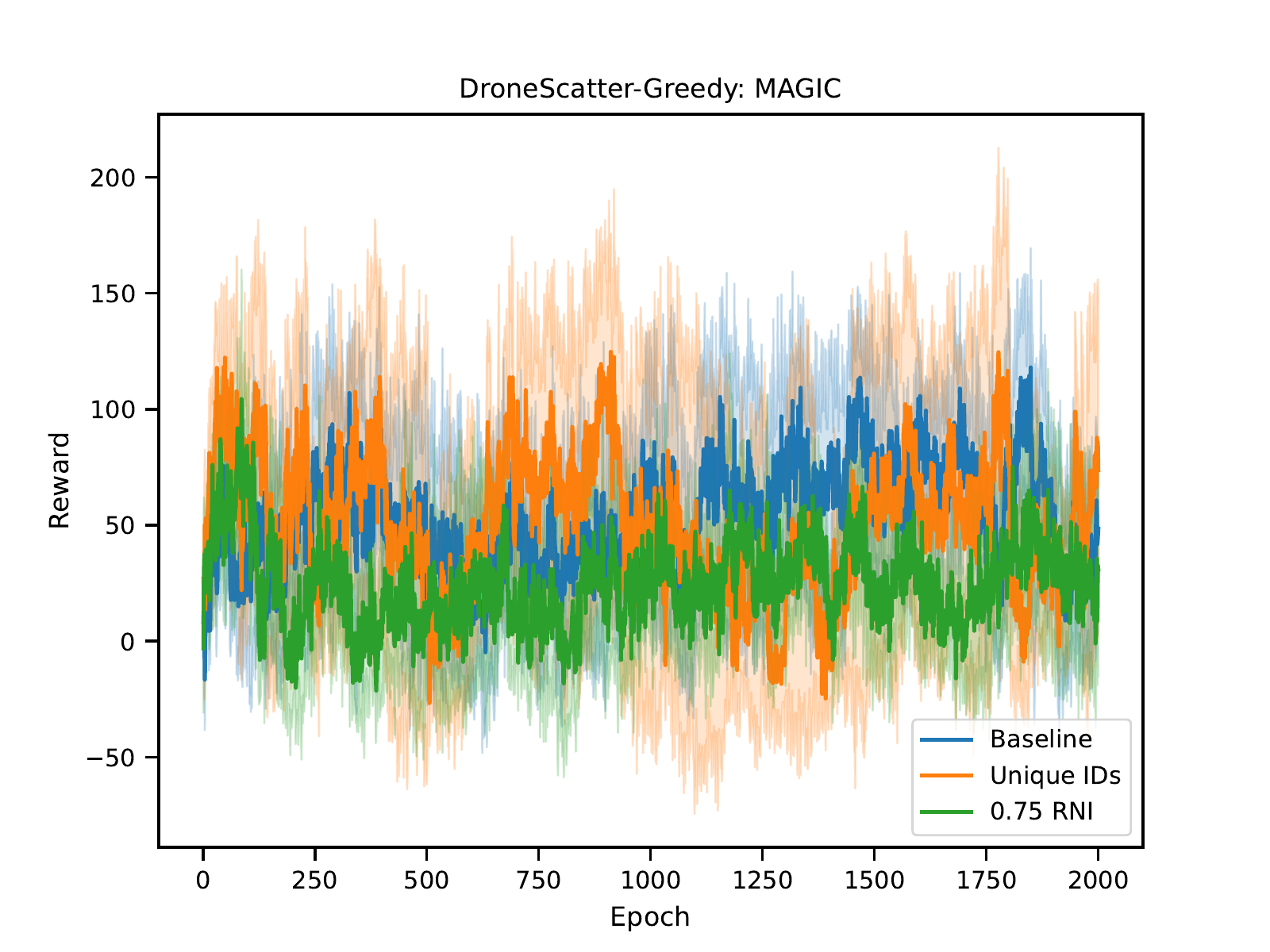}

\includegraphics[width=.32\linewidth]{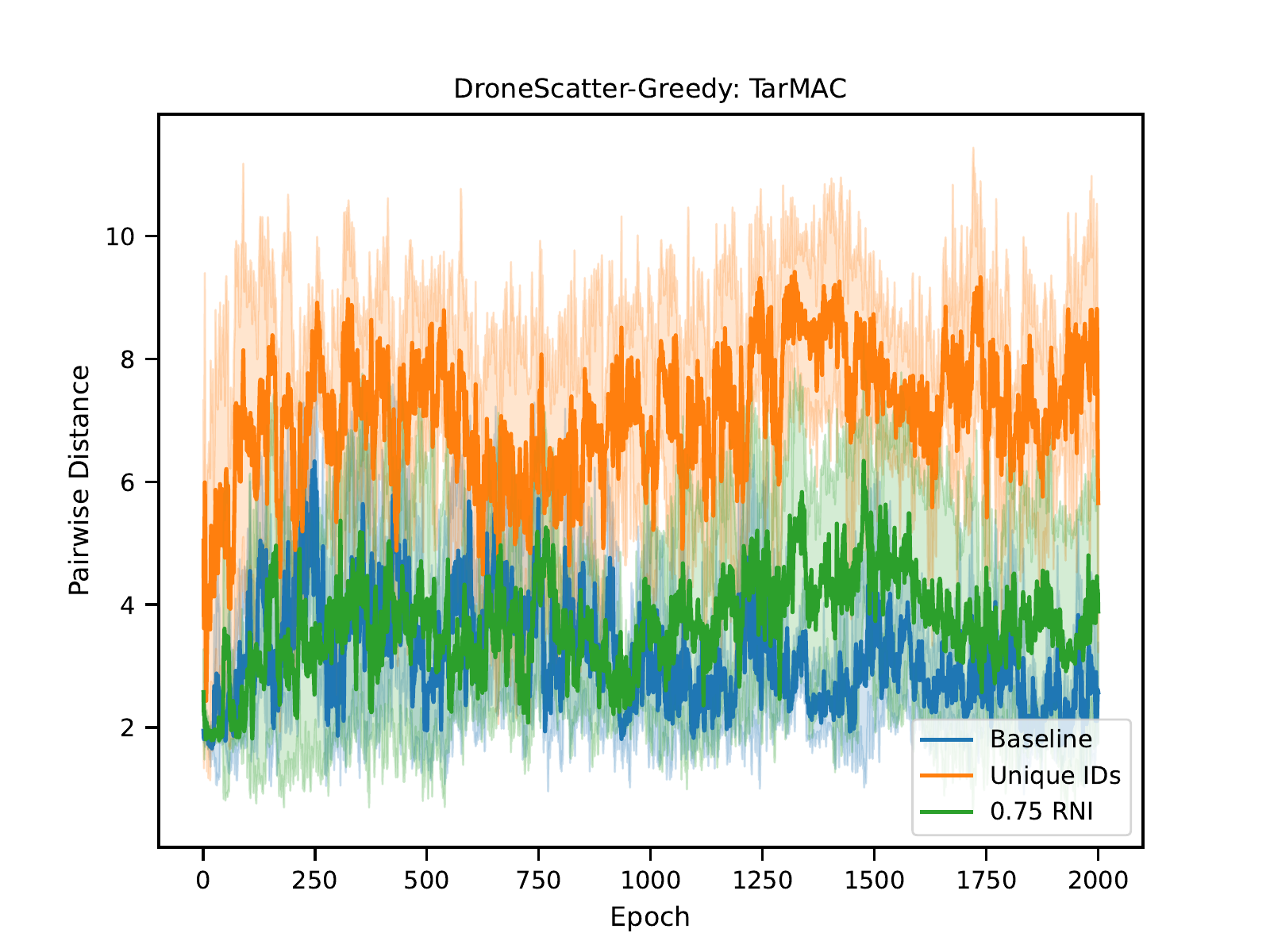}
\includegraphics[width=.32\linewidth]{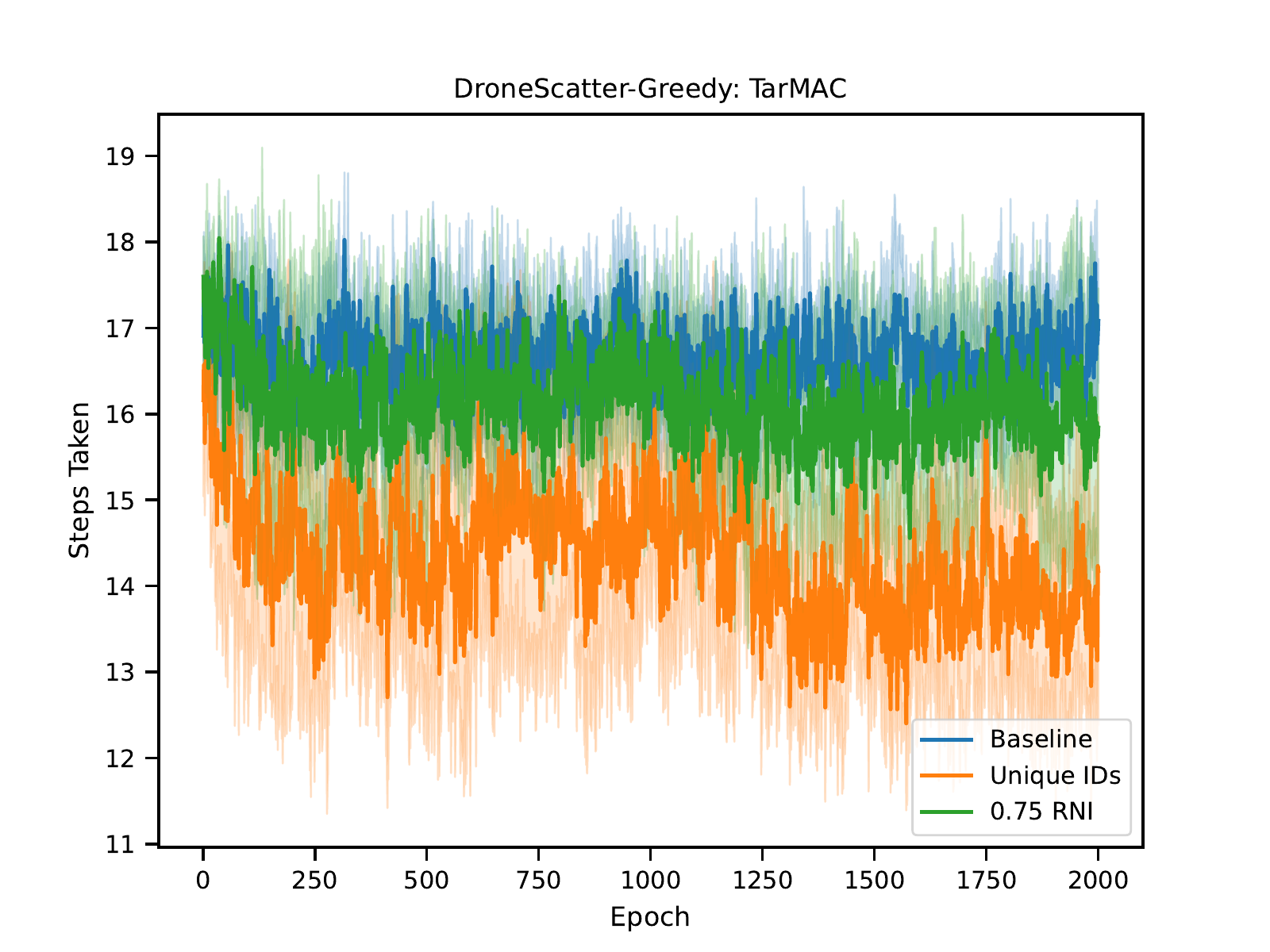}
\includegraphics[width=.32\linewidth]{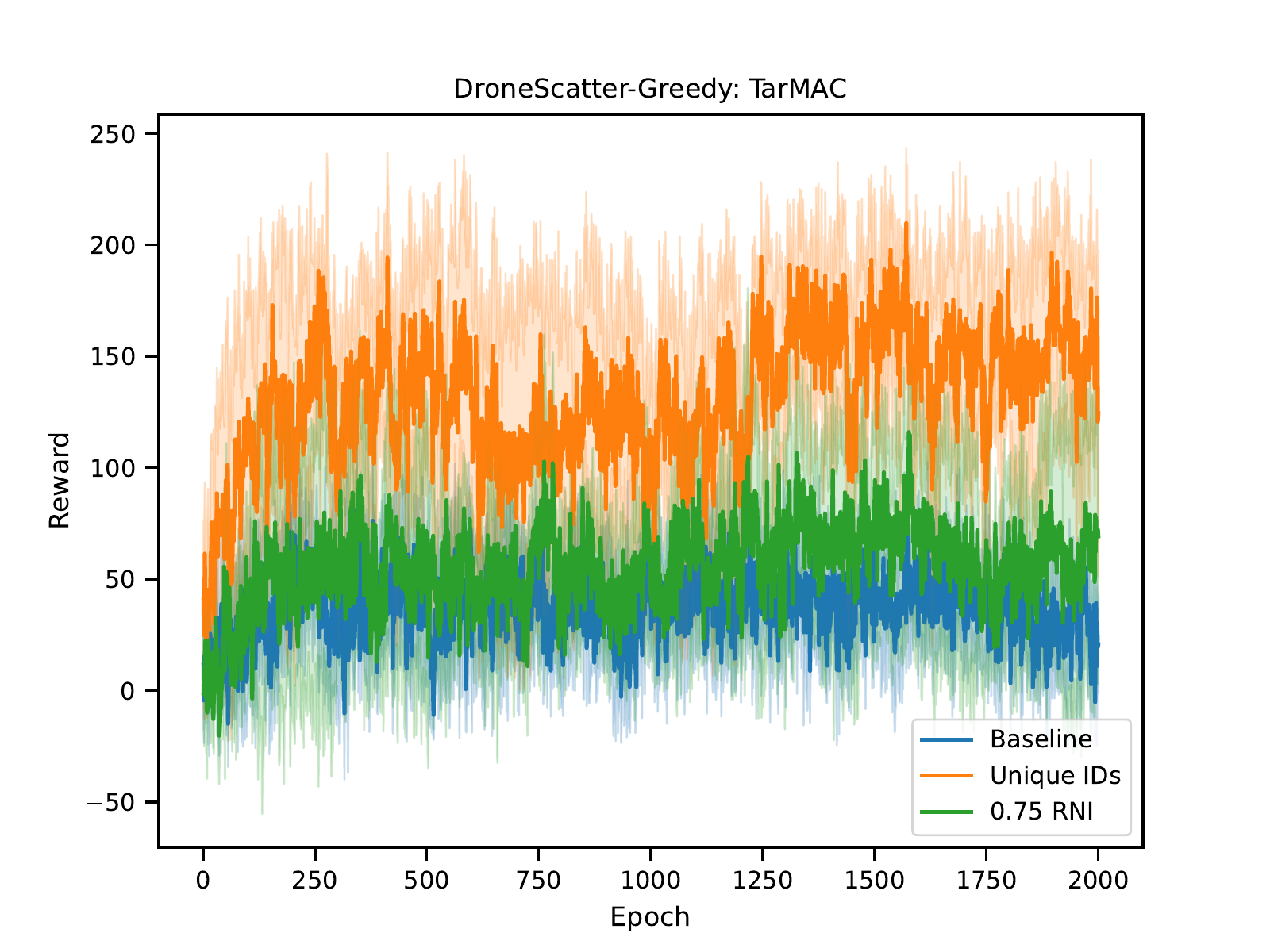}

\includegraphics[width=.32\linewidth]{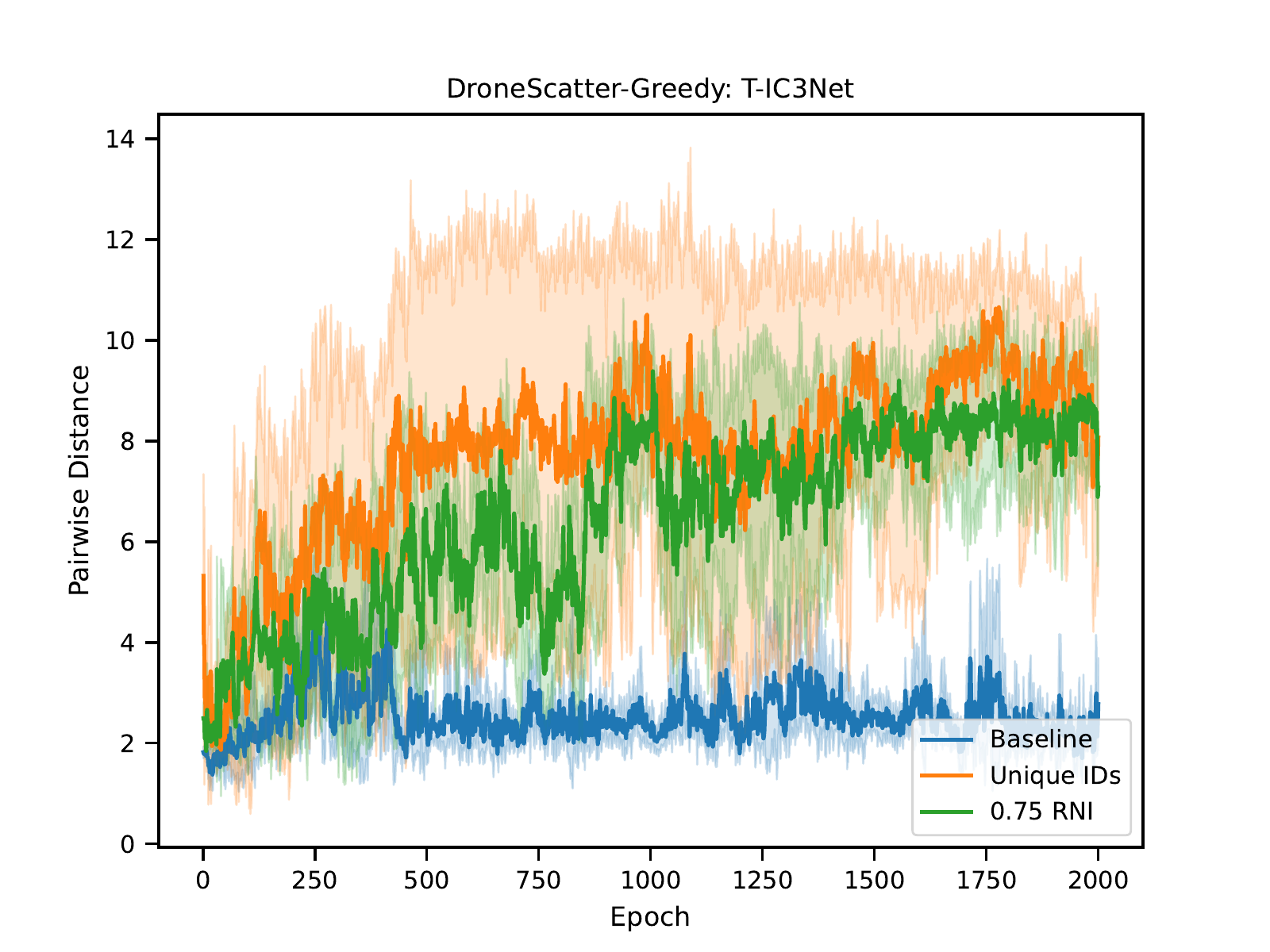}
\includegraphics[width=.32\linewidth]{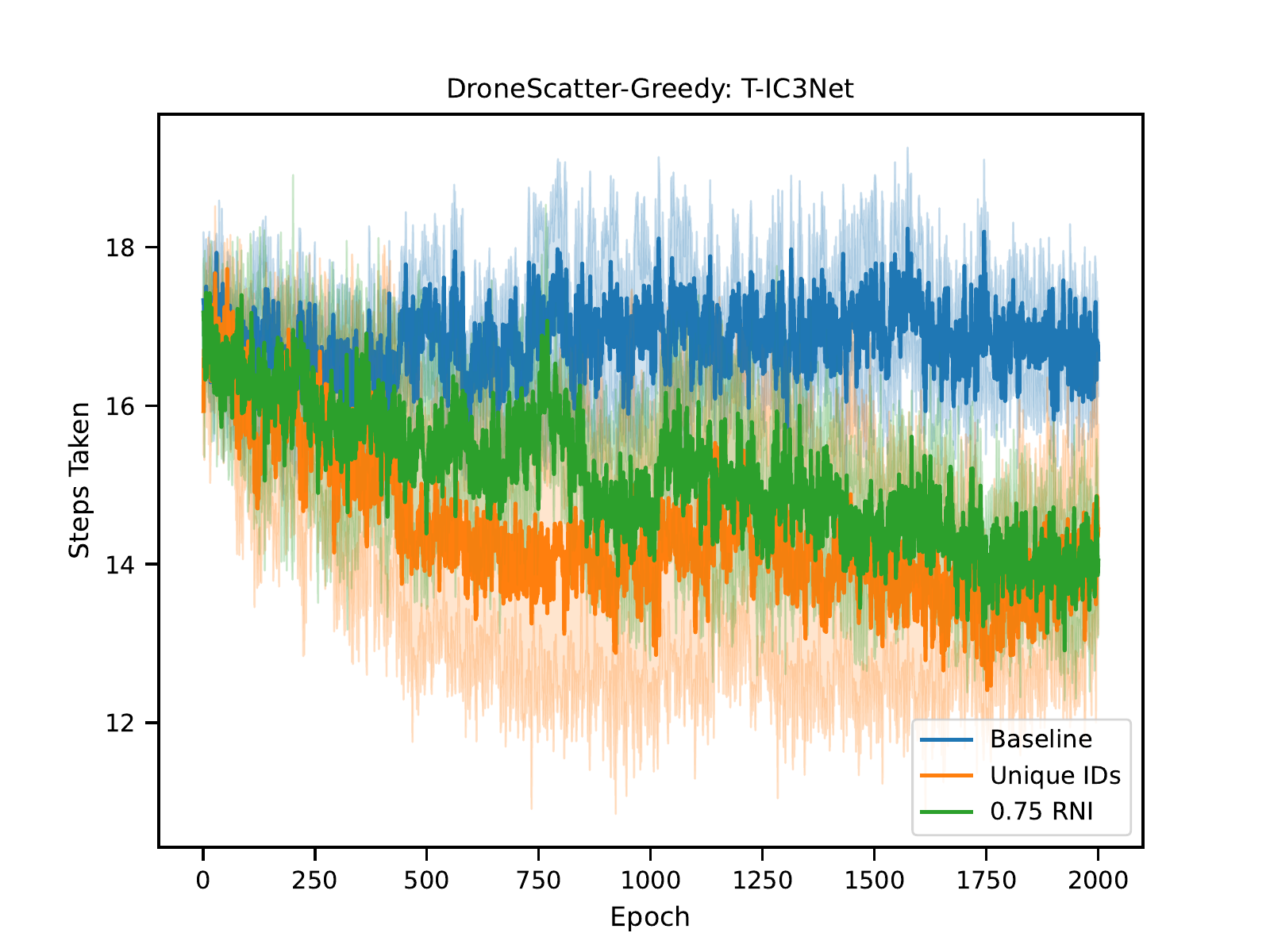}
\includegraphics[width=.32\linewidth]{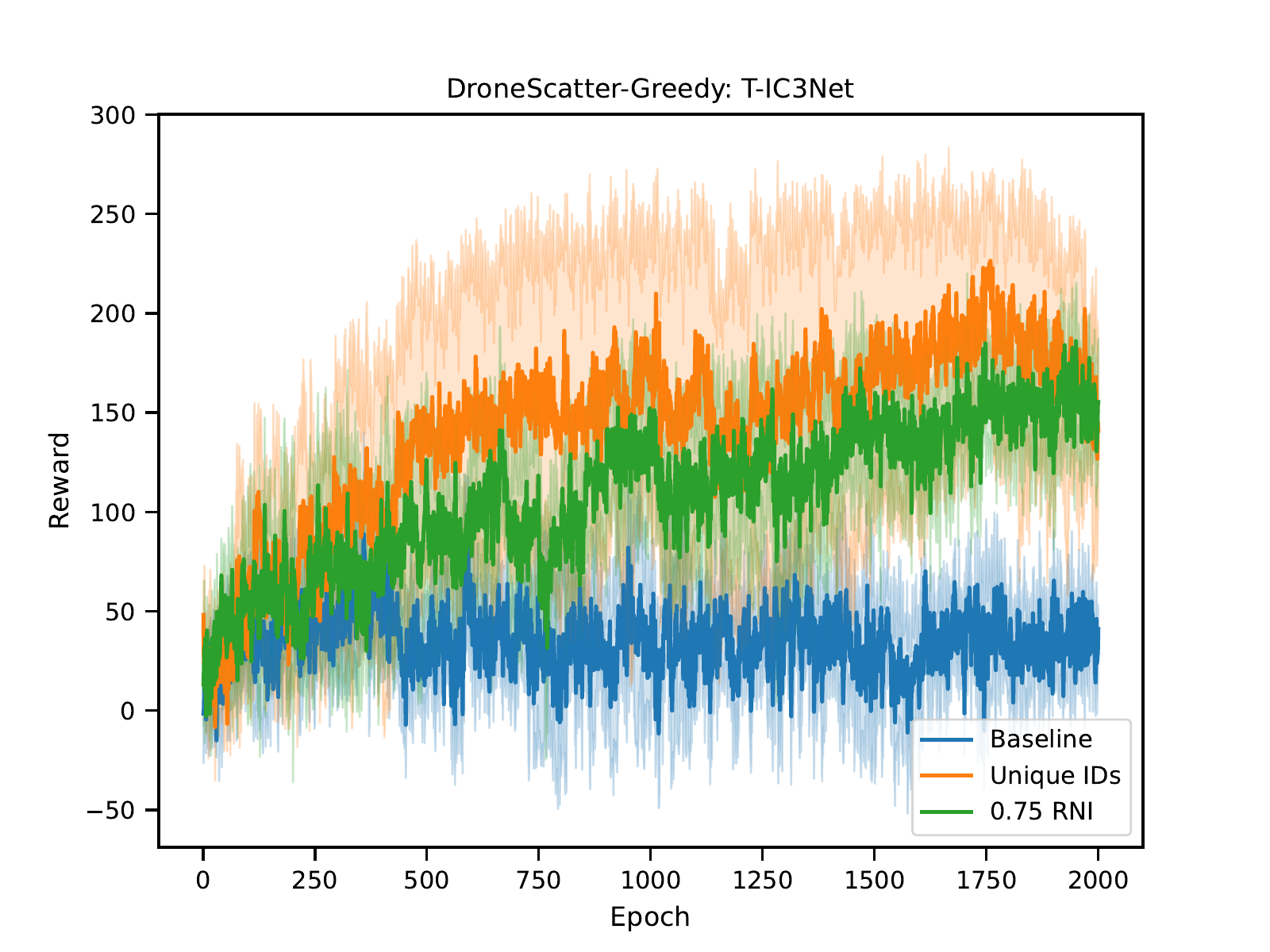}

%
% DroneScatter-Stochastic
%
\subsection{DroneScatter with Stochastic Evaluation}
\centering

\includegraphics[width=.32\linewidth]{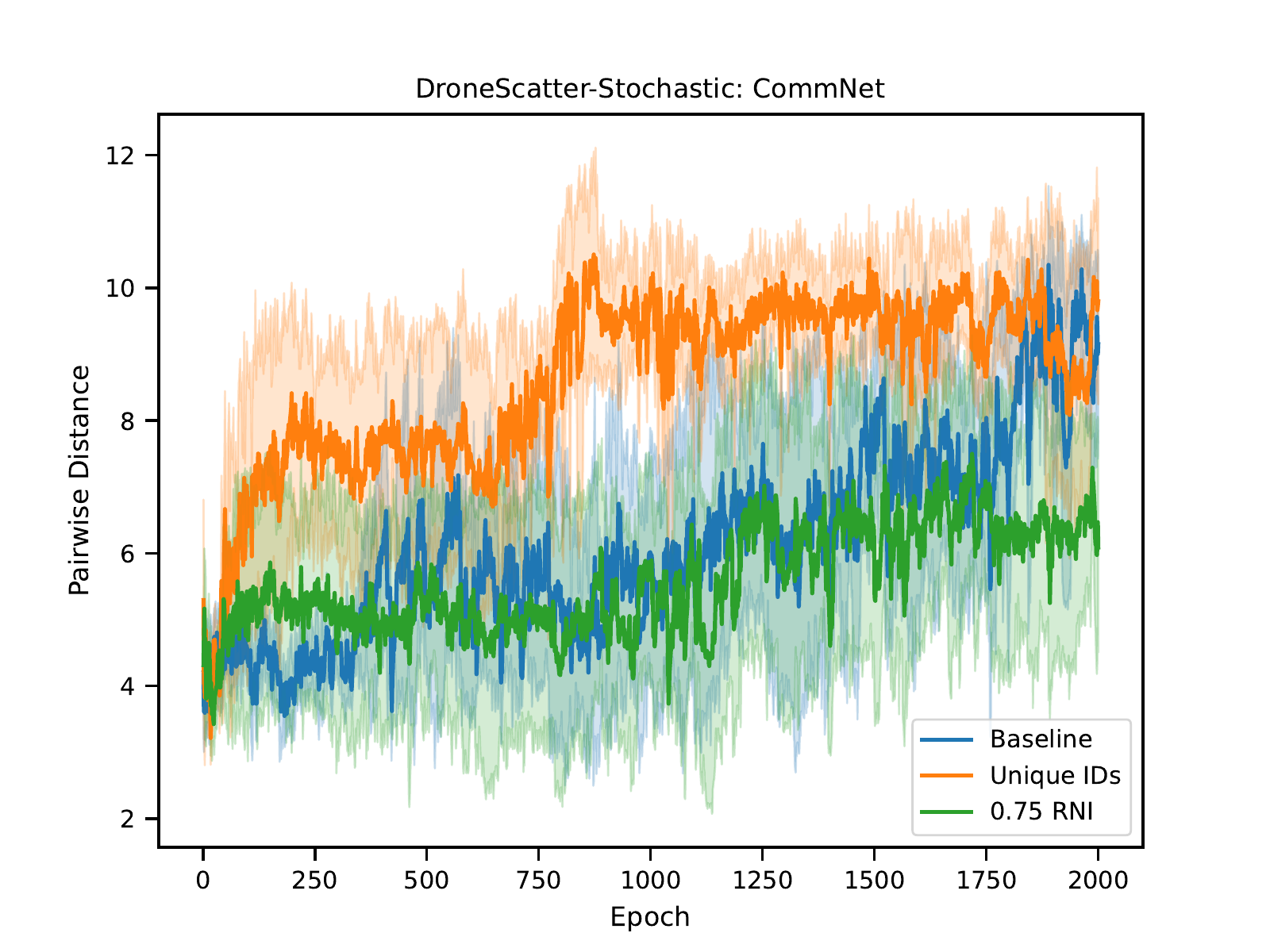}
\includegraphics[width=.32\linewidth]{results/DroneScatter-Stochastic/DroneScatter-Stochastic_commnet_steps_taken.pdf}
\includegraphics[width=.32\linewidth]{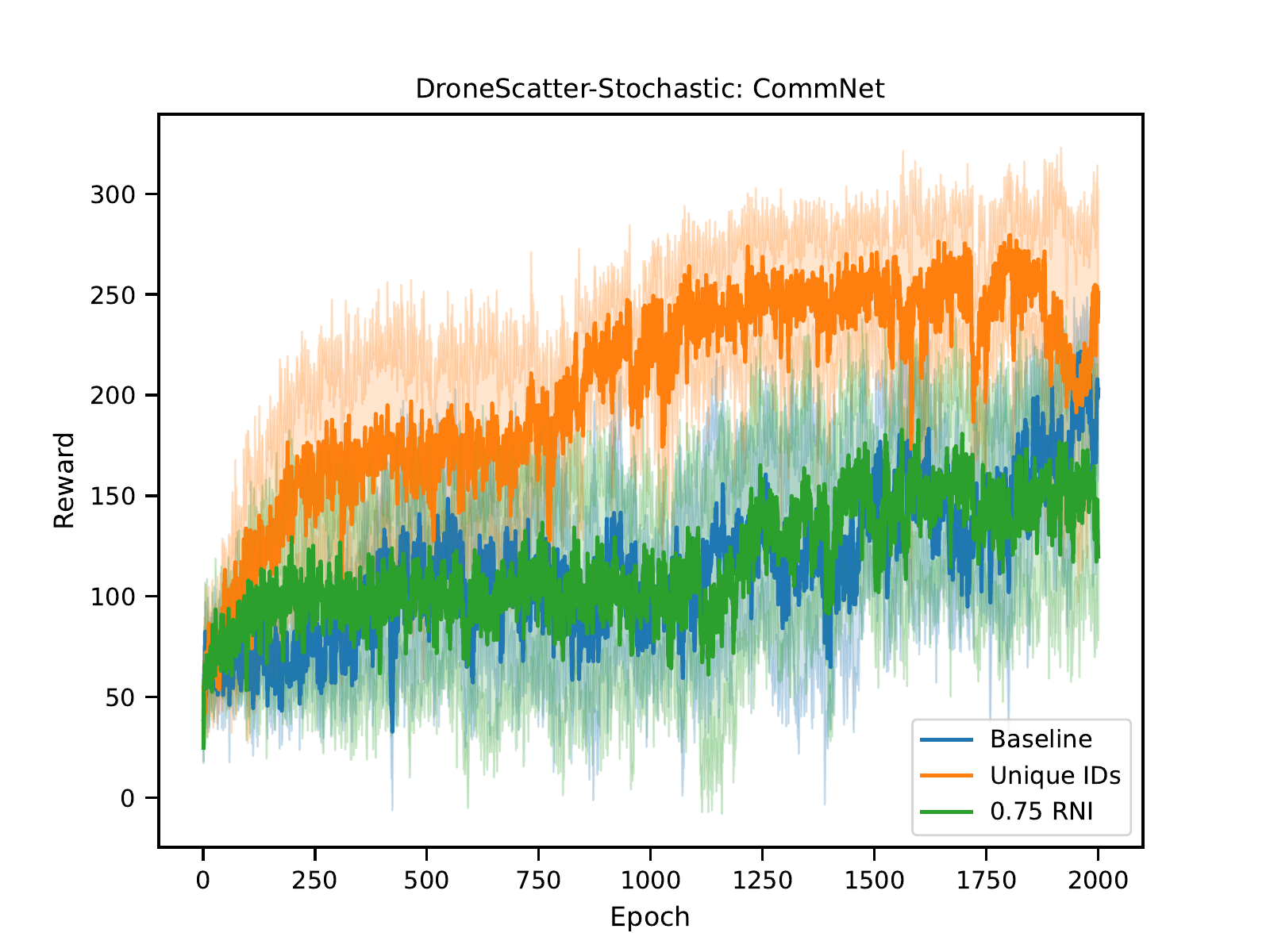}

\includegraphics[width=.32\linewidth]{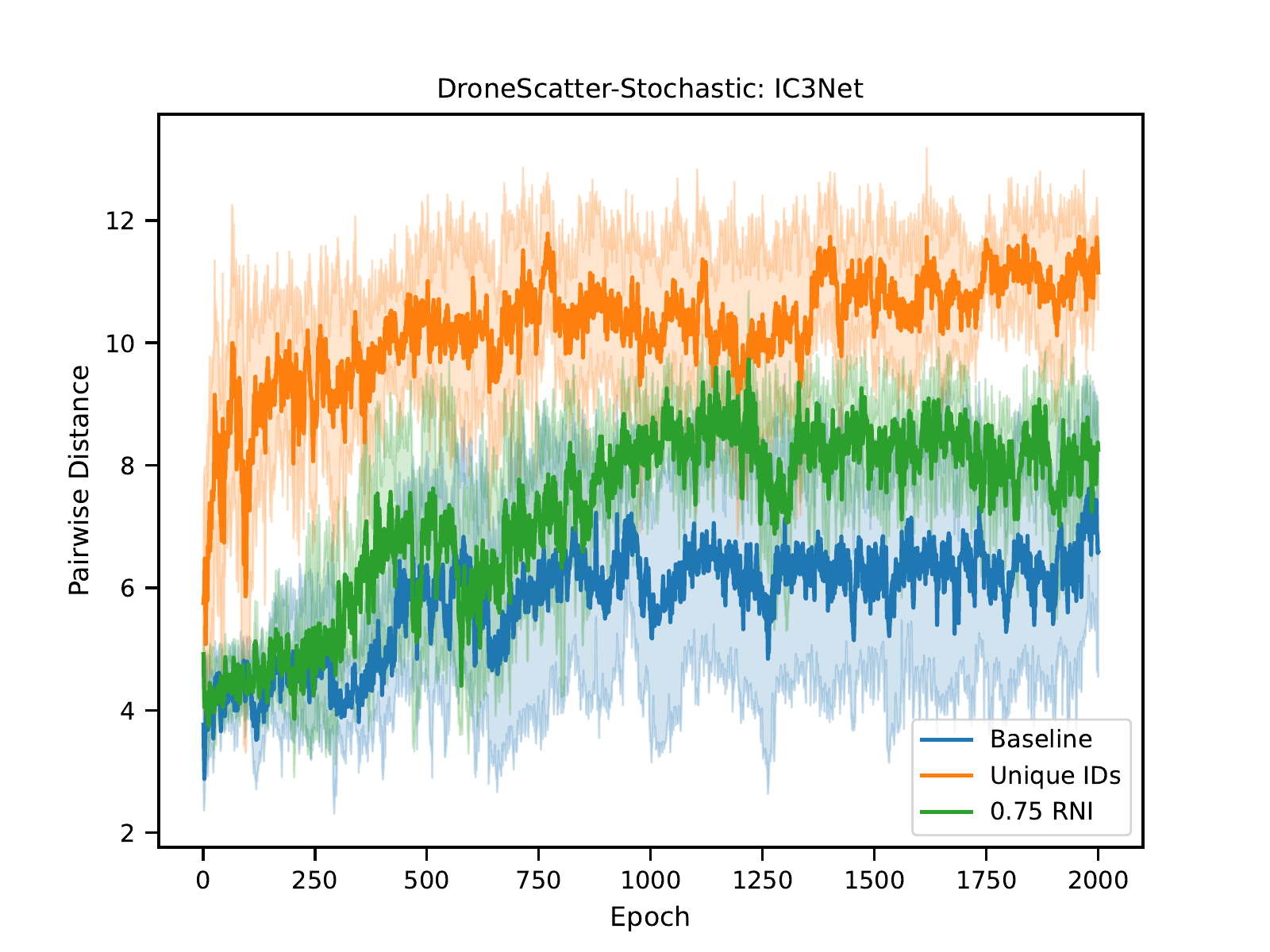}
\includegraphics[width=.32\linewidth]{results/DroneScatter-Stochastic/DroneScatter-Stochastic_ic3net_steps_taken.pdf}
\includegraphics[width=.32\linewidth]{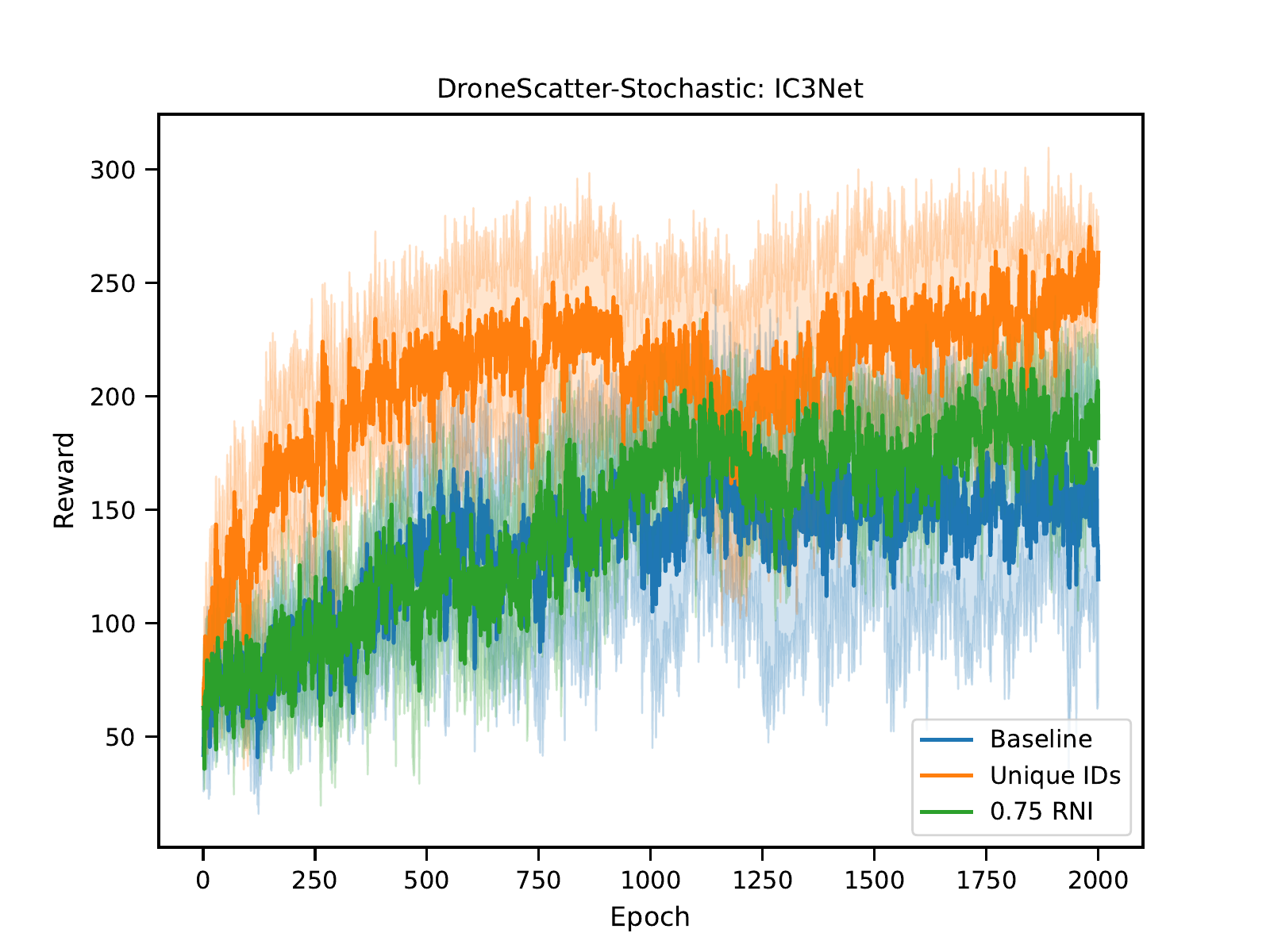}

\includegraphics[width=.32\linewidth]{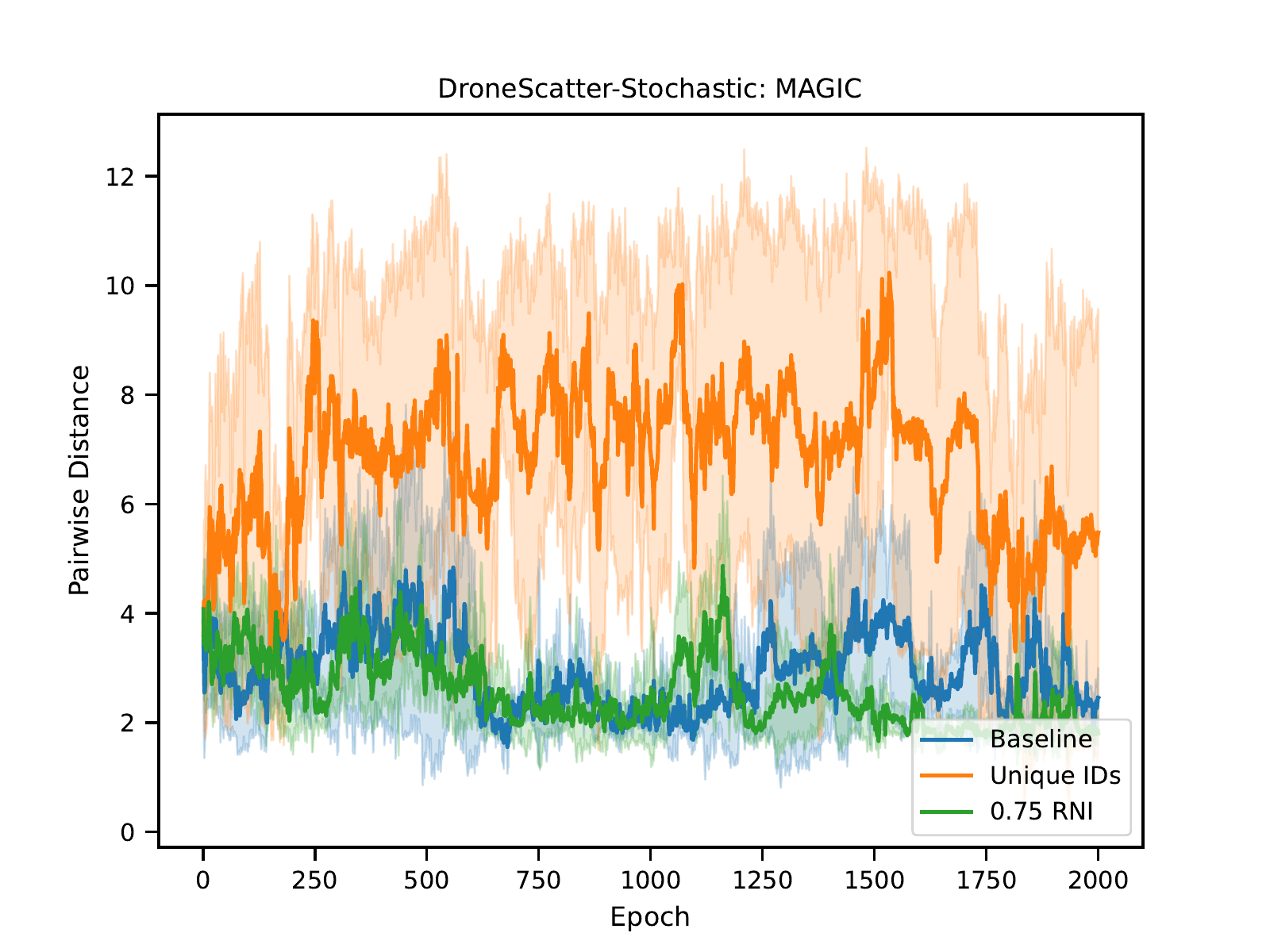}
\includegraphics[width=.32\linewidth]{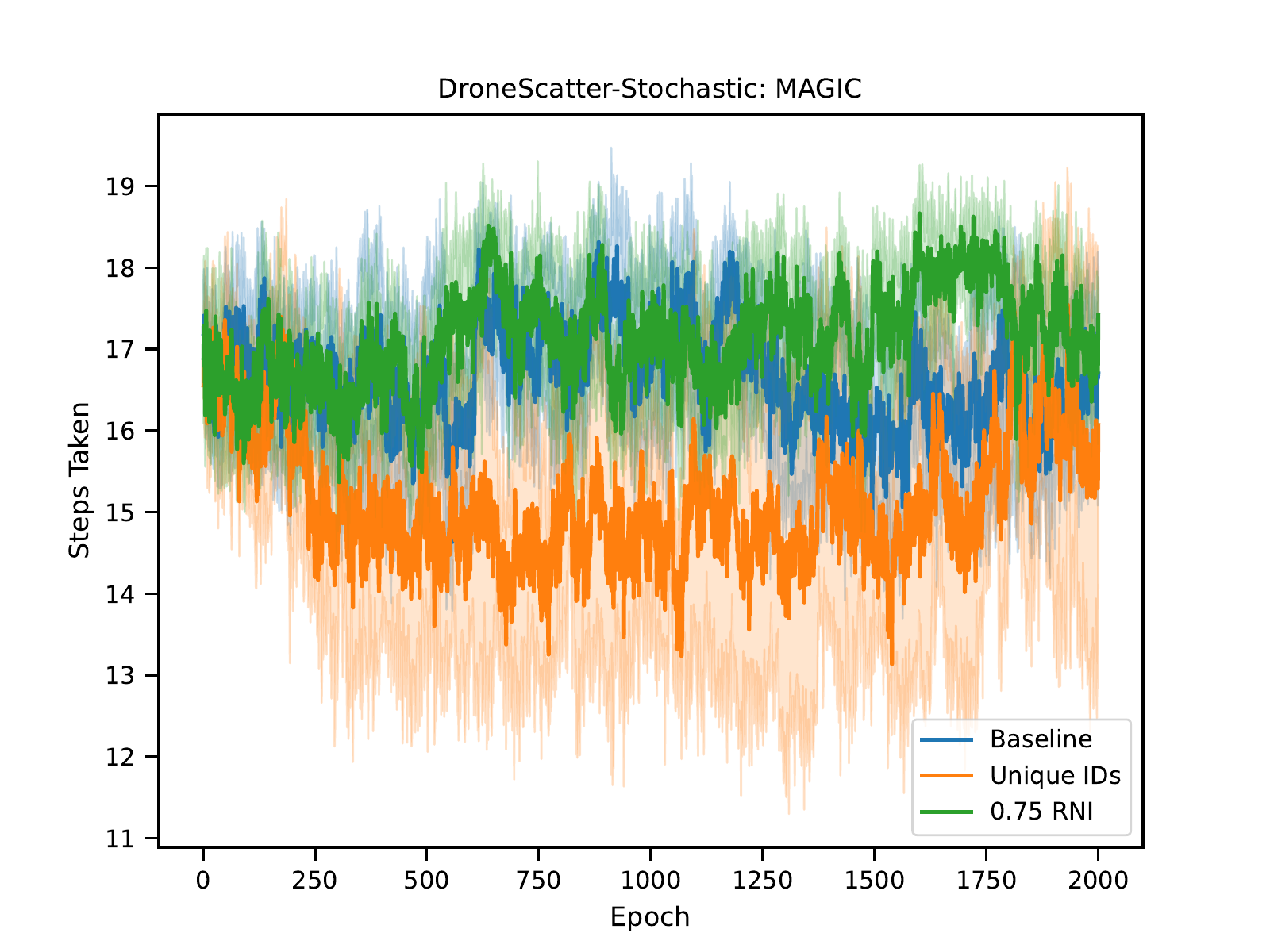}
\includegraphics[width=.32\linewidth]{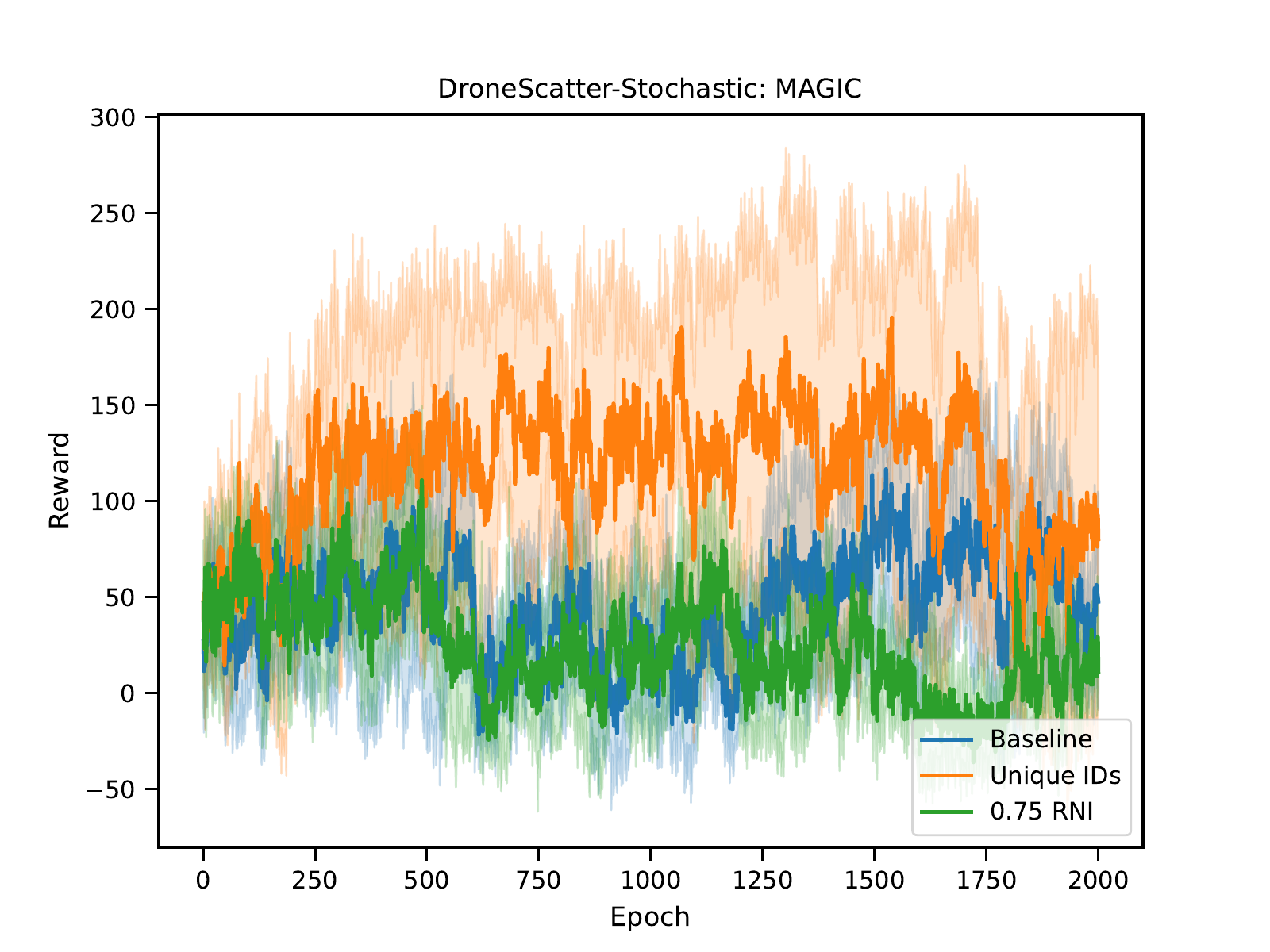}

\includegraphics[width=.32\linewidth]{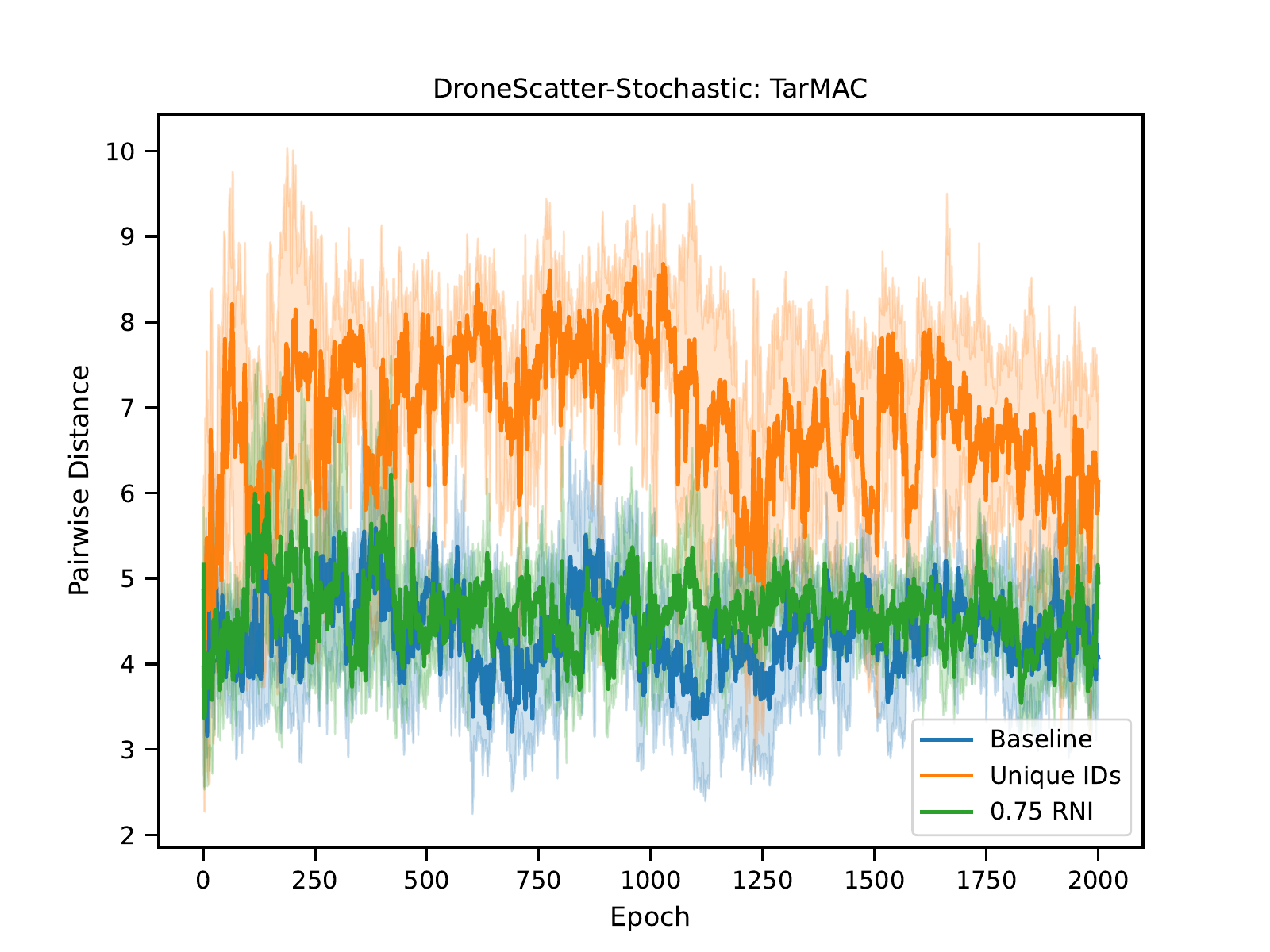}
\includegraphics[width=.32\linewidth]{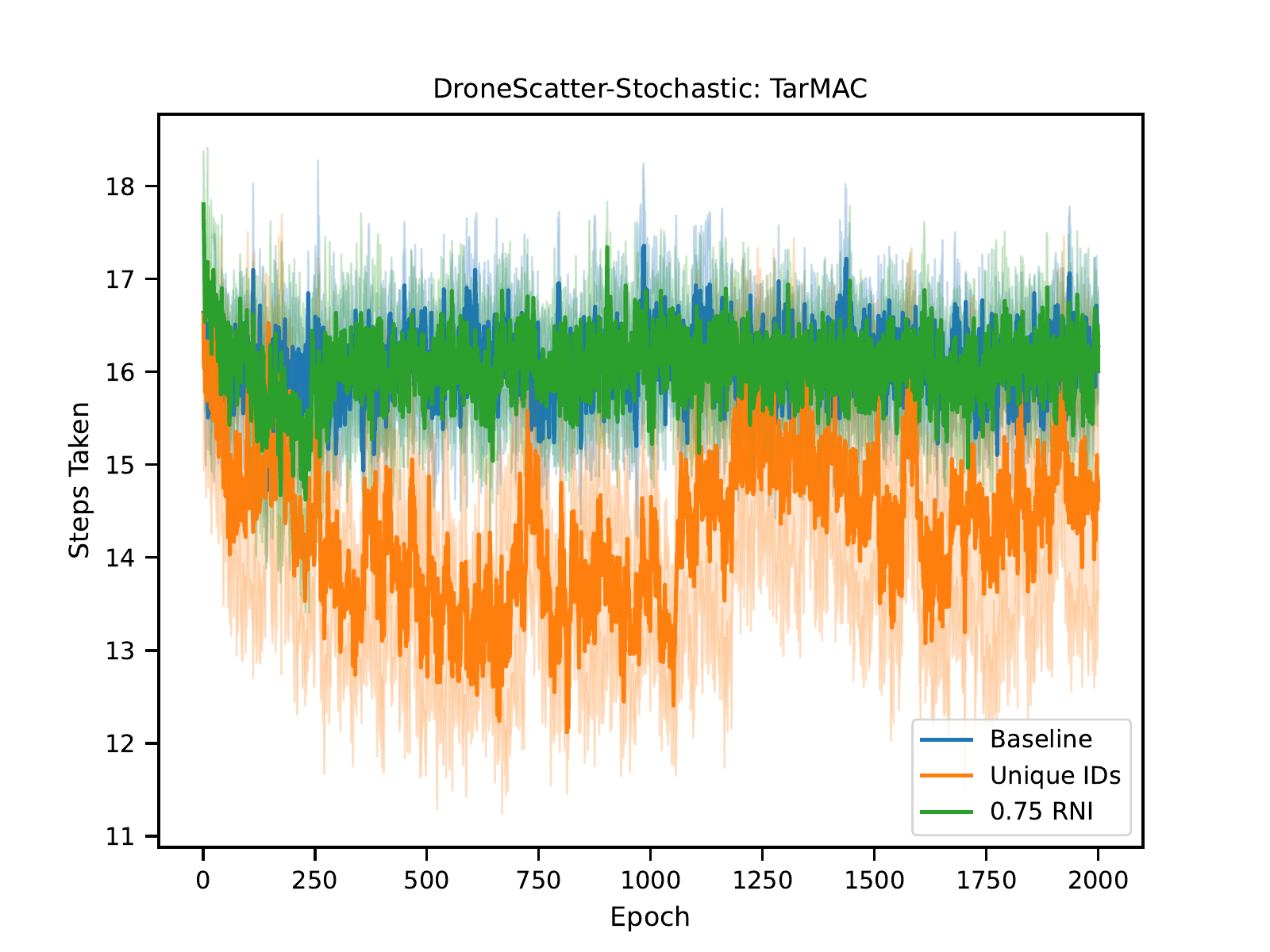}
\includegraphics[width=.32\linewidth]{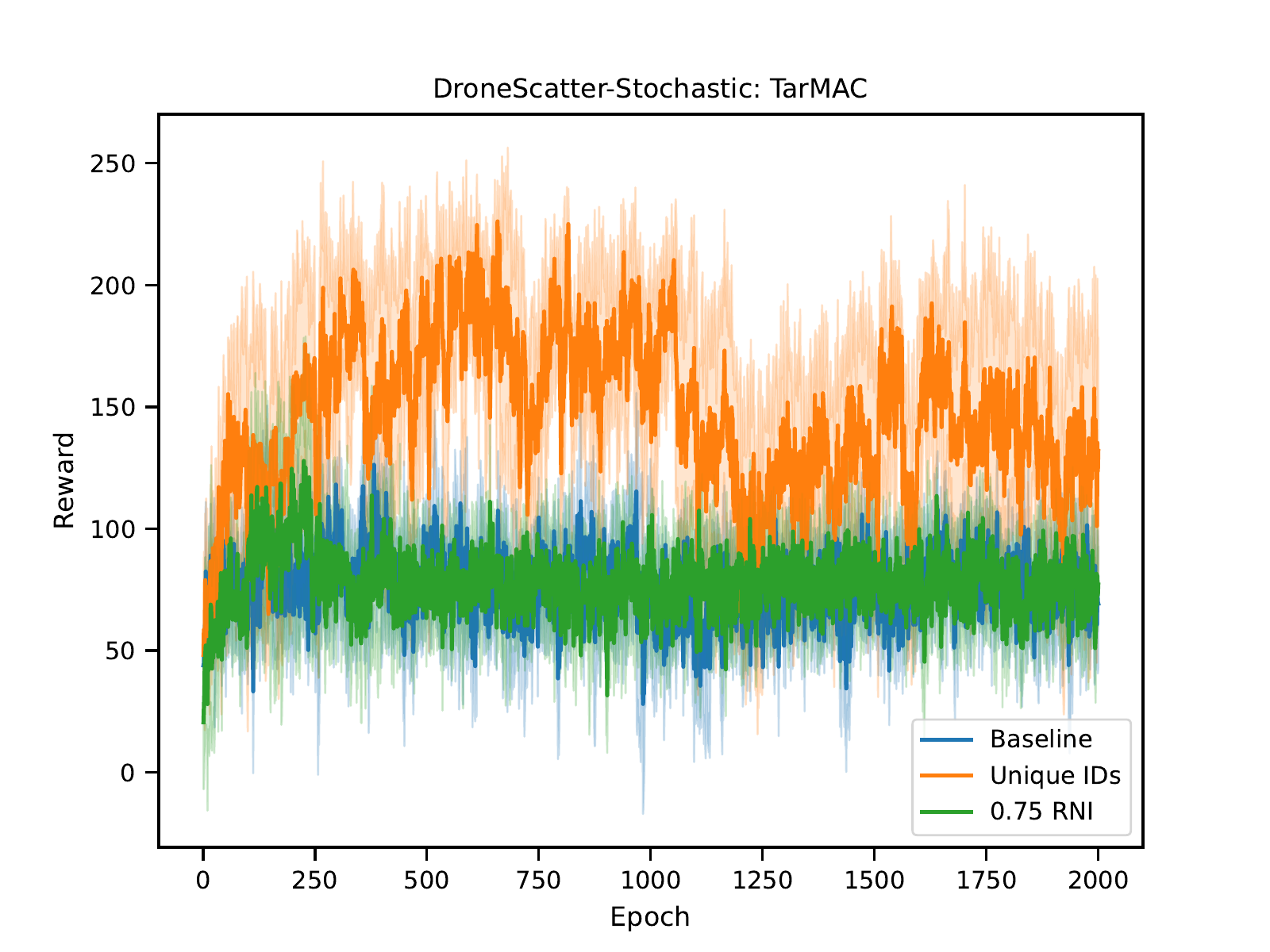}

\includegraphics[width=.32\linewidth]{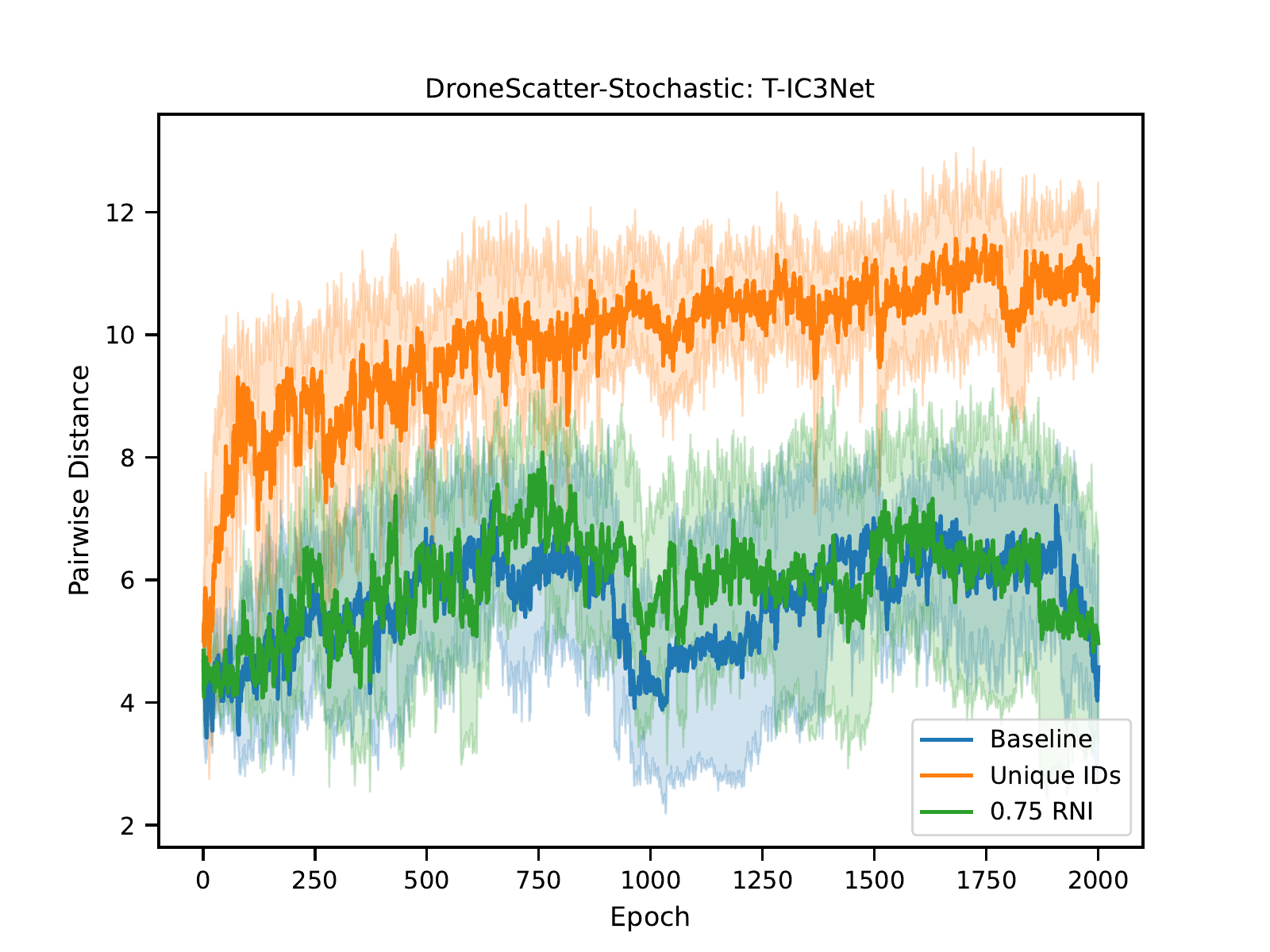}
\includegraphics[width=.32\linewidth]{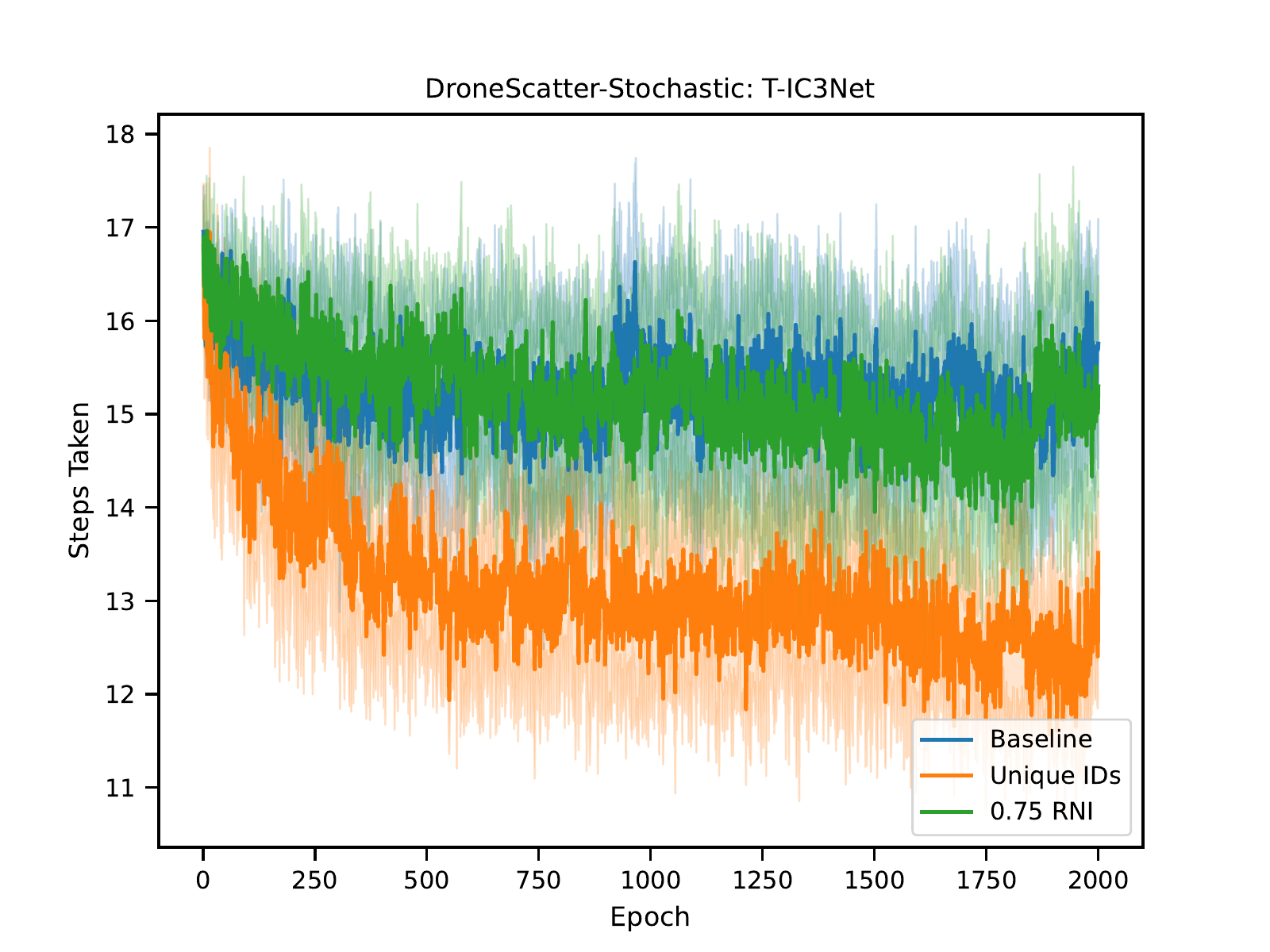}
\includegraphics[width=.32\linewidth]{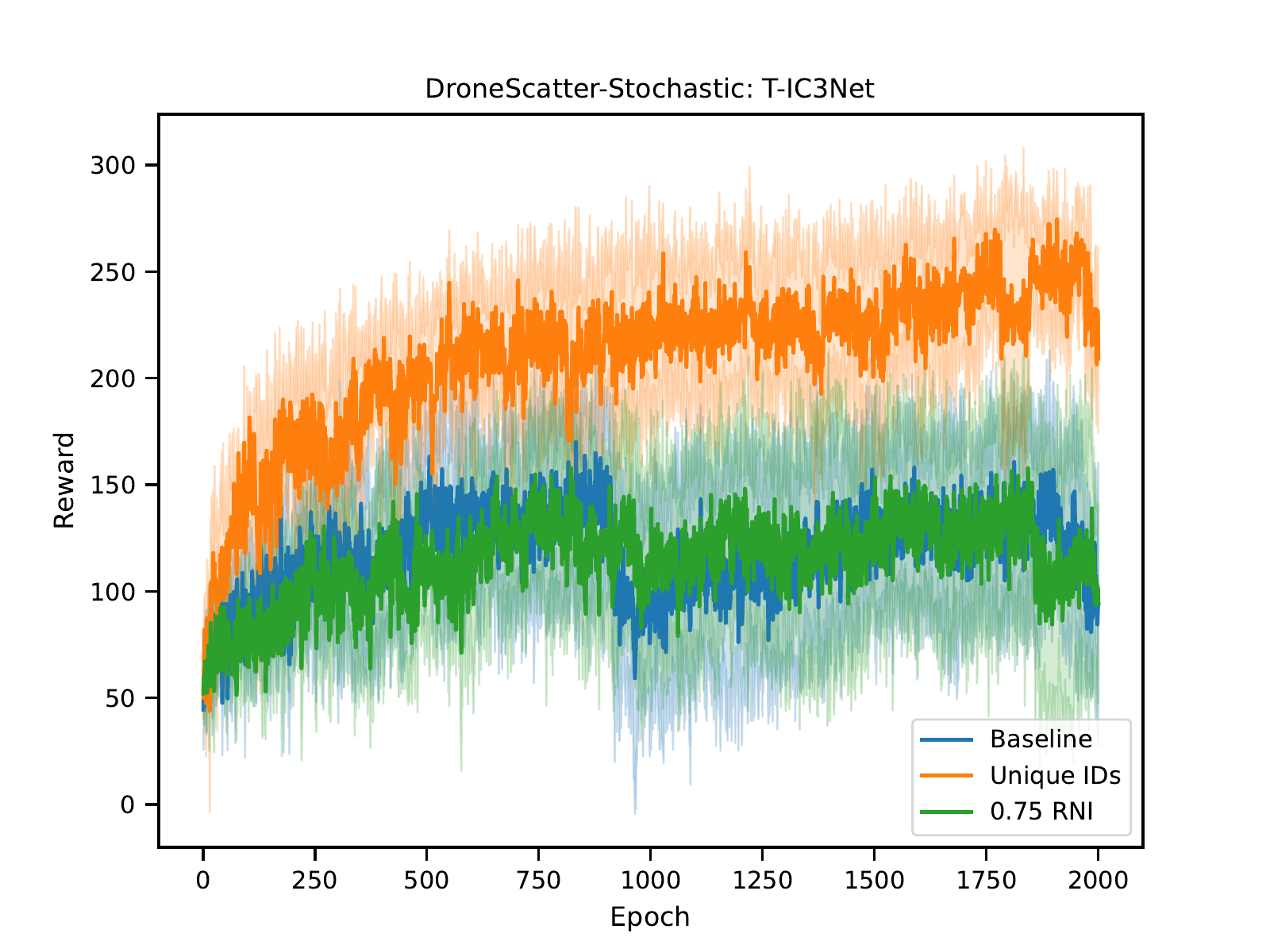}

\end{document}